\theoremstyle{definition}
\newtheorem{example}{Example}[section]
\newtheorem{definition}{Definition} 
\newtheorem{theorem}{Theorem}[section]
\newtheorem{lemma}{Lemma}[section]
\newtheorem{proposition}{Proposition}[section]
\newtheorem{remark}{Remark}[section]
\newtheorem{assumption}{Assumption}
\newcommand{\indep}{\perp \!\!\! \perp}
\title{Causal inference in network experiments: \\
regression-based analysis and design-based properties\thanks{Peng Ding is supported by the NSF DMS (grant \# 1945136). We thank Michael Leung for his helpful comments.}}
\author{Mengsi Gao \\
Department of Economics\\
UC Berkeley\\
\url{mengsi.gao@berkeley.edu}
\and 
Peng Ding \\
Department of Statistics\\
UC Berkeley \\
\url{pengdingpku@berkeley.edu} \\
}
\date{\today 
% (Preliminary; Please do not circulate.) 
}
\begin{document}

% \begin{frontmatter}

% \title{Causal inference in network experiments: \\
% regression-based analysis and design-based properties}
% % \runtitle{A Sample Running Head Title}

% \begin{aug}
% %
% \author[id=au1,addressref={add1}]{\fnms{Mengsi}~\snm{Gao}\ead[label=e1]{mengsi.gao@berkeley.edu}}
% \author[id=au2,addressref={add2}]{\fnms{Peng}~\snm{Ding}\ead[label=e2]{pengdingpku@berkeley.edu}}
% %%%%%%%%%%%%%%%%%%%%%%%%%%%%%%%%%%%%%%%%%%%%%%
% %% Addresses                                %%
% %%%%%%%%%%%%%%%%%%%%%%%%%%%%%%%%%%%%%%%%%%%%%%
% \address[id=add1]{%
% \orgdiv{Department of Economics},
% \orgname{UC Berkeley}}

% \address[id=add2]{%
% \orgdiv{Department of Statistics},
% \orgname{UC Berkeley}}
% \end{aug}

% %% Put support info here. Reminder: do not thank the handling coeditor anonymously or by name
% \support{Peng Ding is supported by the NSF DMS (grant \# 1945136).}
% %
% \coeditor{\fnm{[Name} \snm{Surname}; will be inserted later]}
\onehalfspacing
% % % \begin{spacing}{1.2}
\maketitle

\vspace{-0.3in}
\thispagestyle{empty} 

%what is the point of this paper?
% what are the implications of our findings?
% \begin{spacing}{1.2}

% \vspace{-1cm}
\begin{abstract}
% Investigating interference or spillover effects among units is a central task in many social science problems. 
Network experiments are powerful tools for studying spillover effects, which avoid endogeneity by randomly assigning treatments to units over networks. However, it is non-trivial to analyze network experiments properly without imposing strong modeling assumptions. 
% Previously, many researchers have proposed sophisticated point estimators and standard errors for causal effects under network experiments. 
We show that regression-based point estimators and standard errors can have strong theoretical guarantees if the regression functions and robust standard errors are carefully specified to accommodate the interference patterns under network experiments. We first recall a well-known result that the Hájek estimator is numerically identical to the coefficient from the weighted-least-squares fit based on the inverse probability of the exposure mapping. Moreover, we demonstrate that the regression-based approach offers three notable advantages: its ease of implementation, the ability to derive standard errors through the same regression fit, and the potential to integrate covariates into the analysis to improve efficiency.
% , thereby enhancing estimation efficiency.
% Furthermore, we analyze the asymptotic bias of the regression-based network-robust standard errors.  
Recognizing that the regression-based network-robust covariance estimator can be anti-conservative under nonconstant effects, we propose an adjusted covariance estimator to improve the empirical coverage rates. 
% Although we focus on regression-based point estimators and standard errors, our theory holds under the design-based framework, which assumes that the randomness comes solely from the design of network experiments and allows for arbitrary misspecification of the regression models. 
\end{abstract}

% it is easy to implement, can provide standard errors through the same weighted-least-squares fit, and it allows for the integration of covariates into the analysis to improve estimation efficiency. 

% \medskip
\noindent \textbf{Keywords:} Covariate adjustment, exposure mapping, interference, model misspecification, network-robust standard error, weighted least squares.
% \begin{keyword}
% \kwd{Covariate adjustment}
% \kwd{exposure mapping}
% \kwd{interference}
% \kwd{model misspecification}
% % \kwd{interference}
% \kwd{network-robust standard error}
% \kwd{weighted least squares}
% \end{keyword}

% \noindent JEL classification codes: C13, C21
% \end{frontmatter}

% \newpage
% \onehalfspacing

\section{Introduction}
% \cite{Savje2023}

Network experiments have gained growing interest across various fields, including economics, social science, public health, and tech companies \citep{Jackson2008, Valente2010, BlakeCoey2014, AngelucciDiMaro2016, Aral2016, Breza2016, AtheyImbens2017, AtheyEcklesImbens2018a, AronowEcklesSamii2021}.  
% [PD: cannot see Aral2016a when i complied.]
They present an exceptional avenue to delve into the intricacies of interactions among units. 
Important examples of such experiments include 
\cite{Sacerdote2001},
\cite{MiguelKremer2004}, 
\cite{BandieraRasul2006},
\cite{BakshyRosennMarlow2012}, 
\cite{BanerjeeChandrasekharDuflo2013}, 
\cite{BursztynEdererFerman2014},
\cite{CaiJanvrySadoulet2015}, 
\cite{PaluckShepherdAronow2016}, 
\cite{BeamanDillon2018}, 
\cite{HaushoferShapiro2018}, 
and
\cite{CarterLaajajYang2021}.
These experiments transcend the conventional framework of individual-level randomization by exploring the effects of treatments not only on the treated individuals but also on their peers. This introduces the concept of ``interference,'' which challenges the ``stable unit treatment value assumption'' (SUTVA) that rules out interference in classic causal inference. 
% (\citet{Rubin1980, ImbensRubin2015}). 

% These experiments go beyond the traditional individual-level randomization and consider the impact of treatments or interventions on a network of connected individuals, known as ``interference".
% It violates the traditional assumption ``stable unit treatment value assumption” (SUTVA) posited by \cite{Rubin1990}.
% In economics, network experiments have proven to be valuable in investigating various phenomena where social interactions play a crucial role. 

% COMMENT ON \cite{Manski1993} and \cite{Angrist2014}

Over the last decade, the study of social interactions and peer effects through structural models has gained considerable attention \citep{Manski1993, Graham2008, BramoulleDjebbariFortin2009a,  Goldsmith-PinkhamImbens2013}. Distinguishing between the influence of peers' outcomes (endogenous peer effects) and the influence of peers' characteristics (contextual peer effects) can become challenging due to the simultaneous behavior of interacting agents. This challenge is known as the ``reflection problem'' \citep{Manski1993}. 
\cite{Angrist2014} criticized several econometric approaches to estimating peer effects. 
Without covariates, outcome-on-outcome regressions either reflect a tautological identity or capture group-level clustering without behavioral meaning. 
With covariates, the resulting estimates may be biased by measurement error and other factors, leading to spurious evidence of peer effects.
\cite{Paula2017} and \cite{BramoulleDjebbariFortin2020} relate the counterexample proposed in \citet[Section 6]{Angrist2014} to a well-known instance of non-generic identification failure, initially noted by \cite{Manski1993} and also demonstrated by \cite{BramoulleDjebbariFortin2009a}.

% An expanding volume of literature focuses on interference without imposing strong structural assumptions \citep{HalloranStruchiner1995, TchetgenTchetgenVanderWeele2012, Manski2013, HuLiWager2022, Viviano2023a}. 
An expanding volume of literature explores scenarios with interference of arbitrary but known forms which in turn requires researchers to make specific assumptions about the extent of interference. 
Many papers assume correctly specified \emph{exposure mappings} for inference \citep{AronowSamii2017, BairdBohrenMcIntosh2018, Vazquez-Bare2022, Owusu2023}.
These mappings impose assumptions on the interference structure in the experiment, where the treatment assignment vector affects potential outcomes through a low-dimensional function \citep{Manski2013, AronowSamii2017}.
This approach can be critiqued for typically ruling out endogenous peer effects.
% [PD: vector of sufficient statistics not appropriate because sufficient statistics has unique meaning in statistics.]
Some other papers assume ``partial interference'' \citep{Sobel2006, HudgensHalloran2008, UganderKarrerBackstrom2013, KangImbens2016, LiuHudgensBecker-Dreps2016, BasseFeller2018, QuXiongLiu2021, AlzubaidiHiggins2023}, where units are partitioned into separate clusters, and interference is restricted to occur exclusively among units within the same cluster.
Conversely, more recent literature further relaxes the partial interference assumption and studies interference of general forms \citep{SAVJEARONOWHUDGENS2021, Viviano2023}. 

% \citep{UganderKarrerBackstrom2013, LiuHudgensBecker-Dreps2016, AlzubaidiHiggins2023}. 

% \cite{Savje2023} 
\cite{Leung2022} proposed to estimate exposure effects under ``approximate neighborhood interference'' (ANI) while allowing for misspecification of exposure mappings. 
ANI refers to the situation where treatments assigned to individuals further from the focal unit have a smaller, but potentially nonzero, effect on the focal unit's response. 
\cite{Leung2022} verified that ANI is applicable to well-known models of social interactions, such as the network version of the linear-in-means model \citep{Manski1993} and the complex contagion model \citep{Granovetter1978}, both of which allow for endogenous peer effects. 
% This further motivates studying under the framework of ANI.
He considered the Horvitz--Thompson estimator and studied its consistency and asymptotic normality. For inference, he proposed a network Heteroskedasticity and Autocorrelation Consistent (HAC) covariance estimator, and studied its asymptotic bias for estimating the true covariance. 
However, he did not derive the point and covariance estimator directly from regression-based analysis, which is our focus.

% \cite{AronowEcklesSamii2021}
Our paper builds upon \cite{Leung2022}, which accommodates a single large network.
We enrich the discussion of the regression estimators from the design-based perspective, with a special emphasis on network experiments.
The design-based inference makes weak distributional assumptions about outcome models and relies solely on the randomization mechanism.
We focus on the Hájek estimator, which is numerically identical to the coefficient derived from the weighted-least-squares (WLS) fit involving unit data that relies on the inverse probability of exposure mappings \citep{AronowSamii2017}.
The regression-based approach offers three notable advantages. First, it is easy to implement without too much additional programming. Second,
it can provide standard errors through the same WLS fit. Third, it allows for incorporating covariates into the analysis, which can potentially increase the estimation precision if the covariates are predictive of the outcome.
Moreover, we examine the asymptotic performance of the regression-based network HAC estimator and prove results that justify the regression-based inference for network experiments from the design-based perspective. 
% This constitutes our first contribution.
% Although we focus on regression-based point estimators and standard errors, our theory holds under the design-based framework, which assumes that the randomness comes solely from the design of network experiments and allows for arbitrary misspecification of the regression models. 
%, drawing inspiration from the literature of time series \citep{NeweyWest1987}, spatial econometrics and Generalized Method
%of Moments estimation. 
%Our results justify the regression-based inference for network experiments from the design-based perspective. This constitutes our first contribution.

Unlike their spatial or time-series counterparts, network HAC estimators lack a theoretical guarantee of positive semi-definiteness \citep{Kojevnikov2021}. Moreover, they are known to have poor finite-sample properties \citep{Matyas1999}.
% The well-established outcome regarding the asymptotic bias of the standard covariance estimator for the estimation of the average treatment effect through difference-in-means \citep{ImbensRubin2015}
% Additionally, in contrast to the well-established outcome regarding the asymptotic bias of the standard covariance estimator for the estimation of the average treatment effect through difference-in-means \citep{ImbensRubin2015}, 
In network experiments, the asymptotic bias of the HAC estimator can be negative under interference, resulting in undercoverage of the associated confidence interval.
To address these concerns, we propose a modified HAC estimator that ensures positive semi-definiteness and asymptotic conservativeness, which also performs well in finite-sample simulation. 
% This constitutes our second contribution.

% In addition, \cite{ZhaoDingMukerjee2018}  made a heuristic link between the design-based inference and the regression-based inference in the context of uniform split-plot designs, motivating
% with the idea of the derived linear model. 

% \textcolor{magenta}{TBA.}

% In particular, we focus on how the estimator behaves when applied in regression models to estimate these parameters. 
% Additionally, we will explore the topic of covariate adjustment, which is a statistical technique used to improve the precision of estimates in survey sampling. By examining the properties of the Hájek estimator and discussing covariate adjustment techniques, we hope to provide insights into the usefulness and limitations of these methods in practical applications.

% The application of statistical methods in the analysis of network experiments has become increasingly popular in recent years. In this context, regression analysis plays a vital role in understanding the relationship between network structures and the outcomes of interest. This paper aims to contribute to this area of research by studying the regression analysis of network experiments.
% The main objective of this paper is to investigate the regression-based properties of the Hájek estimator, particularly how it behaves when used in regression models to estimate population parameters in network experiments. 

Furthermore, we delve into the subject of covariate adjustment. Proper covariate adjustment can enhance the accuracy of estimators in randomized experiments by accounting for the imbalance in pretreatment covariates. 
Recall the results in the classical completely randomized treatment-control experiment.
The regression framework offers a versatile approach to incorporating covariate information with a potential of enhancing
asymptotic efficiency by including the interactions of the treatment and covariates \citep{Fisher1935, Lin2013, NegiWooldridge2021}.
An expanding body of literature explores the design-based justification of regression-based covariate adjustment with different types of experimental data \citep{Fogarty2018, ChangMiddletonAronow2021, SuDing2021, ZhaoDing2022, WangSusukidaMojtabai2023, ZhaoDingLi2024}.
% \cite{NegiWooldridge2021} studied the efficiency gain of covariate adjustment in the superpopulation model.
% \cite{WangSamiiChang2023} did not discuss how to increase efficiency by incorporating covariate information. 
Our paper studies the theoretical properties of covariate adjustment in network experiments and demonstrates the potential efficiency gain in simulation and empirical application under reasonable data-generating processes. 

\paragraph*{Organization of the paper}
Section \ref{sec:setup} sets up the framework for the design-based
inference in network experiments, reviews the Horvitz--Thompson and Hájek estimators, and introduces the main assumptions from \cite{Leung2022}. Section \ref{sec:WLS} reviews the Hájek estimator recovered from the WLS fit \citep{AronowSamii2017}, proposes the regression-based HAC covariance estimator, and analyzes its asymptotic bias. Because the covariance estimator can be anti-conservative, we propose a modified, positive semi-definite covariance estimator. Section \ref{sec:covadju} considers additive and fully-interacted covariate adjustment to the WLS fit, describes associated asymptotic properties, proposes modified covariance estimators, and studies their asymptotic properties. Section \ref{sec:Numerical Illustration} studies the finite-sample performance of our point and covariance 
% [PD: i changed ``variance'' to ``covariance'' in the Abstract. be coherent?] 
estimators based on simulation and illustrates the practical relevance of our results by re-analyzing the network experiment in \cite{PaluckShepherdAronow2016}. Section \ref{sec:conclusions} discusses the extension to continuous exposure mappings.  The appendix includes all the proofs and intermediate results. 

\paragraph*{Notation}
Let $\mathbb{N}$ denote the set of all non-negative integers.
Let ${I}_{m}$ be an $m \times m$ identity matrix and $\iota_{m}$ be an $m \times 1$ vector of ones. We suppress the dimension $m$ when it is clear from the context. 
Unless stated otherwise, all vectors are column vectors.
Let $1(\cdot)$ be the indicator function.
Let $\|\cdot\|$ denote the Euclidean norm, i.e., $\|w\| = \sqrt{w^\top w}$ for $w\in\mathbb{R}^v$. 
% Let $\text{lm}(Y_i \sim x_i)$ denote the least-squares
% regression of $Y_i$ on $x_i$ and focus on the associated HAC covariance estimator. 
The terms ``regression'' and ``HAC covariance'' refer to the numerical outputs of the WLS fit without any modeling assumptions; we evaluate their properties under the design-based framework.
We use ``IID'' and ``CLT'' to denote ``independent and identically distributed'' and ``central limit theorem,'' respectively.

\section{Framework, estimators and assumptions}\label{sec:setup}
\subsection{Setup of network experiments}
% \label{sec:setup}
We consider a finite population model that conditions on potential outcomes and networks while viewing the treatment assignment as the sole source of randomness. This approach follows the design-based framework \citep{ImbensRubin2015, AronowSamii2017, LiDing2020a, AbadieAtheyImbens2020, Leung2022, Chang2023}.
Let $\mathcal{N}_n = \{1,\ldots,n\}$ denote the set of units.
% We define $\mathcal{N}_n$ as the set of units $\{1,\ldots,n\}$. 
The network structure is undirected, unweighted, has no self-links, and can be described using an adjacency matrix $ {A}=(A_{ij})_{i,j=1}^n$ with the $(i, j)$th entry $A_{ij} \in \{0, 1\}$  indicating the connection between units $i$ and $j$. 
Let $\mathcal{A}_n$ denote the set of all possible networks with $n$ units.
% We denote the set of all possible networks with $n$ units as $\mathcal{A}_n$. 
The assignment of treatments is represented by a binary vector $ {D} = (D_i)^n_{i=1}$, where each $D_i$ is a binary variable indicating whether unit $i$ has been assigned to the treatment. 

We define the potential outcome for each unit $i$ as $Y_i( {d})$, which represents the outcome of unit $i$ under the hypothetical scenario in which the units on the entire network are assigned the treatment vector ${d} = (d_i)^n_{i=1}\in
\{0,1\}^n$. From the notation, $Y_i( {d})$ depends not only on $d_i$, the treatment assignment of unit $i$, but also on the treatment assignments of all other units. This results in ``interference'' or ``spillover'' between units, which is not accounted for in the standard potential outcomes model under SUTVA.
We adopt the design-based framework in which the potential outcomes $Y_i(d)$'s and network $A$ are fixed, whereas the distribution of $D$ is known and does not depend on $Y_i(d)$'s and $A$.
%  (\citet{Rubin1980, ImbensRubin2015}).
% \begin{remark}

% \end{remark}

% We define the potential outcome for each unit $i$ as $Y_i( {d})$, which represents the hypothetical outcome of unit $i$ if a treatment vector $ {d} = (d_i)^n_{i=1}\in
% \{0,1\}^n$ is assigned to the entire network. In contrast to the standard potential outcomes model, the dependence of $Y_i( {d})$ on the entire vector of assignments introduces interference or ``spillovers" between units.
% The focus of our study, as defined by \cite{Leung2022}, is on the "exposure effects", i.e., the changes in outcomes resulting from the manipulation of $ {d}$.

% The potential outcome $Y_i(d)$ is very high-dimensional, and we introduce the exposre mapping for dimension reduction. 
% The \emph{exposure mapping} serves two central roles in the literature. The first role is to encode assumptions about the interference structure in the experiment, where the treatment assignment vector affects the potential outcomes through a low-dimensional vector of sufficient statistics \citep{Manski2013,AronowSamii2017}.
% See \cite{Savje2023} for a more general discussion of
% inference with misspecified exposure mappings.
With binary treatment $D_i$'s, we have $2^n$ potential outcomes for each unit. We utilize the exposure mapping as defined by \cite{AronowSamii2017} or the effective treatment mapping introduced by \cite{Manski2013} for dimensionality reduction.
% Let $\mathcal{T} \subseteq \mathbb{R}^{d_T}$ be a discrete set of dimension ${d_T}$, where $|\mathcal{T}|$ is finite and fixed. 
For any $n$, an exposure mapping is a function $T: \mathcal{N}_n \times\{0,1\}^n \times \mathcal{A}_n \rightarrow \mathcal{T}$, 
which maps the units, the treatment assignment vector, and the network structure to exposures received by a unit. We focus on the regime in which $|\mathcal{T}|$ is finite and fixed. 
With correctly specified exposure mapping, we can simplify the potential outcomes as $Y_i(d) = {Y}_i(t)$ because $Y_i(D)$ depends on $D$ only through $T_i = T(i, D, A)$.
We follow \cite{Leung2022}'s framework and allow for misspecified exposure mappings: $T_i$ need not correctly capture how others affect an individual's potential outcome.
With the potential outcomes $Y_i(d)$, we can define the unit $i$'s expected response under exposure mapping value $t$ as 
\begin{equation}
\mu_i(t)=\sum_{{d} \in\{0,1\}^n} Y_i({d}) \mathbb{P}\left({D}={d} \mid T_i=t\right), 
\label{eq:mu}
\end{equation}
which equals the expected potential outcome of unit $i$ over all possible treatment assignment vectors given the exposure mapping value at $t$.
Let $\mu(t) = n^{-1} \sum_{i=1}^n \mu_i(t)$ be the finite-population average and $ {\mu} = (\mu(t):t\in\mathcal{T})$ be the $|\mathcal{T}|\times 1$ vector containing all the $\mu(t)$'s corresponding to exposure mapping values $t\in\mathcal{T}$.
% The unit-level exposure effect is defined as the contrast between two exposure mapping values $t$ and $t^{\prime}$, i.e., $\tau_i(t,t')=\mu_i(t)-\mu_i(t^{\prime})$. 
% % We denote the finite-population average by $\mu(t)=\frac{1}{n} \sum_{i=1}^n \mu_i(t)$.
% We define the estimand of interest as the average effect
% \[
% \tau(t, t^{\prime})=\mu(t)-\mu(t^{\prime}).
% \]
% Define causal effects as linear combinations of the expected responses. 
We will discuss inference of the general estimand $\tau = G {\mu}$, where $G$ is an arbitrary contrast matrix, and the key lies in estimating $ {\mu}$. 
We focus on estimators of the form $\hat{\tau} = G \hat{Y}$, where $\hat{Y}$ is some regression estimator of $\mu$. 
Although we focus on regression-based point estimators and standard errors, our theory holds under the design-based framework, which assumes that the randomness comes solely from the design of network experiments and allows for misspecification of the regression models. 

While the theory can accommodate misspecified exposure mappings, this flexibility comes at the cost of complicating the causal interpretation.
When the exposure mapping is correctly specified, we have $Y_i(d) = {Y}_i(t)$ for $t \in \mathcal{T}$, allowing the average expected response to simplify to $\mu(t) = n^{-1}\sum_{i=1}^n {Y}_i(t)$. 
In this case, the estimand ${\tau} = G \mu$ becomes independent of the treatment assignment and has a clear causal interpretation.
However, when the exposure mapping is misspecified, $\mu(t)$ represents a weighted average of all potential outcomes, where the weights correspond to $\mathbb{P}\left({D}={d} \mid T_i=t\right)$ that depends on both the treatment assignment and the definition of the exposure mapping.  
Consequently, any change in the treatment assignment alters the estimand. 
As a result, ${\tau} = G \mu$ may lack a causal interpretation.
One scenario in which the estimand $\tau$ can still be interpreted causally is when treatments are assigned independently and $T_i$ depends only on unit $i$'s local group of neighbors \citep{LeungLoupos2023}. 
In this case, \( \tau \) represents a weighted average of unit-level treatment or spillover effects, comparing outcomes across different treatment assignments within this local group.
% When the exposure mapping is correctly specified, the estimand becomes independent of the treatment assignment. 
% However, when the exposure mapping is misspecified, the estimand depends on the treatment assignment, as shown in \eqref{eq:mu}, particularly through $\mathbb{P}\left({D}={d} \mid T_i=t\right)$. 
For a more general discussion of causal inference with misspecified exposure mappings, see \cite{Savje2023}.

To conclude this subsection, we present three examples of exposure mappings and interpret the corresponding estimands with some choices of $G$, in the context of \cite{PaluckShepherdAronow2016}, which we will revisit in Section \ref{app:paluck}.
\cite{PaluckShepherdAronow2016} conducted a randomized experiment to study how an anti-conflict intervention influences teenagers' social norms regarding hostile behaviors such as bullying, social exclusion, harassment, and rumor-spreading.
The treatment indicator $D_i$ corresponds to whether student $i$ was randomly assigned to participate in bi-weekly meetings that incorporated an anti-conflict curriculum.
The outcome $Y_i$ is self-reported data on wristband wearing--a public signal of anti-conflict behavior and participation in the program. 
The network $A$ is measured by asking students to name up to ten students at the school they spent time with in the last few weeks.

\begin{example}\label{ex:direct}
Setting $T_{1i} =D_i$ is a special case of exposure mapping. 
With $G = (-1,1)$, the estimand $\tau$ compares the average number of self-reported wristband wearing if a student were assigned to participate in the bi-weekly anti-conflict meetings versus if they were not.
We refer to this difference as the direct effect of the treatment on students' visible engagement in anti-conflict behavior.
\end{example}

\begin{example}\label{ex:one-dim}
For researchers interested in the spillover effect of having at least one friend assigned to the treatment versus none such friends, a natural choice of one-dimensional exposure mapping is $T_{2i} = {1}(\sum_{j=1}^n A_{ij}D_j > 0 ) \in \{0,1\}$.
With $G = (-1,1)$, the estimand $\tau$ compares the aaverage number of self-reported wristband wearing if a student has at least one treated friend versus when they have none.
We refer to this difference as the spillover effect.
\end{example}

\begin{example}\label{ex:two-dim}
For researchers interested in both the direct effect and the spillover effect, they can employ the following two-dimensional exposure mapping: $T_i = (T_{1i}, T_{2i})\in \{(0,0), (0,1), (1,0), (1,1)\}$.
% \begin{equation*}
% T_i = \left(D_i, {1}\left(\sum_{j=1}^n A_{ij}D_j > 0 \right) \right) \in \{(0,0), (0,1), (1,0), (1,1)\},
% \end{equation*}
% where the first factor captures the direct effect, and the second factor captures the spillover effect.  
In  this case, we have a $2 \times 2$ factorial exposure mapping.
Setting $G=\left(g_{1}, g_{2}, g_{12}\right)^\top$ with $g_{1}=2^{-1}(-1,-1,1,1)^\top, g_{2}=2^{-1}(-1,1,-1,1)^\top$, and $g_{12}=2^{-1}(1,-1,-1,1)^\top$, then the estimand $\tau$ recovers the direct effect, spillover effect, and interaction effect of two factors. 
\end{example}

% \begin{remark}
Different specifications of the exposure mapping may change the estimand. 
For instance, the estimand defined using $T_i = T_{1i}$ or $T_i = T_{2i}$ alone differs from that obtained using a two-dimensional exposure mapping, $T_i = (T_{1i}, T_{2i})$, unless $T_{1i}$ and $T_{2i}$ are orthogonal.  
With independent $D_i$'s, the components $T_{1i}$ and $T_{2i}$ of the exposure mapping in Example \ref{ex:two-dim} are orthogonal. 
Therefore, the exposure mappings in Examples \ref{ex:direct} and \ref{ex:one-dim} respectively capture the direct and spillover effects in Example \ref{ex:two-dim}. 
We examine all exposure mappings from Examples \ref{ex:direct}–\ref{ex:two-dim} when revisiting the empirical applications of \cite{PaluckShepherdAronow2016} and \cite{CaiJanvrySadoulet2015} to assess the robustness of our results to variations in the number of exposures; see Section \ref{app:paluck} and Appendix \ref{sec:Cai}.

\subsection{Horvitz--Thompson and Hájek estimators}
% [Discuss the Horvitz--Thompson estimator and the Hájek estimator.]
Inverse probability weighting is a general estimation strategy in survey sampling and causal inference.
%In the context of observational studies, the sampling probabilities are unkown. 
In the context of observational studies with interference,  \cite{TchetgenTchetgenVanderWeele2012}, \cite{LiuHudgensBecker-Dreps2016} and \cite{JacksonLinYu2020} studied inverse probability-weighted estimators of causal effects under different assumptions on the interference pattern.  In this subsection, we will review the Horvitz--Thompson and Hájek estimators for estimating population parameters based on the observed data in network experiments.

% based on group-level propensity scores under the assumption of partial interference. 
%Additionally, \cite{LiuHudgensBecker-Dreps2016} proposed a generalized inverse probability-weighted estimator that allows for any form of interference.
% in experimental setting: the Horvitz--Thompson estimator \citep{HorvitzThompson1952} and the Hájek estimator \citep{Hajek1971}. 
%Both methods rely on sampling probability, where each unit in the population has a known and nonzero probability of being selected in the sample. 

The Horvitz--Thompson estimator is a weighted estimator that assigns each unit a weight equal to the inverse of its selection probability. 
Recall $T_i = T(i, {D}, {A})$, and define the generalized propensity score \citep{Imbens2000} as $\pi_i(t) = \mathbb{P}( T_i=t )$.
% define ${1}_i(t)={1}\{T_i=t\}$ and the generalized propensity score \citep{imbens2000}:
% \[
% \pi_i(t)=\mathbb{E}\left[{1}_i(t)\right]  =  \mathbb{P}( T_i=t ).
% \]
The value of the propensity score is known by design and can be determined through exact calculation or approximation using Monte Carlo \citep{AronowSamii2017}. 
The Horvitz--Thompson estimator for $\mu(t)$ equals $\hat{Y}_{\text{ht}}(t) = n^{-1} \sum_{i=1}^n {1}(T_i=t) Y_i/\pi_i(t)$.
% \[
% \hat{Y}_{\text{ht}}(t) = \frac{1}{n} \sum_{i=1}^n \frac{{1}(T_i=t)}{\pi_i(t)} Y_i.
% \]
The Horvitz--Thompson estimator is unbiased if the propensity score $\pi_i(t)$'s are non-zero and is consistent under additional regularity conditions.
% The Horvitz--Thompson estimator is inefficient compared to the Hájek estimator and is inefficient compared to the covariate-adjusted estimator when some units have a very small selection probability.
\cite{Leung2022} focused on $\tau(t,t')$ and examined the asymptotic properties of the Horvitz--Thompson estimator 
$\hat{\tau}_{\text{ht}}(t, t^{\prime})
= \hat{Y}_{\text{ht}}(t) - \hat{Y}_{\text{ht}}(t')$.
% \[
% \hat{\tau}_{\text{ht}}(t, t^{\prime})
% = \hat{Y}_{\text{ht}}(t) - \hat{Y}_{\text{ht}}(t'). 
% \]
% The Hájek estimator is a modification of the Horvitz--Thompson estimator by incorporating a correction factor that accounts for variability in selection probabilities. 

The Hájek estimator refines the Horvitz--Thompson estimator by normalizing the Horvitz--Thompson estimator by dividing it by the sum of the individual weights involved in its definition: $\hat{Y}_{\textup{haj}}(t) 
% = \frac{\frac{1}{n}\sum_{i=1}^n \frac{{1}(T_i=t)Y_i}{\pi_i(t)}}{\frac{1}{n}\sum_{i=1}^n \frac{{1}(T_i=t)}{\pi_i(t)}}
= \hat{Y}_{\text{ht}}(t)/\hat{1}_{\text{ht}}(t)$,
% \[
% \hat{Y}_{\textup{haj}}(t) 
% % = \frac{\frac{1}{n}\sum_{i=1}^n \frac{{1}(T_i=t)Y_i}{\pi_i(t)}}{\frac{1}{n}\sum_{i=1}^n \frac{{1}(T_i=t)}{\pi_i(t)}}
% = \hat{Y}_{\text{ht}}(t)/\hat{1}_{\text{ht}}(t),
% % \frac{\hat{Y}_{\text{ht}}(t)}{\hat{1}_{\text{ht}}(t)},
% \]
where $\hat{1}_{\text{ht}}(t) 
= n^{-1} \sum_{i=1}^n {1}(T_i=t)/\pi_i(t)$ is the Horvitz--Thompson estimator for constant potential outcome $1$. 
The Hájek estimator is biased in general since $\hat{1}_{\text{ht}}(t)$ is random, but it is consistent since $\hat{1}_{\text{ht}}(t)$ is consistent for $1$ under regularity conditions. 
% The Hájek estimator may significantly reduce variance than the Horvitz--Thompson estimator with little cost in terms of bias \citep[pages 181-184]{SarndalSwenssonWretman2003}.

The existing literature provides two motivations for using the Hájek estimator.
First, it ensures invariance under the location shift of the outcome \citep{Fuller2011}. Second, empirical evidence suggests that the Hájek estimator is more stable and efficient with little cost of bias in most reasonable scenarios \citep{LuncefordDavidian2004, Fuller2011, Ding2024}.
%  (\citet{SarndalSwenssonWretman2003}). 
\cite{Leung2022} mentioned the Hájek estimator in the footnote of his paper without detailed theory. 
Moreover, the Hájek estimator is more natural from the regression perspective. 
Numerically, the Hájek estimator is identical to the coefficient from the WLS fit based on the inverse probability of the
exposure mapping \citep{AronowSamii2017}.
The regression-based approach offers three notable advantages.
First, WLS is easy to implement without too much additional programming.
Second, WLS can provide network-robust standard errors. 
Third, WLS can incorporate covariates to improve efficiency when covariates are predictive of the outcome. 
The main focus of our paper is to explore the design-based properties of the Hájek estimators obtained through the regression-based method and associated HAC covariance estimator.

\begin{remark}\label{re:HT}
The Horvitz--Thompson estimator can also be implemented via WLS. However, it requires transformations of both the weights and the outcome, making it a less natural option via regression.
More importantly, the corresponding regression-based variance estimator is not guaranteed to be exact for inference even if the individual effects are constant. For further discussion, see Appendix \ref{app:HT}.
\end{remark}

% \cite{AronowSamii2017} discussed two refinements to the Horvitz--Thompson estimator. The first refinement involves incorporating auxiliary covariate information to improve efficiency, which requires prior knowledge of how outcomes relate to covariates. The second refinement is the use of the Hájek estimator, which can be computed using WLS fit, with covariance adjustment through WLS residualization. They mention that variance estimation proceeds via Taylor linearization \citep[pages 172-176]{SarndalSwenssonWretman2003}. Our paper demonstrates that we can incorporate covariate information into WLS fit without any modeling assumptions. We justify the regression-based analysis with network HAC variance estimator, which is straightforward to implement for applied researchers.

% The primary emphasis of this paper revolves around the Hájek estimator for two distinct reasons. Firstly, the Hájek estimator is widely acknowledged for its superior performance. Secondly, it demonstrates a more inherent connection with regression, which motivates our exploration of its regression-based properties. In particular, we investigate the behavior of the estimator when employed in regression models to estimate population parameters.

\subsection{Main assumptions}
We consider \cite{Leung2022}'s framework of ANI. ANI refers to a situation where treatments assigned to individuals who are farther away from the focal unit have a diminishing effect on the focal unit's response, although the effect is not necessarily zero.

In this subsection, we provide an overview of the key assumptions outlined in \cite{Leung2022}, which serve as the foundation for our analysis. These conditions ensure the theoretical properties of the regression-based point and covariance estimators.
For readers more interested in practical applications, they have the option to skip this subsection during their initial reading and focus on the procedures and properties presented in Sections \ref{sec:WLS} and \ref{sec:covadju}.

% To facilitate the presentation, we begin by introducing some essential definitions and notations from \cite{Leung2022}. 
Let $\ell_{ {A}}(i, j)$ denote the path distance between units $i$ and $j$ within network $A$, representing the length of the shortest path connecting them. The path distance refers to the smallest number of edges that must be crossed to journey from unit $i$ to unit $j$ within the network. 
% That is, for $L \in \mathcal{N}_n$, 
% \[
% \ell_A(i, j):=\inf _{\substack{\left\{t_1, \ldots, t_L\right\} \\ \text { s.t. } t_1=i, t_L=j}} \Pi_{s=1}^{L-1} A_{t_s t_{s+1}}.
% \]
Furthermore,
$\ell_A(i, j)$ is defined as $\infty$ if $i\ne j$ and no path exists between units $i$ and $j$ and defined as $0$ if $i = j$.
% \begin{remark}
% We can extend the definition of path distance to a weighted and directed network following \cite{AuerbachTabord-Meehan2023a} by defining it as the smallest sum of values in matrix $A$ along any sequence from unit $i$ to unit $j$.
% % Define the weight of a link from agent $i$ to $j$ as $A_{ij} \in \mathbb{Z} \cup \{\infty\}$. We assume $A_{ij} = 0$ if and only if $i = j$, and $A_{ij} = \infty$ denotes no link from $i$ to $j$. Then the path distance can be defined as follows for $L \in \{2,\ldots,n\}$.
% % \[
% % \ell_A(i, j):=\inf _{\substack{\left\{t_1, \ldots, t_L\right\} \\ \text { s.t. } t_1=i, t_L=j}} \sum_{s=1}^{L-1} A_{t_s t_{s+1}}.
% % \]
% With such a network, modifications to the assumptions of the theory are necessary, given that the distance is now measured on a different scale.
% \end{remark}
For a specific unit $i$, its $K$-neighborhood, denoted by $\mathcal{N}(i, K;  {A})= \{j \in \mathcal{N}_n: \ell_{ {A}}(i, j) \leq K \}$, includes the set of units within network $A$ that are at most at a path distance of $K$ from unit $i$. 
% Mathematically, $\mathcal{N}(i, K;  {A})=\left\{j \in \mathcal{N}_n: \ell_{ {A}}(i, j) \leq K \right\}$.
% , where the path distance between units $i$ and $j$, $\ell_{ {A}}(i, j)$, is defined as the length of the shortest path between them. 
% The path distance is the minimum number of edges that need to be traversed in order to go from unit $i$ to unit $j$ in the network.
Define ${d}_{{\mathcal{N}(i, K; {A})}}=(d_j: j \in \mathcal{N}(i, K;  {A}))$ and $ {A}_{\mathcal{N}(i, K; {A})}=(A_{kl}: k, l \in \mathcal{N}(i, K;  {A}))$ as the subvector of $ {d}$ and subnetwork of $ {A}$ on $\mathcal{N}(i, K;  {A})$, respectively.

% Unit $i$’s $K$-neighborhood, denote by $\mathcal{N}(i, K; A)$, is the set of units that are at most path distance $K$ apart from unit $i$.
% \[
% \mathcal{N}(i, K; A)=\left\{j \in \mathcal{N}_n: \ell_{ {A}}(i, j) \leq K \right\}
% \]
% where the path distance between units $i$ and $j$, $\ell_{ {A}}(i, j)$, is defined as the length of the shortest path between them. 
% The path distance is the minimum number of edges that need to be traversed in order to go from unit $i$ to unit $j$ in the network.
% Define $ {d}_{{\mathcal{N}(i, K; A)}}=\left(d_j: j \in \mathcal{N}(i, K; A)\right)$ and $ {A}_{\mathcal{N}(i, K; A)}=\left(A_{k l}: k, l \in \mathcal{N}(i, K; A)\right)$, respectively the subvector of $ {d}$ and subnetwork of $ {A}$ on $\mathcal{N}(i, K; A)$.

\begin{assumption}[Exposure Mapping]\label{asu1}
% [PD: i move something up] 
There exists a $K \in \mathbb{N}$ not dependent on the sample size $n$ such that for any $n \in \mathbb{N}$ and $i \in \mathcal{N}_n$, if $\mathcal{N}(i, K; {A})=\mathcal{N}(i, K; {A}^{\prime}), A_{\mathcal{N}(i, K;  {A})}= {A}_{\mathcal{N}(i, K; {A}^{\prime})}^{\prime}$, and ${d}_{\mathcal{N}(i, K; {A})}= {d}_{\mathcal{N}(i, K; {A}^{\prime})}^{\prime}$, then $T(i, {d}, {A})=T(i, {d}^{\prime}, {A}^{\prime})$ for all ${d}, {d}^{\prime} \in\{0,1\}^n$ and ${A}, {A}^{\prime} \in \mathcal{A}_n$.
\end{assumption}

\begin{assumption}[Overlap]\label{asu2}
$\pi_i(t) \in[\underline{\pi}, \bar{\pi}] \subset(0,1)$, for all $n \in \mathbb{N}, i \in \mathcal{N}_n, t \in \mathcal{T}$, where $\underline{\pi} $ and $\bar{\pi}$ are some absolute constant values.
\end{assumption}
\begin{assumption}[Bounded Potential Outcomes]\label{asu3}
$\left|Y_i({d})\right|<c_Y<\infty$, for all $ n \in \mathbb{N}, i \in \mathcal{N}_n, {d} \in\{0,1\}^n$, where $c_Y$ is an absolute constant.
\end{assumption}
Assumption \ref{asu1} requires the interference pattern of interest to be local, implying that the exposure mapping indicators are weakly dependent. Specifically, ${1}(T_i=t) \indep {1}(T_j=t)$ if $\ell_{ {A}}(i, j)>2 K$ for some $K$. % that is not dependent on the sample size $n$. 
For instance, $K=0$ for the exposure mapping in Example \ref{ex:direct} and  $K=1$ for both in Examples \ref{ex:one-dim} and \ref{ex:two-dim}.
% [PD: also the other example?] 
% is correctly specified, then $K=1$.  
Assumption \ref{asu2} requires the generalized propensity scores to be uniformly bounded between 0 and 1.
Assumption \ref{asu3} imposes uniform boundedness on the potential outcomes.

Let ${D}^{\prime}$ be an IID copy of $ {D}$. Define ${D}^{(i, s)}= ({D}_{\mathcal{N}(i, s; {A})}, {D}_{\mathcal{N}_n \backslash \mathcal{N}(i, s; {A})}^{\prime})$ as the concatenation of the subvector of $ {D}$ on $\mathcal{N}(i, s; {A})$ and the subvector of ${D}^{\prime}$ on $\mathcal{N}_n \backslash \mathcal{N}(i, s; {A})$. 
Define
\begin{equation}
\theta_{n, s}
= \max _{i \in \mathcal{N}_n} \mathbb{E}\left[\left|Y_i( {D})-Y_i( {D}^{(i, s)})\right|\right], \label{eq:theta}   
\end{equation}
where the expectation is over the randomness of $D$ and $D'$
% [PD: and $D'$?] 
with all potential outcomes fixed.
The interference, caused by distant individuals with a distance of more than $s$ from the subject, is measured as the largest expected change in any individual's potential outcome when altering the treatment assignments of those distant individuals. 
% [PD: this is a dense sentence.]
% \begin{definition}
% A triangular array $\left\{Z_i\right\}_{i=1}^n$ is $\psi$-dependent if there exist (a) uniformly bounded constants $\left\{\tilde{\theta}_{n, s}\right\}_{s, n \in \mathbb{N}}$ with $\theta_{n, 0}=1 \forall n$ such that $\sup _n \tilde{\theta}_{n, s} \rightarrow 0$ as $s \rightarrow \infty$, and (b) functionals $\left\{\psi_{h, h^{\prime}}(\cdot, \cdot)\right\}_{h, h^{\prime} \in \mathbb{N}}$ with $\psi_{h, h^{\prime}}: \mathcal{L}_h \times \mathcal{L}_{h^{\prime}} \rightarrow[0, \infty)$ such that
% $$
% \left|\operatorname{Cov}\left(f\left( {Z}_H\right), f^{\prime}\left( {Z}_{H^{\prime}}\right)\right)\right| \leq \psi_{h, h^{\prime}}\left(f, f^{\prime}\right) \tilde{\theta}_{n, s}
% $$
% for all $n, h, h^{\prime} \in \mathbb{N} ; s>0 ; f \in \mathcal{L}_h ; f^{\prime} \in \mathcal{L}_{h^{\prime}}$; and $\left(H, H^{\prime}\right) \in \mathcal{P}_n\left(h, h^{\prime} ; s\right)$.
% \end{definition}
Mathematically, ANI assumes that as the distance $s$ approaches infinity, the largest value of $\theta_{n, s}$, taken over all feasible networks, converges to zero, which is formalized in Assumption \ref{asu4} below. 
\begin{assumption}[ANI]\label{asu4}
The $\theta_{n,s}$ defined in \eqref{eq:theta} satisfies 
$\sup _n \theta_{n, s} \rightarrow 0 \text { as } s \rightarrow \infty$. 
\end{assumption}
In simpler terms, Assumption \ref{asu4} stipulates that interference from distant individuals should vanish as the distance becomes large.
We skip Assumption 5 in \cite{Leung2022}, which is for showing consistency of the Horvitz--Thompson estimator, and proceed to Assumption \ref{asu6} below for the asymptotic normality of the Hájek estimator.   
% For consistency, we impose Assumption \ref{asu7}(a) below which is stronger than Assumption 5 in \cite{Leung2022}.  
% [PD: i rewrote the sentence.  also it is very strange here. we have not commented on assumption 5 yet. why Assumption \ref{asu7}(a) direct?]
%We skip Assumption 5 in \cite{Leung2022} for consistency of the estimator, and instead, we utilize its stronger version, Assumption \ref{asu7}(a) below.
Define 
\begin{equation}
M_n(m, k)= n^{-1} \sum_{i=1}^n|\mathcal{N}(i, m; {A})|^k  
\label{eq:Mn}
\end{equation}
as the $k$-th moment of the $m$-neighborhood size within network $A$.
%Denote by $\mathcal{H}_n(s, m)$ the set of paired couples $(i, k)$ and $(j, l)$ such that the units within each couple are at most path distance $m$ apart from each other, and the two pairs are exactly path distance $s$ apart:
For any $H, H' \subseteq \mathcal{N}_n$, define $\ell_A(H, H') = \min\{\ell_A(i, j) : i \in H, j \in H'\}$.
Define
\begin{equation}
\mathcal{H}_n(s, m)=\left\{(i, j, k, l) \in \mathcal{N}_n^4: k \in \mathcal{N}(i, m; {A}), l \in \mathcal{N}(j, m; {A}), \ell_{ {A}}(\{i, k\},\{j, l\})=s\right\}  \label{eq:Hn}
\end{equation}
as the set of paired couples $(i, k)$ and $(j, l)$ such that the units within each couple are at most path distance $m$ apart from each other, and the two pairs are exactly path distance $s$ apart. 
%Similar to $\mathcal{H}_n(s, m)$, we denote by $\mathcal{J}_n(s, m)$ the set of paired couples $(i, k)$ and $(j, l)$ such that the units within each couple are at most path distance $m$ apart from each other, and $i$ and $j$ are exactly path distance $s$ apart: 
Similarly, define 
\begin{equation}
\mathcal{J}_n(s, m)=\left\{(i, j, k, l) \in 
\mathcal{N}_n^4: k \in \mathcal{N}(i, m; {A}), l \in \mathcal{N}(j, m; {A}), \ell_{ {A}}(i, j)=s\right\}
\label{eq:Jn}
\end{equation}
as the set of paired couples $(i, k)$ and $(j, l)$ such that the units within each couple are at most path distance $m$ apart from each other, and $i$ and $j$ are exactly path distance $s$ apart. 
% Denote the variance of the Horvitz--Thompson estimator defined on the centered outcome by
In Assumption \ref{asu6}, we replace $\sigma_n^2$ from \citet[Assumption 6]{Leung2022} with the matrix ${\Sigma}_\textup{haj}$:
\begin{equation}
{\Sigma}_\textup{haj}  = \operatorname{Var}\left(  n^{-1 / 2} \sum_{i=1}^n  \frac{1(T_i=t)}{\pi_i(t)} (Y_i-\mu(t)): {t\in\mathcal{T}} \right).  \label{eq:Sigma_haj}  
\end{equation}
Theorem \ref{thm:Hájek_asym_n} below will show that ${\Sigma}_\textup{haj}$ defined in \eqref{eq:Sigma_haj} is the asymptotic covariance of the Hájek estimator of $\mu$.
% which we will show in Theorem  \ref{thm:Hájek_asym_n} below is the asymptotic variance of the Hájek etimator without covariate adjustment.
% which represents the variance of the Horvitz--Thompson estimator defined on the demeaned outcome $Y_i - \sum_{t\in\mathcal{T}}{1}_i(t) \mu(t)$, scaled by $\sqrt{n}$.
% $ \operatorname{Cov}\left( \left\{ n^{-1 / 2} \sum_{i=1}^n (Y_i-\mu(t)) \frac{{1}i(t)}{\pi_i(t)} \right\}{t\in\mathcal{T}} \right)$, which serves as the analog to $\Sigma_n^2$ and is defined on the demeaned outcome $Y_i - \sum_{t\in\mathcal{T}}{1}_i(t) \mu(t)$.
Based on the definition of $\theta_{n, s}$ in \eqref{eq:theta} and \citet[][Theorem 1]{Leung2022},  
% [PD:  not sure what does ``from Theorem 1 in \cite{Leung2022}'' mean ]
we define
\begin{equation}
\tilde{\theta}_{n, s} = \theta_{n,\lfloor s / 2\rfloor} 1(s>2 \max \{K, 1\})+{1}(s \leq 2 \max \{K, 1\}) 
\label{eq:tildetheta}
\end{equation} 
%from Theorem 1 in \cite{Leung2022}, 
where $K$ is the constant from Assumption \ref{asu1} and $\lfloor s \rfloor$ is $s$ rounded down to the nearest integer.
Assumptions \ref{asu4} and \ref{asu6} both posit that interference diminishes with path distance. 
Additionally, Assumption \ref{asu6} imposes further that for some sequence $m_n$, $\tilde{\theta}_{n, s}$ diminishes to zero at a sufficiently rapid rate relative to the size of the $m_n$-neighborhood, moreover, constraints on the growth of $m_n$-neighborhoods, and ensures that $\tilde{\theta}_{n, m_n}$ decays at an adequately fast pace.
Moreover, Assumption \ref{asu6} is closely related to the conditions proposed in \cite{ChandrasekharJacksonMcCormick2024} to achieve the asymptotic normality of sums of dependent random variables.
% \cite{ChandrasekharJacksonMcCormick2024} introduce affinity sets to capture collections of random variables that may exhibit high correlation. 
The three components of Assumption \ref{asu6} below are analogous to their Assumptions 1--3.
% , which bound: (i) the total weighted covariance within affinity sets, (ii) the total weighted covariance across affinity sets, and (iii) the total weighted covariance contributed by variables outside the affinity sets.

\begin{assumption}[Weak Dependence for CLT]\label{asu6}
Recall $M_n(m,k)$, $\mathcal{H}_n(s, m)$ and ${\Sigma}_\textup{haj}$ defined in \eqref{eq:Mn}, \eqref{eq:Hn} and \eqref{eq:Sigma_haj}, respectively.
Define $\lambda_{\min}(\Sigma_{\textup{haj}})$ as the smallest eigenvalue of $\Sigma_{\textup{haj}}$.
There exist $\epsilon>0$ and a positive sequence $\{m_n\}_{n \in \mathbb{N}}$ such that as $n\rightarrow \infty$ we have $m_n \rightarrow \infty$ and 
\begin{align*}
\frac{n^{-2} \sum_{s=0}^n\left|\mathcal{H}_n(s, m_n)\right| \tilde{\theta}_{n, s}^{1-\epsilon}}{({\lambda_{\min}(\Sigma_{\textup{haj}})
})^2} \rightarrow 0,
\quad
\frac{n^{-1 / 2} M_n(m_n, 2)}{ (\lambda_{\min}(\Sigma_{\textup{haj}}))^{3/2}} \rightarrow 0, 
\quad 
\frac{ n^{3 / 2} \tilde{\theta}_{n, m_n}^{1-\epsilon}}{\sqrt{\lambda_{\min}(\Sigma_{\textup{haj}})
}} \rightarrow 0.
\end{align*}
% where the convergence of the matrices is element-wise. 
% $$
% \max \left\{
% {\Sigma}_\textup{haj}^{-2} n^{-2} \sum_{s=0}^n\left|\mathcal{H}_n\left(s, m_n\right)\right| \tilde{\theta}_{n, s}^{1-\epsilon}, 
% {\Sigma}_\textup{haj}^{-3/2} n^{-1 / 2} M_n\left(m_n, 2\right), {\Sigma}_\textup{haj}^{-1/2} n^{3 / 2} \tilde{\theta}_{n, m_n}^{1-\epsilon}
% \right\} \rightarrow 0.
% $$
\end{assumption}
Assumption \ref{asu6} corresponds to Assumption 3.4 of \cite{KojevnikovMarmerSong2019}, which limits the extent
of dependence across units of $1(T_i=t)\pi_i(t)^{-1}(Y_i-\mu(t))$'s through restrictions on the network. 
\citet[Section A.1]{Leung2022a} verifies Assumption \ref{asu6} for networks with polynomial or exponential neighborhood
growth rates.
We impose Assumption \ref{asu6} to ensure the asymptotic normality of the Hájek estimator of $\mu$. % without covariate adjustment. 
We defer Assumption \ref{asu7}, which ensures the consistency of covariance estimation, to Section \ref{sec:wls_cov}.

\section{Hájek estimator in network experiments}\label{sec:WLS}
% We define $\hat{Y}'_{\text{ht}}(t) = \frac{1}{n} \sum_{i=1}^n (Y_i-\mu(t)) \frac{{1}_i(t)}{\pi_i(t)}$ as the Horvitz--Thompson estimator defined on the centered outcome $Y_i-\mu(t)$. Let $\hat{Y}'_{\text{ht}}$ be the $|\mathcal{T}|\times 1$ vectorization of $\hat{Y}'_{\text{ht}}(t)$ and $\mu$ be the $|\mathcal{T}|\times 1$ vectorization of $\mu(t)$.
% The difference between the Hájek estimator and the true finite-population average equals 
% \[
% \hat{Y}_{\textup{haj}}(t) - \mu(t)
% = \frac{\hat{Y}_{\text{ht}}(t) - \hat{1}_{\text{ht}}(t)\mu(t)}{\hat{1}_{\text{ht}}(t)}  
% = \frac{\hat{Y}'_{\text{ht}}(t)}{\hat{1}_{\text{ht}}(t)}
% \]

\subsection{WLS-based point and covariance estimation}\label{sec:wls_cov}
Let $z_i = ({1}(T_i=t): t\in\mathcal{T})$ be the vector of exposure mapping indicators.
Motivated by the inverse probability weighting in the Hájek estimator, we consider the WLS fit:
% \[
% Y_i \sim  {1}_i(t_1) + \ldots + {1}_i(t_{|\mathcal{T}|})
% \]
% $Y_i \sim \sum_{t\in \mathcal{T}}{1}_i(t)$ 
% with inverse propensity score as weight $w_i = \sum_{t\in \mathcal{T}} \frac{{1}_i(t)}{\pi_i(t)}$. 
\begin{equation}
\text{regress } Y_i \text{ on } z_i \text{ with weights } w_i = 1/\pi_i(T_i).
% \text{lm}(Y_i \sim z_i) \text{ with weights } w_i = 1/\pi_i(T_i).
\label{eq:WLS}
\end{equation}
Let $\hat{\beta}_{\textup{haj}}$ denote the estimtors of coefficients for $z_i$ in \eqref{eq:WLS}.
%  resulting coefficient of $z_i$. 
Define the concatenated Hájek estimator vector as $\hat{Y}_{\textup{haj}} = (\hat{Y}_{\textup{haj}}(t): t\in\mathcal{T})$. 
% Proposition \ref{prop:haj} shows the numerical equivalence between $\hat{\beta}_{\textup{haj}}$ and $\hat{Y}_{\textup{haj}}$.  
% Let ${W}=\text{diag}\{ {w}_i: i = 1,\ldots, n\}$, ${Y}$ and $e_{\textup{haj}}$ be the vectorization of $\{{Y}_i: i=1,\ldots, n \}$ and $\{e_{i}: i=1,\ldots, n \}$ where $e_i$ is the residual from the above weighted least square regression.
% \begin{proposition} \label{prop:haj}
% $\hat{\beta}_{\textup{haj}} = \hat{Y}_{\textup{haj}}$. 
% % and $\hat{ {V}}_{\textup{haj}} = ( {D}^{\top}  {W}  {D})^{-1}
% % (D^{\top} {W} e_{\textup{haj}} W e_{\textup{haj}}^{\top} {W} D)
% % ( {D}^{\top}  {W}  {D})^{-1}$. 
% \end{proposition}
The numerical equivalence $\hat{\beta}_{\textup{haj}} = \hat{Y}_{\textup{haj}}$ is a well known result and shows the utility of WLS in reproducing the Hájek estimators \citep{AronowSamii2017, Ding2024}.
Theorem \ref{thm:Hájek_asym_n} below states the asymptotic normality of $\hat{\beta}_{\textup{haj}}$. 
% Let ${1}(t)$ be a vector of size $|\mathcal{T}|\times 1$, where all elements are zero except for the $t$th element, which is equal to one.
% Proposition \ref{ref: haj} is numeric and shows we can recover the Hájek estimator with $\hat{\tau}_{\textup{haj}}(t,t') = \hat{\beta}_{\textup{haj}}(t) - \hat{\beta}_{\textup{haj}}(t') = (d(t) - d(t'))^{\top}\hat{\beta}_{\textup{haj}}$.
% The regression-based approach delivers the estimator of the standard errors via the weighted least squares fit.
% Denote by $\hat{ {V}}_{\textup{haj}}$ the network robust variance for $\hat{\beta}_{\textup{haj}}$ from the above weighted least squares fit.
% We define $\hat{\Sigma}^2_{\textup{haj}}(t,t')$ as the network-robust variance for $\hat{\tau}_{\textup{haj}}(t,t')$ based on $\hat{ {V}}_{\textup{haj}}$, i.e., $\hat{\Sigma}^2_{\textup{haj}}(t, t^{\prime})) 
% =  (d(t) - d(t'))^{\top} 
% \hat{ {V}}_{\textup{haj}}
% (d(t) - d(t'))$. 
% Theorem \ref{thm:WLS_Hájek} below establishes the conservative of $\hat{\Sigma}^2_{\textup{haj}}(t,t')$ for the asymptotic variance of $\hat{\tau}_{\textup{haj}}(t,t')$.
% ${\Sigma}^2_\text{n,haj} = \operatorname{Var}\left( \frac{1}{\sqrt{n}} \sum_{i=1}^n \frac{{1}_i(t)}{\pi_i(t)} (Y_i-\mu(t)) \right)$.

\begin{theorem}\label{thm:Hájek_asym_n}
% [PD: $K_n$ is introduced later. i am not sure whether this is the right place to introduce $\hat{ {\Sigma}}_{*,\textup{haj}}$. theorem 3.1 is about clt. move all other results to theorem 3.2? also the theory depends on $b_n$. here we do not even have assumptions on $b_n$.]
Under Assumptions \ref{asu1}--\ref{asu6}, we have ${\Sigma}_{\textup{haj}}^{-1/2} 
\sqrt{n}
( \hat{\beta}_{\textup{haj}} -  {\mu} ) 
\stackrel{\textup{d}}{\rightarrow} \mathcal{N}(0,{I})$.
% \begin{align*}
% {\Sigma}_{\textup{haj}}^{-1/2} 
% \sqrt{n}
% \left( \hat{\beta}_{\textup{haj}} -  {\mu} \right) 
% \stackrel{\textup{d}}{\rightarrow} \mathcal{N}(0,{I}).
% \end{align*}
\end{theorem}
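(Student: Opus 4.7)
The plan is to reduce the ratio estimator to an approximately linear form and then apply a network central limit theorem to the linearized numerator. For each $t\in\mathcal{T}$, use the algebraic identity
\begin{equation*}
\hat{Y}_{\textup{haj}}(t) - \mu(t) \;=\; \frac{1}{\hat{1}_{\text{ht}}(t)}\cdot n^{-1}\sum_{i=1}^n \frac{1(T_i=t)}{\pi_i(t)}\bigl(Y_i-\mu(t)\bigr),
\end{equation*}
so that
\begin{equation*}
\sqrt{n}\,(\hat{\beta}_{\textup{haj}}-\mu) \;=\; \operatorname{diag}\!\bigl(\hat{1}_{\text{ht}}(t)\bigr)^{-1}\,S_n,
\qquad S_n(t) \;=\; n^{-1/2}\sum_{i=1}^n \frac{1(T_i=t)}{\pi_i(t)}\bigl(Y_i-\mu(t)\bigr).
\end{equation*}
By construction $\operatorname{Var}(S_n)=\Sigma_{\textup{haj}}$, and the target statement reduces to showing (i) $\hat{1}_{\text{ht}}(t)\to_p 1$ for each $t$, and (ii) $\Sigma_{\textup{haj}}^{-1/2}S_n\stackrel{\textup{d}}{\to}\mathcal{N}(0,I)$.

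For (i), $\mathbb{E}[\hat{1}_{\text{ht}}(t)]=1$ follows directly from the definition of $\pi_i(t)$. Its variance is controlled via the bounded propensity scores (Assumption \ref{asu2}), the local dependence of the exposure indicators implied by Assumption \ref{asu1}, and the neighborhood-growth bound on $M_n(m_n,2)$ built into Assumption \ref{asu6}; together these yield an $O(M_n(K,1)/n)\to 0$ rate that forces $\hat{1}_{\text{ht}}(t)\to_p 1$.

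For (ii), I would use the Cramér--Wold device: for an arbitrary unit vector $c\in\mathbb{R}^{|\mathcal{T}|}$, rewrite $c^\top S_n = n^{-1/2}\sum_i \psi_{i,n}$, where $\psi_{i,n} = \sum_t c_t\pi_i(t)^{-1}1(T_i=t)(Y_i-\mu(t))$. By Assumption \ref{asu3}, the summands are uniformly bounded; by Assumption \ref{asu1}, $1(T_i=t)$ is measurable with respect to treatments within $\mathcal{N}(i,K;A)$; by Assumption \ref{asu4} together with \eqref{eq:tildetheta}, $Y_i$ is well approximated by a function of the treatments in $\mathcal{N}(i,\lfloor s/2\rfloor;A)$ with error $\tilde\theta_{n,s}$ decaying as $s\to\infty$. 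These are exactly the ingredients needed for the network CLT of Kojevnikov, Marmer and Song (2019) invoked in \citet[Theorem~2]{Leung2022}, and the three limits in Assumption \ref{asu6} are precisely what calibrate the $m_n$-truncation error, the neighborhood-size normalization, and the tail decay relative to $\lambda_{\min}(\Sigma_{\textup{haj}})$. This delivers $\operatorname{Var}(c^\top S_n)^{-1/2}c^\top S_n\stackrel{\textup{d}}{\to}\mathcal{N}(0,1)$ for every $c$; joint convergence then gives $\Sigma_{\textup{haj}}^{-1/2}S_n\stackrel{\textup{d}}{\to}\mathcal{N}(0,I)$. Combining with (i) through Slutsky and the continuous mapping theorem yields the theorem.

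The main obstacle is step (ii): the centering by $\mu(t)$ (rather than by $\mu_i(t)$) means the summands do not have mean zero individually, only in aggregate, so care is needed to cast the sum in a form to which the network CLT applies. The cleanest route is to absorb the centering into the ``outcome'' $\tilde Y_i = Y_i-\mu(t)$, observing that the shift preserves both the conditional moment identity $\mathbb{E}[1(T_i=t)\pi_i(t)^{-1}\tilde Y_i]=\mu_i(t)-\mu(t)$ (whose average over $i$ is zero) and the ANI-type decay inherited from $Y_i$. Once this reformulation is made, the CLT machinery from \cite{Leung2022} applies essentially verbatim, with $\Sigma_{\textup{haj}}$ replacing Leung's scalar $\sigma_n^2$ and the normalization by $\lambda_{\min}(\Sigma_{\textup{haj}})$ in Assumption \ref{asu6} handling nondegeneracy in the vector-valued setting.
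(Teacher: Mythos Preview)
Your proposal is correct and takes essentially the same route as the paper: linearize via $\hat{\beta}_{\textup{haj}}-\mu=\hat{1}_{\text{ht}}^{-1}\hat{Y}'_{\text{ht}}$, show $\hat{1}_{\text{ht}}\to_p I$, and apply Cram\'er--Wold plus the KMS/Leung network CLT to $\hat{Y}'_{\text{ht}}$ after individually centering each summand by $\mu_i(t)-\mu(t)$ (the paper packages this as its Lemma~\ref{lemma:asym}, using exactly the fix you flag as the main obstacle). A minor slip: the CLT in \cite{Leung2022} is his Theorem~3, not Theorem~2.
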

Theorem \ref{thm:Hájek_asym_n} ensures the consistency of $\hat{\beta}_{\textup{haj}}$ for estimating $\mu$ and establishes ${\Sigma}_{\textup{haj}}$
as the asymptotic sampling covariance of $\sqrt{n}
(\hat{\beta}_{\textup{haj}} -  {\mu})$.

% For researchers with a specific interest in the contrast between exposure mapping values $t$ and $t'$, set the coefficient matrix $G$ as follows: $G = (0, \ldots, 1, \ldots, -1, \ldots, 0)$.
% Then the Hájek estimator of $\tau(t,t')$ can be calculated as $\hat{\beta}_{\textup{haj}}(t) - \hat{\beta}_{\textup{haj}}(t') = G\hat{\beta}_{\textup{haj}}$. 
% Define 
% $ 
% \tilde{\Delta}_i(t,t') = \left( \frac{{1}_i(t)}{\pi_i(t)} (Y_i-\mu(t) ) - \frac{{1}_i(t')}{\pi_i(t')} (Y_i-\mu(t') ) \right) 
% $
% and ${\Sigma}_{\textup{haj}}(t,t') = \operatorname{Var}\left( \frac{1}{\sqrt{n}} \sum_{i=1}^n \tilde{\Delta}_i(t,t') \right)$.
% \begin{corollary}\label{cor: haj}
% Under Assumptions \ref{asu1}-\ref{asu4} and \ref{asu6},
%  we have
% \begin{align*}
% {\Sigma}_{\textup{haj}}^{-1}(t,t') 
% \sqrt{n}
% \left( \hat{\tau}_{\textup{haj}}(t,t') - \tau(t,t') \right) 
% \stackrel{\textup{d}}{\rightarrow}  \mathcal{N}(0,1 ). 
% \end{align*}
% \end{corollary}
% \subsection{WLS-based covariance estimation}\label{sec:WLS_Cov}
The regression-based approach provides an estimator for the standard error via the same WLS fit. 
Denote the design matrix of the WLS fit in \eqref{eq:WLS} by an $n\times |\mathcal{T}|$ matrix ${Z} = (z_1,\ldots,z_n)^\top$, where its rows are the vectors $z_i$ for each unit $i\in\mathcal{N}_n$.  Construct the weight matrix ${W}=\text{diag}\{ {w}_i: i = 1,\ldots, n\}$ by placing the weights $w_i$ along the diagonal. Let $ {Y} = (Y_1,\ldots, Y_n)$ denote the vector of the observed outcomes.
% The residual from the above WLS fit is $e_i  
% = {Y}_i - \sum_{t\in \mathcal{T}} {1}_i(t) \hat{\beta}_{\textup{haj}}(t)$.
Diagonalize the residual $e_i$'s from the same WLS fit to form the matrix $ {e}_{\textup{haj}} = \text{diag}\{e_i: i=1,\ldots,n\}$.
%We denote by $\hat{ {V}}_{\textup{haj}}$ the network-robust variance estimator of $\hat{\beta}_{\textup{haj}}$:
Define 
\begin{equation}
\hat{ {V}}_{\textup{haj}}
= ( {Z}^{\top} {W} {Z})^{-1}
( {Z}^{\top} {W} 
{e}_{\textup{haj}} 
{K}_n 
{e}_{\textup{haj}}
{W}  {Z})
( {Z}^{\top} {W} {Z})^{-1}
\label{eq:V_hat}
\end{equation}
as the network-robust covariance estimator of $\hat{\beta}_{\textup{haj}}$, where $ {K}_n$ is a uniform kernel matrix with $(i,j)$th entry $K_{n,ij}={1}(\ell_{ {A}}(i, j)\le b_n)$. 
Here, choosing $b_n > 0$ places nonzero weight on pairs at most path distance $b_n$ apart from each other in the network $A$, which accounts for the network correlation. 
While \eqref{eq:V_hat} adopts the form of an HAC estimator commonly used in spatial econometrics literature, our paper first discusses its design-based properties under the regression-based analysis for network experiments. 
% Clustered standard errors are frequently used to account for network dependence \citep{EcklesKizilcecBakshy2016, AralZhao2019, Zacchia2020, AbadieAtheyImbens2023}. However, determining the optimal way to partition a network into clusters for inference can be challenging. \cite{Leung2023} establishes conditions for the validity of cluster-robust methods under network dependence. 
% \cite{KojevnikovMarmerSong2021} provides a law of large numbers and a central limit theorem for network dependent variables. Additionally, they introduce a technique for computing standard errors that remains robust when confronted with various types of network dependencies. Their approach utilizes a network-based variance estimator and demonstrates the consistency of the variance estimator to the true sampling variance. \cite{Leung2022} proposes a variance estimator for the Horvitz--Thompson estimator of exposure effects. While their variance estimators share a resemblance to the HAC estimator in terms of structure, neither of them directly originates from a regression approach.

We follow the discussion in \cite{Leung2022} regarding the choice of the bandwidth $b_n$. 
Define the average path length, $\mathcal{L}({A})$, as the average value of $\ell_{{A}}(i, j)$ over all pairs in the largest component of ${A}$.
Here, a component of a network refers to a connected subnetwork where all units within the subnetwork are disconnected from those outside of it.
Let $\delta( {A})=$ $n^{-1} \sum_{i=1}^n \sum_{j=1}^n A_{i j}$ be the average degree. \cite{Leung2022} suggests choosing the bandwidth $b_n$ as follows:
\begin{equation}
b_n=\left\lfloor\max \left\{\tilde{b}_n, 2 K\right\}\right\rceil \quad \text { where } \tilde{b}_n= \begin{cases}\frac{1}{2} \mathcal{L}( {A}) & \text { if } \mathcal{L}( {A})<2 \frac{\log n}{\log \delta( {A})}, \\ \mathcal{L}( {A})^{1 / 3} & \text { otherwise, }\end{cases} 
\label{eq:bandwidth}    
\end{equation}
where $\lfloor\cdot\rceil$ means rounding to the nearest integer.  
The choice of bandwidth $b_n$ is based on the following two reasons. 
First, $b_n$ is set to be at least equal to $2K$ to account for the correlation in $\{1(T_i=t)\}_{i=1}^n$ as per Assumption \ref{asu1}.
If the exposure mapping is correctly specified, we can simply choose $b_n=2K$. 
Second, \eqref{eq:bandwidth} chooses a bandwidth of logarithmic or polynomial order depending on the growth rates of the average $K$-neighborhood size. The logarithmic order in $b_n$ applies when the growth rate is approximately exponential in $K$ and polynomial order applies when the growth rate is approximately polynomial in $K$.
% a bandwidth of logarithmic (polynomial) order when neighborhood growth rates are approximately exponential (polynomial).
Furthermore, \cite{Leung2022} justifies that the bandwidth in \eqref{eq:bandwidth} satisfies Assumption \ref{asu7}(b)–(d) under polynomial and exponential neighborhood growth rates.
Since $K$ is researcher-defined, and $\mathcal{L}({A})$ and $\delta( {A})$ can be computed from the observed network data, $b_n$ in \eqref{eq:bandwidth} can be determined accordingly. 
To align with \cite{Leung2022}, we also recommend that researchers report results for multiple bandwidths in a neighborhood of \eqref{eq:bandwidth} as a robustness check. 
We use the empirical application in Section \ref{app:paluck} as an illustrative example to demonstrate how to select the bandwidth.

% \begin{remark}
% % [PD: this paragraph is somewhat distracting. include it in a Remark?]
% Clustered standard errors are also frequently used to account for network dependence \citep{EcklesKizilcecBakshy2016, AralZhao2019, Zacchia2020, AbadieAtheyImbens2023}. However, determining the optimal way to partition a network into clusters for inference can be challenging. 
% \cite{Leung2023} establishes the conditions that validate cluster-robust methods under network dependence, with a focus on a small number of clusters.
% \cite{VivianoLeiImbens2024} studies the design of cluster experiments to estimate the global treatment effect under interference.
% \end{remark}

We impose Assumption \ref{asu7}, as introduced in \citet[Assumption 7]{Leung2022}, to ensure the consistency of the covariance estimator, where $b_n$ is the bandwidth defined in \eqref{eq:bandwidth}.  
Denote by
\[
\mathcal{N}^{\partial}(i, s; {A})=\{j \in \mathcal{N}_n: \ell_{ {A}}(i, j)=s\}
\] 
the $s$-neighborhood boundary of unit $i$, which is the set of units exactly at a distance of $s$ from $i$, and
\[
M_n^{\partial}(s)=n^{-1} \sum_{i=1}^n |\mathcal{N}^{\partial}(i, s; {A})|,
\]
its average size across units. 
% [PD: not sure whether this is the right place to introduce the assumption. nobody knows what is $b_n$ at this point.]
\begin{assumption}
% [For Consistency of Covariance Estimator]
\label{asu7}
(a) $\sum_{s=0}^n M_n^{\partial}(s) \tilde{\theta}_{n, s}^{1-\epsilon}=O(1)$ for some $\epsilon>0$, (b) $M_n(b_n, 1)=o(n^{1/2})$, (c) $M_n(b_n, 2)=o(n)$, (d) $\sum_{s=0}^n|\mathcal{J}_n(s, b_n)| \tilde{\theta}_{n, s}=o(n^2)$. 
\end{assumption}
Assumption \ref{asu7}(a) demonstrates the trade-off between restrictions on the network topology through $M_n^{\partial}(s)$ and the degree of interference through $\tilde{\theta}_{n, s}$.
Assumption \ref{asu7}(b) and (d) regulate the bandwidth $b_n$ by imposing conditions on the first and second moments of the $b_n$-neighborhood size within network $A$.
Assumption \ref{asu7}(d) is used to derive the asymptotic bias, which closely mirrors  Assumption \ref{asu6} with $b_n$ and $\mathcal{J}_n(s, \cdot)$ in place of $m_n$ and $\mathcal{H}_n(s, \cdot)$, respectively. 
Assumption \ref{asu7} strongly depends on the structure of the underlying network. 
\citet[Appendix A.2]{Leung2022} uses a mixture of formal and heuristic arguments to show that the bandwidth $b_n$ in \eqref{eq:bandwidth} satisfies Assumption \ref{asu7}(b)--(d) for networks with polynomial or exponential neighborhood growth rates.

Define ${\Delta}_{\textup{haj}}$ as an $n \times |\mathcal{T}|$ matrix
with $(i,t)$th element ${\Delta}_{\textup{haj},it} = 1(T_i=t)\pi_i(t)^{-1}(Y_i - \mu(t)) - (\mu_i(t) - \mu(t) )$, and $M$ as an $n \times |\mathcal{T}|$ matrix with $(i,t)$th element $M_{it} = \mu_i(t)-\mu(t)$.
Of interest is how this regression-based covariance
estimator approximates the true sampling covariance from the design-based perspective.
\begin{theorem}\label{thm:Hájek_bias}
Define ${ {\Sigma}}_{*,\textup{haj}} = n^{-1} {\Delta}_{\textup{haj}}^\top  {K}_n   {\Delta}_{\textup{haj}}$ and $R_{\textup{haj}} = n^{-1} M^\top  {K}_n M$. 
Under Assumptions \ref{asu1}--\ref{asu4} and \ref{asu7}, we have ${\Sigma}_{*, \textup{haj}}
= {\Sigma}_{\textup{haj}}
+ o_\mathbb{P}(1)$ and $n \hat{ {V}}_{\textup{haj}}
= {\Sigma}_{*, \textup{haj}} + R_{\textup{haj}} + o_\mathbb{P}(1)$.
% \begin{align}
% {\Sigma}_{*, \textup{haj}}
% =&~ {\Sigma}_{\textup{haj}}
% + o_\mathbb{P}(1), \label{eq:oracle} \\
% n \hat{ {V}}_{\textup{haj}}
% =&~  {\Sigma}_{*, \textup{haj}} + R_{\textup{haj}} + o_\mathbb{P}(1) \label{eq:bias}. 
% \end{align}
% where $\hat{ {\Sigma}}_{*,\textup{haj}} = n^{-1}  {\Delta}_{\textup{haj}}^\top  {K}_n   {\Delta}_{\textup{haj}}$ and $R_{\textup{haj}} = n^{-1} M^\top  {K}_n M$. 
\end{theorem}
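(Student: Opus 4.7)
The overall plan is to prove the two claims in sequence, following the general HAC strategy of Leung (2022) but with extra care for the Hájek normalization and for the regression residuals produced by the WLS fit in \eqref{eq:WLS}.

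For the first claim, $\Sigma_{*,\textup{haj}}=\Sigma_{\textup{haj}}+o_\mathbb{P}(1)$, set $U_{it}:=1(T_i=t)\pi_i(t)^{-1}(Y_i-\mu(t))$. From the definition of $\mu_i(t)$ in \eqref{eq:mu}, $\mathbb{E}[1(T_i=t)Y_i/\pi_i(t)]=\mu_i(t)$, so $\mathbb{E}[U_{it}]=M_{it}$ and $\mathbb{E}[\Delta_{\textup{haj},it}]=0$. Hence $\Sigma_{\textup{haj}}(t,t')=n^{-1}\sum_{i,j}\mathbb{E}[\Delta_{\textup{haj},it}\Delta_{\textup{haj},jt'}]$, and the bias $\mathbb{E}[\Sigma_{*,\textup{haj}}]-\Sigma_{\textup{haj}}$ equals minus the same double sum restricted to $\ell_A(i,j)>b_n$. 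A dependence-measure bound of the form $|\operatorname{Cov}(\Delta_{\textup{haj},it},\Delta_{\textup{haj},jt'})|\le C\,\tilde\theta_{n,\ell_A(i,j)}^{1-\epsilon}$ (obtainable from Leung 2022, Theorem 1, using Assumptions \ref{asu1}--\ref{asu3}) together with Assumption \ref{asu7}(a) makes this bias vanish; concentration around the mean is then handled by a fourth-moment calculation controlled by Assumption \ref{asu7}(d).

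For the second claim I would exploit the sandwich structure of $\hat V_{\textup{haj}}$. Since $Z^\top W Z$ is diagonal with $(t,t)$-entry $\sum_i 1(T_i=t)/\pi_i(t)=n\hat 1_{\textup{ht}}(t)$, $n^{-1}Z^\top W Z=I+o_\mathbb{P}(1)$ by the weak law of large numbers for weakly dependent sums used to establish consistency of the Hájek estimator under Assumptions \ref{asu1}--\ref{asu4}. It therefore suffices to show $B_n:=n^{-1}Z^\top W e_{\textup{haj}}K_n e_{\textup{haj}}WZ=\Sigma_{*,\textup{haj}}+R_{\textup{haj}}+o_\mathbb{P}(1)$. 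Writing $e_i=Y_i-\hat\beta_{\textup{haj}}(T_i)$ and invoking $\hat\beta_{\textup{haj}}(t)=\mu(t)+O_\mathbb{P}(n^{-1/2})$ from Theorem \ref{thm:Hájek_asym_n}, I would substitute $\mu(t)$ for $\hat\beta_{\textup{haj}}(t)$ inside $B_n$, so the $(t,t')$-entry reduces to $n^{-1}\sum_{i,j}U_{it}K_{n,ij}U_{jt'}$ up to a plug-in remainder. Decomposing $U_{it}=\Delta_{\textup{haj},it}+M_{it}$ expands this quadratic form into $\Sigma_{*,\textup{haj}}(t,t')+R_{\textup{haj}}(t,t')$ plus two cross terms $n^{-1}\sum_{i,j}\Delta_{\textup{haj},it}K_{n,ij}M_{jt'}$ and its transpose; each has mean zero (since $M$ is nonrandom) and variance bounded by $Cn^{-2}\sum_s|\mathcal{J}_n(s,b_n)|\tilde\theta_{n,s}^{1-\epsilon}=o(1)$ by Assumption \ref{asu7}(d), using the same covariance bound as in the first step.

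The delicate point will be controlling the plug-in remainder from replacing $\hat\beta_{\textup{haj}}$ by $\mu$ inside the HAC quadratic form. The truncation kernel $K_n$ retains $O(nM_n(b_n,1))$ nonzero pairs, so a naive product bound would give a remainder of order $O_\mathbb{P}(n^{-1/2})\cdot O_\mathbb{P}(nM_n(b_n,1))$, which need not vanish. Showing that the relevant weighted double sums of the form $n^{-1}\sum_{i,j}1(T_i=t)\pi_i(t)^{-1}K_{n,ij}U_{jt'}$ are in fact $O_\mathbb{P}(n^{1/2})$ requires a separate concentration argument invoking both the ANI decay of $\tilde\theta_{n,s}$ and the neighborhood-size controls in Assumption \ref{asu7}(b)--(c); once that is established, combining the three steps above delivers the stated decomposition.
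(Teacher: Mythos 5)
Your route is essentially the paper's. For the first claim the paper does not redo the bias--variance calculation you sketch: it applies the network-HAC consistency result of Kojevnikov, Marmer and Song (restated as a lemma in the appendix) to the $\psi$-dependent array $\{\Delta_{\textup{haj},i}\}$, whose $\psi$-dependence with coefficients $\tilde{\theta}_{n,s}$ follows from Leung's Theorem 1; your tail-covariance-plus-fourth-moment argument is just an unpacking of that cited lemma. For the second claim the paper proceeds exactly as you do: Slutsky to remove the $\hat{1}_{\textup{ht}}(t)^{-1}$ factors, substitution of $\mu(t)$ for $\hat{\beta}_{\textup{haj}}(t)$, the decomposition $U_{it}=\Delta_{\textup{haj},it}+M_{it}$, and a variance bound on the cross terms whose diagonal part is controlled by $M_n(b_n,2)$ (Assumption \ref{asu7}(c)) and whose off-diagonal part is controlled by $\sum_s|\mathcal{J}_n(s,b_n)|\tilde{\theta}_{n,s}$ (Assumption \ref{asu7}(d)); note the paper uses power $1$ on $\tilde{\theta}_{n,s}$ here, consistent with \ref{asu7}(d), not $1-\epsilon$.

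The only real issue is your final paragraph, where you leave the plug-in remainder unresolved. It is not delicate: you dropped the $n^{-1}$ normalization of the quadratic form. Since one factor in each summand is uniformly bounded by Assumptions \ref{asu2}--\ref{asu3}, the remainder is bounded by
\[
C\,\bigl|\hat{\beta}_{\textup{haj}}(t')-\mu(t')\bigr|\cdot \frac{1}{n}\sum_{i=1}^n\sum_{j=1}^n K_{n,ij}
= O_\mathbb{P}(n^{-1/2})\, M_n(b_n,1),
\]
not $O_\mathbb{P}(n^{-1/2})\cdot n M_n(b_n,1)$, and Assumption \ref{asu7}(b), $M_n(b_n,1)=o(n^{1/2})$, is there precisely to make this vanish. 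This one-line bound is exactly how the paper closes that step; the ``separate concentration argument'' you defer to is unnecessary, but as written your proof does not complete this step, so you should replace the last paragraph with the bound above.
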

% It is computed by taking the average of the product of the difference between the observed treatment effect for each individual and their individual treatment effect, and then weighting it by an indicator function that accounts for the distance between individuals.
% In this case, the estimator is demeaned at the individual level mean, meaning that it takes into account the average treatment effect for each individual.
We use $_*$ to indicate that ${\Sigma}_{*, \textup{haj}}$ is the ``oracle'' version of covariance estimator, which takes the form of a HAC estimator.
% ${\Sigma}_{*, \textup{haj}}$ centers around $\mu_i(t) - \mu(t)$, the individual-level deviation of the expected response from the population average under exposure mapping value $t$. 
Theorem \ref{thm:Hájek_bias} first demonstrates that ${\Sigma}_{*, \textup{haj}}$ closely approximates the asymptotic covariance ${\Sigma}_{\textup{haj}}$ and then presents the asymptotic bias of the network-robust covariance estimator in estimating ${\Sigma}_{*, \textup{haj}}$. 
The bias term $R_{\textup{haj}}$ adopts the form of an HAC covariance estimator of the individual-level expected response. The covariance estimation is asymptotically exact with constant individual-level expected response under any exposure mapping value $t\in\mathcal{T}$, which is similar to the canonical results of \cite{Neyman1923} without interference. 
% [PD: citation]. 
In some cases, the uniform kernel used in the network-robust covariance estimator $\hat{{V}}_{\textup{haj}}$ may not be positive semi-definite. This issue can result in an anti-conservative covariance estimator, which can in turn affect the accuracy of hypothesis testing and confidence intervals.  
We will address this issue in the next subsection. Now we end this subsection with a remark on the literature of HAC covariance estimators for network and spatial data. 

\begin{remark}
\cite{AronowSamii2017} studied under the assumption of correctly specified exposure mappings and focused on the Horvitz--Thompson estimator for causal effects. They also discussed the Hájek estimator and its WLS formulation. However, they did not establish the result that justifies the corresponding network HAC estimator from WLS fits, which is easy to implement for applied researchers. 
\citet[Appendix B]{Leung2022} compares his variance estimator to that of \citet{AronowSamii2017}, showing that while the bias terms are not generally ordered, his estimator has a smaller bias in the special case of no interference and homogeneous unit-level exposure effects. He also provides simulation evidence that \citet{AronowSamii2017}’s estimator can exhibit larger bias under a simple model of interference.
\end{remark}

\begin{remark} 
Another related literature strand pertains to the application of HAC estimator in spatial econometrics \citep{Andrews1991, Conley1999, Matyas1999, KelejianPrucha2007, KimSun2011}. 
\cite{WangSamiiChang2025} discussed the usage of regression estimators for causal effects from the design-based perspective and showed that the spatial HAC estimator provided asymptotically conservative inference under certain assumptions. 
Neither \cite{AronowSamii2017} nor \cite{WangSamiiChang2025} discussed how to increase efficiency by incorporating covariate information, which will be our focus in Section \ref{sec:covadju}.  
% [PD: polish this remark. the technical comments only make sense after we introduce the technical results.]
\cite{XuWooldridge2022}  
recommended using spatial HAC standard errors to account for spatial correlation.
% in two cases when the sampling probability is non-negligible: 
% (i) assignment variables exhibit spatial correlation, 
% or (ii) spillover effects are estimated in the model. 
Because the exposure mappings are not independent across units in network experiments, we use network HAC standard errors to take care of dependence when estimating exposure effects, which is the estimand of interest. 
% [PD: i am sure what you want to say about their paper? they did.. .. and what's new in our paper?]
\end{remark}
% \begin{remark}
% \cite{AronowSamii2017} proposes a covariance estimator that is only valid under correctly specified exposure mappings.
% \end{remark}
\subsection{Improvement on covariance estimation} \label{sec:modification}

% The bias term $R_{\textup{haj}}$ cannot be guaranteed non-negative. 
% Notice that $\frac{1}{n} \sum_{i=1}^n \sum_{j=1}^n
% (\tau_i(t,t') - \tau(t,t'))
% (\tau_j(t,t') - \tau(t,t')) 
% = 0$, which can be decomposed as $0 = R_n + R_n^C$ with $R_n^C \equiv \frac{1}{n} \sum_{i=1}^n \sum_{j=1}^n
% (\tau_i(t,t') - \tau(t,t'))
% (\tau_j(t,t') - \tau(t,t'))
% {1}\left\{\ell_{ {A}}(i, j) > b_n \right\}$. Hence, $\hat{\Sigma}^2_{\textup{haj}}(t,t')$ is guaranteed to be conservative when $R_n^C < 0$. 

% In some cases, the truncated kernel used in network-robust variance estimation may not be positive semi-definite. This can lead to a variance estimator that is non-positive and anti-conservative, which can affect the accuracy of hypothesis testing and confidence interval construction.
There are four main concerns regarding the properties of the HAC variance estimator. 
First, it should ideally be non-negative in finite-samples, despite the kernel not always being positive semi-definite. 
Second, the HAC estimator is biased in a design-based setting, and it is desirable for the bias term to be asymptotically non-negative to ensure conservative inference. 
Third, HAC estimators often yield values that are too small in finite-samples compared with the true variance, leading to false discoveries. 
Finally, a computationally feasible bandwidth sequence is necessary for ensuring the consistency of the HAC estimator.

In this subsection, we tackle these issues by proposing a modification to the uniform kernel.
Our proposed modification preserves the network-robustness of the covariance estimator while ensuring that it remains positive semi-definite and conservative.
Let $Q_n \Lambda_n Q_n^{\top}$ be the eigendecomposition of $ {K}_n$. As ${K}_n$ is symmetric, all its eigenvalues are real. 
% Also let $\underline{\lambda}(A)$ denote the smallest eigenvalue of $A$, e.g., $\underline{\lambda}\left(\hat{\Sigma}_n\right)=\min _{1 \leq k \leq v} \Lambda_n$. 
% Consider a sequence of small positive real numbers $c_n \searrow 0$. 
We define the adjusted kernel matrix by truncating the negative eigenvalues at $0$ as ${K}_n^{+}:= Q_n \max\{ \Lambda_n, 0\} Q_n^{\top}$,
% \[
% {K}_n^{+}:= Q_n \max\{ \Lambda_n, 0\} Q_n^{\top},
% %= Q_n \Lambda_n Q_n^{\top}
% %+ Q_n |\min \{ \Lambda_n, 0\}| Q_n^{\top}
% %=  {K}_n +  {K}_n^{-},
% \]
where the maximum is taken element-wise.  Letting ${K}_n^{-} := Q_n |\min \{ \Lambda_n, 0\}| Q_n^{\top}$ with the minimum taken element-wise, we can also write $ {K}_n^{+} 
=  {K}_n +  {K}_n^{-}$.  
% [PD: check this part. the original presentation was too dense.]
By construction, the matrix $ {K}_n^{\diamond}$ ($\diamond=+,-$) is positive semi-definite, and we denote the $(i,j)$th entry of $ {K}_n^{\diamond}$ as ${K}_{n,ij}^{\diamond}$. 
If \( K_n \) were positive semi-definite, then $K_n = {K}_n^+ $. 
% [PD: this seems a confusing notation. why not just use ${K}_{n,ij}^{\diamond}$?] 
We propose the adjusted HAC covariance estimator as
\begin{equation}
\hat{ {V}}_{\textup{haj}}^+
= ( {Z}^{\top}  {W}  {Z})^{-1}
( {Z}^{\top}  {W} 
{e}_{\textup{haj}} 
{K}_n^+ 
{e}_{\textup{haj}}   {W}  {Z})
( {Z}^{\top}  {W}  {Z})^{-1}.  
\label{V+}
\end{equation}
% Moreover, in the case when the smallest eigenvalue of $\Sigma_n$ is bounded from below
% We propose the adjusted variance estimator as
% \[
% \hat{\Sigma}^2_{{\textup{haj}},+}(t,t') 
% = \frac{1}{n} \hat{\mathbf{\Delta}}_{\textup{haj}}(t,t')^{\top}  {K}_n^{+} \hat{\mathbf{\Delta}}_{\textup{haj}}(t,t').
% \]
% The adjusted variance estimator, denoted by $\hat{\Sigma}_{\textup{haj},+}^{2}(t,t')$, is defined as
% \[
% \hat{\Sigma}_{\textup{haj},+}^{2}(t,t') 
% = \frac{1}{n} \left(\mathbf{\Delta}(t,t')-\hat{\tau}(t, t^{\prime}))\right)^{\top}  {K}_n^{+} \left(\mathbf{\Delta}(t,t')-\hat{\tau}(t, t^{\prime}))\right).
% \]
To guarantee the asymptotic conservativeness of $\hat{ {V}}_{\textup{haj}}^+$, we impose Assumption \ref{asu8} below, which pertains to the properties of $K_n^{-}$.
Recall that $K_{n,ij} = {1}(\ell_{ {A}}(i, j)\le b_n)$ and write $M_n(m,k)$ and $\mathcal{J}_n(s, m)$ in \eqref{eq:Mn} and \eqref{eq:Jn} with $m=b_n$ as:
\begin{eqnarray*}
M_n(b_n,k) 
&=& \frac{1}{n} \sum_{i=1}^n \left( \sum_{j=1}^n K_{n,ij} \right)^k \\
\mathcal{J}_n(s, b_n)
&=& \sum_{i=1}^n \sum_{j=1}^n 
{1}(\ell_{ {A}}(i, j)=s) 
\cdot
\sum_{k=1}^n K_{n,ik}
\cdot 
\sum_{l=1}^n K_{n,jl}.
\end{eqnarray*}
% \begin{align*}
% M_n(b_n,k) 
% =& \frac{1}{n} \sum_{i=1}^n \left( \sum_{j=1}^n K_{n,ij} \right)^k
% \end{align*} 
% and
% \begin{align*}
% \mathcal{J}_n(s, b_n)
% =& \sum_{i=1}^n \sum_{j=1}^n 
% {1}(\ell_{ {A}}(i, j)=s) 
% \cdot
% \sum_{k=1}^n K_{n,ik}
% \cdot 
% \sum_{l=1}^n K_{n,jl}.
% \end{align*}
Define $M_n^{-}(b_n,k)$ and $\mathcal{J}^{-}_n(s, b_n)$ as the counterparts of $M_n(b_n,k)$ and $\mathcal{J}_n(s, b_n)$ on $|K_n^-|$, respectively:
\begin{eqnarray*}
M_n^{-}(b_n,k) 
&=& \frac{1}{n} \sum_{i=1}^n \left( \sum_{j=1}^n 
\left| {K}_{n,ij}^- \right| \right)^k \\
\mathcal{J}^{-}_n(s, b_n)
&=& \sum_{i=1}^n \sum_{j=1}^n 
{1}(\ell_{ {A}}(i, j)=s) 
\cdot
\sum_{k=1}^n \left| {K}_{n,ik}^- \right|
\cdot 
\sum_{l=1}^n \left| {K}_{n,jl}^- \right|.
\end{eqnarray*}
% \[
% M_n^{-}(b_n,k) = \frac{1}{n} \sum_{i=1}^n \left( \sum_{j=1}^n 
% \left| {K}_{n,ij}^- \right| \right)^k
% \]
% and
% \[
% \mathcal{J}^{-}_n(s, b_n)
% = \sum_{i=1}^n \sum_{j=1}^n 
% {1}(\ell_{ {A}}(i, j)=s) 
% \cdot
% \sum_{k=1}^n \left| {K}_{n,ik}^- \right|
% \cdot 
% \sum_{l=1}^n \left| {K}_{n,jl}^- \right|.
% \]
% [PD: the formula of J seems different from before. maybe we need to give a similar formula before to make the presentation more symmetric?]
Assumption \ref{asu8} is the analogue of Assumption \ref{asu7}, but specifically tailored to the quantity $|K_n^-|$, with Assumption \ref{asu8}(a) identical to Assumption \ref{asu7}(a).

\begin{assumption}
\label{asu8}
(a) $\sum_{s=0}^n M_n^{\partial}(s) \tilde{\theta}_{n, s}^{1-\epsilon}=O(1)$ for some $\epsilon>0$,
(b) $M_n^{-}(b_n, 1)=o(n^{1 / 2})$,
(c) $M_n^{-}(b_n, 2)=o(n)$, 
(d) $\sum_{s=0}^n|\mathcal{J}^{-}_n(s, b_n)| \tilde{\theta}_{n, s}=o(n^2)$.
\end{assumption}

% Define $ {\Delta}_{\textup{haj}}(t, t^{\prime})$  as the stacked vector of $\left( \tilde{\Delta}_i(t,t') - \left(\tau_i(t, t^{\prime})-\tau(t, t^{\prime})\right) \right)$. 
% Define $\overline{ \tau}(t,t')$ as the stacked vector of $\tau_i(t, t^{\prime}))-\tau(t, t^{\prime})$ and $\hat{\mathbf{\Delta}}_{\textup{haj}}(t,t')$ as the stacked vector of
% \[
% \frac{\frac{{1}_i(t)}{\pi_i(t)}}{\hat 1_{\text{ht}}(t)} 
% (Y_i - \hat{\beta}_{\textup{haj}}(t))   
% - \frac{\frac{{1}_i(t')}{\pi_i(t')}}{\hat 1_{\text{ht}}(t')}   
% (Y_i- \hat{\beta}_{\textup{haj}}(t')). 
% \]
% Let $R_{\textup{haj}}^+ = \frac{1}{n} M^\top  {K}_n^{+}  M + \frac{1}{n}  {\Delta}_{\textup{haj}}^\top  {K}_n^{-}  {\Delta}_{\textup{haj}}$.

\begin{theorem} \label{thm:Hájek_adj}
Define $R_{\textup{haj}}^+ = n^{-1} M^\top  {K}_n^{+}  M 
+ n^{-1} {\Delta}_{\textup{haj}}^\top  {K}_n^{-}  {\Delta}_{\textup{haj}} \geq 0$.
Under Assumptions \ref{asu1}--\ref{asu4} and \ref{asu8}, we have $n \hat{ {V}}_{\textup{haj}}^+
= \hat{\Sigma}_{*,{\textup{haj}}} + R_{\textup{haj}}^+ + o_\mathbb{P}(1)$,
% \[
% n \hat{ {V}}_{\textup{haj}}^+
% = \hat{\Sigma}_{*,{\textup{haj}}} + R_{\textup{haj}}^+ + o_\mathbb{P}(1),
% %\ge \hat{\Sigma}_{*,{\textup{haj}}} + o_\mathbb{P}(1),
% \]
% [PD: i modified the presentation here. you simply want to say that R is nonnegative? using inequalities with the o-p notation can be confusing.]
where $\hat{\Sigma}_{*,{\textup{haj}}}$ is defined in Theorem \ref{thm:Hájek_bias}.
% where 
% \begin{align*}
% R_{\textup{haj}}^+(t,t')
% =& \frac{1}{n} 
% \overline{ \tau}(t,t')^{\top} 
%  {K}_n^{+} 
% \overline{ \tau}(t,t') 
% + \frac{1}{n}
%  {\Delta}_{\textup{haj}}(t, t^{\prime})^{\top}
%  {K}_n^{-} 
%  {\Delta}_{\textup{haj}}(t, t^{\prime})    
% \end{align*}
\end{theorem}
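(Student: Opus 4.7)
The plan is to exploit the additive decomposition $K_n^+ = K_n + K_n^-$, which linearizes the kernel adjustment and reduces the analysis of $n\hat{V}_{\textup{haj}}^+$ to that of $n\hat{V}_{\textup{haj}}$ (already handled in Theorem \ref{thm:Hájek_bias}) plus an additional ``$K_n^-$ piece.'' Because the sandwich in \eqref{V+} is linear in the kernel matrix while the bread $(Z^\top W Z)^{-1}$ and the residual matrix $e_{\textup{haj}}$ are unchanged, I can write $\hat{V}_{\textup{haj}}^+ = \hat{V}_{\textup{haj}} + \hat{V}_{\textup{haj},-}$ exactly, where $\hat{V}_{\textup{haj},-}$ has the identical sandwich form but with $K_n^-$ replacing $K_n$ in the meat.

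The first summand is handled directly by Theorem \ref{thm:Hájek_bias}, yielding $n\hat{V}_{\textup{haj}} = \Sigma_{*,\textup{haj}} + R_{\textup{haj}} + o_\mathbb{P}(1) = n^{-1}\Delta_{\textup{haj}}^\top K_n \Delta_{\textup{haj}} + n^{-1} M^\top K_n M + o_\mathbb{P}(1)$. For the second summand I would mimic the proof of Theorem \ref{thm:Hájek_bias} step by step, replacing $K_n$ by $K_n^-$ throughout. The only places the original argument interacts with the kernel are through first- and second-moment bounds on row sums of $|K_n|$ and on paired row sums indexed by path distance, captured by $M_n(b_n,k)$ and $\mathcal{J}_n(s,b_n)$; Assumption \ref{asu8}(b)--(d) supply exactly the analogous bounds for $|K_n^-|$, while Assumption \ref{asu8}(a) (identical to Assumption \ref{asu7}(a)) controls the ANI-driven weak-dependence structure independently of the kernel. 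This gives $n\hat{V}_{\textup{haj},-} = n^{-1}\Delta_{\textup{haj}}^\top K_n^- \Delta_{\textup{haj}} + n^{-1} M^\top K_n^- M + o_\mathbb{P}(1)$. Adding the two expansions and using $K_n + K_n^- = K_n^+$ on the $M^\top(\cdot)M$ piece, I obtain $n\hat{V}_{\textup{haj}}^+ = \Sigma_{*,\textup{haj}} + n^{-1}\Delta_{\textup{haj}}^\top K_n^- \Delta_{\textup{haj}} + n^{-1} M^\top K_n^+ M + o_\mathbb{P}(1) = \Sigma_{*,\textup{haj}} + R_{\textup{haj}}^+ + o_\mathbb{P}(1)$. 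The non-negativity $R_{\textup{haj}}^+ \ge 0$ is immediate since $K_n^+$ and $K_n^-$ are positive semi-definite by their eigendecomposition definitions, so both quadratic-form summands are positive semi-definite in the $|\mathcal{T}|\times|\mathcal{T}|$ sense.

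The main obstacle is verifying that the proof of Theorem \ref{thm:Hájek_bias} is genuinely ``kernel-agnostic'': every estimate must continue to go through after the non-negative indicator kernel $K_n$ is replaced by the generally sign-indefinite, potentially dense matrix $K_n^-$. One must track carefully where absolute values enter: each variance or covariance bound that originally invoked $K_{n,ij}$ (which equals $|K_{n,ij}|$) now invokes $|K_{n,ij}^-|$, which is precisely why Assumption \ref{asu8} is formulated in terms of $|K_n^-|$ rather than $K_n^-$ itself. A secondary point to check is that the cross-terms between $\Delta_{\textup{haj}}$ and $M$ in the bias decomposition still vanish in probability; these vanish by the mean-zero property of each row of $\Delta_{\textup{haj}}$, a structural statement about centering that does not depend on the kernel, so the cross-term argument carries over without modification. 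Once these two points are verified, Assumptions \ref{asu1}--\ref{asu4} and \ref{asu8} deliver the stated conclusion.
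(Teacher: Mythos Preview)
Your proposal is correct and follows essentially the same route as the paper: decompose $\hat{V}_{\textup{haj}}^+$ additively via $K_n^+ = K_n + K_n^-$, invoke Theorem~\ref{thm:Hájek_bias} for the $K_n$ piece, and rerun its proof with $K_n^-$ in place of $K_n$, using Assumption~\ref{asu8}(b)--(d) wherever the original argument called on Assumption~\ref{asu7}(b)--(d). The paper's proof is organized in exactly this way, with the same reliance on Theorem~\ref{thm:Hájek_bias} for the unadjusted piece and the same ``replace Assumption~\ref{asu7} by Assumption~\ref{asu8}'' strategy for the $K_n^-$ remainder.
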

Theorem \ref{thm:Hájek_adj} delineates two key advantages stemming from the construction of the adjusted covariance estimator. First, it ensures that the covariance estimator $\hat{ {V}}_{\textup{haj}}^+$ is positive definite. Second, it produces a positively adjusted bias term $R_{\textup{haj}}^+$, leading to the conservativeness of $\hat{ {V}}_{\textup{haj}}^+$ for estimating the true sampling covariance.
Theorems \ref{thm:Hájek_asym_n} and \ref{thm:Hájek_adj} together justify the regression-based inference of $\tau = G\mu$ from the WLS fit \eqref{eq:WLS} with the point estimator $\hat{\tau} = G\hat{\beta}_{\textup{haj}}$ and the adjusted regression-based HAC covariance estimator $G \hat{ {V}}_{\textup{haj}}^+ G^\top$. 

\begin{remark}\label{justification}
It remains unclear what restrictions on the network topology would ensure that Assumption \ref{asu8} holds when using the bandwidth choice $b_n$ in \eqref{eq:bandwidth}. We leave this as an open question, including whether alternative bandwidth choices could satisfy Assumption \ref{asu8} for certain classes of network structures. In Appendix \ref{app:asu8}, we provide some numerical justification that Assumption \ref{asu8} holds under the choice $b_n$ in \eqref{eq:bandwidth} for two network models.
% When the exposure mapping is correct, then we can choose $b_n = 2K$. So this technical issue only exists with misspecified exposure mapping.
\end{remark}

\subsection{Discussion on other covariance estimation strategies}
% [To be edited]
In this subsection, we briefly discuss other covariance estimation strategies. 
% \cite{WangSamiiChang2023} justified the usage of regression estimators in the spatial setting from the design-based perspective and provide causal interpretations for the coefficients. 
\cite{KojevnikovMarmerSong2021} provides a law of large numbers and a central limit theorem for network dependent variables. Additionally, they introduce a technique for computing standard errors that remains robust under various types of network dependencies. 
Their approach relies on a network HAC covariance estimator for a broad class of kernel functions, which they show consistently estimates the true sampling covariance.
As demonstrated in \citet[Remark 1]{Leung2022}, the uniform kernel provides better size control, especially in cases with smaller samples, compared with alternative kernels that diminish with distance. 
Considering these reasons, we opt for the uniform kernel. 
% In Section \ref{Simulation}, we report the HAC covariance estimators using the kernels in \cite{Leung2019d} and \cite{Kojevnikov2021} and illustrate their poor finite-sample performance. 

\citet{Leung2022} proposes a covariance estimator for the Horvitz--Thompson estimator of exposure effects, while \citet{Kojevnikov2021} develops bootstrap-based alternatives to network HAC estimation. Although both estimators share similarities with the HAC framework, neither is derived from a regression-based approach. \citet{Kojevnikov2021} ensures that the resulting estimator is positive semi-definite, and \citet{Leung2019e} refines this approach by showing that, under a specific bandwidth choice, the variance estimator exhibits non-negative asymptotic bias. However, both methods suffer from substantial overrejection in finite-sample simulations. 
We compare the finite-sample performance of our estimator with those of \cite{Leung2019e} and \cite{Kojevnikov2021} in Section \ref{Simulation}.
 
\citet{Leung2022} and our regression-based HAC estimator \( \hat{V}_{\textup{haj}} \) both use the uniform kernel, which helps mitigate overrejection in finite-samples. 
As shown in \citet[Appendix A]{Leung2022a}, \citet{Leung2022}'s variance estimator is asymptotically conservative under mild weak dependence conditions on the super-population. 
However, the non-positive semi-definiteness of the uniform kernel can lead both estimators to produce negative variance estimates in finite-samples, resulting in potential anti-conservativeness in both asymptotic theory and simulations.
The idea of replacing the negative eigenvalues of $K_n$ with non-negative values appeared in \citet[Appendix B]{Kojevnikov2021}, which can be traced back to the literature on approximating a symmetric matrix by a positive definite matrix \citep{Higham1988,Politis2009}. The key distinction is that \cite{Kojevnikov2021} applied this technique to the final HAC covariance estimator, while we apply it to the kernel matrix. There are two limitations of \cite{Kojevnikov2021}'s approach. First, it is not suitable for estimating a single causal effect, as when the HAC estimator is scalar, it merely involves replacing a negative variance estimate with zero.
% [PD: just 0?]
In contrast, our approach is applicable to joint causal effects. Second, \cite{Kojevnikov2021}'s approach does not address the issue of anti-conservativeness, as the crucial factor for positive bias is the positive semi-definiteness of $K_n$.
\cite{WangSamiiChang2025} recently applied our strategy to the HAC variance estimator in the spatial experiments and found better finite-sample properties. 

\section{Regression-based covariate adjustment}
\label{sec:covadju}
% However, they did not employ the regression coefficients as point estimates or use regression-associated standard errors for inference. Moreover, they did not discuss the design-based properties of the network HAC estimator with covariate adjustment.  

\subsection{Background: covariate adjustment without interference}  
Regression-based methods offer a natural framework for incorporating covariates and can lead to efficiency gains under appropriate conditions.\footnote{\citet{AronowSamii2017} discussed the use of covariates to improve efficiency via difference estimators, although they did not implement this approach in their analysis.} 
To set the stage for our discussion, we briefly review the theory of covariate adjustment under complete randomization without interference.

Consider an experimental setup involving a binary intervention and a population of $n$ units with potential outcomes denoted by $Y_i(0)$ and $Y_i(1)$ for each unit $i = 1, \ldots, n$. The average treatment effect within the finite population is denoted by $\tau(1,0) = \bar{Y}(1) - \bar{Y}(0)$, where $\bar{Y}(z) = n^{-1}\sum_{i=1}^n Y_i(z)$ for $z = 0, 1$.
% [PD: little z to be coherent]
Denote by $z_i$ the treatment indicator of unit $i$ under complete randomization. 
The difference-in-means estimator is unbiased for $\tau(1,0)$, and equals the coefficient of $z_i$ from the Ordinary Least Squares (OLS) regression of $Y_i$ on $(1, z_i)$.
%  $\text{lm}(Y_i \sim 1+z_i)$. 
Given the covariate vector $x_i = (x_{i1}, \ldots, x_{iJ})$ for $i = 1, \ldots, n$, 
\cite{Fisher1935} proposed to use the coefficient of $z_i$ from the OLS fit of regressing $Y_i$ on $(1,z_i,x_i)$ 
% $\text{lm}(Y_i \sim 1 + z_i + x_i)$ 
to estimate $\tau(1,0)$. 
\cite{Freedman2008} criticized this approach, highlighting its potential for efficiency loss compared to the difference-in-means estimator.
\cite{Lin2013} introduced an improved estimator, defined as the coefficient of $z_i$ obtained from the OLS regressing of $Y_i$ on $(1,z_i,(x_i-\bar{x}), z_i(x_i-\bar{x}))$.  This specification includes covariates as well as treatment-covariate interactions.
He proved that this estimator is at least as efficient as the difference-in-means and \cite{Fisher1935}'s estimators in the asymptotic sense.
% $\text{lm}(Y_i \sim 1 + z_i + (x_i-\bar{x}) + z_i(x_i-\bar{x}))$ 

We refer to the regression proposed by \cite{Fisher1935} as the additive specification, and \cite{Lin2013}'s regression as the fully-interacted specification to avoid any ambiguity.
We expand upon their findings in the context of network experiments, which incorporate interference, through the utilization of WLS fits.
% We will focus on the additive and fully-interacted specifications, study the design-based properties of the estimators, and compare their efficiency gains over the unadjusted counterparts.
To simplify the presentation, we center the covariates at $\bar{x} = n^{-1} \sum_{i=1}^n x_i = 0$.
%  which does not complicate the theory in the design-based framework with fixed covariates. 

\subsection{Additive regression in network experiments}\label{sec:add}
% Define $d_i = \{{1}_i(t)\}_{t\in\mathcal{T}}$ and $D$ being the matrix with rows $d_i$. Let $W=diag\{{w}_i, i = 1,\ldots, n\}$. 
Recall $z_i = (1(T_i=t): t\in \mathcal{T})$ as the dummies for the exposure mapping in the network experiment.  
% [PD: i added this sentence because $z_i$ appears as the binary indicator in the section above.  see my next comment.]
Consider the WLS fit  
\begin{equation}
\text{regress } Y_i \text{ on } (z_i, x_i) \text{ with weights }    
w_i = 1/\pi_i(T_i) 
\label{eq:add}.
\end{equation}
% $Y_i \sim z_i + x_i$ with weight $w_i = \sum_{t\in \mathcal{T}} \frac{{1}_i(t)}{\pi_i(t)}$. 
% To simplify the presentation, we center the covariates to have $\bar{x} = \frac{1}{n}\sum_{i=1}^n x_i = 0$. 
% Let $\hat{x}(t)$ be the Horvitz–-Thompson estimator of $\bar{x}$ based on units under treatment $t$.
Let $\hat{\beta}_{\textup{haj},\textsc{f}}$ denote the estimtors of coefficients for $z_i$ from the above WLS fit and $\hat{\beta}_{\textup{haj},\textsc{f}}(t)$ denote the element in $\hat{\beta}_{\textup{haj},\textsc{f}}$ corresponding to $1(T_i=t)$. 
We use the subscript ``F'' to signify \cite{Fisher1935}.
Assumption \ref{asu9} below imposes the uniform boundedness of $x_i$ and adapts Assumption \ref{asu6} to its version with covariate adjustment. 
% Specifically, we denote by $\hat{ {V}}_{\textup{haj},\textsc{f}}$ the network-robust variance for $\hat{\beta}_{\textup{haj},\textsc{f}}$ obtained from this procedure. We define $\hat{\Sigma}^2_{\textup{haj},\textsc{f}}(t, t^{\prime}))$ as the network-robust variance for $\hat{\tau}_\textup{haj}\left(t, t^{\prime},\hat{\gamma}_\textsc{f}\right)$ based on $\hat{ {V}}_{\textup{haj},\textsc{f}}$, calculated as $(d(t) - d(t'))^{\top} \hat{ {V}}_{\textup{haj},\textsc{f}} (d(t) - d(t'))$. Theorem \ref{thm:add} below establishes the conservative of $\hat{\Sigma}^2_{\textup{haj},\textsc{f}}(\hat{\gamma}_\textsc{f})$ as the variance estimator for $\hat{\tau}_\textup{haj}\left(t, t^{\prime},\hat{\gamma}_\textsc{f}\right)$.
% Define $\Sigma_n^2(x) = \operatorname{Var}\left( \left\{ n^{-1 / 2} \sum_{i=1}^n (Y_i-x_i^\top \gamma(t)-\mu(t)) \frac{{1}_i(t)}{\pi_i(t)}  \right\}_{t\in\mathcal{T}} \right)$.
\begin{assumption}\label{asu9}
% The covariate $x_i$ satisfies
(a) $||x_i||<c_x<\infty$, where $c_x$ is an absolute constant. \\ 
% [PD: why vector? what does $|...|$ mean? just use the eucliean norm?]. \\
(b) For the covariance matrix
\[
\Sigma_n(\gamma) = \operatorname{Var}\left( n^{-1 / 2} \sum_{i=1}^n \frac{1(T_i=t)}{\pi_i(t)} (Y_i-x_i^\top \gamma(t)-\mu(t))   : {t\in\mathcal{T}} \right)
\]
with finite and fixed vector $(\gamma(t):t\in\mathcal{T})$,
define $\lambda_{\min}(\Sigma_n(\gamma))$ as the smallest eigenvalue of $\Sigma_n(\gamma)$.
There exist $\epsilon>0$ and a positive sequence $\{m_n\}_{n \in \mathbb{N}}$ such that as $n\rightarrow \infty$ we have $m_n \rightarrow \infty$ and 
% \begin{align*}
% \Sigma_n^{-2}(\gamma) n^{-2} \sum_{s=0}^n\left|\mathcal{H}_n(s, m_n)\right| \tilde{\theta}_{n, s}^{1-\epsilon} \rightarrow 0,
% \text{ }
% \Sigma_n^{-\frac{3}{2}}(\gamma) n^{-\frac{1}{2}} M_n(m_n, 2) \rightarrow 0, 
% \text{ }
% \Sigma_n^{-\frac{1}{2}}(\gamma) n^{\frac{3}{2}} \tilde{\theta}_{n, m_n}^{1-\epsilon} \rightarrow 0.
% \end{align*}
\begin{align*}
\frac{n^{-2} \sum_{s=0}^n\left|\mathcal{H}_n(s, m_n)\right| \tilde{\theta}_{n, s}^{1-\epsilon}}{({\lambda_{\min}(\Sigma_n(\gamma))
})^2} \rightarrow 0,
\quad
\frac{n^{-1 / 2} M_n(m_n, 2)}{ (\lambda_{\min}(\Sigma_n(\gamma)))^{3/2}} \rightarrow 0, 
\quad 
\frac{ n^{3 / 2} \tilde{\theta}_{n, m_n}^{1-\epsilon}}{\sqrt{\lambda_{\min}(\Sigma_n(\gamma))
}} \rightarrow 0.
\end{align*}
% \begin{itemize}
% \item[(i)] $|x_i|<c_x<\infty$, where $c_x$ is some absolute constant value. 
% \item[(ii)] For
% \[
% \Sigma_n(\gamma) = \operatorname{Var}\left( n^{-1 / 2} \sum_{i=1}^n \frac{{1}_i(t)}{\pi_i(t)} (Y_i-x_i^\top \gamma(t)-\mu(t))   : {t\in\mathcal{T}} \right)
% \]
% with some finite and fixed $\gamma(t)$,
% there exist $\epsilon>0$ and a sequence of positive constants $\left\{m_n\right\}_{n \in \mathbb{N}}$ such that as $n\rightarrow \infty$ we have $m_n \rightarrow \infty$ and 
% \begin{align*}
% \Sigma_n^{-2}(\gamma) n^{-2} \sum_{s=0}^n\left|\mathcal{H}_n(s, m_n)\right| \tilde{\theta}_{n, s}^{1-\epsilon} \rightarrow 0, \quad
% \Sigma_n^{-3/2}(\gamma) n^{-1 / 2} M_n(m_n, 2) \rightarrow 0, \quad
% \Sigma_n^{-1/2}(\gamma) n^{3 / 2} \tilde{\theta}_{n, m_n}^{1-\epsilon} \rightarrow 0.
% \end{align*}
% $$
% \max \left\{
% \Sigma_n^{-2}(\gamma) n^{-2} \sum_{s=0}^n\left|\mathcal{H}_n(s, m_n)\right| \tilde{\theta}_{n, s}^{1-\epsilon}, \Sigma_n^{-3/2}(\gamma) n^{-1 / 2} M_n(m_n, 2), \Sigma_n^{-1/2}(\gamma) n^{3 / 2} \tilde{\theta}_{n, m_n}^{1-\epsilon}
% \right\} \rightarrow 0.
% $$
% \end{itemize}
\end{assumption}

Let ${\gamma}_\textsc{f}$ denote the probability limit of $\hat{\gamma}_\textsc{f}$, where $\hat{\gamma}_\textsc{f}$ is the coefficient vector of $x_i$ from the WLS fit in \eqref{eq:add}.  Let ${\Sigma}_{\textup{haj},\textsc{f}}$ denote the analog of ${\Sigma}_{\textup{haj}}$ in \eqref{eq:Sigma_haj} defined on the covariate-adjusted outcome $Y_i-x_i^\top {\gamma}_\textsc{f} $. 
% Let ${\Delta}_{\textup{haj}, \textsc{f}}$ be the analog of ${\Delta}_{\textup{haj}}$ defined on the adjusted outcomes $Y_i-x_i^\top {\gamma}_\textsc{f} $.  
Theorem \ref{thm:add_asym_n} below states the asymptotic normality of $\hat{\beta}_{\textup{haj},\textsc{f}}$.
\begin{theorem} \label{thm:add_asym_n}
% [PD: see my comment above. maybe we just need CLT in this theorem.]
Under Assumptions \ref{asu1}--\ref{asu4} and \ref{asu9}, 
we have ${\Sigma}_{\textup{haj}, \textsc{f}}^{-1/2}
\sqrt{n}( \hat{\beta}_{\textup{haj},\textsc{f}} -  {\mu} )
\stackrel{\textup{d}}{\rightarrow}  \mathcal{N}(0,{I})$.
% \[
% {\Sigma}_{\textup{haj}, \textsc{f}}^{-1/2}
% \sqrt{n}\left( \hat{\beta}_{\textup{haj},\textsc{f}} -  {\mu} \right)
% \stackrel{\textup{d}}{\rightarrow}  \mathcal{N}(0,{I}).
% \]
\end{theorem}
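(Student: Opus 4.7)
My plan is to use the Frisch--Waugh--Lovell (FWL) theorem to re-express $\hat{\beta}_{\textup{haj},\textsc{f}}$ as a Hájek estimator applied to a covariate-adjusted outcome, and then reduce asymptotic normality to Theorem \ref{thm:Hájek_asym_n}.

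Reading off the top block of the normal equations of the WLS fit \eqref{eq:add} gives
\begin{equation*}
\hat{\beta}_{\textup{haj},\textsc{f}} = (Z^\top W Z)^{-1} Z^\top W (Y - X \hat{\gamma}_\textsc{f}),
\end{equation*}
so the $t$th entry of $\hat{\beta}_{\textup{haj},\textsc{f}}$ is exactly the Hájek estimator built from the adjusted outcomes $\{Y_i - x_i^\top \hat{\gamma}_\textsc{f}\}_{i=1}^n$. Since $\bar{x}=0$, the corresponding finite-population mean under exposure $t$ is still $\mu(t)$, and the algebraic identity
\begin{equation*}
\sqrt{n}\bigl(\hat{\beta}_{\textup{haj},\textsc{f}}(t) - \mu(t)\bigr) = \sqrt{n}\bigl(\hat{Y}^{\gamma_\textsc{f}}_{\textup{haj}}(t) - \mu(t)\bigr) - \sqrt{n}\,\hat{x}_{\textup{haj}}(t)^\top \bigl(\hat{\gamma}_\textsc{f} - \gamma_\textsc{f}\bigr)
\end{equation*}
splits the estimation error into a fixed-weight Hájek error for $Y_i - x_i^\top \gamma_\textsc{f}$ plus a remainder driven by the estimation error in $\hat{\gamma}_\textsc{f}$. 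Here $\hat{x}_{\textup{haj}}(t)$ is the Hájek estimator of $\bar{x}=0$.

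For the leading term, the adjusted outcomes $Y_i - x_i^\top \gamma_\textsc{f}$ are uniformly bounded by Assumptions \ref{asu3} and \ref{asu9}(a), and Assumption \ref{asu9}(b) is precisely Assumption \ref{asu6} restated with the adjusted-outcome covariance $\Sigma_{\textup{haj},\textsc{f}}$ in place of $\Sigma_{\textup{haj}}$. Hence Theorem \ref{thm:Hájek_asym_n} applied to these adjusted outcomes yields $\Sigma_{\textup{haj},\textsc{f}}^{-1/2} \sqrt{n}(\hat{Y}^{\gamma_\textsc{f}}_{\textup{haj}} - \mu) \stackrel{\textup{d}}{\rightarrow} \mathcal{N}(0,I)$. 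For the remainder, the same CLT applied componentwise to the bounded pseudo-outcomes $x_{ij}$ (whose finite-population mean is $\bar{x}_j=0$) gives $\sqrt{n}\,\hat{x}_{\textup{haj}}(t) = O_\mathbb{P}(1)$, so it suffices to show $\hat{\gamma}_\textsc{f} - \gamma_\textsc{f} = o_\mathbb{P}(1)$.

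To establish this consistency, I would solve the lower block of the normal equations by FWL, writing $\hat{\gamma}_\textsc{f}$ as a continuous function of the weighted sample moments $n^{-1} X^\top W X$, $n^{-1} X^\top W Y$, $n^{-1} Z^\top W Z$, $n^{-1} Z^\top W X$, and $n^{-1} Z^\top W Y$. Each such block is a weighted sum over locally dependent indicators; a network law of large numbers under Assumptions \ref{asu1}--\ref{asu4} gives convergence of each block to a deterministic limit, and the continuous mapping theorem yields $\hat{\gamma}_\textsc{f} \to \gamma_\textsc{f}$ in probability. The main obstacle I anticipate is this last step: not only do the individual LLNs require careful control of weighted sums of ANI-dependent indicators using the weak-dependence machinery, one must also guarantee that the Schur complement $X^\top W X - X^\top W Z (Z^\top W Z)^{-1} Z^\top W X$ stays asymptotically invertible, a nondegeneracy condition implicit in the assumed existence of $\gamma_\textsc{f}$. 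Once this is in hand, the conclusion follows by combining the leading-term CLT with Slutsky.
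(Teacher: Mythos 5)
Your proposal is correct and follows essentially the same route as the paper: the paper's Proposition \ref{prop:add} is your FWL identity expressing $\hat{\beta}_{\textup{haj},\textsc{f}}$ as a Hájek estimator of the adjusted outcome, its Lemma \ref{lemma: asym} is exactly your decomposition into a fixed-$\gamma_\textsc{f}$ Hájek CLT (via Theorem \ref{thm:Hájek_asym_n} under Assumption \ref{asu9}(b)) plus a remainder $-\sqrt{n}\,\hat{x}_{\textup{haj}}(t)^\top(\hat{\gamma}_\textsc{f}-\gamma_\textsc{f})$ killed by Slutsky, and its Lemma \ref{lemma: gamma_F} establishes $\hat{\gamma}_\textsc{f}\to\gamma_\textsc{f}$ from the normal equations and weighted-moment LLNs just as you outline. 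The only detail worth noting is that the centering $\bar{x}=0$ makes the Gram matrix asymptotically block diagonal, so the Schur-complement nondegeneracy you flag reduces to invertibility of $n^{-1}\sum_{i=1}^n x_ix_i^\top$, which the paper assumes implicitly.
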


The design matrix of the WLS fit in \eqref{eq:add} equals $ {C}_\textsc{f} = ( {Z},  {X})$ where $Z$ is an $n\times |\mathcal{T}|$ matrix and  $ {X} = (x_i:i=1,\ldots,n)$ is an $n \times J$ matrix.  
Diagonalize the residual ${e_{\textsc{f},i}}$'s from the WLS fit in \eqref{eq:add} to form the matrix $ {e}_{\textup{haj},\textsc{f}} = \text{diag}\{e_{\textsc{f},i}: i=1,\ldots, n\}$.
Let $[\cdot]_{(1:|\mathcal{T}|,1:|\mathcal{T}|)}$ denote the upper-left $|\mathcal{T}|\times |\mathcal{T}|$ submatrix. 
Let $\hat{{V}}_{\textup{haj},\textsc{f}}$ denote the HAC estimator for $\hat{\beta}_{\textup{haj},\textsc{f}}$, which is a submatrix of the covariance estimator obtained from the WLS fit in \eqref{eq:add}:
\[
\hat{ {V}}_{\text{haj,\textsc{f}}}
= \left[( {C}_\textsc{f}^{\top}  {W}  {C}_\textsc{f})^{-1}
( {C}_\textsc{f}^{\top}  {W} 
{e}_{\textup{haj},\textsc{f}}
{K}_n 
{e}_{\textup{haj},\textsc{f}}  
{W}  {C}_\textsc{f})
( {C}_\textsc{f}^{\top}  {W}  {C}_\textsc{f})^{-1} \right]_{(1:|\mathcal{T}|,1:|\mathcal{T}|)}.
\]
% Define $\hat{\Sigma}^2_{\textup{haj},\textsc{f}}\left(t, t^{\prime}, \hat{\gamma}_\textsc{f}\right)$ as the network-robust variance for $\hat{\tau}_\textup{haj}\left(t, t^{\prime},\hat{\gamma}_\textsc{f}\right)$ based on $\hat{ {V}}_{\textup{haj},\textsc{f}}$. 
% Let $\hat{ {\Sigma}}_{*,\textup{haj},\textsc{f}}$ be the analog of $\hat{ {\Sigma}}_{*,\textup{haj}}$ defined on the adjusted outcomes $Y_i-x_i^\top {\gamma}_\textsc{f} $.
% The first matrix, denoted as $ {\Delta}_{\textup{haj}}$, has its $(i,t)$th element being $\left( \frac{{1}_i(t)}{\pi_i(t)}(Y_i - \mu(t))  - (\mu_i(t) - \mu(t) ) \right)$.
Let ${\Delta}_{\textup{haj}, \textsc{f}}$ denote the analog of ${\Delta}_{\textup{haj}}$ defined on the covariate-adjusted outcome $Y_i-x_i^\top {\gamma}_\textsc{f} $. 
Define $M_\textsc{f}$ as an $n\times |\mathcal{T}|$ matrix with $(i,t)$th element $M_{\textsc{f}, it} = \mu_i(t)-\mu(t) - x_i^\top \gamma_\textsc{f}$.  
% [PD: be coherent. whether you want to use $M_{\textsc{f}}(i,t)$ or $M_{\textsc{f}, it}$? we also need to use the elementwise notation for other matrices.]
Theorem \ref{thm:add_bias} below establishes the asymptotic bias of $\hat{{V}}_{\text{haj,\textsc{f}}}$ as an estimator for the asymptotic covariance of $\hat{\beta}_{\text{haj,\textsc{f}}}$. 
\begin{theorem} \label{thm:add_bias}
Define ${ {\Sigma}}_{*,\textup{haj}, \textsc{f}} = n^{-1}  {\Delta}_{\textup{haj}, \textsc{f}}^\top  {K}_n   {\Delta}_{\textup{haj}, \textsc{f}}$ and $R_{\textup{haj},\textsc{f}} = n^{-1}  M_\textsc{f}^\top  {K}_n M_\textsc{f}$.
Under Assumptions \ref{asu1}--\ref{asu4}, \ref{asu7} and \ref{asu9}, we have ${ {\Sigma}}_{*,\textup{haj}, \textsc{f}} 
= {\Sigma}_{\textup{haj}, \textsc{f}} +  o_\mathbb{P}(1)$ and $n \hat{ {V}}_{\textup{haj},\textsc{f}} 
= { {\Sigma}}_{*,\textup{haj}, \textsc{f}} + R_{\textup{haj},\textsc{f}} + o_\mathbb{P}(1)$.
% \begin{align}
% { {\Sigma}}_{*,\textup{haj}, \textsc{f}} 
% =&~ {\Sigma}_{\textup{haj}, \textsc{f}} +  o_\mathbb{P}(1), 
% \label{eq:oracle_F} \\
% n \hat{ {V}}_{\textup{haj},\textsc{f}} 
% =&~  { {\Sigma}}_{*,\textup{haj}, \textsc{f}} + R_{\textup{haj},\textsc{f}} + o_\mathbb{P}(1) \label{eq:bias_F}.
% \end{align}
% where ${ {\Sigma}}_{*,\textup{haj}, \textsc{f}}$ is defined in Theorem \ref{thm:add_asym_n}.
% where ${ {\Sigma}}_{*,\textup{haj}, \textsc{f}} = n^{-1}  {\Delta}_{\textup{haj}, \textsc{f}}^\top  {K}_n   {\Delta}_{\textup{haj}, \textsc{f}}$ and $R_{\textup{haj},\textsc{f}} = n^{-1}  M_\textsc{f}^\top  {K}_n M_\textsc{f}$.
% where
% \begin{align*}
% \hat{\Sigma}^2_{\textup{haj},\textsc{f}}(t,t',\hat{\gamma}_\textsc{f}) 
% =& ({1}(t) - {1}(t'))^{\top} \hat{ {V}}_{\textup{haj},\textsc{f}} ({1}(t) - {1}(t')) 
% \hat{{\Sigma}}_{*,\textup{haj}}(t,t',{\gamma}_\textsc{f}) 
% =& \frac{1}{n} \sum_{i=1}^n \sum_{j=1}^n
% % \left[
% % \begin{array}{c}
% \left( \tilde{\Delta}_i(t,t',{\gamma}_\textsc{f}) - (\tau_i(t, t^{\prime}) - \tau(t, t^{\prime}) )  \right) 
% \left( \tilde{\Delta}_j(t,t',{\gamma}_\textsc{f}) - (\tau_j(t, t^{\prime}) - \tau(t, t^{\prime}))  \right) 
% % \end{array}
% % \right]
% {1}(\ell_{ {A}}(i, j) \leq b_n)
%  R_{\textup{haj}}(t,t') 
% =& \frac{1}{n} \sum_{i=1}^n \sum_{j=1}^n
% (\tau_i(t,t') - \tau(t,t'))
% (\tau_j(t,t') - \tau(t,t')) 
% {1}(\ell_{ {A}}(i, j) \leq b_n)
% \end{align*}
\end{theorem}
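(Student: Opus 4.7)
The plan is to reduce this theorem to the corresponding uncovariated statement (Theorem \ref{thm:Hájek_bias}) applied to the covariate-adjusted pseudo-outcome $\tilde{Y}_i = Y_i - x_i^\top \gamma_\textsc{f}$. Because the covariates are centered so that $\bar{x}=0$, the finite-population mean of $\tilde{Y}_i$ under exposure $t$ equals $\mu(t)$, and $\Sigma_{*,\textup{haj},\textsc{f}}$, $R_{\textup{haj},\textsc{f}}$, and $\Sigma_{\textup{haj},\textsc{f}}$ are exactly the analogs of $\Sigma_{*,\textup{haj}}$, $R_{\textup{haj}}$, and $\Sigma_{\textup{haj}}$ built from $\tilde{Y}_i$ in place of $Y_i$. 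The first conclusion $\Sigma_{*,\textup{haj},\textsc{f}} = \Sigma_{\textup{haj},\textsc{f}} + o_\mathbb{P}(1)$ then follows immediately from the corresponding conclusion of Theorem \ref{thm:Hájek_bias}, since Assumption \ref{asu9}(a) provides the uniform boundedness of $\tilde{Y}_i$ that substitutes for Assumption \ref{asu3} applied to the adjusted outcome.

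For the second conclusion, I would first invoke consistency of $\hat{\beta}_{\textup{haj},\textsc{f}} \to \mu$ (from Theorem \ref{thm:add_asym_n}) and of $\hat{\gamma}_\textsc{f} \to \gamma_\textsc{f}$ at $n^{-1/2}$-rate (by a parallel WLS CLT argument applied to the covariate block of the coefficient vector). Next I would apply the Schur complement to identify the upper-left $|\mathcal{T}|\times|\mathcal{T}|$ block of $(C_\textsc{f}^\top W C_\textsc{f})^{-1}$ as $[Z^\top W Z - Z^\top W X(X^\top W X)^{-1}X^\top W Z]^{-1}$, and verify via Horvitz--Thompson-type consistency that $n^{-1}Z^\top W Z = I + o_\mathbb{P}(1)$, $n^{-1}Z^\top W X = o_\mathbb{P}(1)$ (using $\bar{x}=0$), and $n^{-1}X^\top W X = O_\mathbb{P}(1)$; these imply the relevant block of the bread equals $n^{-1}I + o_\mathbb{P}(n^{-1})$.

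Third, I would decompose the residual as $e_{\textsc{f},i} = \tilde{e}_i - (\hat{\beta}_{\textup{haj},\textsc{f}}(T_i)-\mu(T_i)) - x_i^\top(\hat{\gamma}_\textsc{f}-\gamma_\textsc{f})$ with $\tilde{e}_i = Y_i - \mu(T_i) - x_i^\top\gamma_\textsc{f}$, and substitute into the upper-left meat block $Z^\top W e_{\textup{haj},\textsc{f}} K_n e_{\textup{haj},\textsc{f}} W Z$. Once $e_{\textsc{f},i}$ is replaced by $\tilde{e}_i$, the key identity $\pi_i(t)^{-1}1(T_i=t)\tilde{e}_i = \Delta_{\textup{haj},\textsc{f},it} + M_{\textsc{f},it}$ recasts the meat as $\sum_{i,j}(\Delta_{it}+M_{it})(\Delta_{jt'}+M_{jt'})K_{n,ij}$; the $\Delta\Delta$ piece yields $n\Sigma_{*,\textup{haj},\textsc{f}}$, the $MM$ piece yields $nR_{\textup{haj},\textsc{f}}$, and the two $\Delta M$ cross-pieces are $o_\mathbb{P}(n)$ by the same HAC weak-dependence argument used in Theorem \ref{thm:Hájek_bias}, since $\Delta_{\textup{haj},\textsc{f}}$ has mean zero and $M_\textsc{f}$ is uniformly bounded under Assumptions \ref{asu3} and \ref{asu9}(a).

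The main obstacle will be controlling the substitution-error terms from replacing $\hat{\gamma}_\textsc{f}$ with $\gamma_\textsc{f}$ (and $\hat{\beta}_{\textup{haj},\textsc{f}}$ with $\mu$) inside the double sum. The meat has $O(nM_n(b_n,1))$ nonzero summands, so the linear-in-$(\hat{\gamma}_\textsc{f}-\gamma_\textsc{f})$ cross term is bounded via Cauchy--Schwarz by a constant times $\|\hat{\gamma}_\textsc{f}-\gamma_\textsc{f}\|\cdot nM_n(b_n,1)$; combining the $n^{-1/2}$-rate of $\hat{\gamma}_\textsc{f}-\gamma_\textsc{f}$ with Assumption \ref{asu7}(b), which supplies $M_n(b_n,1) = o(n^{1/2})$ (strictly, not $O(n^{1/2})$), makes this contribution $o_\mathbb{P}(n)$. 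The analogous contribution from $\hat{\beta}_{\textup{haj},\textsc{f}}-\mu$ and the quadratic $(\hat{\gamma}_\textsc{f}-\gamma_\textsc{f})^{\otimes 2}$ term are of smaller order by the same reasoning. After dividing by $n$ via the bread, every substitution error collapses to $o_\mathbb{P}(1)$, delivering $n\hat{V}_{\textup{haj},\textsc{f}} = \Sigma_{*,\textup{haj},\textsc{f}} + R_{\textup{haj},\textsc{f}} + o_\mathbb{P}(1)$.
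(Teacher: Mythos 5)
Your route is essentially the paper's: both arguments reduce the theorem to the uncovariated result (Theorem \ref{thm:Hájek_bias}) applied to the $\gamma_\textsc{f}$-adjusted outcome $Y_i - x_i^\top\gamma_\textsc{f}$ (whose finite-population exposure means coincide with $\mu(t)$ because $\bar x = 0$), use the centering of the covariates to make the bread asymptotically block-diagonal, and absorb the errors from substituting $\hat\gamma_\textsc{f}$ for $\gamma_\textsc{f}$ and $\hat\beta_{\textup{haj},\textsc{f}}$ for $\mu$ via the $O_\mathbb{P}(n^{-1/2})$ rates together with Assumption \ref{asu7}(b), exactly as in the paper's bound $C\,|\hat\gamma_\textsc{f}-\gamma_\textsc{f}|\cdot n^{-1}\sum_{i,j}K_{n,ij}=o_\mathbb{P}(1)$. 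Your identification of $M_n(b_n,1)=o(n^{1/2})$ as the binding rate is also the one the paper uses.

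There is, however, one concrete gap. You compute only $[\text{bread}]_{11}[\text{meat}]_{11}[\text{bread}]_{11}$, but the upper-left block of the sandwich $(C_\textsc{f}^\top WC_\textsc{f})^{-1}\,\Omega\,(C_\textsc{f}^\top WC_\textsc{f})^{-1}$ is
\[
B_{11}\Omega_{11}B_{11}+B_{11}\Omega_{12}B_{21}+B_{12}\Omega_{21}B_{11}+B_{12}\Omega_{22}B_{21},
\]
where $B=(n^{-1}C_\textsc{f}^\top WC_\textsc{f})^{-1}$ and $\Omega=n^{-1}C_\textsc{f}^\top W e_{\textup{haj},\textsc{f}}K_n e_{\textup{haj},\textsc{f}}WC_\textsc{f}$. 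Knowing that $B_{12}=o_\mathbb{P}(1)$ (from $n^{-1}Z^\top WX=o_\mathbb{P}(1)$) is not enough to discard the cross terms: you must also show that the off-diagonal and lower-right meat blocks $\Omega_{12}=n^{-1}Z^\top We_{\textup{haj},\textsc{f}}K_ne_{\textup{haj},\textsc{f}}WX$ and $\Omega_{22}=n^{-1}X^\top We_{\textup{haj},\textsc{f}}K_ne_{\textup{haj},\textsc{f}}WX$ are $O_\mathbb{P}(1)$. This is not free; it is a double sum over $K_n$-neighbors and requires the same weak-dependence machinery (the paper establishes it as the claims $G_1,G_2=O_\mathbb{P}(1)$, with the argument spelled out in the proof of Theorem \ref{thm:full_bias}). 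Your Schur-complement formula for $B_{11}$ does not bypass this, since the projection $(C_\textsc{f}^\top WC_\textsc{f})^{-1}C_\textsc{f}^\top W$ restricted to its first $|\mathcal{T}|$ rows still mixes in $X^\top W$ through $B_{12}$. Once you add the $\Omega_{12},\Omega_{22}=O_\mathbb{P}(1)$ step, the rest of your argument goes through and matches the paper's.
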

% Theorem \ref{thm:add_bias} justifies the regression-based inference of $\tau = G\mu$ from the additive regressions with point estimator $\hat{\tau} = G\hat{\beta}_{\text{haj,\textsc{f}}}$ and regression-based HAC variance estimator $G \hat{ {V}}_{\text{haj,\textsc{f}}} G^\top$. 
The bias term $R_{\textup{haj},\textsc{f}}$ is an analog of $R_{\textup{haj}}$ defined on the adjusted outcome $Y_i-x_i^\top {\gamma}_\textsc{f} $. Given that $K_n$ may not be positive semi-definite, we cannot ensure the asymptotic conservativeness of $\hat{ {V}}_{{\textup{haj},\textsc{f}}}$ for estimating ${ {\Sigma}}_{*,\textup{haj}, \textsc{f}}$.
Similar to \eqref{V+},
we propose the adjusted covariance estimator as
\[
\hat{ {V}}_{\textup{haj},\textsc{f}}^+
= \left[( {C}_\textsc{f}^{\top}  {W}  {C}_\textsc{f})^{-1}
( {C}_\textsc{f}^{\top}  {W} 
{e}_{\textup{haj},\textsc{f}}
{K}_n^+ 
{e}_{\textup{haj},\textsc{f}}
{W}  {C}_\textsc{f})
( {C}_\textsc{f}^{\top}  {W}  {C}_\textsc{f})^{-1} \right]_{(1:|\mathcal{T}|,1:|\mathcal{T}|)}.
\]
% Define $ {\Delta}_{\textup{haj}, \textsc{f}}$ as the analog of $ {\Delta}_{\textup{haj}}$ defined on the adjusted outcomes $Y_i-x_i^\top {\gamma}_\textsc{f} $. 
% we propose an adjusted variance estimator for $\hat{ {V}}_{\textup{haj},\textsc{f}} $ to guarantee conservative. 
% Define $\hat{\mathbf{\Delta}}_{\textup{haj},\textsc{f}}(t,t',\hat{\gamma}_\textsc{f})$ as the stacked vector of
% \[
% \frac{\frac{{1}_i(t)}{\pi_i(t)} (Y_i - x_i^\top \hat{\gamma}_\textsc{f}   - \hat{\beta}_{\textup{haj},\textsc{f}}(t)) }{\hat 1_{\text{ht}}(t)}   - \frac{\frac{{1}_i(t')}{\pi_i(t')} (Y_i - x_i^\top \hat{\gamma}_\textsc{f}   - \hat{\beta}_{\textup{haj},\textsc{f}}(t')) }{\hat 1_{\text{ht}}(t')}. 
% \]
% Define $ {\Delta}_{\textup{haj}}\left(t, t^{\prime},\gamma_{\textsc{f}}\right)$ as the analog to $ {\Delta}_{\textup{haj}}(t, t^{\prime})$ for adjusted outcome $Y_i-\gamma^\top_{\textsc{f}} x_i$.  
% the stacked vector of
% $$
% \left(\left(\frac{{1}_i(t)\left(Y_i-\gamma^\top_{\textsc{f}} x_i-\mu(t)\right)}{\pi_i(t)}-\frac{{1}_i\left(t^{\prime}\right)\left(Y_i-\gamma^\top_{\textsc{f}} x_i-\mu(t')\right)}{\pi_i(t')}\right)-\left(\tau_i(t, t^{\prime}))-\tau(t, t^{\prime}))\right)\right).
% $$
\begin{theorem} \label{thm:add_adj}
Define $R_{\textup{haj},\textsc{f}}^+ = n^{-1} M_\textsc{f}^\top  {K}_n^{+}  M_\textsc{f} 
+ n^{-1}  {\Delta}_{\textup{haj},\textsc{f}}^\top  {K}_n^{-}  {\Delta}_{\textup{haj},\textsc{f}} \geq  0 $. 
% [PD: again i modified the presentation.]
Under Assumptions \ref{asu1}--\ref{asu4} and \ref{asu8}--\ref{asu9}, we have $n \hat{ {V}}_{\textup{haj},\textsc{f}}^+  
= { {\Sigma}}_{*,\textup{haj}, \textsc{f}}
+ R_{{\textup{haj},\textsc{f}}}^+ + o_\mathbb{P}(1)$,
% \[
% n \hat{ {V}}_{\textup{haj},\textsc{f}}^+  
% = { {\Sigma}}_{*,\textup{haj}, \textsc{f}}
% + R_{{\textup{haj},\textsc{f}}}^+ + o_\mathbb{P}(1) ,
% %\ge { {\Sigma}}_{*,\textup{haj}, \textsc{f}} + o_\mathbb{P}(1),
% \]
where ${ {\Sigma}}_{*,\textup{haj}, \textsc{f}}$ is defined in Theorem \ref{thm:add_bias}.
% where
% \begin{align*}
% R_{{\textup{haj},\textsc{f}}}^+(t,t',\gamma_{\textsc{f}})
% =& \frac{1}{n} 
% \overline{ \tau}(t,t')^{\top} 
% %  {K}_n^{+}
%  {K}_n^{+}
% \overline{ \tau}(t,t') 
% + \frac{1}{n}  {\Delta}_{\textup{haj}}\left(t, t^{\prime},\gamma_{\textsc{f}}\right)^{\top} 
%  {K}_n^{-} 
%  {\Delta}_{\textup{haj}}\left(t, t^{\prime},\gamma_{\textsc{f}}\right)    
% \end{align*}
\end{theorem}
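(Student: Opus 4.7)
The plan is to mirror the proof of Theorem \ref{thm:Hájek_adj}, upgrading each step to its covariate-adjusted analogue and then exploiting the additive decomposition $K_n^+ = K_n + K_n^{-}$. Since the sandwich formula is linear in the kernel matrix placed in the middle, we have
\begin{equation*}
\hat{V}_{\textup{haj},\textsc{f}}^+ = \hat{V}_{\textup{haj},\textsc{f}} + \hat{V}_{\textup{haj},\textsc{f}}^{-},
\end{equation*}
where $\hat{V}_{\textup{haj},\textsc{f}}^{-}$ denotes the sandwich estimator obtained by replacing $K_n^+$ with $K_n^{-}$ in the definition of $\hat{V}_{\textup{haj},\textsc{f}}^+$. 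Theorem \ref{thm:add_bias} handles the first piece directly, giving $n\hat{V}_{\textup{haj},\textsc{f}} = \Sigma_{*,\textup{haj},\textsc{f}} + R_{\textup{haj},\textsc{f}} + o_\mathbb{P}(1)$.

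The bulk of the work is therefore to show that
\begin{equation*}
n\hat{V}_{\textup{haj},\textsc{f}}^{-} = n^{-1}\Delta_{\textup{haj},\textsc{f}}^\top K_n^{-}\Delta_{\textup{haj},\textsc{f}} + n^{-1}M_\textsc{f}^\top K_n^{-} M_\textsc{f} + o_\mathbb{P}(1).
\end{equation*}
I would replay the proof of Theorem \ref{thm:add_bias} verbatim, with $K_n^{-}$ substituted for $K_n$ throughout. The key observation is that the bandwidth-dependent bounds used in that proof enter only through quantities of the form $M_n(b_n,1)$, $M_n(b_n,2)$, and $\sum_s |\mathcal{J}_n(s,b_n)|\tilde\theta_{n,s}$; Assumption \ref{asu8}(b)--(d) provides exactly the matching rates $M_n^{-}(b_n,1)=o(n^{1/2})$, $M_n^{-}(b_n,2)=o(n)$, and $\sum_s|\mathcal{J}_n^{-}(s,b_n)|\tilde\theta_{n,s}=o(n^2)$ for $|K_n^{-}|$, while Assumption \ref{asu8}(a) (which coincides with Assumption \ref{asu7}(a)) controls the weak-dependence tail. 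Adding the two approximations and collecting the $K_n$ and $K_n^{-}$ quadratic forms into $K_n^+$ on the $M_\textsc{f}$ piece (but leaving the $\Delta_{\textup{haj},\textsc{f}}$ piece as a pure $K_n^{-}$ quadratic form minus the already-isolated $\Sigma_{*,\textup{haj},\textsc{f}}=n^{-1}\Delta_{\textup{haj},\textsc{f}}^\top K_n\Delta_{\textup{haj},\textsc{f}}+o_\mathbb{P}(1)$) yields the claimed decomposition.

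Non-negativity of $R_{\textup{haj},\textsc{f}}^+$ is immediate: by construction $K_n^+$ and $K_n^{-}$ are both positive semi-definite, so each of the two quadratic forms is non-negative. Finally, the reduction from $C_\textsc{f}=(Z,X)$ to the upper-left $|\mathcal{T}|\times|\mathcal{T}|$ block proceeds just as in Theorem \ref{thm:add_bias}: because $\bar{x}=0$ and $\hat\gamma_\textsc{f}\to\gamma_\textsc{f}$ in probability by an argument parallel to Theorem \ref{thm:add_asym_n}, the residuals $e_{\textsc{f},i}$ are, up to $o_\mathbb{P}$ terms, equal to the oracle residuals $Y_i-x_i^\top\gamma_\textsc{f}-\mu(T_i)$, which is what underlies the $\Delta_{\textup{haj},\textsc{f}}$ and $M_\textsc{f}$ terms.

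The main obstacle is the verification that the bias-expansion machinery of Theorem \ref{thm:add_bias} remains valid when the uniform kernel $K_n$ is replaced by the generally non-sparse matrix $K_n^{-}$: the entries of $K_n^{-}$ are not indicators, so the combinatorial bounds used for pair counts in the original proof must be re-expressed in terms of the weighted neighborhood sums that define $M_n^{-}$ and $\mathcal{J}_n^{-}$. Assumption \ref{asu8} is designed precisely to close this gap, but the substitution must be done carefully at each inequality (e.g., applying $|K_{n,ij}^{-}|$ Cauchy--Schwarz bounds rather than simple edge-counting), and care is required to propagate the covariate-adjusted residual approximation uniformly through these weighted sums.
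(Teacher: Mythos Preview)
Your proposal is correct and follows essentially the same route as the paper. The paper omits the proof of Theorem~\ref{thm:add_adj} as analogous to that of Theorem~\ref{thm:full_adj}, and that proof proceeds exactly as you outline: split $K_n^+ = K_n + K_n^{-}$, invoke the existing bias theorem for the $K_n$ piece, replay its argument with Assumption~\ref{asu8} in place of Assumption~\ref{asu7} for the $K_n^{-}$ piece, and then recombine using $K_n + K_n^{-} = K_n^{+}$ on the $M_\textsc{f}$ quadratic form.
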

Theorem \ref{thm:add_adj}
ensures the asymptotic conservativeness of $\hat{ {V}}_{\text{haj,\textsc{f}}}^+$ for estimating the true sampling covariance. This, together with Theorem \ref{thm:add_asym_n}, justify the regression-based inference of $\tau = G\mu$ from the additive WLS fit in \eqref{eq:add} with the point estimator $\hat{\tau} = G\hat{\beta}_{\text{haj,\textsc{f}}}$ and the adjusted regression-based HAC covariance estimator $G \hat{ {V}}_{\text{haj,\textsc{f}}}^+ G^\top$. 
% By construction, $K_n^+$ and $K_n^-$ are positive semi-definite, which in turn guarantee $\hat{ {V}}_{\textup{haj},\textsc{f}}^+$ is conservative for estimating the variance of $\hat{\beta}_{\text{haj,\textsc{f}}}$. 

\subsection{Fully-interacted regression in network experiments}\label{sec:full}
With full interactions between the exposure mapping indicators and covariates, we consider the WLS fit
\begin{equation}
\text{regress } Y_i \text{ on } (z_i, z_i \otimes x_i) \text{ with weights }  
{w}_i = 1/\pi_i(T_i) 
% \text{lm}(Y_i \sim z_i + z_i \otimes x_i) \text{ with weights }  
% {w}_i = 1/\pi_i(T_i) 
\label{eq:full},  
\end{equation}
where $\otimes$ denotes the Kronecker product. The specification \eqref{eq:full} simply means WLS fit of $Y_i$ on the dummy $1(T_i=t)$'s and the interaction $1(T_i=t)x_i$'s.  
% [PD: i just want a symmetric presentation of the regression function...]
Let $\hat{\beta}_{\textup{haj},\textsc{l}}$ denote the estimtors of coefficients for $z_i$ from the above WLS fit and $\hat{\beta}_{\textup{haj},\textsc{l}}(t)$ denote the element in $\hat{\beta}_{\textup{haj},\textsc{l}}$ corresponding to $1(T_i=t)$. 
We use the subscript ``L'' to signify \cite{Lin2013}. 
% \cite{lin2013} proposed an improved estimator as the coefficient of $z_i$ from the OLS
% fit of $Y_i \sim z_i + \sum_{t\in\mathcal{T}} {1}_i(t)x_i$ with centered covariates and treatment-covariates interactions, and proved that it is at least as efficient as the difference-in-means and \cite{fisher1935}’s estimators asymptotically.
% Let $\hat{\gamma}_\textsc{l}(t)$ denote the coefficient vector of $1(T_i=t)x_i$.
% \begin{proposition} \label{prop:full}
% $\hat{\beta}_{\textup{haj},\textsc{l}}(t) = \hat{Y}_{\textup{haj}}(t) - \hat{x}_{\textup{haj}}(t)^{\top} \hat{\gamma}_\textsc{l}(t)$ for all $t\in\mathcal{T}$.
% \end{proposition}
% % Recall Proposition \ref{prop:add} under the additive WLS fit implies that $\hat{\beta}_{\textup{haj},\textsc{f}}(t) = \hat{Y}_{\textup{haj}}(t) - \hat{x}_{\textup{haj}}(t)^{\top} \hat{\gamma}_\textsc{f}$. 
% Proposition \ref{prop:full} parallels Proposition \ref{prop:add}, and establishes that $\hat{\beta}_{\textup{haj},\textsc{l}}(t)$ is the Hájek estimator based on the covariate-adjusted outcome $Y_i - x_i^\top \hat{\gamma}_\textsc{l}(t) $. 
% A key distinction is that the adjustment is now based on coefficients specific to exposure mapping values.
Let ${\gamma}_\textsc{l}(t)$ be the probability limit of $\hat{\gamma}_\textsc{l}(t)$, where $\hat{\gamma}_\textsc{l}(t)$ is the coefficient vector of $1(T_i=t)x_i$ from the WLS fit in \eqref{eq:full}. 
% See proof in Lemma \ref{lemma: gamma_L}. 
% and denote by $\hat{\Sigma}^2_{\textup{haj},\textsc{l}}(\hat{\gamma}_\textsc{l})$ the network-robust variance for $\hat{\tau}_\textsc{l}\left(t, t^{\prime},\hat{\gamma}_\textsc{l}\right)$ from the weighted least squares fit. 
% Theorem \ref{thm:full} below established the conservative of $\hat{\Sigma}^2_{\textup{haj},\textsc{l}}(\hat{\gamma}_\textsc{l})$ for the asymptotic variance of $\hat{\tau}_\textsc{l}\left(t, t^{\prime},\hat{\gamma}_\textsc{l}\right)$.
% % $\hat\tau_\textsc{l}(t,t')$
% Define 
% $ 
% \tilde{\Delta}_i(t,t',{\gamma}_\textsc{l}) 
% = \left( \frac{{1}_i(t)(Y_i-{\gamma}_\textsc{l}(t)^\top x_i -\mu(t) )}{\pi_i(t)}  - \frac{{1}_i(t')(Y_i- {\gamma}_\textsc{l}(t')^\top x_i -\mu(t') )}{\pi_i(t')}\right). 
% $
Let ${\Sigma}_{\textup{haj},\textsc{l}}$ be the analog of ${\Sigma}_{\textup{haj}}$ in \eqref{eq:Sigma_haj} defined on the adjusted outcome $Y_i-  x_i^\top {\gamma}_\textsc{l}(T_i)$.
Theorem \ref{thm:full} states the asymptotic normality of $\hat{\beta}_{\textup{haj},\textsc{l}}$.
\begin{theorem} \label{thm:full}
% [PD: check my comment above.]
Under Assumptions \ref{asu1}--\ref{asu4} and \ref{asu9}, 
we have ${\Sigma}_{\textup{haj}, \textsc{l}}^{-1/2}
\sqrt{n} ( \hat{\beta}_{\textup{haj},\textsc{l}} -  {\mu} )
\stackrel{\textup{d}}{\rightarrow}  \mathcal{N}(0,{I})$.
\end{theorem}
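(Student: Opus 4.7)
The plan is to reduce Theorem \ref{thm:full} to Theorem \ref{thm:Hájek_asym_n} applied to a covariate-adjusted outcome. Because $z_i$ is one-hot across $t\in\mathcal{T}$, the design matrix of the WLS fit \eqref{eq:full} is block-diagonal after suitably permuting columns, so the fit decomposes into $|\mathcal{T}|$ independent WLS fits: for each $t$, regress $Y_i$ on $(1, x_i)$ on the subsample $\{i: T_i=t\}$ with weights $1/\pi_i(t)$. Standard WLS algebra then yields
\[
\hat{\beta}_{\textup{haj},\textsc{l}}(t) \;=\; \hat{Y}_{\textup{haj}}(t) \;-\; \bar{x}_{\textup{haj}}(t)^\top \hat{\gamma}_\textsc{l}(t),
\]
where $\bar{x}_{\textup{haj}}(t) = \bigl(n^{-1}\sum_i 1(T_i=t)\pi_i(t)^{-1} x_i\bigr) / \hat{1}_{\text{ht}}(t)$ is the Hájek estimator of $\bar{x}=0$ formed from the covariates on $\{i:T_i=t\}$.

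\textbf{Identifying limits of nuisance quantities.} Next I would identify the probability limits feeding the linearization. Each Horvitz--Thompson-type average $n^{-1}\sum_i 1(T_i=t)\pi_i(t)^{-1} x_i x_i^\top$ and $n^{-1}\sum_i 1(T_i=t)\pi_i(t)^{-1} x_i Y_i$ is design-unbiased for its finite-population analog ($n^{-1}\sum_i x_i x_i^\top$ and $n^{-1}\sum_i x_i \mu_i(t)$, respectively), with variance vanishing by the same network-dependent second-moment bound invoked in the proof of Theorem \ref{thm:Hájek_asym_n} (valid under Assumptions \ref{asu1}--\ref{asu4} and the covariate boundedness in Assumption \ref{asu9}(a)). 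Combining these via the continuous mapping theorem gives $\hat{\gamma}_\textsc{l}(t)\stackrel{\mathbb{P}}{\to}\gamma_\textsc{l}(t)$ and $\hat{1}_{\text{ht}}(t)\stackrel{\mathbb{P}}{\to}1$, while the same moment bound applied to the HT-average of $x_i$ itself yields the rate statement $\sqrt{n}\,\bar{x}_{\textup{haj}}(t) = O_{\mathbb{P}}(1)$.

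\textbf{Linearization and appeal to Theorem \ref{thm:Hájek_asym_n}.} Substituting the stratified formula, rotating the $\sqrt{n}$ scale through $\hat{1}_{\text{ht}}(t)\to 1$, and dispatching the cross term $\sqrt{n}\,\bar{x}_{\textup{haj}}(t)^\top (\hat{\gamma}_\textsc{l}(t)-\gamma_\textsc{l}(t)) = O_{\mathbb{P}}(1)\cdot o_{\mathbb{P}}(1)$ by Slutsky, I arrive at the key linearization
\[
\sqrt{n}\bigl(\hat{\beta}_{\textup{haj},\textsc{l}}(t)-\mu(t)\bigr) \;=\; n^{-1/2}\sum_{i=1}^n \frac{1(T_i=t)}{\pi_i(t)}\bigl(\tilde Y_i - \mu(t)\bigr) + o_{\mathbb{P}}(1),
\]
where $\tilde Y_i = Y_i - x_i^\top \gamma_\textsc{l}(T_i)$ is the covariate-adjusted outcome, bounded uniformly by Assumptions \ref{asu3} and \ref{asu9}(a). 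Stacked over $t\in\mathcal{T}$, the leading right-hand side is exactly the object of Theorem \ref{thm:Hájek_asym_n} with $Y_i$ replaced by $\tilde Y_i$; its asymptotic covariance is $\Sigma_{\textup{haj},\textsc{l}}$ by definition, and the weak-dependence CLT hypothesis needed is precisely Assumption \ref{asu9}(b), which is Assumption \ref{asu6} transposed to $\Sigma_n(\gamma_\textsc{l})$. Reapplying the CLT from Theorem \ref{thm:Hájek_asym_n} to $\tilde Y_i$ then finishes the proof.

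\textbf{Main obstacle.} The delicate piece is dispatching the cross term $\sqrt{n}\,\bar{x}_{\textup{haj}}(t)^\top(\hat{\gamma}_\textsc{l}(t)-\gamma_\textsc{l}(t))$: this requires \emph{both} $\sqrt{n}$-tightness of $\bar{x}_{\textup{haj}}(t)$ and consistency of $\hat{\gamma}_\textsc{l}(t)$, secured simultaneously for several coupled Hájek-type averages involving products $x_i x_i^\top$ and $x_i Y_i$ under network dependence. Both statements rest on second-moment bounds of the flavor used in Theorem \ref{thm:Hájek_asym_n}, but care is needed because $\gamma_\textsc{l}(t)$ is implicitly defined through the finite-population Gram $n^{-1}\sum_i x_i x_i^\top$, which must be assumed uniformly non-singular for $\gamma_\textsc{l}(t)$ to be finite (implicit in the "finite and fixed" phrasing of Assumption \ref{asu9}(b)); once these technical items are in hand, the remainder is bookkeeping.
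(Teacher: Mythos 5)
Your proposal is correct and follows essentially the same route as the paper: the stratified decomposition of the fully-interacted fit is the paper's Proposition \ref{prop:full}, the consistency of $\hat{\gamma}_\textsc{l}(t)$ and the $\sqrt{n}$-tightness of the Hájek average of $x_i$ are Lemmas \ref{lemma: x_ht} and \ref{lemma: gamma_L}, and the Slutsky argument killing the cross term before reapplying Theorem \ref{thm:Hájek_asym_n} to the adjusted outcome $Y_i - x_i^\top\gamma_\textsc{l}(T_i)$ under Assumption \ref{asu9}(b) is exactly Lemma \ref{lemma: asym}. No substantive gap.
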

% Let ${\gamma}_\textsc{l}$ be the probability limit of $\hat{\gamma}_\textsc{l}$ under Assumptions xxx.
% Let $\hat{ {\Sigma}}_{*,\textup{haj}, \textsc{l}}$ be the analog of $\hat{ {\Sigma}}_{*,\textup{haj}}$ defined on the adjusted outcomes $Y_i-  x_i^\top (\sum_{t\in \mathcal{T}} {1}_i(t) {\gamma}_\textsc{l}(t))$.
% Let $\chi$ be the concatenation of $\chi_i = z_i \otimes x_i$ over $i\in\mathcal{N}$. 
Let $ {C}_\textsc{l}$ be the design matrix of the WLS fit in \eqref{eq:full}, with row vectors $(z_i^\top, (z_i \otimes x_i)^\top)$. 
% [PD: check my modification.]
%The design matrix of the WLS fit in \eqref{eq:full} equals $ {C}_\textsc{l} = ( {Z}, \{z_i \otimes x_i\}_{i=1}^n)$. 
Diagonalize the residual $e_{\textsc{l},i}$'s from the same WLS fit to form the matrix $ {e}_{\textup{haj},\textsc{l}} = \text{diag}\{e_{\textsc{l},i}:i=1,\ldots,n\}$.
Let $\hat{ {V}}_{\textup{haj},\textsc{l}}$ denote the HAC covariance estimator for $\hat{\beta}_{\textup{haj},\textsc{l}}$, which is a submatrix of the covariance estimator obtained from the WLS fit in \eqref{eq:full}:
\[
\hat{ {V}}_{\text{haj,\textsc{l}}}
= \left[( {C}_\textsc{l}^{\top}  {W}  {C}_\textsc{l})^{-1}
( {C}_\textsc{l}^{\top}  {W} 
{e}_{\textup{haj},\textsc{l}} 
{K}_n 
{e}_{\textup{haj},\textsc{l}}  
{W}  {C}_\textsc{l})
( {C}_\textsc{l}^{\top}  {W}  {C}_\textsc{l})^{-1} \right]_{(1:|\mathcal{T}|,1:|\mathcal{T}|)}.
\]
Let $ {\Delta}_{\textup{haj}, \textsc{l}}$ be the analog of $ {\Delta}_{\textup{haj}}$ defined on the adjusted outcome $Y_i- x_i^\top {\gamma}_\textsc{l}(T_i)$.
Define $M_\textsc{l}$ as an $n\times |\mathcal{T}|$ matrix with $(i,t)$th element $M_{\textsc{l}, it} = \mu_i(t)-\mu(t) - x_i^\top \gamma_\textsc{l}(t)$.
% Theorem \ref{thm:full_bias} below establishes the asymptotic bias of $\hat{{V}}_{\text{haj,\textsc{l}}}$ for estimating the asymptotic variance of $\hat{\beta}_{\text{haj,\textsc{l}}}$. 
\begin{theorem} \label{thm:full_bias}
Define ${ {\Sigma}}_{*,\textup{haj},\textsc{l}} = n^{-1}  {\Delta}_{\textup{haj},\textsc{l}}^\top {K}_n {\Delta}_{\textup{haj},\textsc{l}}$ and $R_{\textup{haj},\textsc{l}} = n^{-1} M_\textsc{l}^\top {K}_n  M_\textsc{l}$.
Under Assumptions \ref{asu1}--\ref{asu4}, \ref{asu7} and \ref{asu9}, we have $\hat{ {\Sigma}}_{*,\textup{haj}, \textsc{l}} 
= {\Sigma}_{\textup{haj}, \textsc{l}}
+ o_\mathbb{P}(1)$ and $n \hat{ {V}}_{\textup{haj},\textsc{l}} 
= \hat{ {\Sigma}}_{*,\textup{haj}, \textsc{l}}
+ R_{\textup{haj},\textsc{l}} + o_\mathbb{P}(1)$.
\end{theorem}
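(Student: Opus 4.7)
The strategy is to reduce the analysis to Theorem \ref{thm:Hájek_bias} applied to the covariate-adjusted pseudo-outcomes $\tilde{Y}_i = Y_i - x_i^\top \gamma_\textsc{l}(T_i)$. The crucial structural fact is that, because the entries of $z_i$ are mutually exclusive dummies, after reordering the columns the weighted cross-product matrix $C_\textsc{l}^\top W C_\textsc{l}$ is block-diagonal with $|\mathcal{T}|$ blocks of size $(1+J)\times(1+J)$, where the $t$th block equals $\sum_i \pi_i(t)^{-1}1(T_i=t)(1,x_i^\top)^\top(1,x_i^\top)$. Thus the fully-interacted fit is numerically equivalent to running $|\mathcal{T}|$ separate stratum-specific WLS regressions, and the residual simplifies to $e_{\textsc{l},i} = Y_i - \hat{\beta}_{\textup{haj},\textsc{l}}(T_i) - x_i^\top \hat{\gamma}_\textsc{l}(T_i)$, which decouples the analysis across exposure levels.

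I would first establish $\sqrt{n}$-consistency of each $\hat{\gamma}_\textsc{l}(t)$ for $\gamma_\textsc{l}(t)$ via the same H\'ajek-style LLN and CLT machinery that underlies Theorem \ref{thm:full}, working block-by-block. With this in hand I decompose $e_{\textsc{l},i} = \tilde{Y}_i - \mu(T_i) - (\hat{\beta}_{\textup{haj},\textsc{l}}(T_i) - \mu(T_i)) - x_i^\top(\hat{\gamma}_\textsc{l}(T_i) - \gamma_\textsc{l}(T_i))$ and expand the meat matrix $Z^\top W e_{\textup{haj},\textsc{l}} K_n e_{\textup{haj},\textsc{l}} W Z$ into an oracle term, built from the residuals of $\tilde{Y}_i$ on $(1,x_i)$ within each stratum, plus cross-terms carrying factors of $O_\mathbb{P}(n^{-1/2})$. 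Applying Theorem \ref{thm:Hájek_bias} to $\tilde{Y}_i$ then delivers both conclusions simultaneously: since ${\Delta}_{\textup{haj},\textsc{l}}$ is by construction the $\Delta_{\textup{haj}}$ matrix built from $\tilde{Y}_i$ and $M_\textsc{l}$ is the corresponding $M$, that theorem yields $\Sigma_{*,\textup{haj},\textsc{l}} = \Sigma_{\textup{haj},\textsc{l}} + o_\mathbb{P}(1)$ and $n\hat{V}_{\textup{haj},\textsc{l}} = \Sigma_{*,\textup{haj},\textsc{l}} + R_{\textup{haj},\textsc{l}} + o_\mathbb{P}(1)$, modulo the cross-term remainders. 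The block-diagonal inverse $(C_\textsc{l}^\top W C_\textsc{l})^{-1}$ restricts naturally to the top-left $|\mathcal{T}|\times|\mathcal{T}|$ submatrix of the sandwich, so the conclusion propagates to $\hat{V}_{\textup{haj},\textsc{l}}$.

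The main obstacle will be controlling the cross-terms in this expansion. Each one is a double sum of the form $n^{-1}\sum_{i,j} K_{n,ij}\alpha_i \beta_j$ in which one factor carries an estimation error of order $O_\mathbb{P}(n^{-1/2})$ and the other is a bounded function of $(T_j, x_j, Y_j)$. A Cauchy--Schwarz bound combined with the neighborhood-moment restrictions in Assumption \ref{asu7}(b)--(d) shows each such cross-term is $o_\mathbb{P}(1)$. The subtlety relative to the additive case of Theorem \ref{thm:add_bias} is that $\gamma_\textsc{l}(T_i)$ is stratum-dependent, so replacement errors of the form $x_i^\top(\hat{\gamma}_\textsc{l}(T_i) - \gamma_\textsc{l}(T_i))$ must be split into $|\mathcal{T}|$ stratum-specific sums before the neighborhood-moment bounds can be applied within each block; this bookkeeping, while notationally heavier, does not introduce any new analytic difficulty beyond what the previous theorems have already handled.
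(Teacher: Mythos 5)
Your proposal is correct and follows essentially the same route as the paper: the exact stratum-wise block structure you identify is the content of Proposition \ref{prop:full} and the fit in \eqref{eq:fullt}, the $\sqrt{n}$-consistency of $\hat{\gamma}_\textsc{l}(t)$ is Lemma \ref{lemma: gamma_L}, and the paper likewise reduces both claims to Theorem \ref{thm:Hájek_bias} applied to the adjusted outcome $Y_i - x_i^\top\gamma_\textsc{l}(T_i)$, controlling the $O_\mathbb{P}(n^{-1/2})$ cross-terms via the bound $n^{-1}\sum_{i,j}K_{n,ij}=M_n(b_n,1)=o(n^{1/2})$ from Assumption \ref{asu7}(b) and handling the bread/meat off-blocks exactly as you describe.
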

Theorem \ref{thm:full_bias} establishes the asymptotic bias of $\hat{ {V}}_{{\textup{haj},\textsc{l}}}$ as an estimator for the asymptotic covariance of $\hat{\beta}_{\text{haj,\textsc{f}}}$. Given that $K_n$ may not be positive semi-definite, we cannot ensure the asymptotic conservativeness of $\hat{ {V}}_{{\textup{haj},\textsc{l}}}$ for estimating $\hat{ {\Sigma}}_{*,\textup{haj}, \textsc{l}}$.
Similar to \eqref{V+}, we propose the adjusted HAC covariance estimator as
\[
\hat{ {V}}_{\textup{haj},\textsc{l}}^+
= \left[( {C}_\textsc{l}^{\top}  {W}  {C}_\textsc{l})^{-1}
( {C}_\textsc{l}^{\top}  {W} 
{e}_{\textup{haj},\textsc{l}} 
{K}_n^+ 
{e}_{\textup{haj},\textsc{l}} 
{W}  {C}_\textsc{l})
( {C}_\textsc{l}^{\top}  {W}  {C}_\textsc{l})^{-1} \right]_{(1:|\mathcal{T}|,1:|\mathcal{T}|)}.
\]
\begin{theorem} \label{thm:full_adj}
Define $R_{\textup{haj},\textsc{l}}^+ = n^{-1} M_\textsc{l}^\top  {K}_n^{+}  M_\textsc{l} + n^{-1}  {\Delta}_{\textup{haj},\textsc{l}}^\top  {K}_n^{-}  {\Delta}_{\textup{haj},\textsc{l}} \geq 0$. 
Under Assumptions \ref{asu1}--\ref{asu4} and \ref{asu8}--\ref{asu9}, we have $n \hat{ {V}}_{\textup{haj},\textsc{l}}^+  
= { {\Sigma}}_{*,\textup{haj},\textsc{l}} 
+ R_{\textup{haj},\textsc{l}}^+ + o_\mathbb{P}(1)$,
% \begin{align*}
% n \hat{ {V}}_{\textup{haj},\textsc{l}}^+  
% =&~ { {\Sigma}}_{*,\textup{haj},\textsc{l}} 
% + R_{\textup{haj},\textsc{l}}^+ + o_\mathbb{P}(1),
% \end{align*}
where ${ {\Sigma}}_{*,\textup{haj},\textsc{l}}$ is defined in Theorem \ref{thm:full_bias}.
% where
% \begin{align*}
% R_\text{haj,+}(t,t',{\gamma}_\textsc{l})
% =& \frac{1}{n} \bar{ \tau}(t,t',{\gamma}_\textsc{l})^{\top} 
%  {K}_n^{+} 
% \bar{ \tau}(t,t',{\gamma}_\textsc{l})
% + \frac{1}{n} 
%  {\Delta}_\textup{haj}\left(t, t^{\prime},{\gamma}_\textsc{l}\right)^{\top}  {K}_n^{-}  {\Delta}_\textup{haj}\left(t, t^{\prime},{\gamma}_\textsc{l}\right) 
% \end{align*}
\end{theorem}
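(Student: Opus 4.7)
\textbf{Proof plan for Theorem \ref{thm:full_adj}.} The strategy mirrors that of Theorem \ref{thm:Hájek_adj} (for the unadjusted Hájek estimator) and Theorem \ref{thm:add_adj} (for the additive specification), exploiting the linearity of the sandwich form in the kernel matrix. Since $K_n^{+} = K_n + K_n^{-}$ and both the bread $(C_\textsc{l}^\top W C_\textsc{l})^{-1}$ and the residual diagonal $e_{\textup{haj},\textsc{l}}$ are independent of the choice of kernel, I would split
\begin{equation*}
n\hat{V}_{\textup{haj},\textsc{l}}^{+} \; = \; n\hat{V}_{\textup{haj},\textsc{l}} \; + \; n\hat{V}_{\textup{haj},\textsc{l}}^{-},
\end{equation*}
where $\hat{V}_{\textup{haj},\textsc{l}}^{-}$ is constructed identically to $\hat{V}_{\textup{haj},\textsc{l}}$ but with $K_n$ replaced by $K_n^{-}$ (and the same extraction of the top-left $|\mathcal{T}|\times |\mathcal{T}|$ block).

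The first summand is handled directly by Theorem \ref{thm:full_bias}, yielding
\begin{equation*}
n\hat{V}_{\textup{haj},\textsc{l}} \; = \; \Sigma_{*,\textup{haj},\textsc{l}} + n^{-1} M_\textsc{l}^\top K_n M_\textsc{l} + o_\mathbb{P}(1).
\end{equation*}
For the second summand, I would reprise the proof of Theorem \ref{thm:full_bias} verbatim with $K_n^{-}$ playing the role of $K_n$. The control of the HAC double sum there invokes Assumption \ref{asu7}(b)--(d), which restricts the neighborhood moments $M_n(b_n,\cdot)$ and the pairing set $\mathcal{J}_n(s,b_n)$; the counterpart bounds on $M_n^{-}(b_n,\cdot)$ and $\mathcal{J}_n^{-}(s,b_n)$ supplied by Assumption \ref{asu8}(b)--(d) are exactly what is needed to rerun the same argument, giving
\begin{equation*}
n\hat{V}_{\textup{haj},\textsc{l}}^{-} \; = \; n^{-1}\Delta_{\textup{haj},\textsc{l}}^\top K_n^{-} \Delta_{\textup{haj},\textsc{l}} + n^{-1} M_\textsc{l}^\top K_n^{-} M_\textsc{l} + o_\mathbb{P}(1).
\end{equation*}
Adding the two expansions and grouping $K_n + K_n^{-} = K_n^{+}$ in the $M_\textsc{l}$ terms delivers the claimed representation $n\hat{V}_{\textup{haj},\textsc{l}}^{+} = \Sigma_{*,\textup{haj},\textsc{l}} + R_{\textup{haj},\textsc{l}}^{+} + o_\mathbb{P}(1)$. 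The non-negativity of $R_{\textup{haj},\textsc{l}}^{+}$ is then immediate, because $K_n^{+}$ and $K_n^{-}$ are both positive semi-definite by their eigendecomposition construction, so each of the two quadratic forms defining $R_{\textup{haj},\textsc{l}}^{+}$ is non-negative.

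I expect the main obstacle to be the reprise of Theorem \ref{thm:full_bias} with the non-indicator, signed matrix $K_n^{-}$: the original argument likely leverages $K_{n,ij}=1(\ell_A(i,j)\le b_n)$ to pair units by path distance, whereas $K_n^{-}$ has entries that can couple arbitrarily distant units. The reformulation therefore has to replace every appearance of $K_{n,ij}$ by $|K_{n,ij}^{-}|$ in the moment inequalities and to invoke Assumption \ref{asu8} in place of Assumption \ref{asu7} at each step. A secondary bookkeeping issue is the extraction of the upper-left $|\mathcal{T}|\times|\mathcal{T}|$ block from the sandwich: this requires the block structure induced by the Kronecker design $(z_i, z_i\otimes x_i)$ together with the bounded covariates in Assumption \ref{asu9}(a), so that the probability limit of $\hat{\gamma}_\textsc{l}(t)$ can be used to replace the residuals by $\Delta_{\textup{haj},\textsc{l}}$ and $M_\textsc{l}$ with negligible error. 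This step already appears in the proof of Theorem \ref{thm:full_bias} and is invariant under swapping $K_n$ for $K_n^{-}$, so no new machinery beyond Assumptions \ref{asu8}--\ref{asu9} is required.
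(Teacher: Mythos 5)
Your proposal is correct and follows essentially the same route as the paper: decompose the sandwich via $K_n^{+}=K_n+K_n^{-}$, apply Theorem \ref{thm:full_bias} to the $K_n$ part, rerun its argument with $K_n^{-}$ under Assumption \ref{asu8} in place of Assumption \ref{asu7} for the remainder, and regroup the $M_\textsc{l}$ quadratic forms into $K_n^{+}$. The paper executes this entrywise rather than at the matrix level, but the decomposition, the key substitution of assumptions, and the final cancellation are identical to what you describe.
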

Echoing the comment after Theorem \ref{thm:add_adj}, Theorems \ref{thm:full} and \ref{thm:full_adj} together justify the regression-based inference of $\tau = G\mu$ from the fully-interacted WLS fit in \eqref{eq:full} with point estimator $\hat{\tau} = G\hat{\beta}_{\text{haj,\textsc{l}}}$ and adjusted regression-based HAC covariance estimator $G \hat{ {V}}_{\text{haj,\textsc{l}}}^+ G^\top$. 
% By construction, 
% Theorem \ref{thm:full_adj} ensures the asymptotic conservativeness of $\hat{ {V}}_{{\textup{haj},\textsc{l}}}$ for estimating the true sampling variance.

\subsection{Final remarks on the efficiency gain via covariate adjustment}\label{sec:efficiency}

Regression adjustment can improve efficiency under reasonable data-generating processes.
\cite{Lin2013} demonstrated the efficiency gain from including fully interacted covariates when the propensity score is constant and there is no interference.
However, this strategy does not always improve efficiency, especially in the presence of heterogeneous propensity scores or interference.
In Appendix \ref{Counterexample}, we present simulation results demonstrating that including fully-interacted covariates can exhibit higher asymptotic variance than the unadjusted Hájek estimator in scenarios with either heterogeneous propensity scores or interference. 
This lack of guarantee has also been documented in settings without interference, such as cluster experiments with varying sizes \citep{SuDing2021}, split-plot experiments \citep{ZhaoDing2022}, and scenarios where outcomes are not missing completely at random \citep{ZhaoDingLi2024}. 
Despite the lack of theoretical guarantees for efficiency gain, we do observe that covariate adjustment improves efficiency in the simulation studies and empirical examples in Section \ref{sec:Numerical Illustration}.

We focus on regression-based covariate-adjusted estimators for ease of implementation. 
% Although there is no theoretical guarantee that covariate adjustment improves efficiency, our simulation and empirical studies show that the estimated standard errors are reduced. 
There are alternative methods for enhancing efficiency via covariate adjustment. 
One strategy is to find the optimal linearly adjusted estimator by minimizing the true or estimated standard error; see, e.g., \cite{LiDing2020a} and \citet{LuShiFang2025}.  
For example, in our setting, we consider regressions such as regressing \( Y_i - x_i^\top \gamma \text{ on } z_i \) or regressing \( Y_i - 1(T_i = t) x_i^\top \gamma(t) \text{ on } 1(T_i = t) \) for each $t\in \mathcal{T}$. 
We can compute the variance of the resulting estimator and then minimize it with respect to the coefficients $\gamma$ or \( \gamma(t) \)'s.  
This procedure is different from WLS, but the minimization ensures variance reduction. 
% We omit the technical details here for brevity.
Another strategy is to first compute the Hájek estimators of \( Y_i \) and \( x_i \), denoted by \( \hat{\beta}_{\textup{haj}} \) and \( \hat{\beta}_{{\textup{haj}, x}} \), respectively. An adjusted estimator can then be constructed as $\hat\beta_{\textup{adj}} = \hat\beta_{\textup{haj}} - \hat\beta_{{\textup{haj}, x}}^\top \gamma_x$, where the optimal adjustment coefficient is given by $\gamma_x = \operatorname{Cov}(\hat{\beta}_{{\textup{haj}, x}})^{-1} \operatorname{Cov}(\hat{\beta}_{{\textup{haj}, x}}, \hat\beta_{\textup{haj}})$,
and can be consistently estimated; see \cite{JiangandWei2019} and \cite{RothSantAnna2023}.
When using the estimated covariance matrix of $\hat{\beta}_{\textup{haj}}$ and $\hat\beta_{{\textup{haj}, x}}$ to estimate $\gamma_{x}$, the resulting estimated standard error is always smaller. 
We omit the details for these two alternative strategies because we focus on simpler regression-based estimators.

\section{Numerical examples}\label{sec:Numerical Illustration}
In this section, we first examine the finite-sample performance of our results with simulation and then apply our results to an empirical application.
Our analysis focuses on the exposure effect $\tau(t,t') = \mu(t) - \mu(t')$. 
We analyze another empirical example \cite{CaiJanvrySadoulet2015} in Appendix \ref{sec:Cai}.

\subsection{Simulation} \label{Simulation}
To achieve comparability with \cite{Leung2022}, we replicate the same scenario but with the inclusion of a covariate in the model. Regarding the results, we present the point and covariance estimators of the exposure effect from three specifications of WLS: unadjusted (Unadj), with additive covariates (Add), and with fully-interacted covariates (Sat).
% [PD: use Unadj? WLS is overloaded now....]
% Additionally, we report the adjusted covariance estimator for each regression-based HAC estimator to showcase its improvement in empirical coverage rates. 
We also report \cite{Leung2022}'s Horvitz--Thompson estimator and variance estimator. 

The study encompasses two outcome models: the linear-in-means model and the complex contagion model. 
% Define $\tilde{A}$ as the row-normalized version of $A$ (divide each row by its row sum). 
% [PD: the original presentation below is somewhat convoluted at least for me. check my modifications below.]
Define 
\begin{equation}
V_i( {D},  {A},  {x},  {\varepsilon})
= \alpha+\beta \sum_{j=1}^n \tilde{A}_{i j} Y_j +\delta \sum_{j=1}^n \tilde{A}_{i j} D_j + \xi D_i + \gamma x_i +\varepsilon_i. 
\label{eq:outcomeV}
\end{equation}
where $\tilde{A}_{i j} = A_{ij}/\sum_{j=1}^n A_{i j}$ is the $(i,j)$th entry of $\tilde{A}$, the row-normalized version of $A$. 
For the linear-in-means model, we set $Y_i = V_i( {D},  {A},  {x},  {\varepsilon})$ 
%as follows: 
%\[
%V_i( {D},  {A},  {x},  {\varepsilon})
%= \alpha+\beta \frac{\sum_{j=1}^n A_{i j} Y_j}{\sum_{j=1}^n A_{i j}}+\delta \frac{\sum_{j=1}^n A_{i j} D_j}{\sum_{j=1}^n A_{i j}}+\xi D_i  + \gamma  x_i+\varepsilon_i
%\]
with $(\alpha, \beta, \delta, \xi, \gamma)=(-1,0.8,1,1,3)$. 
% [PD: you need to solve for the outcomes? it is not clear enough here... see my next comment.] 
The model defines potential outcomes $Y_i(D)$ through its reduced form:
\[
Y = \alpha (I - \beta \tilde{A})^{-1} \iota + (I - \beta \tilde{A})^{-1} (\delta \tilde{A} + \xi I) D + (I - \beta \tilde{A})^{-1} \gamma x + (I - \beta \tilde{A})^{-1} \varepsilon.
\]
For the complex contagion model, we set $Y_i={1}(V_i( {D},  {A},  {x}, {\varepsilon})>0)$ with $(\alpha, \beta, \delta,\xi, \gamma)=(-1,1.5,1,1,3)$. 
% [PD: the complex contagion model is more complicated? you need to iterate?] 
The complex contagion model can be generated from the dynamic process:
\[
Y_i^t 
= 1\left( \alpha+\beta \sum_{j=1}^n \tilde{A}_{i j} Y_j^{t-1} + \delta \sum_{j=1}^n \tilde{A}_{i j} D_j + \xi D_i  + \gamma x_i + \varepsilon_i > 0 \right)
\]
with initialization at period 0 as 
% $Y_i^0 = 1( \alpha+\delta \sum_{j=1}^n \tilde{A}_{i j} D_j + \xi D_i  + \gamma  x_i+\varepsilon_i > 0 )$.
\begin{equation*}
Y_i^0 = 1\left( \alpha+\delta \sum_{j=1}^n \tilde{A}_{i j} D_j + \xi D_i  + \gamma x_i + \varepsilon_i > 0 \right).
\end{equation*}
We run the dynamic process to obtain new outcomes $Y^t=(Y_i^t)_{i=1}^n$ from last period's outcomes $Y^{t-1}$ until the first period $T$ such that $Y^T = Y^{T-1}$. We then take $Y^T$ as the vector of observed outcomes $Y$, which yields outcomes $(Y_i(D))^n_{i=1}$. As a result, this process implicitly defines potential outcomes (\citet[Section 3.1]{Leung2022}). 
Without covariates $x_i$, \cite{Leung2022} derived conditions on the model parameters of the linear-in-means model and complex contagion model so that ANI holds. 
We can extend his proof to the models with additive covariates as in \eqref{eq:outcomeV}, or covariates interacted with the network $A$, given that the covariates are fixed. 
We choose parameters to satisfy those conditions to ensure ANI.
% \footnote{For linear-in-means model, ANI holds with $|\beta|<1$. For complex contagion model, ANI holds with a bound on the magnitude of cumulative network interactions, thus ensuring the stability of network effects as $n$ grows large. Define a weighted directed network $G$ on $\mathcal{N}_n$ with entries $G_{ij} = A_{ij}\mathbb{E}[\varphi_j]$ for $\varphi_j = 1\{0 < \phi(D_j,\varepsilon_j) \leq \beta\}$.
% Then, ANI holds with $\sup_n \rho_n(\bar{s}) < 1$ for some $\bar{s}>0$ with
% \[
% \rho_n(\bar{s}) = \sup_{s\geq\bar{s}}\|G^s\|_{\infty}^{1/s} \equiv \sup_{s\geq\bar{s}}\left(\max_{i\in\mathcal{N}_n}\sum_{j=1}^{n}(G^s)_{ij}\right)^{1/s}.
% \]
% See more details in \citet[Section 3.1]{Leung2022}.
% }

% As a remark, to generate the outcomes, we first need to derive the reduced form of the outcome model to avoid the issue of simultaneity. [PD: the last sentence is dense and it is too late.]

Following \cite{Leung2022}, we generate the adjacency matrix $ {A}$ from a random geometric graph model. Specifically, for each node $i$, we randomly generate its position $\rho_i$ in a two-dimensional space from $\mathcal{U}([0,1]^2)$. 
An edge between nodes $i$ and $j$ is created if the Euclidean distance between their positions is less than or equal to a threshold value $r_n$: $A_{i j}={1}\{\left\|\rho_i-\rho_j\right\| \leq r_n\}$, where the threshold value is chosen as $r_n=(\kappa /(\pi n))^{1/2}$. 
We set $\kappa$ as the average degree $\delta({A})$, calculated based on the experimental data in Section \ref{app:paluck}, in order to better mimic real-world scenarios.
We also generate a sequence $\{\nu_i\}_{i=1}^n \stackrel{\text{IID}}{\sim} \mathcal{N}(0,1)$ independent of $ {A}$. 
% [PD: in V? maybe eqref the equation of V above?]
The error term in \eqref{eq:outcomeV} is generated as $\varepsilon_i=\nu_i + (\rho_{i 1}-0.5)$, where $\rho_{i 1}$ is the first component of $i$'s ``location'' $\rho_{i}$ generated above.
% [PD: what is $\rho_{i 1}$?] 
This inclusion accounts for unobserved homophily, as units with similar $\rho_{i1}$ values are more likely to form links.
Finally, we generate the covariate $\{x_i\}_{i=1}^n \stackrel{\text{IID}}{\sim} \mathcal{N}(0,1)$.
% the covariate $x_i$'s are IID $\mathcal{N}(0,1)$. 

We use the sample of the two largest treated schools from the network experiment in Section \ref{app:paluck} to calibrate the network models. The network size $n$ is $1456$. 
% The results pertaining to network sizes $n=805$ and $2725$ are included in the online appendix.
We also conduct simulation with network sizes $n=805$ and $2725$ to illustrate variations in population sizes. See results in the Appendix \ref{sec:additional_sim}.
We treat the schools as a single network by pooling the degree sequences across them. 
We randomly assign treatments to units classified as eligible in the experimental data with a probability $0.5$.
Since we work within a finite-population framework, we generate $ {A}$, $ {\varepsilon}$'s, and $ {x}$'s once and only redraw $ {D}$ for each simulation draw. This differs from the superpopulation design simulation in \cite{Leung2022}, where he regenerated $D$, $A$ and $\varepsilon$'s for each simulation draw. 

For the spillover effect of having at least one treated friend versus non-treated friends $\tau(1,0)$, we define the exposure mapping as $T_i={1}(\sum_{j=1}^n A_{i j} D_j>0)$ and analyze only the population of units with at least one friend who is eligible for treatment to satisfy Assumption \ref{asu2}.
% [PD: what is Bernoulli randomization? just the IID randomization of $D$?]
Under the IID randomization of $D$, we can compute the propensity score $\pi_i(1)$'s and $\pi_i(0)$'s for each student using Binomial probabilities.
% [PD: $\pi_i(1)$'s and $\pi_i(0)$'s]

Table \ref{table: Haj RGG 2} presents the results. 
The top panels display our regression-based results. 
We report the estimand under ``${\tau}(1,0)$,'' approximated by the unbiased Horvitz--Thompson estimator $\hat{\tau}_\text{ht}(1,0)$, computed over $10,000$ simulation draws.
We report ``Oracle SE,'' denoted by $\operatorname{Var}(\hat{\tau}(1,0))^{1/2}$, which are calculated as the standard deviation of the point estimators from corresponding WLS fits over $10,000$ simulation draws.
For the estimation results, we conduct another independent $10,000$ simulation draws.
We present the point estimate from each WLS fit under ``$\hat{\tau}(1,0)$.'' 
We present the HAC standard errors obtained from each WLS fit under ``WLS SE,'' and the corresponding adjusted HAC standard errors under ``WLS$^+$ SE'', where the suggested bandwidth based on \eqref{eq:bandwidth} is $b_n = 3$. We report the Eicker-Huber-White standard errors assuming no interference under ``EHW SE'' to illustrate the degree of dependence in the data.
We also report the empirical coverage rate of $95\%$ confidence intervals (CIs) in the “Coverage'' rows for the corresponding standard errors. 
The effective sample size of exposure mapping value $t$ is defined as $\hat{n}(t)=\sum_{i=1}^n 1(T_i=t)$.
% We report the effective sample size of $\hat{\mu}(t)$ as $\hat{n}(t)=\sum_{i=1}^n 1(T_i=t)$ under ``$\hat{n}(t)$.''
% The ``$b_n$" row reports the bandwidth in \eqref{eq:bandwidth} with $K=1$. 
% The ``APL'' row reports the average path length $\mathcal{L}({A})$ of each network. 
% % We adopt the bandwidth choice recommended by \cite{Leung2022}; however, it is worthwhile to investigate whether an optimal choice exists [PD: why do we mention the optimality issue here?].
% The ``Effective size" is $\hat{n}(1) + \hat{n}(0)$, as we only consider units with at least one friend who is eligible for treatment to satisfy Assumption \ref{asu2}.

The result table demonstrates that the standard errors obtained from the WLS fits can be anti-conservative, underestimating the true standard error. However, by utilizing the adjusted HAC standard errors, we can improve the empirical coverage and ensure a conservative estimation of the standard error. In this setting,  the estimator from the fully-interacted WLS fit is at least as efficient as the estimators from the unadjusted or additive WLS fits. 

In the middle panel of Table \ref{table: Haj RGG 2}, we report the results of standard errors and coverage rates of $95\%$ CIs using the kernel ${K}^{\textup{L}2019}_n$ in \cite{Leung2019e} and the kernel ${K}^{\textup{K}2021}_n$ in \cite{Kojevnikov2021}.
% where the $(i,j)$th element is
% \[
% {K}^{\textup{L}2019}_{n,ij}
% =\frac{\left|\mathcal{N}(i, b_n;A) \cap \mathcal{N}(j, b_n;A)\right|}{\left|\mathcal{N}(i, b_n;A)\right|^{1 / 2}\left|\mathcal{N}(j, b_n;A)\right|^{1 / 2}}
% \]
% and the kernel ${K}^{\textup{K}2021}_n$ in \cite{Kojevnikov2021} where the $(i,j)$th element is
% \[
% {K}^{\textup{K}2021}_{n,ij}=\frac{\left|\mathcal{N}(i, b_n;A) \cap \mathcal{N}(j, b_n,A)\right|}{n^{-1} \sum_{k=1}^n\left|\mathcal{N}(k, b_n;A)\right|},
% \]
% respectively.
Both ${K}^{\textup{L}2019}_n$ and ${K}^{\textup{K}2021}_n$ are positive semi-definite, ensuring the positive semi-definiteness of the covariance estimators. However, we can see that they substantially overreject even in moderately sized samples. 

The bottom panel of Table \ref{table: Haj RGG 2} present the results of the Horvitz--Thompson estimator and variance estimator from \cite{Leung2022}. By comparing the ``Oracle SE'' from the top and bottom panels, we can see the WLS estimators from all three specifications exhibit higher efficiency compared with the Horvitz--Thompson estimator.
% In Table \ref{table: Haj RGG 4}, our regression-based standard errors are approximately half of those reported using \cite{Leung2022}'s method, indicating a significant spillover effect at the 5\% significance level. 
% In contrast, \cite{Leung2022}'s method yields an insignificant effect. 
Moreover, \cite{Leung2022}’s standard errors are smaller than the oracle standard errors, resulting in under coverage.

\begin{table*}
\centering
\caption{Simulation results: network size $n=1456$}
\begin{tabular}{lcccccc}
\hline
\hline
% \multicolumn{7}{c}{SIMULATION RESULTS}  \\ 
% \hline
% & \multicolumn{6}{c}{\# Schools = 2} \\
% \cline{2-7}
Outcome model & \multicolumn{3}{c}{Linear-in-Means} & \multicolumn{3}{c}{Complex   Contagion} \\
\hline
% \cline{2-4} \cline{5-7}
WLS specification & \multicolumn{1}{c}{Unadj} & \multicolumn{1}{c}{Add} & \multicolumn{1}{c}{Sat} & \multicolumn{1}{c}{Unadj} & \multicolumn{1}{c}{Add} & \multicolumn{1}{c}{Sat} \\
\hline													
${\tau}(1,0)$	&		&	0.616	&		&		&	0.016	&		\\
$\hat{\tau}(1,0)$	&	0.620	&	0.617	&	0.617	&	0.017	&	0.017	&	0.017	\\
Oracle SE	&	0.842	&	0.639	&	0.639	&	0.041	&	0.027	&	0.027	\\
WLS SE	&	0.802	&	0.606	&	0.604	&	0.040	&	0.027	&	0.027	\\
WLS$^+$ SE	&	0.874	&	0.650	&	0.648	&	0.050	&	0.034	&	0.034	\\
EHW SE	&	0.407	&	0.272	&	0.272	&	0.040	&	0.027	&	0.027	\\
Oracle Coverage	&	0.952	&	0.950	&	0.950	&	0.951	&	0.953	&	0.953	\\
WLS Coverage	&	0.939	&	0.932	&	0.931	&	0.936	&	0.942	&	0.943	\\
WLS$^+$ Coverage	&	0.958	&	0.948	&	0.947	&	0.980	&	0.982	&	0.982	\\
EHW Coverage	&	0.659	&	0.596	&	0.595	&	0.944	&	0.947	&	0.947	\\
\hline													
\cite{Leung2019e} SE	&	0.748	&	0.562	&	0.560	&	0.041	&	0.027	&	0.027	\\
\cite{Kojevnikov2021} SE	&	0.734	&	0.553	&	0.551	&	0.041	&	0.028	&	0.028	\\
\cite{Leung2019e} Coverage	&	0.919	&	0.911	&	0.910	&	0.943	&	0.947	&	0.948	\\
\cite{Kojevnikov2021} Coverage	&	0.914	&	0.907	&	0.906	&	0.944	&	0.949	&	0.948	\\
\hline													
$\hat{\tau}_\text{ht}(1,0)$	&	0.709	&		&		&	0.020	&		&		\\
Oracle SE	&	1.380	&		&		&	0.112	&		&		\\
Leung SE	&	1.335	&		&		&	0.109	&		&		\\
Oracle Coverage	&	0.952	&		&		&	0.952	&		&		\\
Leung Coverage	&	0.934	&		&		&	0.937	&		&		\\
\hline
\hline
\end{tabular}
\label{table: Haj RGG 2}
% \captionsetup{font=small,labelfont=bf}
\caption*{ \footnotesize Note: The effective sample size for each exposure mapping value is $\hat{n}(1)=426$ and $\hat{n}(0)=296$, with a total of $\hat{n}(1) + \hat{n}(0) = 722$.
The suggested bandwidth in \eqref{eq:bandwidth} is $b_n = 3$. 
The average path length is $\mathcal{L}({A})=18.25$.}
\end{table*}

\subsection{Empirical Application I: \cite{PaluckShepherdAronow2016}}
\label{app:paluck}
In this subsection, we revisit \cite{PaluckShepherdAronow2016} and apply our regression-based analysis to their network experiment, which examines how an anti-conflict intervention influences teenagers’ social norms regarding hostile behaviors such as bullying, social exclusion, harassment, and rumor-spreading.
We now provide a detailed description of the empirical setting.
In the experimental design, half of $56$ schools were randomly assigned to the treatment group. Within these treated schools, a subset of students was selected as eligible for treatment based on certain characteristics. Half of the eligible students were then block-randomized into treatment by gender and grade. Those treated students were invited to participate in bi-weekly meetings that incorporated an anti-conflict curriculum.
Following \cite{Leung2022}, we choose self-reported data on wristband wearing as the outcome of interest, which serves as the reward for students who exhibit anti-conflict behavior. 
We incorporate both  gender and grade for covariate adjustment. 
The network is measured by asking students to name up to ten students at the school they spent time with in the last few weeks.
More details about this network experiment can be found in \cite{PaluckShepherdAronow2016}. 

To align with the results reported in \cite{Leung2022}, we restrict the data to the five largest treated schools.  
Our primary interest lies in assessing the direct effect of the anti-conflict intervention and the spillover effect of having at least one friend assigned to the treatment versus none such friends. 
We first calculate both effects by defining two one-dimensional exposure mappings and report the results in Table \ref{table: paluck}. 
To examine both effects simultaneously,
we define a two-dimensional exposure mapping and report the results in Table \ref{table: paluck1}. 
The network, obtained from surveys, is directed. 
When calculating the number of treated friends for the exposure mappings, we take into account the direction of links. However, when computing network neighborhoods for our covariance estimators, we disregard the directionality of links to conservatively define larger neighborhoods.
For each exposure mapping, our analysis involves three WLS specifications: unadjusted (Unadj), with additive covariates (Add), and with fully-interacted covariates (Sat). 
We also include the results from \citet[Table 1]{Leung2022} in the column ``Leung.''

% [PD:  needs better structure in this subsection. maybe use paragraph ``One-dimensional exposure mapping'' and ``Two-dimensional exposure mapping''? check my modification below.]

\paragraph*{One-dimensional exposure mapping}
For the direct effect, we define $T_i=D_i$ as in Example \ref{ex:direct} and limit the analysis to the students eligible for treatment, totaling $320$ students. The propensity score is $\pi_i(t)=0.5$ for each student. For the spillover effect, we employ $T_i={1}(\sum_{j=1}^n A_{i j} D_j>0)$ as the exposure mapping as in Example \ref{ex:one-dim}, indicating whether at least one friend has been assigned to the treatment. 
We restrict the effective sample to units with at least one eligible friend.
% [PD: block on what? be precise]
Under block randomization, we can compute the propensity score $\pi_i(0)$ and $\pi_i(1)$ for each student using Hypergeometric probabilities.
% [PD: propensity score $\pi_i(1)$ and $\pi_i(0)$? ]

The results are presented in Table \ref{table: paluck}.
% Table \ref{table: paluck} displays the results of both direct and spillover effects. 
The suggested bandwidths in \eqref{eq:bandwidth} are $b_n = 2$ for both exposure mappings. 
We present results for the range of bandwidths $\{0, \ldots, 3\}$, where $0$ yields the standard errors in the absence of interference.
The first row, labeled as ``Estimate,'' presents the point estimates obtained from corresponding WLS fits.
The rows labeled as ``$b_n=k$'' present the HAC standard errors with the specific bandwidth values stated. 
% Additionally, the rows labeled as ``$K_n$ PSD'' indicate whether the associated matrix $ {K}_n$ is positive semi-definite or not. Furthermore, the rows labeled as ``WLS$^{+}$ SE'' present the adjusted HAC standard errors with corresponding bandwidth values.
We find that the kernel matrix $K_n$ is not positive semi-definite for all bandwidths in $\{1, 2, 3\}$, so we report the adjusted HAC standard errors under ``WLS$^{+}$ SE''.
The direct effect is statistically significant at $5\%$ level across all specifications, bandwidths, and after adjustment to the covariance estimation. The spillover effect is significant at $5\%$ level except when $b_n = 3$, both before and after adjustment to the covariance estimation. 
While our results align with the conclusions of \cite{Leung2022}, our regression-based estimation approach provides higher precision. Also, the $K_n$ is not positive semi-definite indicating that \cite{Leung2022}'s variance estimators may be anti-conservative.

\paragraph*{Two-dimensional exposure mapping}
We define the exposure mapping and $G$ as in Example \ref{ex:two-dim}: $T_i = (D_i,  {1} (\sum_{j=1}^n A_{ij}D_j > 0) )$.  
% where $T_i$ takes values of $\{(0,0), (0,1), (1,0), (1,1)\}$. 
% With
% \[
% G = 2^{-1} 
% \begin{pmatrix}
% -1 & -1 & 1 & 1 \\  
% -1 & 1 & -1 & 1
% \end{pmatrix}, 
% \]
We focus on the first two components of $\tau=G\mu$, where the first component captures the direct effect and the second component captures the spillover effect.
% For the direct effect, we set $G=2^{-1}(-1,-1,1,1)$, and for the spillover effect, we set $G=2^{-1}(-1,1,-1,1)$. 
We restrict the effective sample to students who are eligible for treatment and have at least one eligible friend, resulting in a total of $150$ students. 
% Table \ref{table: paluck1} presents the results. 
% Our analysis focuses on students who are eligible for treatment and have at least one eligible friend, resulting in a total of 150 students. 

The results are presented in the top panel of Table \ref{table: paluck1}.
The average out-degree, $n^{-1} \sum_{ij} A_{ij}$, is 7.96. The APL is 3.37 across our five schools. Given $n = 3306$ students, we have $\log n / \log \delta(A) = 3.96$, which is close to 3.37. Thus, the suggested bandwidth in \eqref{eq:bandwidth} is $b_n = 2$ with $K = 1$, and we report results for the range of bandwidths $\{0, \ldots, 3\}$.
We observe that the magnitude and standard errors of the direct effect remain relatively stable. 
Regarding the spillover effect, its magnitude notably increases, and it remains statistically significant at the $5\%$ significance level across all specifications and bandwidths, even after adjustment to the covariance estimation.

To investigate whether these changes in results arise from shifts in the target population or potential misspecification of the exposure mappings, we provide results using two one-dimensional exposure mappings and focusing on treatment-eligible students with at least one eligible friend. These results are displayed in the bottom panel of Table \ref{table: paluck1}. 
Upon comparing the top and bottom panels, we can observe that there are minor differences in the point estimates and standard errors, but the overall message does not change. Specifically, the spillover effect is more pronounced and significant for the subset of students who are both eligible for treatment and have at least one eligible friend, in comparison to the subset with at least one eligible friend. Table \ref{table: paluck1} also demonstrates that our methods are robust to various specifications of exposure mappings.

\begin{table}
\centering
\caption{Estimates and SEs (one-dimensional exposure mapping).}
\label{table: paluck}
\begin{tabular}{lcccccccc}
% \toprule
\hline
\hline
& \multicolumn{4}{c}{Direct effect} & \multicolumn{4}{c}{Spillover effect} \\
\hline
Estimator & Unadj & Add & Sat & Leung & Unadj & Add & Sat & Leung \\
\midrule
Estimate	&	0.150	&	0.147	&	0.147	&	0.150	&	0.048	&	0.045	&	0.045	&	0.041	\\
$b_n=0$	&	0.040	&	0.040	&	0.040	&	0.044	&	0.016	&	0.016	&	0.016	&	0.017	\\
$b_n=1$	&	0.041	&	0.040	&	0.040	&	0.046	&	0.016	&	0.016	&	0.016	&	0.018	\\
WLS$^{+}$ SE	&	0.042	&	0.042	&	0.042	&		&	0.020	&	0.020	&	0.020	&		\\
$b_n=2$	&	0.035	&	0.035	&	0.033	&	0.039	&	0.017	&	0.017	&	0.016	&	0.021	\\
WLS$^{+}$ SE	&	0.050	&	0.049	&	0.048	&		&	0.027	&	0.028	&	0.027	&		\\
$b_n=3$	&	0.040	&	0.039	&	0.038	&	0.047	&	0.017	&	0.016	&	0.016	&	0.017	\\
WLS$^{+}$ SE	&	0.058	&	0.057	&	0.056	&		&	0.030	&	0.030	&	0.030	&		\\	
% \hline
% % \hline
% \multicolumn{7}{c}{Results copied from \cite{Leung2022}}\\
% \hline
% % \hline
% Estimate &			0.1500	&		&		&	0.0407	&	& \\
% $b_n=0$	&			0.0443 	&		&		&	0.0167	&	& \\
% $b_n=1$	&			0.0460 	&		&		&	0.0184	&	& \\
% $b_n=2$	&			0.0394 	&		&		&	0.0205	&	& \\
% $b_n=3$	&			0.0470 	&		&		&	0.0170	&	& \\
\hline
\hline
% \bottomrule
\end{tabular}
% \captionsetup{font=small,labelfont=bf}
\caption*{\footnotesize Note: Columns display results for the treatment ($n=320$) and spillover ($n=1685$) effects. }
\end{table}

\begin{table}[ht]
\centering
\caption{Estimates and SEs ($n=150$).}
\label{table: paluck1}
\begin{tabular}{lcccccc}
% \toprule
\hline
\hline
& \multicolumn{3}{c}{Direct effect} & \multicolumn{3}{c}{Spillover effect} \\
\hline
Estimator & Unadj & Add & Sat & Unadj & Add & Sat \\
\hline
% \hline
\multicolumn{7}{c}{Two-dimensional exposure mapping}\\
\hline
% \hline
Estimate	&	0.155	&	0.144	&	0.142	&	0.149	&	0.147	&	0.165	\\
$b_n=0$	&	0.051	&	0.050	&	0.051	&	0.051	&	0.050	&	0.051	\\
$b_n=1$	&	0.052	&	0.050	&	0.053	&	0.054	&	0.054	&	0.055	\\
WLS$^{+}$ SE	&	0.053	&	0.051	&	0.054	&	0.055	&	0.055	&	0.057	\\
$b_n=2$	&	0.046	&	0.044	&	0.049	&	0.055	&	0.056	&	0.062	\\
WLS$^{+}$ SE	&	0.054	&	0.052	&	0.058	&	0.061	&	0.061	&	0.067	\\
$b_n=3$	&	0.043	&	0.044	&	0.044	&	0.050	&	0.053	&	0.063	\\
WLS$^{+}$ SE	&	0.059	&	0.058	&	0.061	&	0.067	&	0.068	&	0.076	\\
\hline													
\multicolumn{7}{c}{One-dimensional	exposure	mapping}\\											
\hline													
Estimate	&	0.170	&	0.155	&	0.155	&	0.168	&	0.164	&	0.165	\\
$b_n=0$	&	0.057	&	0.057	&	0.057	&	0.053	&	0.053	&	0.053	\\
$b_n=1$	&	0.058	&	0.058	&	0.058	&	0.058	&	0.058	&	0.058	\\
WLS$^{+}$ SE	&	0.059	&	0.059	&	0.060	&	0.059	&	0.059	&	0.059	\\
$b_n=2$	&	0.049	&	0.051	&	0.051	&	0.061	&	0.061	&	0.061	\\
WLS$^{+}$ SE	&	0.060	&	0.061	&	0.061	&	0.067	&	0.066	&	0.066	\\
$b_n=3$	&	0.041	&	0.040	&	0.040	&	0.058	&	0.060	&	0.061	\\
WLS$^{+}$ SE	&	0.061	&	0.061	&	0.061	&	0.074	&	0.073	&	0.074	\\
\hline
\hline
% \bottomrule
\end{tabular}
\caption*{ \footnotesize Note: The top panel presents results from a two-dimensional exposure mapping, while the bottom panel shows results from two one-dimensional exposure mappings, using the same effective sample as the two-dimensional exposure mapping ($n=150$).}
% % \captionsetup{font=small,labelfont=bf}
% % \caption*{Note: Columns display results for the treatment and spillover ($n=150$) effects under three regression specifications. }
\end{table}

\section{Extensions to continuous exposure mapping} \label{sec:conclusions}

Our theory focuses on discrete exposure mappings with finite support. 
However, continuous or growing-dimensional exposure mappings, such as the number or share of treated friends, are also common in practice, e.g., \cite{MuralidharanNiehausSukhtankar2023}.
For growing-dimensional exposure mappings that vary with $n$, valid inference is possible when the network is sparse, meaning that the maximum or average degree is substantially smaller than the network size (e.g., \citealt{Leung2020}).
For continuous exposure mappings, estimating \( \mu(t) \) is conceptually straightforward by extending the propensity score to a treatment density function, defined as \( \pi_i(t) = f_{T_i}(t) \) \citep{HiranoImbens2004}. 

Without imposing any modeling assumption on $\mu(t)$, we can use the following nonparametric estimator:
\begin{align}
    \hat{\mu}_h(t) = \frac{\frac{1}{n h} \sum_{i=1}^n \frac{ 1(|T_i - t| \leq h) }{ \mathbb{P}(|T_i - t| \leq h) } Y_i}{\frac{1}{n h} \sum_{i=1}^n \frac{ 1(|T_i - t| \leq h) }{ \mathbb{P}(|T_i - t| \leq h) } }  
    \text{ where } 
    \mathbb{P}(|T_i - t| \leq h) = \int_{t-h}^{t+h} \pi_i(s) \textup{d}s,
\label{eq:nonparametric}
\end{align}
which locally averages the $Y_i$ values whose $T_i$ falls within the bandwidth $h$ around $t$.
Since the exposure mapping $T_i$ and the treatment assignments are known, one can compute $\mathbb{P}(|T_i - t| \leq h)$ either in closed form or via Monte Carlo simulation.
The main technical challenges are (i) ensuring sufficient smoothness of the estimand $\mu(t)$ and (ii) choosing an appropriate bandwidth $h$ to trade off the bias and variance for estimating $\mu(t)$. 
Here, we use the uniform kernel in \eqref{eq:nonparametric} as an illustrative example, although general kernel functions could be employed.
% Similarly, \cite{Leung2020} proposes nonparametric estimators for treatment and spillover effects, or OLS estimators under a linearity assumption, while assuming a sparse network.

As noted by \cite{FaridaniNiehaus2024}, regression-based analysis with continuous exposure mappings typically relies on either a linear outcome model or restrictions on the experimental design. Consider the potential outcome model \( Y_i(t) = Y_i(0) + \beta_i t \), where the individual effects $\beta_i$'s can vary across units. If we regress the outcome \( Y_i \) on the centered exposure mapping  \( T_i - \mathbb{E}(T_i) \) with weight $1/\sqrt{\operatorname{Var}(T_i)}$, the WLS coefficient 
\begin{align}
\hat{\beta}
= \frac{\frac{1}{n} \sum_{i=1}^n \frac{1}{\operatorname{Var}(T_i)} (T_i - \mathbb{E}(T_i)) Y_i}{\frac{1}{n} \sum_{i=1}^n \frac{1}{\operatorname{Var}(T_i)} (T_i - \mathbb{E}(T_i))^2}  
\label{eq:hat_beta}
\end{align}
identifies:
\[
\beta = \frac{\frac{1}{n} \sum_{i=1}^n \frac{1}{\operatorname{Var}(T_i)} \operatorname{Cov}(Y_i, T_i - \mathbb{E}(T_i))}{\frac{1}{n} \sum_{i=1}^n \frac{1}{\operatorname{Var}(T_i)} \operatorname{Var}(T_i)}
= \frac{1}{n} \sum_{i=1}^n \beta_i,
\]
which represents the average of the $\beta_i$'s.
With constant treatment effect $\beta_i = \beta$, the WLS coefficient $\hat{\beta}$ identifies $\beta$. 

We outline future directions for continuous exposure mapping above, leaving many technical issues for further research. For example, what is the optimal choice of bandwidth $h$ in estimator \eqref{eq:nonparametric}? More importantly, we aim to develop rigorous statistical inference procedures for both the nonparametric estimator in \eqref{eq:nonparametric} and the WLS estimator in \eqref{eq:hat_beta}.

\bibliographystyle{ecta}
\bibliography{bib}

\newpage

\appendix 

\pagenumbering{arabic} %reset page counter to 1
\renewcommand*{\thepage}{S\arabic{page}}

\setcounter{equation}{0} 
\global\long\def\theequation{S\arabic{equation}}%
%  \setcounter{section}{0} 
% \global\long\def\thesection{S\arabic{section}}%
 \setcounter{assumption}{0} 
\global\long\def\theassumption{S\arabic{assumption}}%
 \setcounter{theorem}{0} 
\global\long\def\thetheorem{S\arabic{theorem}}%
 \setcounter{proposition}{0} 
\global\long\def\theproposition{S\arabic{proposition}}%
 \setcounter{definition}{0} 
\global\long\def\thedefinition{S\arabic{definition}}%
 \setcounter{example}{0} 
\global\long\def\theexample{S\arabic{example}}%
 \setcounter{figure}{0} 
\global\long\def\thefigure{S\arabic{figure}}%
 \setcounter{table}{0} 
\global\long\def\thetable{S\arabic{table}}%

\begin{appendix} 

% \begin{small}

\begin{center}
  \LARGE {\bf Appendix for ``Causal inference in network experiments: 
regression-based analysis and design-based properties''}
\end{center}

Section \ref{sec:add_results} presents additional results that complement the main paper.
Section \ref{app:asu8} gives some numerical justification of Assumption \ref{asu8}.
Section \ref{Counterexample} gives the counterexample of three cases without efficiency gain from fully-interacted regression adjustment. 
Section \ref{sec:additional_sim} gives additional simulation results. 
Section \ref{sec:Cai} analyzes the network experiment of \cite{CaiJanvrySadoulet2015}.
Section \ref{app:HT} gives the regression-based analysis for recovering the Horvitz--Thompson estimator, and modifies \cite{Leung2022}'s variance estimator to guarantee conservativeness. 

Section \ref{sec:proof} contains the proofs of the results stated in the main paper.
Section \ref{app:Auxiliary} contains the auxiliary results that are used in the proofs.
Section \ref{proof:Hájek} contains the proofs of the results in Section \ref{sec:WLS}. 
Section \ref{app:covadj} contains the proofs of the results in Section \ref{sec:covadju}.

\paragraph*{Notation}
Let $\|\cdot\|_\textup{F}$ denote the Frobenius norm, i.e., $\|A\|_\textup{F} = \sqrt{\text{tr}(A^\top A)}$ for a real matrix $A$. 
% Let $\|\cdot\|$ denote the Euclidean norm, i.e., $\|w\| = \sqrt{w^\top w}$ for $w\in\mathbb{R}^v$. 
For any Lipschitz function $f: \mathbb{R}^{v\times a} \rightarrow \mathbb{R}$, let $\operatorname{Lip}(f)$ be its Lipschitz constant and $\|\cdot\|_{\infty}$ be the sup-norm of $f$, i.e., $\|f\|_{\infty}=\sup _{x \in \mathcal{X}^d}|f(x)|$, where $\mathcal{X} \subseteq \mathbb{R}$ is any compact set.
Let $\Phi$ denote the cumulative distribution function of $\mathcal{N}(0,1)$. Throughout the Appendix, we denote the $(i,j)$th entry of matrix $B$ as $B(i,j)$ or $B_{ij}$ and define $1_i(t) = 1(T_i = t)$ for simplicity of notation.

\section{Additional results} \label{sec:add_results}

\subsection{Numerical justification of Assumption \ref{asu8}} \label{app:asu8}
We provide some numerical justification of Assumption \ref{asu8}, in response to Remark \ref{justification}, with two classic network generation models: the random geometric graph model and the Erdős--Rényi model.
Since the network is observed, both $K_n$ and $K_n^-$ are known. 
Moreover, as noted by \citet{KojevnikovMarmerSong2019}, one can compute $M_n^-(b,1)$, $M_n^-(b,2)$ and $\mathcal{J}_n^-(s,b)$ for any $s$ using the data across a range of values of bandwidth $b$, e.g., $b\in[1,10]$. 
Suppose that the sequence $\{\tilde{\theta}_{n, s}\}$ is summable, i.e. $\sum_{s=0}^{n} \tilde{\theta}_{n, s}=O(1)$. It suffices to justify that $\max_s \mathcal{J}_n^-(s,b_n) = o(n^2)$ to satisfy Assumption \ref{asu8}(d). Suppose further that we have a sequence $\mathcal{H}_n(b_n) = O(b_n^\beta)$. The coefficient $\beta$ can be estimated by regression $\log(\mathcal{H}_n(b))$ against $\log(b)$ and a constant. We follow this idea to justify Assumption \ref{asu8}. For each model, we consider three different sample sizes for the number of nodes: $n=500, 1000$ and $5000$.

\paragraph*{Random geometric graph model.}  
The random geometric graph model exhibits the polynomial growth rates in the sense that for sufficiently large $s$
\[
\sup _n \max _{i \in \mathcal{N}_n}\left|\mathcal{N}_A(i, s)\right|=C s^d,  
\]
where $C>0$ and $d$ equals the underlying network dimension with $d\ge 1$ \citep{Leung2019e}.
\citet[Appendix]{Leung2022} gave a justification that the bandwidth in \eqref{eq:bandwidth} for network with polynomial growth rates is $b_n\approx n^{1/3d}$. 

To illustrate, we generate a network with $n$ nodes in $d=2$, where the network edges are determined as follows:
\[
A_{i j}={1}(\left\|\rho_i-\rho_j\right\| \leq (5 /(\pi n))^{1/2}),
\] 
where $\rho_i \stackrel{\text{IID}}{\sim} \mathcal{U}([0,1]^2)$. 
% The coefficient can be estimated by regressing $\log(\mathcal{J}_n^-(s,b))$ against $\log(b)$ and a constant.
Figures \ref{fig:RGG1}, \ref{fig:RGG2} and \ref{fig:RGG3} display the plots of $\log(M_n^-(b,1))$, $\log(M_n^-(b,2))$ and $\log(\max_s \mathcal{J}_n^-(s,b))$ against $\log(b)$, repectively. 
For $M_n^-(b,1)$, the coefficients vary within the range of $1.18-1.48$ across different sample sizes. 
For $M_n^-(b,2)$, the coefficients also vary within the range of $2.43-3.00$. For $\max_s \mathcal{J}_n^-(s,b)$, the coefficients exhibit variations within the range of $2.66-3.05$.
With $d=2$, we can see that the Assumption \ref{asu8}(b)--(d) aligns with the behavior of $K_n^-$ for the random geometric graph model. Additionally, we present the plot of $\log(M_n(b,1))$ against $\log(b)$ in Figure \ref{fig:RGG0}, with coefficients varying within the range of $1.27-1.42$. This serves as a validation of Assumption \ref{asu7}(b).

\begin{figure} 
\begin{subfigure}{0.45\textwidth}
\centering
\includegraphics[width=\linewidth]{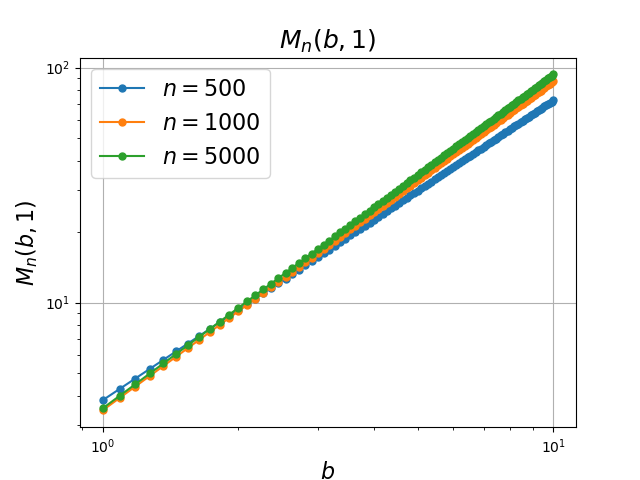}
\caption{$M_n(b,1)$}\label{fig:RGG0}
\end{subfigure}
\centering
\begin{subfigure}{0.45\textwidth}
\includegraphics[width=\linewidth]{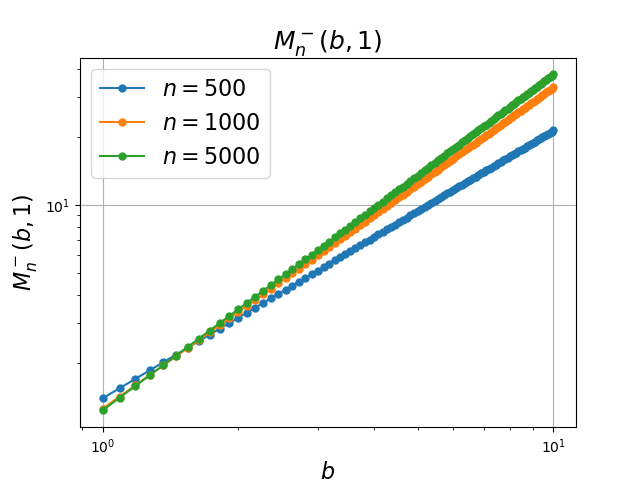}
\caption{$M_n^-(b,1)$}\label{fig:RGG1}
\end{subfigure}
\begin{subfigure}{0.45\textwidth}
\centering
\includegraphics[width=\linewidth]{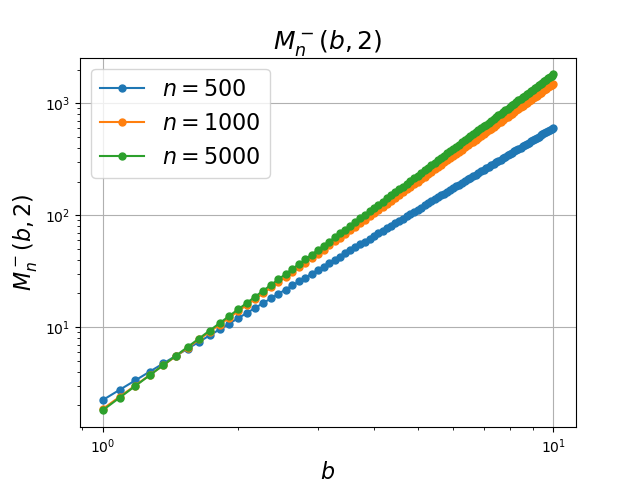}
\caption{$M_n^-(b,2)$}\label{fig:RGG2}
\end{subfigure}
\begin{subfigure}{0.45\textwidth}
\centering
\includegraphics[width=\linewidth]{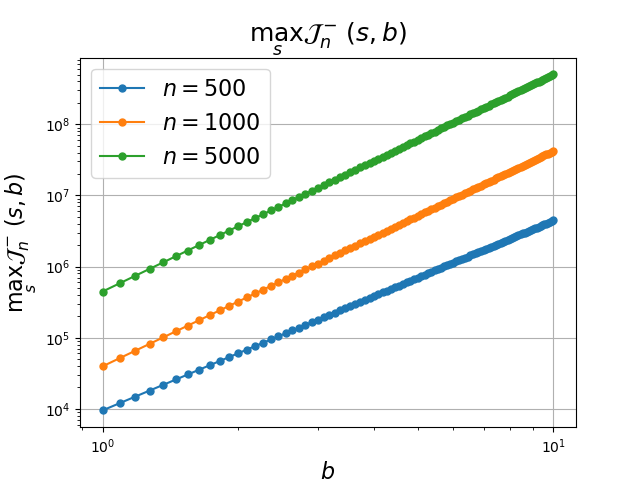}
\caption{$\max_s \mathcal{J}_n^-(s,b)$}\label{fig:RGG3}
\end{subfigure}
% Add more subfigures as needed
\caption{random geometric graph model. The log-log plots of $M_n(b,1)$ (on the top left panel), $M_n^-(b,1)$ (on the top right panel), $M_n^-(b,2)$ (on the bottom left panel) and $\max_s \mathcal{J}_n^-(s,b)$ (on the bottom right panel) against $b$.}
\end{figure}

\paragraph*{Erdős--Rényi model.} 
The Erdős--Rényi model exhibits the exponential growth rate in the sense that for sufficiently large $s$ 
\[
\sup _n \max _{i \in \mathcal{N}_n}\left|\mathcal{N}_A(i, s)\right|=C e^{\beta s},
\]
where $C, \beta>0$ and $\beta \approx \log \delta(A)$, with $\delta( {A})=$ $n^{-1} \sum_{i=1}^n \sum_{j=1}^n A_{i j}$ denoting the average degree \citep{BollobasJansonRiordan2007a,Barabasi2015}. 
\cite{Leung2022} justified that the bandwidth in \eqref{eq:bandwidth} is $b_n\approx 0.5\log n /\log \delta(A)$. 

To illustrate, we generate the Erdős--Rényi model with $A_{ij} \stackrel{\text{IID}}{\sim} \text{Bern}(5/n)$, so we have $\delta(A)=O(1)$. 
Figures \ref{fig:ER1}, \ref{fig:ER2} and \ref{fig:ER3} display the plots of $M_n^-(b,1)$, $M_n^-(b,2)$ and $\max_s \mathcal{J}_n^-(s,b)$ against $\log(b)$, respectively. 
For $M_n^-(b,1)$, the coefficients vary within the range of $0.30-0.77$ across different sample sizes. For $M_n^-(b,2)$, the coefficients also vary within the range of $1.25-2.34$. 
For $\max_s \mathcal{J}_n^-(s,b)$, the coefficients exhibit variations within the range of $1.55-3.07$.
We can see that the Assumption \ref{asu8}(b)--(d) aligns with the behavior of $K_n^-$ for the Erdős--Rényi model. Additionally, we present the plot of $\log(M_n(b,1))$ against $\log(b)$ in Figure \ref{fig:RGG0}, with coefficients varying within the range of $0.64 - 0.83$. 
This serves as a validation of Assumption \ref{asu7}(b).

\begin{figure} 
\begin{subfigure}{0.45\textwidth}
\centering
\includegraphics[width=\linewidth]{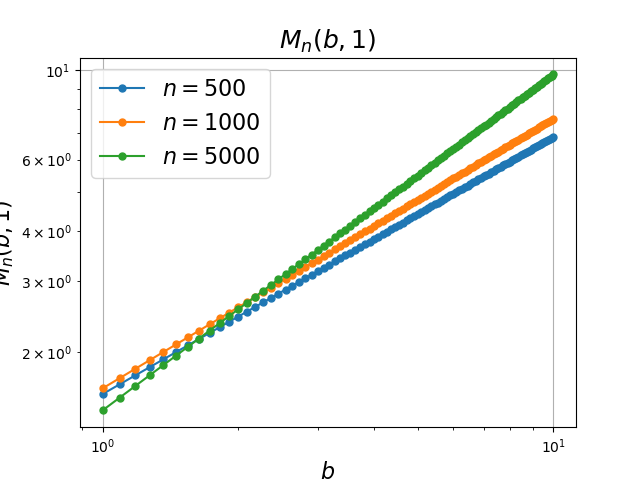}
\caption{$M_n(b,1)$}\label{fig:ER0}
\end{subfigure}
\centering
\begin{subfigure}{0.45\textwidth}
\includegraphics[width=\linewidth]{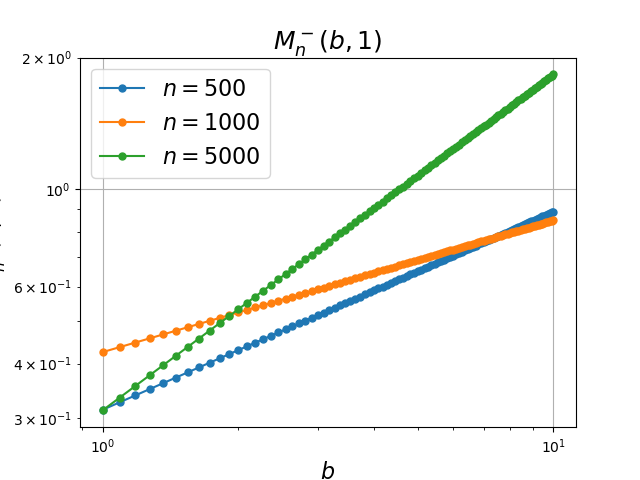}
\caption{$M_n^-(b,1)$}\label{fig:ER1}
\end{subfigure}
\begin{subfigure}{0.45\textwidth}
\centering
\includegraphics[width=\linewidth]{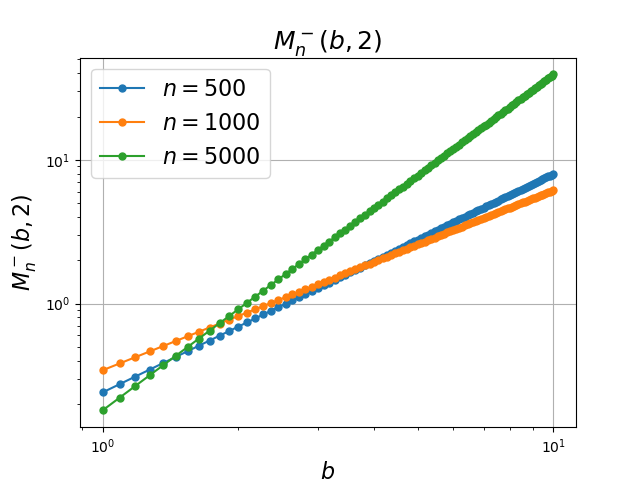}
\caption{$M_n^-(b,2)$}\label{fig:ER2}
\end{subfigure}
\begin{subfigure}{0.45\textwidth}
\centering
\includegraphics[width=\linewidth]{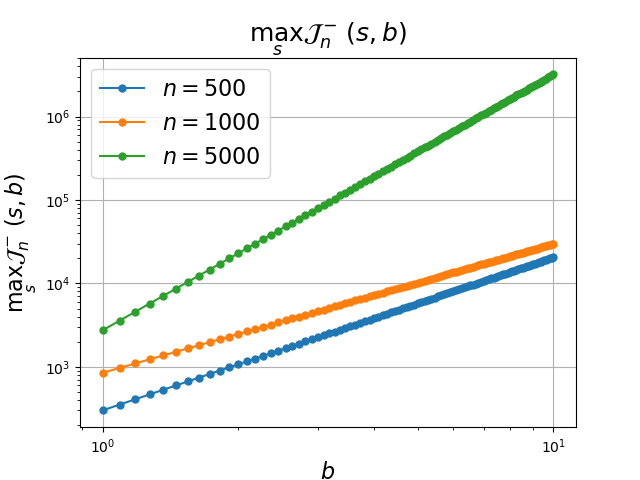}
\caption{$\max_s \mathcal{J}_n^-(s,b)$}\label{fig:ER3}
\end{subfigure}
% Add more subfigures as needed
% \caption{}
\caption{Erdős--Rényi model. The log-log plots of $M_n(b,1)$ (on the top left panel), $M_n^-(b,1)$ (on the top right panel), $M_n^-(b,2)$ (on the bottom left panel) and $\max_s \mathcal{J}_n^-(s,b)$ (on the bottom right panel) against $b$.}
\end{figure}

\subsection{Efficiency Gain: Examples and Counterexamples}
\label{Counterexample} 
In this section, we compare the performance of WLS fits with fully-interacted covariates with WLS fits without covariates and with additive covariates, as discussed in Section \ref{sec:efficiency}. 
% Specifically, we present three simulation designs showing that, when propensity scores are heterogeneous or interference is present, including fully-interacted covariates can yield larger asymptotic variance than the unadjusted Hájek estimator.
We introduce the following $\Delta$'s to simplify the presentation: 
\begin{eqnarray*}
\Delta_{i}(t;{\gamma}_\textsc{l}) 
&=& 
{1}_i(t) \pi_i(t)^{-1}
x_i^\top {\gamma}_\textsc{l}(t)
- x_i^\top {\gamma}_\textsc{l}(t),  \\
\Delta_{i}(t;{\gamma}_{\textsc{l}-\textsc{f}}) 
&=& {1}_i(t) \pi_i(t)^{-1} x_i^\top ({\gamma}_\textsc{l}(t)- {\gamma}_\textsc{f})  - x_i^\top ({\gamma}_\textsc{l}(t)- {\gamma}_\textsc{f}),  \\
\tilde{\Delta}_{i}(t;{\gamma}_\textsc{l}) 
&=& {1}_i(t) \pi_i(t)^{-1} (Y_i- x_i^\top {\gamma}_\textsc{l}(t) -\mu(t) )  - (\mu_i(t)  -  x_i^\top {\gamma}_\textsc{l}(t) - \mu(t)).   
\end{eqnarray*}
\begin{theorem} \label{thm:efficiency}
Under Assumptions \ref{asu1}--\ref{asu4}, \ref{asu7} and \ref{asu9}, 
we have
\begin{align*}
{{\Sigma}}_{*, \textup{haj}} 
=& {{\Sigma}}_{*, \textup{haj},\textsc{l}}
+ \left( \frac{1}{n} \sum_{i=1}^n \sum_{j=1}^n
\left( \Delta_{i}(t;{\gamma}_\textsc{l}) 
+ 2 \tilde{\Delta}_{i}(t;{\gamma}_\textsc{l})  \right)
\Delta_{j}(t';{\gamma}_\textsc{l})
K_{n}(i,j)
\right)_{t,t'\in\mathcal{T}},  \\
{{\Sigma}}_{*, \textup{haj},\textsc{f}}
=& {{\Sigma}}_{*, \textup{haj},\textsc{l}}
+ \left( \frac{1}{n} \sum_{i=1}^n \sum_{j=1}^n
\left( \Delta_{i}(t;{\gamma}_{\textsc{l}-\textsc{f}})  
+ 2 \tilde{\Delta}_{i}(t;{\gamma}_\textsc{l})  \right)
\Delta_{j}(t';{\gamma}_{\textsc{l}-\textsc{f}})  
K_{n}(i,j)
\right)_{t,t'\in\mathcal{T}}.
\end{align*}
\end{theorem}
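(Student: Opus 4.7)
The plan is to prove both identities through direct algebraic expansion anchored by a common additive decomposition of the residuals around the Lin-adjusted residual $\tilde{\Delta}_i(t;\gamma_\textsc{l})$. Since $\tilde{\Delta}_i(t;\gamma_\textsc{l})$ is precisely the $(i,t)$-entry of $\Delta_{\textup{haj},\textsc{l}}$ (cf.\ Theorem \ref{thm:full_bias}), both the unadjusted residual $\Delta_{\textup{haj},it}$ and the Fisher-adjusted residual $\Delta_{\textup{haj},\textsc{f},it}$ will be expressed as $\tilde{\Delta}_i(t;\gamma_\textsc{l})$ plus a linear-in-$x_i$ correction, which feeds directly into a quadratic-form expansion.

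First I would verify the pointwise decompositions
\[
\Delta_{\textup{haj},it} = \tilde{\Delta}_i(t;\gamma_\textsc{l}) + \Delta_i(t;\gamma_\textsc{l}), \qquad \Delta_{\textup{haj},\textsc{f},it} = \tilde{\Delta}_i(t;\gamma_\textsc{l}) + \Delta_i(t;\gamma_{\textsc{l}-\textsc{f}}),
\]
each obtained by adding and subtracting ${1}_i(t)\pi_i(t)^{-1}x_i^\top(\cdot) - x_i^\top(\cdot)$ inside the definitions of $\Delta_{\textup{haj},it}$ and $\Delta_{\textup{haj},\textsc{f},it}$ given in Theorems \ref{thm:Hájek_bias} and \ref{thm:add_bias}, respectively. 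These are purely algebraic manipulations that require no probabilistic content beyond the existence of the probability limits $\gamma_\textsc{f}$ and $\gamma_\textsc{l}(t)$ guaranteed by Assumptions \ref{asu1}--\ref{asu4}, \ref{asu7}, and \ref{asu9}.

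Next I would substitute the first decomposition into $\Sigma_{*,\textup{haj}} = n^{-1}\Delta_{\textup{haj}}^\top K_n \Delta_{\textup{haj}}$ and expand entrywise. The $(t,t')$ entry splits into four bilinear summands: the pure $\tilde{\Delta}_i(t;\gamma_\textsc{l})\tilde{\Delta}_j(t';\gamma_\textsc{l})$ contribution, which recovers $\Sigma_{*,\textup{haj},\textsc{l}}(t,t')$ by the definition in Theorem \ref{thm:full_bias}; two cross terms $\tilde{\Delta}_i(t;\gamma_\textsc{l})\Delta_j(t';\gamma_\textsc{l})$ and $\Delta_i(t;\gamma_\textsc{l})\tilde{\Delta}_j(t';\gamma_\textsc{l})$ summed against $K_n(i,j)$; and the pure $\Delta_i(t;\gamma_\textsc{l})\Delta_j(t';\gamma_\textsc{l})$ term. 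Using the symmetry $K_n(i,j) = K_n(j,i)$ together with the relabelling $i \leftrightarrow j$, I would combine the two cross terms and merge them with the pure $\Delta\Delta$ term to obtain the compact form $(\Delta_i(t;\gamma_\textsc{l}) + 2\tilde{\Delta}_i(t;\gamma_\textsc{l}))\Delta_j(t';\gamma_\textsc{l})$ appearing in the theorem. The second identity follows from an identical expansion applied to $\Sigma_{*,\textup{haj},\textsc{f}} = n^{-1}\Delta_{\textup{haj},\textsc{f}}^\top K_n \Delta_{\textup{haj},\textsc{f}}$ via the second decomposition, with $\Delta_i(t;\gamma_{\textsc{l}-\textsc{f}})$ replacing $\Delta_i(t;\gamma_\textsc{l})$.

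The main obstacle is the bookkeeping in combining the two cross terms: viewed as $(t,t')$-entries of a matrix they are not individually equal, and the coefficient $2$ in the compact form arises only after exploiting the symmetry of $K_n$ and relabelling the summation indices. On the diagonal $t = t'$ this step is immediate, whereas off-diagonal it amounts to identifying the symmetric part of the cross-bilinear form, which the quadratic representation $n^{-1}\Delta^\top K_n \Delta$ automatically enforces. Once this regrouping is executed carefully, the theorem reduces to pure symbolic algebra.
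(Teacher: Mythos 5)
Your proposal is correct and is essentially the derivation the paper relies on: the paper states Theorem \ref{thm:efficiency} without an explicit proof, treating it as direct algebra from the decompositions $\Delta_{\textup{haj},it}=\tilde{\Delta}_i(t;\gamma_\textsc{l})+\Delta_i(t;\gamma_\textsc{l})$ and $\Delta_{\textup{haj},\textsc{f},it}=\tilde{\Delta}_i(t;\gamma_\textsc{l})+\Delta_i(t;\gamma_{\textsc{l}-\textsc{f}})$, which you verify correctly and then expand through the quadratic form $n^{-1}\Delta^\top K_n\Delta$. Your closing caveat is the right one to flag: the exact expansion produces the symmetrized cross term $\tilde{\Delta}_i(t)\Delta_j(t')+\Delta_i(t)\tilde{\Delta}_j(t')$, which equals the displayed $2\tilde{\Delta}_i(t)\Delta_j(t')$ on the diagonal and under any contrast $G(\cdot)G^\top$ (the only uses made in Designs 1--3), but not entrywise off the diagonal in general.
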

Theorem \ref{thm:efficiency} highlights the lack of a clear efficiency gain from fully interacted regression adjustment. 
In most settings, we observe efficiency gain as in Section \ref{Simulation}.
However, counterexamples do exist, as demonstrated in Designs 1, 2 and 3.
% The first two examples serve as explicit illustrations, effectively showcasing that when fully interacted covariates are integrated, the asymptotic covariance might increase. Conversely, the final example demonstrates a scenario in which including fully interacted covariates leads to efficiency improvement.
In all three cases, we focus on $\tau(t,t')=\mu(t)-\mu(t')$, the specific contrast between two exposure mapping values, $t$ and $t'$.
For each simulation design, we present the results from $10,000$ simulation draws with a sample size of $n = 1000$. The results are shown in Table \ref{table: Counterexample}, which includes the estimand (Estimand) and oracle standard errors (Oracle SE) from three WLS fits.
% [PD: i use subsubsections below.]

\begin{table}[tbp]
\centering
\caption{Simulation results of counterexamples}
\label{table: Counterexample}
\begin{tabular}{lrrrrrrrrr}
\hline
\hline
% \multicolumn{4}{c}{SIMULATION RESULTS}  \\ 
% \hline
% & \multicolumn{6}{c}{\# Schools = 1} \\
% \cline{2-7}
% & \multicolumn{3}{c}{Linear-in-Means}  \\
% \cline{2-4} 
& \multicolumn{3}{c}{Design 1} & \multicolumn{3}{c}{Design 2} & \multicolumn{3}{c}{Design 3}\\
\hline
WLS  & \multicolumn{1}{r}{Unadj} & \multicolumn{1}{r}{Add} & \multicolumn{1}{r}{Sat} & \multicolumn{1}{r}{Unadj} & \multicolumn{1}{r}{Add} & \multicolumn{1}{r}{Sat} & \multicolumn{1}{r}{Unadj} & \multicolumn{1}{r}{Add} & \multicolumn{1}{r}{Sat} \\
\hline
% \multicolumn{4}{c}{Design 1} \\
% \hline
Estimand	&		&	11.673	&		&		&	6.225	&		&		&	6.290	&		\\
Oracle SE	&	5.052	&	5.539	&	5.356	&	0.472	&	0.491	&	0.484	&	0.322	&	0.338	&	0.337	\\
% WLS SE	&	1.4349	&	1.2858	&	1.2874 &	0.5037	&	0.4857	&	0.4883	&	0.8643	&	0.7943	&	0.7901 \\
% \hline
% \multicolumn{4}{c}{Design 2} \\
% \hline
% Estimand 	&	0.5702	&	0.5702	&	0.5702	\\
% Estimate	&	0.5681	&	0.5779	&	0.5786	\\
% Oracle SE	&	0.5061	&	0.4857	&	0.4890	\\
% WLS SE &	0.5037	&	0.4857	&	0.4883	\\
% \hline
% \multicolumn{4}{c}{Design 3} \\
% \hline
% Estimand 	&	1.6246	&	1.6246	&	1.6246	\\
% Estimate	&	1.6202	&	1.6667	&	1.7024	\\
% Oracle SE	&	0.8634	&	0.7979	&	0.8081	\\
% SE	&	0.8643	&	0.7943	&	0.7901	\\
\hline
\hline
\end{tabular}
\caption*{\footnotesize Note: Design 1 is no interference but with varying propensity scores, Design 2 is with interference and constant propensity score, and Design 3 is with interference and varying propensity scores.}
\end{table}

% \subsubsection{Design 1: No interference but with varying propensity scores}
\subsubsection{Design 1: No interference but with varying propensity scores}
\label{design1}
Under no interference, we set $b_n=0$.
We first simplify the formulations in Theorem \ref{thm:efficiency}. 
Recall $M(i,t) = \mu_i(t)-\mu(t)$. 
We introduce the following $Q$'s to simplify presentations:
\begin{align*}
Q_{xx} 
=& n^{-1} \sum_{i=1}^n x_ix_i^\top, 
\quad  
Q_{xx}(t;\pi) 
= n^{-1} \sum_{i=1}^n 
x_ix_i^\top \tfrac{1- \pi_i(t)}{\pi_i(t)}, \\
Q_{xx}(t;\pi^{-1}) 
=& n^{-1} \sum_{i=1}^n
x_ix_i^\top \tfrac{1}{ \pi_i(t) }, 
\quad 
Q_{xx}(t,t';\pi) 
= n^{-1}  \sum_{i=1}^n
x_ix_i^\top \tfrac{ \pi_i(t) + \pi_i(t') }{\pi_i(t)\pi_i(t')}. 
\end{align*}
By comparing variances, we have
% \gms{unadjusted VS interactive}
\begin{align*}
{{\Sigma}}_{*, \textup{haj}}(t,t') 
&= {{\Sigma}}_{*, \textup{haj}}(t,t', \gamma_\textsc{l}) 
+ {\gamma}^\top_\textsc{l}(t)
Q_{xx}(t;\pi)
{\gamma}_\textsc{l}(t)
+ {\gamma}^\top_\textsc{l}(t')
Q_{xx}(t';\pi)
{\gamma}_\textsc{l}(t')
+ 2 {\gamma}^\top_\textsc{l}(t)
Q_{xx}
{\gamma}_\textsc{l}(t') \\
&+ \underbrace{\frac{2}{n} \sum_{i=1}^n
% \left[
% \begin{array}{c}
\left(  \frac{ (M(i,t) -  {\gamma}_\textsc{l}(t)^\top x_i  ) {\gamma}_\textsc{l}(t)^\top }{\pi_i(t)}  
+ \frac{ (M(i,t') -  {\gamma}_\textsc{l}(t')^\top x_i  ) {\gamma}_\textsc{l}(t')^\top  }{\pi_i(t')}  \right) x_i 
% \left( \frac{{1}_i(t){\gamma}^\top_\textsc{l}(t)x_i }{\pi_i(t)} 
% - \frac{{1}_i(t'){\gamma}^\top_\textsc{l}(t')x_i }{\pi_i(t')} \right)
% \end{array}
}_{T_\textsc{l}} 
% -& ({\gamma}_\textsc{l}(t) - {\gamma}_\textsc{l}(t'))^\top
% \frac{2}{n} \sum_{i=1}^n
% % \left[
% \begin{array}{c}
% \left(  (\mu_i(t)-{\gamma}^\top_\textsc{l}(t) x_i-\mu(t) )  - (\mu_i(t')-{\gamma}^\top_\textsc{l}(t') x_i-\mu(t') )  \right) 
%  x_i
% \end{array}
+ o_\mathbb{P}(1),
\end{align*} 
% Under constant propensity score,
% \begin{align*}
% & \hat{{\Sigma}}_{*, \textup{haj}}^2(t,t') 
% = \hat{{\Sigma}}_{*, \textup{haj}}^2(t,t', \gamma_\textsc{l})
% + {\gamma}^\top_\textsc{l}(t)
% \frac{2}{n} \sum_{i=1}^n 
% x_ix_i^\top
% {\gamma}_\textsc{l}(t') \\
% +& {\gamma}_\textsc{l}(t)^\top
% \frac{1- \pi(t) }{\pi(t)}
% \frac{1}{n} \sum_{i=1}^n 
% x_ix_i^\top
% {\gamma}_\textsc{l}(t)
% + {\gamma}_\textsc{l}(t')^\top
% \frac{1- \pi(t') }{\pi(t')}
% \frac{1}{n} \sum_{i=1}^n 
% x_ix_i^\top
% {\gamma}_\textsc{l}(t') \\
% +& \frac{2}{n} \sum_{i=1}^n
% % \left[
% \begin{array}{c}
% \left(  \frac{ {\gamma}_\textsc{l}(t)^\top }{\pi(t)}  (\mu_i(t)x_i- x_ix_i^\top {\gamma}_\textsc{l}(t) - \mu(t)x_i )
% +  \frac{ {\gamma}_\textsc{l}(t')^\top }{\pi(t')} (\mu_i(t')x_i -  x_ix_i^\top {\gamma}_\textsc{l}(t') - \mu(t')x_i ) \right) 
% \end{array} 
% + o_\mathbb{P}(1) \\
% =& \hat{{\Sigma}}_{*, \textup{haj}}^2(t,t', \gamma_\textsc{l})
% + {\gamma}^\top_\textsc{l}(t)
% \frac{2}{n} \sum_{i=1}^n 
% x_ix_i^\top
% {\gamma}_\textsc{l}(t') \\
% +& {\gamma}_\textsc{l}(t)^\top
% \frac{1- \pi(t) }{\pi(t)}
% \frac{1}{n} \sum_{i=1}^n 
% x_ix_i^\top
% {\gamma}_\textsc{l}(t)
% + {\gamma}_\textsc{l}(t')^\top
% \frac{1- \pi(t') }{\pi(t')}
% \frac{1}{n} \sum_{i=1}^n 
% x_ix_i^\top
% {\gamma}_\textsc{l}(t') 
% + o_\mathbb{P}(1)
% \end{align*}
% \gms{additive VS interactive}
and
\begin{align*}
{{\Sigma}}_{*, \textup{haj}}& (t,t', {\gamma}_\textsc{f}) 
= {{\Sigma}}_{*, \textup{haj}}(t,t', \gamma_\textsc{l})  
+ ({\gamma}_\textsc{l}(t) - {\gamma}_\textsc{f})^\top
Q_{xx}(t;\pi)
({\gamma}_\textsc{l}(t) - {\gamma}_\textsc{f}) \\
+& ({\gamma}_\textsc{l}(t') - {\gamma}_\textsc{f})^\top
Q_{xx}(t';\pi)
({\gamma}_\textsc{l}(t') - {\gamma}_\textsc{f})
+ 2 ({\gamma}_\textsc{l}(t) - {\gamma}_\textsc{f})^\top
Q_{xx}
({\gamma}_\textsc{l}(t') - {\gamma}_\textsc{f}) \\
+& \underbrace{\frac{2}{n} \sum_{i=1}^n
% \begin{array}{c}
\left( \frac{(M(i,t)-{\gamma}^\top_\textsc{l}(t) x_i )({\gamma}_\textsc{l}(t)- {\gamma}_\textsc{f})^\top }{\pi_i(t)}  
+ \frac{(M(i,t')-{\gamma}^\top_\textsc{l}(t') x_i )({\gamma}_\textsc{l}(t')- {\gamma}_\textsc{f})^\top }{\pi_i(t')}  \right) x_i
% \end{array} 
}_{T_{\textsc{l}-\textsc{f}}}
+ o_\mathbb{P}(1).
\end{align*}
Recall ${\gamma}_\textsc{l}(t) = (\sum_{i=1}^n x_ix_i^\top)^{-1} \sum_{i=1}^n x_i \mu_i(t)$. When the propensity score is constant, both $T_{\textsc{l}}$ and $T_{\textsc{l} - \textsc{f}}$ are equal to zero. The variance difference between the unadjusted regression and fully-interacted regression,
${{\Sigma}}_{*, \textup{haj}}(t,t')- {{\Sigma}}_{*, \textup{haj}}(t,t', \gamma_\textsc{l})$, recovers Corollary 1.1 in \cite{Lin2013}.
If we further assume there are only two exposure mapping values, i.e., $\pi(t)+\pi(t')=1$, then ${{\Sigma}}_{*, \textup{haj}}(t,t',\gamma_\textsc{f})- {{\Sigma}}_{*, \textup{haj}}(t,t', \gamma_\textsc{l}) $ recovers Corollary 1.2 in \cite{Lin2013}.
Therefore, including the fully interacted covariates leads to an efficiency gain. 
However, this efficiency gain may be compromised when the propensity scores vary; see also \cite{SuDing2021}, \cite{ZhaoDing2022} and \cite{ZhaoDingLi2024}. 

Now we present a data generation process where the inclusion of covariates can lead to a less efficient result. 
% We set the sample size $n=1000$ and monte carlo simulation $B=20,000$.
We set the potential outcome model as
% $Y_i = V_i( {D},  {x},  {\varepsilon})$ with 
\begin{equation*}
Y_i(D)
= \beta_0 + \beta_1 D_i  + \beta_2  x_i + \beta_3 D_i \exp(x_i^2) + \varepsilon_i
\end{equation*}
with $(\beta_0, \beta_1, \beta_2, \beta_3)=(1,4,2,0.1)$.
% For the complex contagion model, we set $Y_i={1}\left\{V_i( {D},  {x},  {\varepsilon})>0\right\}$. 
The propensity score $\pi_i$'s are drawn from $\text{Uniform}[0.1, 0.9]$, and the treatment assignment $D_i$'s are drawn from $\text{Bern}(\pi_i)$ for each $i$.
The covariate $x_i$ is drawn from $\mathcal{N}(0,1)$ and $\varepsilon_i$ is drawn from $\mathcal{N}(0,1)$. 
We study the treatment effect with exposure mapping $T_i = D_i$. 
% Table \ref{table: Counterexample1} displays the estimand and corresponding oracle standard error of $\tau(t,t')$. 
The results of Table \ref{table: Counterexample} under Design 1 indicate that incorporating the fully interacted covariates can result in a less efficient estimator compared with those from the unadjusted and additive WLS fits.

% \begin{table}[tbp]
% \centering
% \caption{\label{table: Counterexample1}No interference but with varying propensity scores }
% \begin{tabular}{lrrr}
% \hline
% \hline
% \multicolumn{4}{c}{SIMULATION RESULTS}  \\ 
% \hline
% % & \multicolumn{6}{c}{\# Schools = 1} \\
% % \cline{2-7}
% % & \multicolumn{3}{c}{Linear-in-Means} \\
% % \cline{2-4} 
% & \multicolumn{1}{r}{Unadj} & \multicolumn{1}{r}{Add} & \multicolumn{1}{r}{Sat} \\
% \hline
% Estimand 	&		&	0.5702	&		\\
% Estimate	&	0.5681	&	0.5779	&	0.5786	\\
% Oracle SE	&	0.5061	&	0.4857	&	0.4890	\\
% SE	&	0.5037	&	0.4857	&	0.4883	\\
% \hline
% \hline
% \end{tabular}
% \captionsetup{font=small,labelfont=bf}
% % \caption*{Note: over 20,000 simulations.}
% \end{table}

\subsubsection{Design 2: With interference and constant propensity score}
\label{design2}
Under constant propensity score, $\pi_i(t') = \pi(t)$, for all $t\in\mathcal{T}$. The formulations in Theorem \ref{thm:efficiency} can be simplified as follows: 
\begin{align*}
{{\Sigma}}_{*, \textup{haj}}&(t,t')
= {{\Sigma}}_{*, \textup{haj}}(t,t', \gamma_\textsc{l})
+ \tfrac{1- \pi(t) }{\pi(t)}
{\gamma}_\textsc{l}(t)^\top
Q_{xx}
{\gamma}_\textsc{l}(t)
+ \tfrac{1- \pi(t') }{\pi(t')}
{\gamma}_\textsc{l}(t')^\top
Q_{xx}
{\gamma}_\textsc{l}(t')
+ 2 {\gamma}_\textsc{l}(t)^\top
Q_{xx}
{\gamma}_\textsc{l}(t') \\
&+  \underbrace{\frac{1}{n} 
\sum_{\substack{i \ne j: \\
\ell_{A}(i, j) \leq b_n }}
\left( 
\Delta_{i}(t;{\gamma}_\textsc{l})
- \Delta_{i}(t';{\gamma}_\textsc{l})
+ 2(\tilde{\Delta}_{i}(t;{\gamma}_\textsc{l}) - \tilde{\Delta}_{i}(t';{\gamma}_\textsc{l}) )
\right)
\left( 
\Delta_{j}(t;{\gamma}_\textsc{l})
- \Delta_{j}(t';{\gamma}_\textsc{l})
\right) }_{\tilde{T}_\textsc{l}},
% &+ \frac{2}{n} 
% \sum_{i \ne j: \ell_{ {A}}(i, j) \leq b_n}
% \left( 
% \tilde{\Delta}_{i}(t;{\gamma}_\textsc{l}) - \tilde{\Delta}_{i}(t';{\gamma}_\textsc{l}) 
% \right) 
% \left( 
% \Delta_{j}(t;{\gamma}_\textsc{l})
% - \Delta_{j}(t';{\gamma}_\textsc{l})
%  \right)
\end{align*} 
% \begin{align*}
%   Q_{xx} =& n^{-1} \sum_{i=1}^n x_ix_i^\top, \quad  
%   Q_{xx}(t;\pi) = n^{-1} \sum_{i=1}^n 
%   x_ix_i^\top \frac{1- \pi_i(t) }{\pi_i(t)}, \\
%   Q_{xx}(t;\pi^{-1}) =& n^{-1} \sum_{i=1}^n
%   \frac{ x_ix_i^\top }{ \pi_i(t) }, \quad 
%   Q_{xx}(t,t';\pi) =  n^{-1}  \sum_{i=1}^n
%   x_ix_i^\top \frac{ \pi_i(t) + \pi_i(t') }{\pi_i(t)\pi_i(t')}. 
%   \end{align*}
and 
\begin{align*}
& {{\Sigma}}_{*, \textup{haj}}(t,t',{\gamma}_\textsc{f})
= {{\Sigma}}_{*, \textup{haj}}(t,t', \gamma_\textsc{l}) 
+ \tfrac{1-\pi(t)}{\pi(t)}
({\gamma}_\textsc{l}(t) - {\gamma}_\textsc{f})^\top
Q_{xx}
({\gamma}_\textsc{l}(t) - {\gamma}_\textsc{f}) \\
&+ \tfrac{1-\pi(t')}{\pi(t')}
({\gamma}_\textsc{l}(t') - {\gamma}_\textsc{f})^\top
Q_{xx}
({\gamma}_\textsc{l}(t') - {\gamma}_\textsc{f}) 
+ 2 ({\gamma}_\textsc{l}(t) - {\gamma}_\textsc{f})^\top
Q_{xx}
({\gamma}_\textsc{l}(t') - {\gamma}_\textsc{f}) \\
&+ \underbrace{\frac{1}{n} 
\sum_{\substack{i \ne j: \\
\ell_{A}(i, j) \leq b_n }}
\left( \Delta_{i}(t;{\gamma}_{\textsc{l}-\textsc{f}}) - \Delta_{i}(t';{\gamma}_{\textsc{l}-\textsc{f}})
+ 2(\tilde{\Delta}_{i}(t;{\gamma}_\textsc{l}) - \tilde{\Delta}_{i}(t';{\gamma}_\textsc{l})) \right) 
( \Delta_{j}(t;{\gamma}_{\textsc{l}-\textsc{f}}) - \Delta_{j}(t';{\gamma}_{\textsc{l}-\textsc{f}}) )}_{\tilde{T}_{\textsc{l}-\textsc{f}}}. 
% +& \frac{2}{n} 
% \sum_{i\ne j: \ell_{ {A}}(i, j) \leq b_n}
% \left( \tilde{\Delta}_{i}(t;{\gamma}_\textsc{l}) - \tilde{\Delta}_{i}(t';{\gamma}_\textsc{l}) \right) 
% \left( \Delta_{j}(t;{\gamma}_{\textsc{l}-\textsc{f}}) - \Delta_{j}(t';{\gamma}_{\textsc{l}-\textsc{f}}) \right).
\end{align*}
If there is no interference, $\tilde{T}_\textsc{l}=0$ and $\tilde{T}_{\textsc{l}-\textsc{f}}=0$.
Once interference is present, the efficiency gains from incorporating fully interacted covariates can be compromised, even under a constant propensity score.
% In our simulation study described below, we established a sample size of $n=1000$ and conducted a Monte Carlo simulation involving $B=20,000$ repetitions.
As an example, we consider a potential outcome model of the form
\[
Y_i(D) 
= \beta_0 + \beta_1 \sum_{j=1}^n \tilde{A}_{i j} D_j + \beta_2 D_i + \beta_3 x_i + \beta_4 D_i \exp(x_i) + \beta_5 \sum_{j=1}^n \tilde{A}_{i j} x_j + \varepsilon_i,
\]
where $\tilde{A}_{i j} = A_{ij}/\sum_{j=1}^n A_{i j}$, and so that $b_n=2$.
As a special case, $\tilde{T}_\textsc{l}$ and $\tilde{T}_{\textsc{l}-\textsc{f}}$ become negative when the exposure mapping indicators are negatively correlated for pairs $(i,j)$ satisfying $\ell_{A}(i, j) \leq b_n$ with negative $\beta_1$, $\beta_3$ and $\beta_5$.
In our simulation, we set $(\beta_0, \beta_1, \beta_2, \beta_3, \beta_4, \beta_5) = (1,-0.9,6,-1, 0.2,-3)$. 
We simulate experiments in which 1/10 of the units are randomly assigned to treatment. 
To investigate the direct effect, we use the exposure mapping $T_i = D_i$ with a constant propensity score $\pi(1) = 1/10$.
Under this design, the exposure indicators $1_i(t)$ and $1_j(t)$ are negatively correlated because treating unit $i$ reduces the probability of treating unit $j$.
We simulate ${A}$ from random geometric graph models where $A_{i j}={1}\left\{\left\|\rho_i-\rho_j\right\| \leq r_n\right\}$ for $\rho_i \stackrel{\text{ IID }}{\sim} \text{Uniform}\left([0,1]^2\right)$ and $r_n=(\kappa /(\pi n))^2$ with $\kappa = 8$. 
The covariate $x_i$ is drawn from $\mathcal{N}(0,1)$ and $\varepsilon_i$ is drawn from $\mathcal{N}(0,16)$. 
% In this design, we follow the DGP in the subsection \ref{Simulation}, but set $(\alpha, \beta, \delta, \xi, \gamma)=(-1,0.8, 1, 3,-0.01)$ and $(\alpha, \beta, \delta, \xi, \gamma)=(-1, 1.5, 1, 3,-0.01)$. 
The results of Table \ref{table: Counterexample} under Design 2 indicate that incorporating the fully interacted covariates can result in a less efficient estimator compared with those from the unadjusted and/or additive WLS fits. 
In other words, when the exposure indicators among neighbors are positively correlated, efficiency gains from fully interacted covariate adjustment are more likely to be observed.

\subsubsection{Design 3: With interference and varying propensity scores} 
\label{design3}
% We present an illustrative example where the inclusion of fully interacted covariates can indeed lead to an efficiency gain. 
% In this instance, we consider a scenario with a sample size of $n=1000$ and $B=20,000$ Monte Carlo iterations.
Following the discussions in Sections \ref{design1} and \ref{design2}, it is not surprising that the efficiency gains from incorporating fully interacted covariates can be compromised in settings with interference and varying propensity scores. 
We consider the following outcome model:
\[
Y_i=\beta_0 + \beta_1 \sum_{j=1}^n \tilde{A}_{i j} D_j + \beta_2 D_i + \beta_3 x_i + \beta_4 D_i \exp(x_i) + \beta_5 \sum_{j=1}^n \tilde{A}_{i j} x_j + \varepsilon_i,
\]
where $\tilde{A}_{i j} = A_{ij}/\sum_{j=1}^n A_{i j}$, and
so that $b_n=2$.
Similar to the special case in Design 2, an efficiency loss occurs when the exposure mapping indicators are negatively correlated for pairs $(i,j)$ satisfying $\ell_{A}(i, j) \leq b_n$ and when $\beta_1$, $\beta_3$ and $\beta_5$ are negative.
In this setting, we set $(\beta_0, \beta_1, \beta_2, \beta_3, \beta_4, \beta_5) = (1,-0.9,6,-1,0.2,-3)$. 
We simulate ${A}$ from random geometric graph models with $A_{i j}={1}\left\{\left\|\rho_i-\rho_j\right\| \leq r_n\right\}$ for $\rho_i \stackrel{\text { IID }}{\sim} \text{Uniform}\left([0,1]^2\right)$ and $r_n=(\kappa /(\pi n))^2$ with $\kappa = 5$. The covariate $x_i$ is drawn from $\mathcal{N}(0,1)$ and $\varepsilon_i$ is drawn from $\mathcal{N}(0,16)$.  
In the experimental design, units are randomly ordered and treatment is assigned sequentially. Each unit $i$ is initially assigned a baseline treatment probability $p_i$ drawn from $\text{Uniform}[0.4, 0.8]$.
To investigate the direct effect, we use the exposure mapping $T_i = D_i$.
Then for each unit $i$, the effective propensity score for unit $i$ is given by
\[
\pi_i(1) =
\begin{cases}
p_i / 4, & \text{if at least one neighbor (processed before $i$) is treated}, \\
p_i,   & \text{otherwise}.
\end{cases}
\] 
This describes an experimental design in which the treatment probability for a unit is reduced to one-quarter whenever at least one of its neighbors (already processed) has been treated.
The results of Table \ref{table: Counterexample} under Design 3 show that including the fully interacted covariates can lead to an less efficient estimator compared with those from the unadjusted and/or additive WLS fits.

\subsection{Additional simulation result}
\label{sec:additional_sim}
To illustrate variations in population sizes, we also conduct simulation with $n=805$ and $n=2725$ using the sample of the largest and four largest treated schools from the network experiment in Section \ref{app:paluck} to calibrate the network models. 
% The network sample size $n=805$ and $2725$, respectively.
The data generation process is descibed in Section \ref{Simulation}.
Tables \ref{table: Haj RGG 1} and \ref{table: Haj RGG 4} present the results. 
For each table, we provide results from two outcome models: linear-in-means and complex contagion models. 
The top panels of Tables \ref{table: Haj RGG 1} and \ref{table: Haj RGG 4} display our regression-based results. 
The middle panels report the results of standard errors and coverage rates of $95\%$ CIs using the kernel ${K}^{\textup{L}2019}_n$ in \cite{Leung2019e} and the kernel ${K}^{\textup{K}2021}_n$ in \cite{Kojevnikov2021}. The bottom panels present the results of the Horvitz--Thompson estimator and variance estimator from \cite{Leung2022}.

The result tables demonstrate that the Hájek estimator can be biased when the sample size is small, but the bias diminishes as the sample size increases.
The coverage rate of the adjusted HAC standard errors improves as the (effective) sample size increases.
In Table \ref{table: Haj RGG 4}, our regression-based standard errors are approximately half of those reported using \cite{Leung2022}'s method, indicating a significant spillover effect at the $5\%$ significance level. 
In contrast, \cite{Leung2022}'s method yields an insignificant effect. 
In this setting, the estimator from the fully-interacted WLS fit is at least as efficient as the estimators from the unadjusted or additive WLS fits.
% By comparing the ``Oracle SE'' from the top and bottom panels, we can see the WLS estimators from all three specifications exhibit higher efficiency compared with the Horvitz--Thompson estimator.
% In Table \ref{table: Haj RGG 4}, our regression-based standard errors are approximately half of those reported using \cite{Leung2022}'s method, indicating a significant spillover effect at the $5\%$ significance level. 
% In contrast, \cite{Leung2022}'s method yields an insignificant effect. 
% Moreover, \cite{Leung2022}’s standard errors are smaller than the oracle standard errors, resulting in under coverage.

\begin{table*}
\centering
\caption{Simulation results: network size $n=805$ }
\begin{tabular}{lcccccc}
\hline
\hline
% \multicolumn{7}{c}{SIMULATION RESULTS}  \\ 
% \hline
Outcome model & \multicolumn{3}{c}{Linear-in-Means} & \multicolumn{3}{c}{Complex   Contagion} \\
% \cline{2-4} \cline{5-7}
\hline
WLS specification & \multicolumn{1}{r}{Unadj} & \multicolumn{1}{r}{Add} & \multicolumn{1}{r}{Sat} & \multicolumn{1}{r}{Unadj} & \multicolumn{1}{r}{Add} & \multicolumn{1}{r}{Sat} \\
\hline													
${\tau}(1,0)$	&		&	0.470	&		&		&	0.023	&		\\
$\hat{\tau}(1,0)$	&	0.449	&	0.482	&	0.478	&	0.022	&	0.026	&	0.026	\\
Oracle SE	&	1.104	&	0.851	&	0.850	&	0.063	&	0.042	&	0.042	\\
WLS SE	&	1.031	&	0.779	&	0.772	&	0.060	&	0.041	&	0.041	\\
WLS$^+$ SE	&	1.077	&	0.800	&	0.793	&	0.069	&	0.047	&	0.047	\\
EHW SE	&	0.529	&	0.356	&	0.354	&	0.054	&	0.035	&	0.035	\\
Oracle Coverage	&	0.955	&	0.952	&	0.952	&	0.954	&	0.951	&	0.951	\\
WLS Coverage	&	0.945	&	0.938	&	0.936	&	0.935	&	0.934	&	0.933	\\
WLS$^+$ Coverage	&	0.954	&	0.946	&	0.943	&	0.970	&	0.964	&	0.964	\\
EHW Coverage	&	0.655	&	0.595	&	0.593	&	0.909	&	0.898	&	0.897	\\
\hline													
\cite{Leung2019e} SE	&	0.960	&	0.710	&	0.703	&	0.060	&	0.040	&	0.040	\\
\cite{Kojevnikov2021} SE	&	0.986	&	0.722	&	0.714	&	0.062	&	0.041	&	0.041	\\
\cite{Leung2019e} Coverage	&	0.927	&	0.911	&	0.908	&	0.938	&	0.929	&	0.929	\\
\cite{Kojevnikov2021} Coverage	&	0.934	&	0.917	&	0.914	&	0.947	&	0.938	&	0.937	\\
\hline													
$\hat{\tau}_\text{ht}(1,0)$	&	0.649	&		&		&	0.027	&		&		\\
Oracle SE	&	1.920	&		&		&	0.141	&		&		\\
\cite{Leung2022} SE	&	1.814	&		&		&	0.137	&		&		\\
Oracle Coverage	&	0.950	&		&		&	0.952	&		&		\\
\cite{Leung2022} Coverage	&	0.932	&		&		&	0.937	&		&		\\
\hline
\hline
\end{tabular}
\label{table: Haj RGG 1}
% \captionsetup{font=small,labelfont=bf}
\caption*{\footnotesize Note: The effective sample size for each exposure mapping value is $\hat{n}(1)=226$ and $\hat{n}(0)=170$, with a total of $\hat{n}(1) + \hat{n}(0) = 396$.
The suggested bandwidth in \eqref{eq:bandwidth} is $b_n = 2$. 
The average path length is $\mathcal{L}({A})=14.29$. 
}
\end{table*}

\begin{table*}[tbp]
\centering
\caption{Simulation results: network size $n=2725$}
\label{table: Haj RGG 4}
\begin{tabular}{lcccccc}
\hline
\hline
% \multicolumn{7}{c}{SIMULATION RESULTS}  \\ 
% \hline
% & \multicolumn{6}{c}{\# Schools = 4} \\
% \cline{2-7}
Outcome model & \multicolumn{3}{c}{Linear-in-Means} & \multicolumn{3}{c}{Complex   Contagion} \\
\hline
% \cline{2-4} \cline{5-7}
WLS specification & \multicolumn{1}{c}{Unadj} & \multicolumn{1}{c}{Add} & \multicolumn{1}{c}{Sat} & \multicolumn{1}{r}{Unadj} & \multicolumn{1}{c}{Add} & \multicolumn{1}{c}{Sat} \\
\hline
${\tau}(1,0)$	&		&	0.685	&		&		&	0.028	&		\\
$\hat{\tau}(1,0)$	&	0.678	&	0.027	&	0.666	&	0.026	&	0.666	&	0.026	\\
Oracle SE	&	0.573	&	0.031	&	0.445	&	0.019	&	0.445	&	0.019	\\
WLS SE	&	0.560	&	0.031	&	0.434	&	0.020	&	0.434	&	0.020	\\
WLS$^+$ SE	&	0.608	&	0.038	&	0.462	&	0.024	&	0.462	&	0.024	\\
EHW SE	&	0.279	&	0.028	&	0.185	&	0.019	&	0.185	&	0.019	\\
Oracle Coverage	&	0.954	&	0.952	&	0.954	&	0.948	&	0.954	&	0.947	\\
WLS Coverage	&	0.947	&	0.947	&	0.946	&	0.944	&	0.946	&	0.943	\\
WLS$^+$ Coverage	&	0.965	&	0.983	&	0.960	&	0.984	&	0.960	&	0.984	\\
EHW Coverage	&	0.670	&	0.928	&	0.591	&	0.940	&	0.590	&	0.940	\\
\hline													
\cite{Leung2019e} SE	&	0.524	&	0.031	&	0.402	&	0.019	&	0.402	&	0.019	\\
\cite{Kojevnikov2021} SE	&	0.532	&	0.032	&	0.408	&	0.020	&	0.408	&	0.020	\\
\cite{Leung2019e} Coverage	&	0.930	&	0.946	&	0.926	&	0.944	&	0.926	&	0.944	\\
\cite{Kojevnikov2021} Coverage	&	0.934	&	0.952	&	0.930	&	0.950	&	0.929	&	0.950	\\
\hline													
$\hat{\tau}_\text{ht}(1,0)$	&	0.677	&		&		&	0.025	&		&		\\
Oracle SE	&	0.927	&		&		&	0.082	&		&		\\
\cite{Leung2022} SE	&	0.916	&		&		&	0.081	&		&		\\
Oracle Coverage	&	0.949	&		&		&	0.951	&		&		\\
\cite{Leung2022} Coverage	&	0.942	&		&		&	0.944	&		&		\\
\hline
\hline
\end{tabular}
% \captionsetup{font=small,labelfont=bf}
\caption*{\footnotesize Note: The effective sample size for each exposure mapping value is $\hat{n}(1)=849$ and $\hat{n}(0)=595$, with a total of $\hat{n}(1) + \hat{n}(0) = 1444$.
The suggested bandwidth in \eqref{eq:bandwidth} is $b_n = 3$. 
The average path length is $\mathcal{L}({A})=24.81$. }
\end{table*}

\subsection{Empirical Application II: \cite{CaiJanvrySadoulet2015}}
\label{sec:Cai}

\cite{CaiJanvrySadoulet2015} conducted an experiment in rural China to investigate how farmers' understanding of a weather insurance policy affects their purchasing decisions. 
% The study focused on determining the impact of two types of information sessions on insurance adoption among 4902 households in 173 small rice-producing villages (47 administrative villages) located in three regions within the Jiangxi province, China.
The main outcome of interest was whether a household decided to purchase the insurance policy or not. In each village, the experiment included two rounds of information sessions to introduce the insurance product. Each round consisted of two simultaneous sessions: a simple session with less information and an intensive session. The second round of information sessions was scheduled three days after the first round, allowing farmers to communicate with friends. However, this time gap was designed to prevent all the information from the first round from spreading widely throughout the entire population via the network.

While the original experiment included a village-level randomization with price variation and a second round of sessions, we focus only on the household-level randomization.
For household-level randomization, \cite{CaiJanvrySadoulet2015} initially computed the median values of household size and area of rice production per capita within each village. They then created dummy variables for each household, indicating whether their respective variables were above or below the median. Using this information, households were divided into four strata groups. 
All households in the sample are randomly assigned to one of four sessions: first round simple, first round intensive, second round simple, or second round intensive.
We use the variables $\text{Delay}_i$ and $\text{Int}_i$ to indicate whether household $i$ attended the first round ($\text{Delay}_i = 0$) or the second round ($\text{Delay}_i = 1$) of sessions, and whether they attended a simple session ($\text{Int}_i = 0$) or an intensive session ($\text{Int}_i = 1$), respectively.
In Section \ref{sec:setup}, we consider a binary treatment for simplicity, although this assumption is not crucial to our theory. We maintain the flexibility to extend it to discrete treatments with finite and fixed dimensions, like $D_i = (\text{Delay}_i, \text{Int}_i) \in \{0, 1\}^2$ in this experiment. 
% We will analyze an empirical application where $D_i \in \{0, 1\}^2$.

The network information is measured by asking household heads to list five close friends, either within or outside the village, with whom they most frequently discussed rice production or financial issues.  Consequently, ${A}$ is directed. Moreover, respondents were also asked to rank these friends based on which one would be consulted first, second, etc. But in our paper, we do not consider this ranking and instead assign equal weight to each link. Again, 
we incorporate link directionality when calculating the number of treated friends for exposure mappings but omit it when defining network neighborhoods in a conservative manner for covariance estimators.
We include age and education as covariates.

Define $A^2$ as the square of the adjacency matrix $A$, with $(i,j)$th entry $(A^2)_{ij} \;=\;\sum_{k=1}^n A_{ik}\,A_{kj}$, which counts the number of common friends between units $i$ and $j$. 
Define $B_{ij} = 1((A^2)_{ij} \ge 1 \text{ and } A_{ij}=0, i\neq j)$ to denote that units $i$ and $j$ share at least one friend but are not directly connected.
We define the following exposure mappings: 
\begin{align*}
T_{1i} =& \text{Int}_i \cdot \text{Delay}_i, \\
T_{2i} =& 1\left( \sum_{j=1}^n A_{ij}(1-\text{Delay}_j)\text{Int}_j > 0 \right), \\
T_{3i} =& 1\left( \sum_{j=1}^n B_{ij}(1-\text{Delay}_j)\text{Int}_j > 0 \right),
\end{align*}
where $T_{1i}$ captures the direct effect of participating in the second-round intensive sessions, $T_{2i}$ captures the spillover effect of having at least one friend attend the first-round intensive sessions, 
and $T_{3i}$ captures the spillover effect of having at least one friend-of-friend attend the first-round intensive sessions. 

% In Tables \ref{table: cai_one}, \ref{table: cai_two} and \ref{table: cai_three}, we present the results of all effects by defining three one-dimensional exposure mappings, one two-dimensional exposure mapping, and one three-dimensional exposure mapping, respectively.

We consider regression specifications with exposure mappings of different dimensions to assess how robust our results are to the choice of mapping.
For all specifications, we restrict our effective sample to households that attended the second-round session, had at least one friend attend the first-round sessions and had at least one friend-of-friend attend the first-round sessions to satisfy Assumption \ref{asu2}, resulting in a total of $1527$ households.
\paragraph*{One-dimensional exposure mappings}
% We present the results of the direct and spillover effects using two one-dimensional exposure mappings in the bottom panel of Table \ref{table: cai}.
We consider three one-dimensional exposure mappings: $T_{1i}$, $T_{2i}$ and $T_{3i}$. We set $G = (-1, 1)$ for each of them.
% First, we define \( T_{1i} = \text{Int}_i \cdot \text{Delay}_i \), as in Example~\ref{ex:direct}, to calculate the direct effect of participating in the second-round intensive session. 
% Second, we define \( T_{2i} = 1( \sum_{j=1}^n A_{ij}(1-\text{Delay}_j)\text{Int}_j > 0 ) \), as in Example~\ref{ex:one-dim}, to calculate the spillover effect of having at least one friend attend the first-round intensive sessions. 
% Let \( (A^2)_{ij} \) denote whether units \( i \) and \( j \) share any common friends. 
% Third, we define \( T_{3i} = 1( \sum_{j=1}^n (A^2)_{ij}(1-\text{Delay}_j)\text{Int}_j > 0 ) \) to calculate the spillover effect of having at least one friend-of-friend attend the first-round intensive sessions. 
The results are presented in the top panel of Table~\ref{table: cai_one}.
For comparison, we also report results using the standard errors proposed by \citet{Leung2022} in the bottom panel of Table~\ref{table: cai_one}.

\paragraph*{Two-dimensional exposure mapping}
We define the two-dimensional exposure mapping as $T_i = (T_{1i}, T_{2i}) \in \{(0,0), (0,1), (1,0), (1,1)\}$
% $T_i = ( \text{Int}_i \cdot \text{Delay}_i, {1}(\sum_{j=1}^n A_{ij}(1-\text{Delay}_j)\text{Int}_j > 0),{1}(\sum_{j=1}^n (A^2)_{ij}(1-\text{Delay}_j)\text{Int}_j > 0) )$,  
and set $G=\left(g_{1}, g_{2}\right)^\top$ with $g_{1}=2^{-1}(-1, -1, 1,  1)^\top $ and $ g_{2}=2^{-1}(-1, 1,  -1,  1)^\top$, as in Example \ref{ex:two-dim}. Then the estimand $\tau$ recovers the direct effect and the spillover effect of having at least one treated friends. 
The results are presented in Table \ref{table: cai_two}. 

\paragraph*{Three-dimensional exposure mapping}
We define the three-dimensional exposure mapping as $T_i = (T_{1i}, T_{2i}, T_{3i}) \in \{(0,0,0), (0,0,1), (0,1,0), (0,1,1),(1,0,0), (1,0,1), (1,1,0), (1,1,1)\}$.
% $T_i = ( \text{Int}_i \cdot \text{Delay}_i, {1}(\sum_{j=1}^n A_{ij}(1-\text{Delay}_j)\text{Int}_j > 0),{1}(\sum_{j=1}^n (A^2)_{ij}(1-\text{Delay}_j)\text{Int}_j > 0) )$,  
% where $T_i$ takes values in
% \[
% \{(0,0,0), (0,0,1), (0,1,0), (0,1,1),(1,0,0), (1,0,1), (1,1,0), (1,1,1)\}.
% \] 
Setting $G=\left(g_{1}, g_{2}, g_{3}\right)^\top$ with $g_{1}=4^{-1}(-1, -1, -1, -1,  1,  1,  1,  1)^\top, g_{2}=4^{-1}(-1, -1,  1,  1, -1, -1,  1,  1)^\top$, and $g_{3}=4^{-1}(-1,  1, -1,  1, -1,  1, -1,  1)^\top$, then the estimand $\tau$ recovers the direct effect, the spillover effect of having at least one treated friends, and the spillover effect of having at least one treated friend-of-friends. 
The results are presented in Table \ref{table: cai_three}. 

% We restrict the effective sample to the households attending the second round session with at least one friend attending the first round session, resulting in a total of 1056 households. 
% We first show the results of the treatment and spillover effects by running two separate regressions, and then show the results of calculating treatment and spillover effect simultaneously. 

% We can also simultaneously examine the treatment and spillover effects by 

For $T_{1i}$ and $T_{2i}$, the suggested bandwidth in \eqref{eq:bandwidth} is $b_n = 3$ with $K = 0$ and $K = 1$, while for $T_{3i}$, the suggested bandwidth is $b_n = 4$ with $K = 2$.
We present results for the bandwidths in $\{0, 2, 3, 4, 5\}$.
In Table~\ref{table: cai_one}, the direct effect is not statistically significant across regression specifications and bandwidths.  
In contrast, the spillover effect of having at least one treated friend, estimated using the Hájek estimator, is statistically significant at the 5\% level using the regression-based standard error and at the 10\% level using the modified standard error, although this significance does not hold consistently across all bandwidths.
Compared with the Horvitz--Thompson estimator results reported in \citet{Leung2022}, our method yields smaller standard errors, resulting in statistically significant findings.  
We also examined the spillover effect of having at least one friend-of-friend attend the first-round intensive session, but this effect was not statistically significant. This suggests that the spillover effect diminishes with social distance.
Overall, our findings closely align with the estimates reported in Table~2 of \citet{CaiJanvrySadoulet2015}.  
More specifically, providing intensive sessions on insurance and emphasizing the product's expected benefits to a targeted group of farmers generates a significant and positive spillover effect on others.  
The difference in the magnitude of point estimates stems from \citet{CaiJanvrySadoulet2015} using the count of friends attending the first-round intensive session, whereas we focus solely on whether at least one friend attended.
The findings presented in Tables~\ref{table: cai_one} and~\ref{table: cai_two} align with each other, demonstrating the robustness of our methods to variations in regression specifications and exposure mappings, provided that the exposure mappings are independent or only weakly dependent.
When including all exposure mappings in Table \ref{table: cai_three}, the direct effect and the spillover effect of having at least one treated friend-of-friend remain statistically insignificant. 
However, the spillover effect of having at least one treated friend becomes less significant compared with the results in Table~\ref{table: cai_one}, as the inclusion of many additional regressors substantially increases the standard error.
We also observe an efficiency loss associated with the inclusion of fully interacted covariates in this application, in line with the theoretical results presented in Section~\ref{Counterexample}. This arises because, given the number of treated units in each stratum, the number of treated friends and the number of treated friends-of-friends are negatively dependent.
The insignificant results in Table \ref{table: cai_three} likely reflects the small effective sample sizes produced by the more flexible exposure mapping. 
Comparing the results across Tables \ref{table: cai_one}--\ref{table: cai_three} reveals the traditional bias-variance trade-off: more flexible exposure mappings reduce bias but result in noisier estimates.

\begin{table}[ht]
\centering
\caption{\label{table: cai_one}Estimates and SEs (one-dimensional exposure mappings). }
\begin{tabular}{lccccccccc}
\hline \hline
& \multicolumn{3}{c}{Direct effect} 
& \multicolumn{3}{c}{Effect of direct friends} 
& \multicolumn{3}{c}{Effect of indirect friends} \\
\hline
WLS specification & Unadj & Add & Sat & Unadj & Add & Sat & Unadj & Add & Sat \\
\hline
% \multicolumn{10}{c}{One-dimensional exposure mapping}\\
% \hline
Estimate	&	0.009	&	0.010	&	0.011	&	0.054	&	0.056	&	0.056	&	0.014	&	0.014	&	0.014	\\
$b_n=0$	&	0.025	&	0.025	&	0.025	&	0.027	&	0.027	&	0.027	&	0.047	&	0.047	&	0.047	\\
$b_n=2$	&	0.027	&	0.027	&	0.027	&	0.029	&	0.029	&	0.029	&	0.047	&	0.047	&	0.047	\\
WLS$^{+}$ SE	&	0.031	&	0.030	&	0.030	&	0.033	&	0.033	&	0.033	&	0.055	&	0.055	&	0.055	\\
$b_n=3$	&	0.026	&	0.026	&	0.026	&	0.026	&	0.026	&	0.026	&	0.048	&	0.048	&	0.048	\\
WLS$^{+}$ SE	&	0.029	&	0.029	&	0.029	&	0.030	&	0.030	&	0.030	&	0.054	&	0.054	&	0.054	\\
$b_n=4$	&	0.027	&	0.027	&	0.026	&	0.026	&	0.026	&	0.026	&	0.047	&	0.047	&	0.047	\\
WLS$^{+}$ SE	&	0.031	&	0.031	&	0.030	&	0.031	&	0.031	&	0.031	&	0.054	&	0.053	&	0.053	\\
$b_n=5$	&	0.027	&	0.027	&	0.026	&	0.025	&	0.025	&	0.025	&	0.049	&	0.049	&	0.049	\\
WLS$^{+}$ SE	&	0.031	&	0.031	&	0.030	&	0.031	&	0.031	&	0.031	&	0.056	&	0.056	&	0.056	\\
\hline		
\multicolumn{10}{c}{Results using estimates and SE from \cite{Leung2022}} \\
\hline
Estimate	&	0.012	&	&		&	0.057	&	&	&	0.062	&		&		\\
$b_n=0$	&	0.034	&		&		&	0.036	&	&	&	0.055	&		&		\\
$b_n=2$	&	0.033	&		&		&	0.049	&	&	&	0.073	&		&		\\
$b_n=3$	&	0.032	&		&		&	0.048	&	&	&	0.081	&		&		\\
$b_n=4$	&	0.033	&		&		&	0.046	&	&	&	0.082	&		&		\\
$b_n=5$	&	0.030	&		&		&	0.046	&	&	&	0.080	&		&		\\
\hline \hline
\end{tabular}
\end{table}

\begin{table}[ht]
\centering
\caption{\label{table: cai_two}Estimates and SEs (two-dimensional exposure mapping). }
\begin{tabular}{lcccccc}
\hline \hline
& \multicolumn{3}{c}{Direct effect} 
& \multicolumn{3}{c}{Effect of direct friends} \\
% & \multicolumn{3}{c}{Effect of indirect friends} \\
\hline
WLS specification & Unadj & Add & Sat & Unadj & Add & Sat \\
% \midrule
\hline
% \hline
% \multicolumn{7}{c}{Two-dimensional exposure mapping}\\
% \hline
Estimate	&	0.013	&	0.014	&	0.017	&	0.056	&	0.058	&	0.055	\\
$b_n=0$	&	0.027	&	0.027	&	0.027	&	0.027	&	0.027	&	0.027	\\
$b_n=2$	&	0.028	&	0.028	&	0.027	&	0.027	&	0.027	&	0.027	\\
WLS$^{+}$ SE	&	0.033	&	0.032	&	0.032	&	0.033	&	0.033	&	0.033	\\
$b_n=3$	&	0.028	&	0.028	&	0.028	&	0.026	&	0.026	&	0.026	\\
WLS$^{+}$ SE	&	0.032	&	0.032	&	0.032	&	0.031	&	0.031	&	0.031	\\
$b_n=4$	&	0.028	&	0.028	&	0.028	&	0.025	&	0.026	&	0.025	\\
WLS$^{+}$ SE	&	0.034	&	0.033	&	0.033	&	0.031	&	0.031	&	0.031	\\
$b_n=5$	&	0.027	&	0.027	&	0.027	&	0.025	&	0.025	&	0.024	\\
WLS$^{+}$ SE	&	0.034	&	0.034	&	0.033	&	0.032	&	0.032	&	0.032	\\
\hline \hline
% \bottomrule
\end{tabular}
% \captionsetup{font=small,labelfont=bf}
% \caption*{Note: Columns display results for the treatment and spillover effects ($n=1056$) under three regression specifications. }
\end{table}

\begin{table}[ht]
\centering
\caption{\label{table: cai_three}Estimates and SEs (three-dimensional exposure mapping). }
\begin{tabular}{lccccccccc}
\hline \hline
& \multicolumn{3}{c}{Direct effect} 
& \multicolumn{3}{c}{Effect of direct friends} 
& \multicolumn{3}{c}{Effect of indirect friends} \\
\hline
WLS specification & Unadj & Add & Sat & Unadj & Add & Sat & Unadj & Add & Sat \\
% \midrule
\hline
% \hline
% \multicolumn{10}{c}{Two-dimensional exposure mapping}\\
% \hline
Estimate	&	0.046	&	0.048	&	0.063	&	0.077	&	0.079	&	0.075	&	$-$0.001	&	$-$0.001	&	$-$0.004	\\
$b_n=0$	&	0.049	&	0.048	&	0.049	&	0.050	&	0.050	&	0.051	&	0.051	&	0.051	&	0.051	\\
$b_n=2$	&	0.044	&	0.043	&	0.043	&	0.049	&	0.049	&	0.050	&	0.052	&	0.052	&	0.053	\\
WLS$^{+}$ SE	&	0.053	&	0.053	&	0.053	&	0.060	&	0.060	&	0.061	&	0.062	&	0.062	&	0.063	\\
$b_n=3$	&	0.042	&	0.042	&	0.041	&	0.048	&	0.048	&	0.050	&	0.054	&	0.054	&	0.056	\\
WLS$^{+}$ SE	&	0.049	&	0.049	&	0.048	&	0.056	&	0.056	&	0.058	&	0.062	&	0.062	&	0.063	\\
$b_n=4$	&	0.043	&	0.042	&	0.042	&	0.047	&	0.048	&	0.050	&	0.055	&	0.055	&	0.057	\\
WLS$^{+}$ SE	&	0.051	&	0.050	&	0.049	&	0.056	&	0.056	&	0.058	&	0.061	&	0.061	&	0.063	\\
$b_n=5$	&	0.042	&	0.042	&	0.041	&	0.047	&	0.047	&	0.049	&	0.055	&	0.055	&	0.057	\\
WLS$^{+}$ SE	&	0.051	&	0.050	&	0.049	&	0.055	&	0.055	&	0.056	&	0.063	&	0.063	&	0.065	\\
\hline \hline
% \bottomrule
\end{tabular}
% \captionsetup{font=small,labelfont=bf}
% \caption*{Note: Columns display results for the treatment and spillover effects ($n=1056$) under three regression specifications. }
\end{table}

% \begin{table}[ht]
% \centering
% \caption{\label{table: cai both}Estimates and SEs\citep{CaiJanvrySadoulet2015}. }
% \begin{tabular}{lrrrrrr}
% \toprule
% & \multicolumn{3}{c}{Treatment Effect} & \multicolumn{3}{c}{Spillover Effect} \\
% \cline{2-7}
%  & WLS & Add & Sat & WLS & Add & Sat \\
% \midrule
% \bottomrule
% \end{tabular}
% \caption*{Note: Columns display results for the treatment and spillover effects ($n=1056$) under three regression specifications. }
% \end{table}

\subsection{Regression to recover the Horvitz--Thompson estimator}\label{app:HT}
% The Horvitz–-Thompson estimator of $\mu(t)$ is 
% \[
% \hat{Y}_{\text{ht}}(t) 
% = \frac{1}{n}\sum_{i=1}^n \frac{{1}_i(t)Y_i}{\pi_i(t)}
% \]
In this section, we demonstrate how to recover the Horvitz--Thompson estimator from a regression-based approach, as noted in Remark \ref{re:HT}.
We begin with showing that a particular WLS fit reproduces \cite{Leung2022}’s Horvitz--Thompson estimator and analyze the asymptotic performance of the resulting regression-based HAC variance estimator.  
We then modify \cite{Leung2022}'s variance estimator to resolve the issue of anti-conservative variance estimation and compare the efficiency of the Horvitz--Thompson and Hájek estimators.

\subsubsection{WLS-based analysis of the Horvitz--Thompson estimator}
Define the adjusted outcome $\tilde{Y}_i = \left(\sum_{t\in \mathcal{T}} {1}_i(t) \hat{1}_{\text{ht}}(t) \right)Y_i$.
% Motivated by the use of inverse probability weighting in constructing the Horvitz–-Thompson estimator, we consider the WLS fitting of transformed outcome $(\sum_{t\in \mathcal{T}} {1}_i(t) \alpha_{w,t})Y_i$ on $\sum_{t\in \mathcal{T}}{1}_i(t)$ with weight $\tilde{w}_i = \sum_{t\in \mathcal{T}} \frac{1}{\alpha_{w,t}}\frac{{1}_i(t)}{\pi_i(t)}$, and estimates $\{\mu(t)\}_{t\in\mathcal{T}}$ by the resulting WLS coefficients. 
We consider the WLS fit: 
\begin{equation}
\text{regress } \tilde{Y}_i \text{ on } z_i \text{ with weights } \tilde{w}_i = 1/(\hat{1}_{\text{ht}}(t)\pi_i(T_i)).  
% \text{lm}(\tilde{Y}_i \sim z_i) \text{ with weights } \tilde{w}_i = 1/(\hat{1}_{\text{ht}}(t)\pi_i(T_i)).  
\label{eq:HT_WLS}
\end{equation}
Let $\hat{\beta}_{\text{ht}}$ denote the estimtors of coefficients for $z_i$. 
Let $\tilde{ {W}}=\text{diag}\{ \tilde{w}_i: i=1,\ldots, n\}$ and vectorize $\tilde{Y}_i$'s as $\tilde{ {Y}}$.
% Proposition \ref{prop: ht} below states the numerical equivalence between $\hat{\beta}_{\text{ht}}$ and $\hat{Y}_{\text{ht}}$. 

% Define $d_i$ as the vectorization of $\{{1}_i(t): t\in\mathcal{T}\}$ and $D$ as the matrix with rows $d_i$. 
\begin{proposition} \label{prop: ht}
$\hat{\beta}_{\textup{ht}} = \hat{Y}_{\textup{ht}}$. 
% and $\hat{ {V}}_{\text{ht}} = (D^{\top}\tilde{W} D)^{-1}
% (D^{\top}\tilde{W} e  {K}_n e^{\top} \tilde{W} D)
% (D^{\top}\tilde{W} D)^{-1}$. 
\end{proposition}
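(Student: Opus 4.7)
The plan is to verify the identity by a direct algebraic computation of the WLS normal equations $\hat{\beta}_{\textup{ht}} = (Z^\top \tilde{W} Z)^{-1} Z^\top \tilde{W} \tilde{Y}$; the entire argument rests on the orthogonal-indicator structure of $z_i$ together with a cancellation built into the definitions of $\tilde{Y}_i$ and $\tilde{w}_i$.

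First, I would evaluate the Gram matrix $Z^\top \tilde{W} Z$. Since exactly one component of $z_i = (1_i(t):t\in\mathcal{T})$ is nonzero, $1_i(t) 1_i(t') = 0$ whenever $t \ne t'$, so the Gram matrix is diagonal. Plugging the weight $\tilde{w}_i = 1/(\hat{1}_{\textup{ht}}(T_i)\pi_i(T_i))$ into the $(t,t)$ diagonal entry $\sum_i 1_i(t)\tilde{w}_i$ and using the definition $\hat{1}_{\textup{ht}}(t) = n^{-1}\sum_i 1_i(t)/\pi_i(t)$, the normalizer $\hat{1}_{\textup{ht}}(t)^{-1}$ cancels against $n\hat{1}_{\textup{ht}}(t)$, giving $Z^\top \tilde{W} Z = nI$.

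Next I would compute the $t$-th coordinate of $Z^\top \tilde{W} \tilde{Y}$. On the support $\{i : 1_i(t) = 1\}$, the rescaled outcome $\tilde{Y}_i = \hat{1}_{\textup{ht}}(T_i) Y_i$ and the weight $\tilde{w}_i = 1/(\hat{1}_{\textup{ht}}(T_i)\pi_i(T_i))$ combine so that $\hat{1}_{\textup{ht}}$ cancels, leaving $\tilde{w}_i \tilde{Y}_i = Y_i/\pi_i(T_i)$. Hence the $t$-th entry equals $\sum_i 1_i(t)\, Y_i/\pi_i(t) = n\hat{Y}_{\textup{ht}}(t)$. Dividing by $n$ yields $\hat{\beta}_{\textup{ht}}(t) = \hat{Y}_{\textup{ht}}(t)$ coordinatewise, which is the claim.

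There is no genuine obstacle: the proof amounts to a few lines of bookkeeping. The only subtlety worth flagging is that the expression $\tilde{w}_i = 1/(\hat{1}_{\textup{ht}}(t)\pi_i(T_i))$ should be read with $t = T_i$, so that each unit carries a single scalar weight. The construction is engineered so that the random normalizer $\hat{1}_{\textup{ht}}(T_i)$ rescales the Gram matrix to $nI$ and simultaneously cancels inside $Z^\top \tilde{W}\tilde{Y}$, leaving behind exactly the Horvitz--Thompson weights; this is why the WLS fit in \eqref{eq:HT_WLS} must reproduce $\hat{Y}_{\textup{ht}}$ and not merely approximate it.
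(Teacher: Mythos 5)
Your proof is correct and follows essentially the same route as the paper, which simply cites the normal-equations formula $\hat{\beta}_{\textup{ht}} = (Z^{\top}\tilde{W}Z)^{-1}Z^{\top}\tilde{W}\tilde{Y}$ and leaves the cancellations implicit. You have merely written out the bookkeeping (the Gram matrix reducing to $nI$ and the normalizer cancelling in $Z^{\top}\tilde{W}\tilde{Y}$) that the paper omits, and it checks out.
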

\begin{proof}[Proof of Proposition \ref{prop: ht}]
% Let $\tilde{W}=diag\{ \tilde{w}_i\}_{i=1}^n$ and $\tilde{Y}$ be the vectorization of $\{ (\sum_{t\in \mathcal{T}} {1}_i(t) \alpha_{w,t})Y_i: i=1,\ldots, n \}$.
The result follows from $\hat{\beta}_{\text{ht}} = ( {Z}^{\top}\tilde{ {W}}  {Z})^{-1} 
{Z}^{\top}\tilde{ {W}} \tilde{ {Y}}$. 
\end{proof}
Proposition \ref{prop: ht} is numerical and shows the utility of WLS fit in reproducing the Horvitz--Thompson estimator.
We exclude it from the main paper due to the unnaturalness in both its weighting and outcome transformation schemes.

% Proposition \ref{ref: ht} is numeric and shows that we can recover the Horvitz–-Thompson estimator for $\tau(t,t')$ with $\hat{\tau}_{\text{ht}}(t,t') = \hat{\beta}_{\text{ht}}(t) - \hat{\beta}_{\text{ht}}(t') = ({1}(t) - {1}(t'))^{\top}\hat{\beta}_{\text{ht}}$.

The residual from the WLS fit in \eqref{eq:HT_WLS} is
\begin{align*}
e_{\text{ht},i}  
=& \sum_{t\in \mathcal{T}} {1}_i(t) \hat{1}_{\text{ht}}(t) \left( Y_i - \frac{1}{\hat{1}_{\text{ht}}(t)} \frac{1}{n}\sum_{i=1}^n \frac{Y_i{1}_i(t)}{\pi_i(t)} \right) 
= \tilde{Y}_i - \sum_{t\in \mathcal{T}} {1}_i(t) \hat{\beta}_{\text{ht}}(t).
\end{align*} 
% Let $e$ be the vectorization of $\{e_i: i=1,\ldots, n \}$ and $W$ be a symmetrys matrix with $ij$th entry being ${1}(\ell_{ {A}}(i,j)\le b_n)$. 
Let $ {e}_\text{ht} = \text{diag}\{e_{\text{ht},i}: i=1,\ldots, n\}$.
The HAC variance estimator for $\hat{\beta}_{\text{ht}}$ based on WLS fit in \eqref{eq:HT_WLS} equals 
\begin{align*}
\hat{ {V}}_\text{ht}
=& ( {Z}^{\top}\tilde{ {W}}  {Z})^{-1}
( {Z}^{\top}\tilde{ {W}}  {e}_\text{ht}  {K}_n  {e}_\text{ht}^{\top} \tilde{ {W}}  {Z})
( {Z}^{\top}\tilde{ {W}}  {Z})^{-1}. 
% =&  \left(\sum_{i=1}^n \tilde{w}_ix_ix_i'\right)^{-1}
% \sum_{i,j}\tilde{w}_i\tilde{w}_jx_ie_ie_jx_j 
% {1}(\ell_{ {A}}(i,j)\le b_n)
% \left(\sum_{i=1}^n \tilde{w}_ix_ix_i'\right)^{-1} \\
% =& \begin{pmatrix}
% \frac{\sum_{i,j} \tilde{w}_ie_i{1}_i(t)\tilde{w}_je_j{1}_j(t) {1}(\ell_{ {A}}(i,j)\le b_n)}{(\sum_i \tilde{w}_i {1}_i(t) )^2}  & \ldots & \frac{\sum_{i,j} \tilde{w}_ie_i{1}_i(t)\tilde{w}_je_j{1}_j(t') {1}(\ell_{ {A}}(i,j)\le b_n)}{\sum_i \tilde{w}_i {1}_i(t) \sum_i \tilde{w}_i {1}_i(t') } \\
% \vdots & & \vdots \\
% \frac{\sum_{i,j} \tilde{w}_ie_i{1}_i(t')\tilde{w}_je_j{1}_j(t) {1}(\ell_{ {A}}(i,j)\le b_n)}{\sum_i \tilde{w}_i {1}_i(t') \sum_i \tilde{w}_i {1}_i(t)} & \ldots & \frac{\sum_{i,j} \tilde{w}_ie_i{1}_i(t')\tilde{w}_je_j{1}_j(t') {1}(\ell_{ {A}}(i,j)\le b_n)}{(\sum_i \tilde{w}_i {1}_i(t') )^2}
% \end{pmatrix}
\end{align*}
% Define $d(t)$ as the realized value of $d_i$ with ${1}_i(t)=1$ and other components being 0.
% We can express the Horvitz–-Thompson estimator as $\hat{\tau}_{\text{ht}}(t,t') = ({1}(t) - {1}(t'))^{\top}\hat{\beta}_{\text{ht}}$. 
% Define $\hat{1}_{\text{ht}}(t) 
% = \frac{1}{n}\sum_{i=1}^n \frac{{1}_i(t)}{\pi_i(t)}$.
% We have 
% \begin{align*}
% \hat{\Sigma}^2_{w,ht}(t, t^{\prime})) 
% =&  n ({1}(t) - {1}(t'))^{\top} 
% \hat{ {V}}_{w,ht}
% ({1}(t) - {1}(t'))
% \end{align*}
To facilitate a more direct comparison with the results of \cite{Leung2022}, we focus on the estimation and inference of the estimand, $\tau(t,t') = \mu(t) - \mu(t')$, the contrast between exposure mapping values $t$ and $t'$. Proposition \ref{prop: ht} shows we can use the WLS estimator, $\hat{\tau}_{\text{ht}}(t,t') = G\hat{\beta}_{\text{ht}} = \hat{\beta}_{\text{ht}}(t) - \hat{\beta}_{\text{ht}}(t')$, where $G$ is a $1\times |\mathcal{T}|$ vector containing a value of $1$ for the element corresponding to exposure mapping value $t$, a value of $-1$ for the element corresponding to $t'$, and 0 for all other elements.
We define $\hat{\sigma}^2_{\text{ht}}(t,t')$ as the regression-based HAC variance estimator for $\hat{\tau}_{\text{ht}}(t,t')$ based on $\hat{ {V}}_{\text{ht}}$, i.e., $\hat{\sigma}^2_{\text{ht}}(t, t^{\prime})
=  G 
\hat{ {V}}_{\text{ht}}
G^\top$.
Define $\Delta_i(t,t') = ( {1}_i(t)\pi_i(t)^{-1} - {1}_i(t')\pi_i(t')^{-1} )Y_i $ and ${\sigma}_{\text{ht}}^2(t,t') = \operatorname{Var}\left( \sqrt{n} \hat{\tau}_{\text{ht}}(t,t') \right)$.
Theorem \ref{thm:WLS HT asm} below states the asymptotic normality of $\hat{\tau}_{\text{ht}}(t,t')$. 
\begin{theorem} \label{thm:WLS HT asm}
Define 
\begin{equation}
{\sigma}^2_*(t, t^{\prime})
= \frac{1}{n} \sum_{i=1}^n \sum_{j=1}^n
\left(\Delta_i(t,t')-\tau_i(t, t^{\prime})\right)
\left(\Delta_j(t,t')-\tau_j(t, t^{\prime})\right) 
{1}(\ell_{ {A}}(i, j) \leq b_n).
\label{eq:sigma_ht}    
\end{equation}
Under Assumptions \ref{asu1}--\ref{asu6}, we have
\begin{align*}
{\sigma}_{\textup{ht}}^{-1}(t,t') 
\sqrt{n}
\left( \hat{\tau}_{\textup{ht}}(t,t') - \tau(t,t') \right) 
\stackrel{\textup{d}}{\rightarrow}  \mathcal{N}(0,1)
\text{ and } 
{\sigma}^2_*(t, t^{\prime})
= {\sigma}_{\textup{ht}}^2(t,t') + o_\mathbb{P}(1).
\end{align*}
\end{theorem}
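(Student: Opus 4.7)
The plan is to rewrite $\sqrt{n}(\hat{\tau}_{\textup{ht}}(t,t') - \tau(t,t'))$ as a centered sum of network-dependent bounded random variables and then appeal to a CLT for such variables, followed by a moment-based argument for the variance approximation. The key simplification over the H\'ajek analysis in Theorem \ref{thm:Hájek_asym_n} is the absence of a nonlinear denominator, which removes the need for any linearization.

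First, I would observe the exact identity $\sqrt{n}(\hat{\tau}_{\textup{ht}}(t,t') - \tau(t,t')) = n^{-1/2} \sum_{i=1}^n U_i$, where $U_i = \Delta_i(t,t') - \tau_i(t,t')$ has mean zero because the Horvitz--Thompson estimator is unbiased for each $\mu(t)$. Assumptions \ref{asu2} and \ref{asu3} give uniform boundedness of $U_i$. By Assumption \ref{asu1}, $1_i(t)$ depends only on treatments within path distance $K$ of $i$, and by Assumption \ref{asu4} the dependence of $Y_i$ on distant treatments decays at rate $\tilde{\theta}_{n,s}$. This places the $U_i$'s in the network-dependent framework of \citet{KojevnikovMarmerSong2019}.

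For the CLT, I would couple each $U_i$ to a ``local'' version in which $Y_i(D)$ is replaced by $Y_i(D^{(i,s)})$ for a slowly growing $s = m_n$; the ANI rate $\tilde{\theta}_{n,m_n}$ controls the coupling error, while the localized variable has $K+m_n$-dependence on the graph. Under Assumption \ref{asu6}, applied with $\Sigma_{\textup{haj}}$ replaced by the scalar $\sigma^2_{\textup{ht}}(t,t')$, the three rate conditions are exactly those needed to invoke the Stein-method-based CLT; this mirrors \citet[Theorem 2]{Leung2022}, who analyzes essentially the same statistic under the same assumptions.

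For the variance approximation $\sigma^2_* = \sigma^2_{\textup{ht}} + o_\mathbb{P}(1)$, I would compute $\mathbb{E}[\sigma^2_*]$ and $\operatorname{Var}(\sigma^2_*)$ separately. The mean differs from $\sigma^2_{\textup{ht}} = n^{-1} \sum_{i,j} \operatorname{Cov}(U_i, U_j)$ only by the tail $n^{-1} \sum_{i,j} \operatorname{Cov}(U_i, U_j)\,1(\ell_A(i,j) > b_n)$, which is $o(1)$ by ANI-controlled covariance decay combined with the boundary-size restriction of Assumption \ref{asu7}(a). For the variance of $\sigma^2_*$, I would expand as a fourth-moment sum over quadruples $(i,j,k,l)$, decompose according to the path-distance configuration, and apply Assumption \ref{asu7}(c)--(d) to bound quadruple counts. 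Note that, unlike Theorem \ref{thm:Hájek_bias}, no additive bias term $R_{\textup{haj}}$ appears here because $\tau_i(t,t')$ is the exact mean of $\Delta_i(t,t')$, so the oracle covariance is asymptotically exact rather than merely conservative.

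The main obstacle will be the fourth-moment bound for $\operatorname{Var}(\sigma^2_*)$: one must carefully track, via $\mathcal{J}_n(s,b_n)$, which quadruples yield non-negligible fourth-order cumulants after ANI-based truncation, and show that these are controlled by $n^{-2}\sum_s |\mathcal{J}_n(s,b_n)| \tilde{\theta}_{n,s}^{1-\epsilon} = o(1)$. Fortunately, this is essentially the counting argument already needed for the H\'ajek HAC estimator in Theorem \ref{thm:Hájek_bias}, and it carries over to $U_i$ directly.
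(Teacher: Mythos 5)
Your proposal is correct and follows essentially the same route as the paper: the paper's proof simply invokes Theorems 3 and 4 of \cite{Leung2022} (together with the numerical identity between the WLS coefficient and the Horvitz--Thompson estimator), and your plan reconstructs exactly the content of those citations using the $\psi$-dependence coupling, the \cite{KojevnikovMarmerSong2019} CLT, and the quadruple-counting HAC argument that the paper already packages as Lemmas in its appendix. One small point in your favor: the variance-consistency half genuinely rests on Assumption \ref{asu7} (as you correctly invoke), mirroring Theorem \ref{thm:Hájek_bias}, even though the theorem statement lists only Assumptions \ref{asu1}--\ref{asu6}.
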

\begin{proof}[Proof of Theorem \ref{thm:WLS HT asm}]
The asymptotic normality of $\hat{\tau}_{\text{ht}}(t,t')$ follows from Theorem 3 in \cite{Leung2022} and Proposition \ref{prop: ht}.
The consistency result
${\sigma}^2_*(t, t^{\prime}) = {\sigma}_{\text{ht}}^2(t,t') + o_\mathbb{P}(1)$ is proved in Theorem 4 of \cite{Leung2022}.
\end{proof}

Define the individual-level exposure effect $\tau_i(t, t^{\prime}) = \mu_i(t) - \mu_i(t')$.
Define $\tilde{\tau}_i(t,t') 
= {1}_i(t)\pi_i(t)^{-1} \mu(t)
- {1}_i(t')\pi_i(t')^{-1} \mu(t')$. 
Theorem \ref{thm:WLS HT} below establishes the asymptotic bias of HAC variance estimator $\hat{\sigma}^2_{\text{ht}}(t,t')$ for the asymptotic variance of $\hat{\tau}_{\text{ht}}(t,t')$. It shows that 
the regression-based variance estimation of $\hat{\tau}_{\text{ht}}(t,t')$ is not asymptotically exact even when the individual-level exposure effects are constant.
\begin{theorem} \label{thm:WLS HT}
Define \begin{align}
% \hat{\sigma}^2_{\text{ht}}(t, t^{\prime})
% &=  G 
% \hat{ {V}}_{\text{ht}}
% G^\top \\
% \hat{\sigma}^2_*(t, t^{\prime})
% &= \frac{1}{n} \sum_{i=1}^n \sum_{j=1}^n
% \left(\Delta_i(t,t')-\tau_i(t, t^{\prime})\right)
% \left(\Delta_j(t,t')-\tau_j(t, t^{\prime})\right) 
% {1}(\ell_{ {A}}(i, j) \leq b_n), \label{eq:sigma_ht} \\
R_{\textup{ht}}(t,t')
% =& \frac{1}{n} \sum_{i=1}^n \sum_{j=1}^n 
% \left(\tau_i(t,t')- \textcolor{red}{\check{\tau}_i(t,t')}  \right)
% \left({\tau}_j(t,t') - \textcolor{red}{\check{\tau}_j(t,t')}  \right) 
% {1}(\ell_{ {A}}(i, j) \leq b_n)\\
% &+ \frac{2}{n} \sum_{i=1}^n \sum_{j=1}^n 
% \left( \Delta_i(t,t') - {\tau}_i(t,t') \right)
% \left({\tau}_j(t,t') - \textcolor{red}{\check{\tau}_j(t,t')} \right) 
% {1}(\ell_{ {A}}(i, j) \leq b_n)
&= \frac{1}{n} \sum_{i=1}^n \sum_{j=1}^n 
\left(\tau_i(t,t')- \tilde{\tau}_i(t,t')   \right) 
\left({\tau}_j(t,t') - \tilde{\tau}_j(t,t') \right) 
{1}(\ell_{ {A}}(i, j) \leq b_n)\notag \\
&+ \frac{2}{n} \sum_{i=1}^n \sum_{j=1}^n 
\left( \Delta_i(t,t') - \tilde{\tau}_i(t,t')   \right) 
\left({\tau}_j(t,t') - \tilde{\tau}_j(t,t')  \right) 
{1}(\ell_{ {A}}(i, j) \leq b_n). 
\label{eq:bias_HT}
\end{align}
Under Assumptions \ref{asu1}--\ref{asu4} and \ref{asu7},
we have $\hat{\sigma}^2_{\textup{ht}}(t,t')
= {\sigma}^2_*(t, t^{\prime}) + R_{\textup{ht}}(t,t')+o_\mathbb{P}(1)$.
% \begin{align*}
% % \hat{\sigma}^2_*(t, t^{\prime})
% % =&  \operatorname{Var}\left(\sqrt{n} \hat{\tau}_{\text{ht}}(t, t^{\prime})\right)
% % + o_\mathbb{P}(1), \\
% \hat{\sigma}^2_{\text{ht}}(t,t')
% =&~ \hat{\sigma}^2_*(t, t^{\prime}) + R_{\text{ht}}(t,t')+o_\mathbb{P}(1).
% \end{align*}
\end{theorem}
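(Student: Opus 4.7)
The plan is to reduce the regression-based variance estimator $\hat{\sigma}^2_{\textup{ht}}(t,t')$ to an explicit HAC-type double sum via the sandwich formula and then compare it term-by-term with $\sigma^2_*(t,t')$. Since the exposure dummies $z_i$ partition the sample, $Z^\top \tilde{W} Z$ is diagonal with entries $\sum_i 1_i(s)\tilde{w}_i = n$, so $(Z^\top \tilde{W} Z)^{-1} = n^{-1} I$. The WLS residual simplifies to $e_{\textup{ht},i} = 1_i(T_i)\hat{1}_{\textup{ht}}(T_i)(Y_i - \hat{Y}_{\textup{haj}}(T_i))$, which makes the $i$th component of $G Z^\top \tilde{W} e_{\textup{ht}}$ equal to $\Delta_i(t,t') - \hat{\tilde{\tau}}_i(t,t')$, where $\hat{\tilde{\tau}}_i$ is the plug-in version of $\tilde{\tau}_i$ obtained by replacing $\mu(\cdot)$ with $\hat{Y}_{\textup{haj}}(\cdot)$. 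Substituting then yields $\hat{\sigma}^2_{\textup{ht}}(t,t')$ proportional to $n^{-1}\sum_{i,j}(\Delta_i - \hat{\tilde{\tau}}_i)(\Delta_j - \hat{\tilde{\tau}}_j) K_{n,ij}$.

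I would then decompose $\Delta_i - \hat{\tilde{\tau}}_i = (\Delta_i - \tilde{\tau}_i) + (\tilde{\tau}_i - \hat{\tilde{\tau}}_i)$ and further split $\Delta_i - \tilde{\tau}_i = (\Delta_i - \tau_i) + (\tau_i - \tilde{\tau}_i)$. Expanding the resulting quadratic form produces three groups of contributions: the purely oracle term $n^{-1}\sum_{i,j}(\Delta_i - \tau_i)(\Delta_j - \tau_j) K_{n,ij}$, which by Theorem \ref{thm:WLS HT asm} coincides with $\sigma^2_*$ and hence with $\sigma^2_{\textup{ht}}$ up to $o_\mathbb{P}(1)$; cross and quadratic pieces in $(\tau_i - \tilde{\tau}_i)$ and $(\Delta_i - \tilde{\tau}_i)$ that, after using the symmetry of $K_n$ and regrouping, assemble into $R_{\textup{ht}}(t,t')$; and remainder pieces involving the plug-in error $\tilde{\tau}_i - \hat{\tilde{\tau}}_i$.

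To dispose of the remainders, I would use the identity $\tilde{\tau}_i - \hat{\tilde{\tau}}_i = -(1_i(t)/\pi_i(t))\delta_t + (1_i(t')/\pi_i(t'))\delta_{t'}$ with $\delta_s := \hat{Y}_{\textup{haj}}(s) - \mu(s) = O_\mathbb{P}(n^{-1/2})$, which follows from Theorem \ref{thm:Hájek_asym_n}. Factoring out the scalars $\delta_s$ reduces each remainder to an $O_\mathbb{P}(n^{-1/2})$ prefactor times a HAC-like sum over $K_{n,ij}$ of uniformly bounded quantities (boundedness coming from Assumptions \ref{asu2} and \ref{asu3}). Assumption \ref{asu7}(b)--(c) bound such sums by $O(M_n(b_n,1)) = o(n^{1/2})$ and related moments, so the products are $o_\mathbb{P}(1)$; the analogous cross terms involving a mean-zero factor $\Delta_i - \tau_i$ are controlled by the $\mathcal{J}_n(s, b_n)$ machinery of Assumption \ref{asu7}(d).

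The main obstacle is the algebraic bookkeeping required to recover the exact form of $R_{\textup{ht}}$. Because $\tau_i - \tilde{\tau}_i$ is bounded but not mean-zero and carries randomness through the exposure indicators, the cross and quadratic contributions in the expansion must be carefully regrouped so that the coefficient on each $(\Delta_i - \tilde{\tau}_i)(\tau_j - \tilde{\tau}_j)$-type and $(\tau_i - \tilde{\tau}_i)(\tau_j - \tilde{\tau}_j)$-type term matches the statement, rather than collapsing into a combination in terms of $\Delta_i - \tau_i$ and $\tau_i - \tilde{\tau}_i$ alone. Once the algebra is settled, the probabilistic control mirrors the proof of Theorem \ref{thm:Hájek_bias} and relies on the weak-dependence moment bounds of \cite{KojevnikovMarmerSong2019} applied to each bounded product appearing in the expansion.
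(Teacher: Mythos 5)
Your route is the same as the paper's: use $Z^{\top}\tilde{W}Z=nI$ and the residual identity to write $\hat{\sigma}^2_{\textup{ht}}(t,t')=n^{-1}\sum_{i,j}(\Delta_i-\hat{\tilde{\tau}}_i)(\Delta_j-\hat{\tilde{\tau}}_j)K_{n,ij}$ with $\hat{\tilde{\tau}}_i$ the plug-in of $\tilde{\tau}_i$ at $\hat{\beta}_{\textup{haj}}$, peel off the oracle term $\sigma^2_*$, and absorb the plug-in remainder by factoring out $\hat{\beta}_{\textup{haj}}(s)-\mu(s)=O_{\mathbb{P}}(n^{-1/2})$ against $n^{-1}\sum_{i,j}K_{n,ij}=M_n(b_n,1)=o(n^{1/2})$. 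Note the paper needs only Assumption \ref{asu7}(a)--(b) for that remainder; the $\mathcal{J}_n(s,b_n)$ machinery you invoke is used in Theorem \ref{thm:Hájek_bias}, where a mean-zero cross term must be shown to vanish, but no such term is killed here --- it is retained inside $R_{\textup{ht}}$.

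The genuine gap is exactly the step you defer as ``bookkeeping.'' Writing $a_i=\Delta_i-\tau_i$, $b_i=\tau_i-\tilde{\tau}_i$, $c_i=\tilde{\tau}_i-\hat{\tilde{\tau}}_i$, symmetry of $K_n$ gives
\[
\hat{\sigma}^2_{\textup{ht}}(t,t')=\sigma^2_*(t,t')+\frac{1}{n}\sum_{i,j}b_ib_jK_{n,ij}+\frac{2}{n}\sum_{i,j}a_ib_jK_{n,ij}+(\text{terms involving } c),
\]
so the cross term naturally carries $\Delta_i-\tau_i$, not $\Delta_i-\tilde{\tau}_i$. Rewriting $\tfrac{2}{n}\sum a_ib_jK_{n,ij}$ in the form $\tfrac{2}{n}\sum(\Delta_i-\tilde{\tau}_i)(\tau_j-\tilde{\tau}_j)K_{n,ij}$ appearing in \eqref{eq:bias_HT} subtracts $\tfrac{2}{n}\sum b_ib_jK_{n,ij}$, forcing the quadratic term to enter with coefficient $-1$ rather than $+1$ (the scalar identity is $(a+b)^2=a^2-b^2+2(a+b)b$, not $a^2+b^2+2(a+b)b$). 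As written, your regrouping does not close: the expansion and the displayed $R_{\textup{ht}}$ differ by $\tfrac{2}{n}\sum_{i,j}(\tau_i-\tilde{\tau}_i)(\tau_j-\tilde{\tau}_j)K_{n,ij}$, which is not $o_{\mathbb{P}}(1)$ in general. (The paper's own proof asserts the same decomposition ``by direct algebra,'' so the discrepancy originates there, but a complete argument must either keep the cross term in the $(\Delta_i-\tau_i)$ form or flip the sign on the quadratic term.) Everything else in your plan --- the residual identity, the $n^{-1}I$ bread, and the crude bound on the $c$-terms --- is sound.
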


\begin{proof}[Proof of Theorem \ref{thm:WLS HT}]
\textbf{Formula of $\hat{\sigma}^2_\text{ht}(t,t')$.}
% Define ${1}(t)$ as the realized value of $\{{1}_i(t): t\in\mathcal{T}\}$ where $T_i=t$.
% We can express the Horvitz–-Thompson estimator as $\hat{\tau}_{\text{ht}}(t,t') = \hat{\beta}_{\text{ht}}(t) - \hat{\beta}_{\text{ht}}(t') = ({1}(t) - {1}(t'))^{\top}\hat{\beta}_{\text{ht}}$. 
By direct algebra, 
the HAC variance for $\hat{\tau}_{\text{ht}}(t,t')$ equals
\begin{align*}
\hat{\sigma}^2_\text{ht}(t, t^{\prime}) 
% =  ({1}(t) - {1}(t'))^{\top} 
% \hat{ {V}}_\text{ht}
% ({1}(t) - {1}(t'))\\
% =&  \frac{\sum_{\ell_{ {A}}(i,j)\le b_n} \tilde{w}_ie_i{1}_i(t)\tilde{w}_je_j{1}_j(t) }{(\sum_i \tilde{w}_i {1}_i(t) )^2}
% + \frac{\sum_{\ell_{ {A}}(i,j)\le b_n} \tilde{w}_ie_i{1}_i(t')\tilde{w}_je_j{1}_j(t') }{(\sum_i \tilde{w}_i {1}_i(t') )^2} \\
% & - \frac{\sum_{\ell_{ {A}}(i,j)\le b_n} \tilde{w}_ie_i{1}_i(t)\tilde{w}_je_j{1}_j(t') }{\sum_i \tilde{w}_i {1}_i(t) \sum_i \tilde{w}_i {1}_i(t')} 
% - \frac{\sum_{\ell_{ {A}}(i,j)\le b_n} \tilde{w}_ie_i{1}_i(t')\tilde{w}_je_j{1}_j(t) }{\sum_i \tilde{w}_i {1}_i(t') \sum_i \tilde{w}_i {1}_i(t)} \\
% =& \sum_{i=1}^n \sum_{j=1}^n \left( \frac{\tilde{w}_i {1}_i(t)}{\sum_i \tilde{w}_i {1}_i(t)}   - \frac{\tilde{w}_i {1}_i(t')}{\sum_i \tilde{w}_i {1}_i(t')} \right) e_i \left( \frac{\tilde{w}_j {1}_j(t)}{\sum_j \tilde{w}_j {1}_j(t)}   - \frac{\tilde{w}_j {1}_j(t')}{\sum_j \tilde{w}_j {1}_j(t')} \right) e_j  {1}(\ell_{ {A}}(i,j)\le b_n) \\
= \frac{1}{n} \sum_{i=1}^n \sum_{j=1}^n 
\left( \frac{ {1}_i(t)}{\pi_i(t)\hat{1}_{\text{ht}}(t)} - \frac{ {1}_i(t')}{\pi_i(t')\hat{1}_{\text{ht}}(t')}  \right) e_{\text{ht},i}  
\left( \frac{ {1}_j(t)}{\pi_j(t)\hat{1}_{\text{ht}}(t)} - \frac{ {1}_j(t')}{\pi_j(t')\hat{1}_{\text{ht}}(t')}  \right) e_{\text{ht},j} 
K_n(i,j), 
\end{align*}
where 
\begin{align*}
\left( \frac{ {1}_i(t)}{\pi_i(t)\hat{1}_{\text{ht}}(t)} - \frac{ {1}_i(t')}{\pi_i(t')\hat{1}_{\text{ht}}(t')}  \right) e_{\text{ht},i} 
% =& \left( \frac{ {1}_i(t)}{\pi_i(t)} - \frac{{1}_i(t')}{\pi_i(t')} \right) Y_i 
% - \left( \frac{\frac{ {1}_i(t)}{\pi_i(t)}\frac{1}{n} \sum_{i=1}^n \frac{ Y_i {1}_i(t)}{\pi_i(t)} }{\hat{1}_{\text{ht}}(t)}  - \frac{\frac{ {1}_i(t')}{\pi_i(t')}\frac{1}{n} \sum_{i=1}^n \frac{ Y_i {1}_i(t')}{\pi_i(t')} }{\hat{1}_{\text{ht}}(t')}   \right) \\
% =&  \frac{ {1}_i(t)}{\pi_i(t)}(Y_i - \hat{\beta}_{\textup{haj}}(t)) - \frac{ {1}_i(t')}{\pi_i(t')} (Y_i - \hat{\beta}_{\textup{haj}}(t')) \\
=& \Delta_i(t,t') 
- \left(\frac{ {1}_i(t)}{\pi_i(t)} \hat{\beta}_{\textup{haj}}(t)
- \frac{ {1}_i(t')}{\pi_i(t')} \hat{\beta}_{\textup{haj}}(t') \right).
% =& \Delta_i(t,t') - 
% \left( \frac{\frac{ {1}_i(t)}{\pi_i(t)}\frac{1}{n} \sum_i \frac{ Y_i {1}_i(t)}{\pi_i(t)} }{\hat{1}_{\text{ht}}(t)} 
% - \frac{\frac{ {1}_i(t')}{\pi_i(t')}\frac{1}{n} \sum_i \frac{ Y_i {1}_i(t')}{\pi_i(t')} }{\hat{1}_{\text{ht}}(t')} \right) 
% % \equiv & \Delta_i(t,t') - \tilde{\tau}_i(t,t')
\end{align*} 

\noindent\textbf{Bias of $\hat{\sigma}^2_\text{ht}(t,t')$.}
By direct algebra, we have
\begin{align*}
\hat{\sigma}^2_\text{ht}(t,t')
&=  {\sigma}^2_*(t,t') 
+ \frac{1}{n} \sum_{i=1}^n \sum_{j=1}^n
\left( \tau_i(t,t') - \tilde{\tau}_i(t,t') \right) 
\left( \tau_j(t,t') - \tilde{\tau}_j(t,t')  \right) 
K_n(i,j) \\
+& \frac{2}{n} \sum_{i=1}^n \sum_{j=1}^n
\left( \Delta_i(t,t') - \tilde{\tau}_i(t,t') \right) 
\left({\tau}_j(t,t') - \tilde{\tau}_j(t,t')  \right) 
K_n(i,j)\\
+& \frac{1}{n}  \sum_{i=1}^n \sum_{j=1}^n 
\left[
\begin{array}{c}
\left( 
\frac{{1}_i(t) (\mu(t)- \hat{\beta}_{\textup{haj}}(t))}{\pi_i(t)}  -\frac{{1}_i(t')(\mu(t')- \hat{\beta}_{\textup{haj}}(t'))}{\pi_i(t')} 
+ 2 \left( \Delta_i(t,t') - \tilde{\tau}_i(t,t')
\right)  \right)  \\
\left( 
\frac{{1}_j(t)(\mu(t)- \hat{\beta}_{\textup{haj}}(t))}{\pi_j(t)}  -\frac{{1}_j(t')(\mu(t')- \hat{\beta}_{\textup{haj}}(t'))}{\pi_j(t')} 
\right)
\end{array}
\right]
K_n(i,j). 
\end{align*}
Since $\hat{\beta}_{\textup{haj}}(t) - \mu(t) = O_{\mathbb{P}}(n^{-1/2})$ for all $t\in\mathcal{T}$ by Lemma \ref{lemma: beta_haj}, and $Y_i$ and ${1}_i(t)\pi_i(t)^{-1}$ are uniformly bounded by Assumptions \ref{asu2} and \ref{asu3}, then for some $C >0$ and any $n$, we have
\begin{align*}
& \left| 
\frac{1}{n} \sum_{i=1}^n \sum_{j=1}^n  
\left[
\begin{array}{c}
\left( 
\frac{{1}_i(t) (\mu(t)- \hat{\beta}_{\textup{haj}}(t))}{\pi_i(t)}  -\frac{{1}_i(t')(\mu(t')- \hat{\beta}_{\textup{haj}}(t'))}{\pi_i(t')} 
+ 2 \left( \Delta_i(t,t') - \tilde{\tau}_i(t,t') \right) \right)  \\
\left( 
\frac{{1}_j(t)(\mu(t)- \hat{\beta}_{\textup{haj}}(t))}{\pi_j(t)}  -\frac{{1}_j(t')(\mu(t')- \hat{\beta}_{\textup{haj}}(t'))}{\pi_j(t')} 
\right)
\end{array}
\right] K_n(i,j)
\right| \\
& \le 
C \left[
| \mu(t) - \hat{\beta}_{\textup{haj}}(t) | 
+ | \mu(t') - \hat{\beta}_{\textup{haj}}(t') |   
\right]
\frac{1}{n} \sum_{i=1}^n \sum_{j=1}^n
K_n(i,j) \\
&= o_\mathbb{P}(1).
% +& \frac{2}{n} \sum_{i=1}^n \sum_{j=1}^n 
% \left(
% \Delta_i(t,t')-\check{\tau}_i(t, t^{\prime}))
% \right) 
% \left(
% \check{\tau}_j(t,t')-\tilde{\tau}_j(t, t^{\prime}))
% \right)
% {1}(\ell_{ {A}}(i, j) \leq b_n)\\
% =& \frac{1}{n} \sum_{\ell_{ {A}}(i, j) \leq b_n}
% \left(
% \frac{ {1}_i(t)}{\pi_i(t)} (\mu(t) - \hat{\beta}_{\textup{haj}}(t) )
% - \frac{ {1}_i(t')}{\pi_i(t')} (\mu(t') - \hat{\beta}_{\textup{haj}}(t')  )
% \right)
% \left(
% \frac{ {1}_j(t)}{\pi_j(t)} (\mu(t) - \hat{\beta}_{\textup{haj}}(t) )
% - \frac{ {1}_j(t')}{\pi_j(t')} (\mu(t') - \hat{\beta}_{\textup{haj}}(t')  )
% \right) \\
% +& \frac{2}{n} \sum_{i=1}^n \sum_{j=1}^n 
% \left(
% \Delta_i(t,t')-\check{\tau}_i(t, t^{\prime}))
% \right) 
% \left(
% \frac{{1}_j(t)}{\pi_j(t)} (\mu(t) - \hat{\beta}_{\textup{haj}}(t) )
% - \frac{ {1}_j(t')}{\pi_j(t')} (\mu(t') - \hat{\beta}_{\textup{haj}}(t')  )
% \right)
% {1}(\ell_{ {A}}(i, j) \leq b_n)\\
% =& o_\mathbb{P}(1)
\end{align*}
The last line follows by Assumption \ref{asu7}(b).
Thus, we complete the proof.  
\end{proof}

We do not provide modifications to the HAC covariance estimator based on the WLS fit in \eqref{eq:HT_WLS} for two reasons. First, \eqref{eq:HT_WLS} requires transformations in both its weighting and outcome. Second, the asymptotic bias term from the regression-based variance estimator is not quadratic, as demonstrated in \eqref{eq:bias_HT}. This characteristic indicates that issues of anti-conservativeness cannot be resolved even after applying the modified kernel.
More importantly, the HAC variance estimator is
not guaranteed to be exact for inference even if the individual effects are constant.

\subsubsection{Modification of variance estimation in \cite{Leung2022}}
\cite{Leung2022} proposed the following variance estimator,
\[
\hat{\sigma}^2(t, t^{\prime})
=\frac{1}{n} \sum_{i=1}^n \sum_{j=1}^n
\left( \Delta_i(t,t')-\hat{\tau}(t, t^{\prime}) \right)
\left( \Delta_j(t,t')-\hat{\tau}(t, t^{\prime}) \right) 
{1}(\ell_{ {A}}(i, j) \leq b_n).
\]
Theorem 4 in \cite{Leung2022} establishes that 
\begin{align*}
% & \hat{\sigma}^2_*(t, t^{\prime})
% = \operatorname{Var}\left(\sqrt{n} \hat{\tau}_\text{ht}(t, t^{\prime})\right) + o_\mathbb{P}(1), \\
& \hat{\sigma}^2(t, t^{\prime})
= {\sigma}^2_*(t, t^{\prime}) + R_n(t, t^{\prime}) + o_\mathbb{P}(1), 
\end{align*}
where ${\sigma}^2_*(t, t^{\prime})$ is defined in \eqref{eq:sigma_ht} and 
\begin{align*}
% \hat{\sigma}^2_*(t, t^{\prime})
% =& \frac{1}{n} \sum_{i=1}^n \sum_{j=1}^n
% \left(\Delta_i(t,t')-\tau_i(t, t^{\prime})\right)
% \left(\Delta_j(t,t')-\tau_j(t, t^{\prime})\right) 
% {1}(\ell_{ {A}}(i, j) \leq b_n),  \\
R_n(t, t^{\prime})
=& \frac{1}{n} \sum_{i=1}^n \sum_{j=1}^n
\left(\tau_i(t, t^{\prime})-\tau(t, t^{\prime})\right)
\left(\tau_j(t, t^{\prime})-\tau(t, t^{\prime})\right) 
{1}(\ell_{ {A}}(i, j) \leq b_n).   
\end{align*}
% The truncated kernel is not guaranteed to be positive semi-definite, so this variance is not guaranteed to be positive  and conservative. In this subsection, we propose an adjustment to \cite{Leung2022}'s variance estimator. 
The bias $R_n(t, t^{\prime})$ is equivalent to the $(t,t')$th element of $R_{\textup{haj}}$, as defined in Theorem \ref{thm:Hájek_bias}. The variance estimation is asymptotically exact with constant individual-level exposure effects.
The variance estimator proposed by \cite{Leung2022} is not guaranteed to be conservative.
% % We define the weight matrix as $ {K}_n$ with the $(i,j)$th entry being ${1}(\ell_{ {A}}(i, j) \leq b_n)$.
% Let $Q_n \Lambda_n Q_n^{\top}$ be the eigen-decomposition of $ {K}_n$ (since $ {K}_n$ is symmetry all its eigenvalues are real). 
% % Also let $\underline{\lambda}(A)$ denote the smallest eigenvalue of $A$, e.g., $\underline{\lambda}\left(\hat{\Sigma}_n\right)=\min _{1 \leq k \leq v} \Lambda_n$. 
% % Consider a sequence of small positive real numbers $c_n \searrow 0$. 
% We define
% \[
%  {K}_n^{+}:= Q_n\left(\Lambda_n \vee 0 \right) Q_n^{\top}
% =  {K}_n
% +  {K}_n^{-},
% \]
% where the maximum is taken element-wise. By construction, the matrix $ {K}_n^{+}$ is positive definite, and we denote the $(i,j)$th entry of $Q_n \min\{ \Lambda_n, 0\} Q_n^{\top}$ as ${1}^{-}\left\{\ell_{ {A}}(i, j) \leq b_n\right\}$. 
% % Moreover, in the case when the smallest eigenvalue of $\Sigma_n$ is bounded from below
% Define $\mathbf{\Delta}(t,t')$ as the stacked vector of $\Delta_i(t,t')$ and $ \tau(t,t') $ as the stacked vector of $\tau_i(t, t^{\prime}))$. 
Define ${\Delta}(t,t')$ as the stacked vector of $\Delta_i(t,t')$ and $\bar{\tau}(t,t')$ as the stacked vector of $\tau_i(t,t')-\tau(t,t')$.
The variance estimator in \cite{Leung2022} can be represented as
\[
\hat{\sigma}^2(t,t') 
= n^{-1} 
( {\Delta}(t,t')-\hat{\tau}(t, t^{\prime}))^{\top}  
{K}_n 
( {\Delta}(t,t')-\hat{\tau}(t, t^{\prime})).
\]
We propose the adjusted variance estimator as
% \[
% \widehat{ {V}}^+_\text{ht}
% = ( {Z}^{\top}\tilde{ {W}}  {Z})^{-1}
% ( {Z}^{\top}\tilde{ {W}}  {e}_\text{ht}  {K}_n  {e}_\text{ht}^{\top} \tilde{ {W}}  {Z})
% ( {Z}^{\top}\tilde{ {W}}  {Z})^{-1} 
% \]
\[
\hat{\sigma}^{2,+}(t,t') 
= n^{-1} 
( {\Delta}(t,t')-\hat{\tau}(t, t^{\prime}))^{\top} 
{K}_n^{+} 
( {\Delta}(t,t')-\hat{\tau}(t, t^{\prime})).
\]
% \begin{assumption}[xxx]\label{asu8}
% $ {K}_n^{-}$ has the same growing rate as $ {K}_n$.
% \end{assumption}
\begin{theorem} \label{thm:mofidy_ht}
Define
\begin{align*}
R_{n}^+(t,t')
=& n^{-1} 
\overline{ \tau}(t,t')^{\top} 
{K}_n^{+} 
\overline{ \tau}(t,t') 
+ n^{-1}
( {\Delta}(t,t')- \tau(t,t') )^{\top} 
{K}_n^{-} 
( {\Delta}(t,t')-  \tau(t,t') ).    
\end{align*}
Under Assumptions \ref{asu1}-\ref{asu4} and \ref{asu8},
we have
\[
\hat{\sigma}^{2,+}(t,t')
= {\sigma}^2_{*}(t,t') + R_{n}^+(t,t')+o_\mathbb{P}(1),
\]
where ${\sigma}^2_{*}(t,t')$ is defined in \eqref{eq:sigma_ht}.
% which gaurantees $\hat{\Sigma}_{+}^{2}(t,t')$ is asymptotically conservative.
\end{theorem}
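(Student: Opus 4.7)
The plan is to adapt the proof of Theorem \ref{thm:Hájek_adj} to the Horvitz--Thompson setting, exploiting the orthogonal decomposition $K_n^{+} = K_n + K_n^{-}$ to separate the analysis into (i) exactly the Leung (2022) estimator $\hat\sigma^{2}(t,t')$, whose bias is already characterized by his Theorem 4, and (ii) a correction driven by $K_n^{-}$ that is governed by the dedicated bandwidth conditions in Assumption \ref{asu8}.

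First I would write $\hat\sigma^{2,+}(t,t') = \hat\sigma^{2}(t,t') + n^{-1}\bigl(\Delta(t,t') - \hat\tau(t,t')\iota\bigr)^{\top} K_n^{-} \bigl(\Delta(t,t') - \hat\tau(t,t')\iota\bigr)$ using $K_n^{+} = K_n + K_n^{-}$. The first summand is precisely Leung's variance estimator, so his Theorem 4 (quoted in the excerpt) immediately delivers $\hat\sigma^{2}(t,t') = \sigma^{2}_{*}(t,t') + n^{-1}\bar\tau^{\top} K_n \bar\tau + o_{\mathbb{P}}(1)$.

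For the $K_n^{-}$ summand I would replace the random centering $\hat\tau(t,t')$ by the deterministic $\tau(t,t')$. Writing $\Delta - \hat\tau\iota = (\Delta - \tau\iota) - (\hat\tau - \tau)\iota$ and expanding the quadratic form produces two remainder pieces, $(\hat\tau - \tau)^{2}\,n^{-1}\iota^{\top} K_n^{-}\iota$ and $2(\hat\tau - \tau)\,n^{-1}\iota^{\top} K_n^{-}(\Delta - \tau\iota)$. The rate $\hat\tau - \tau = O_{\mathbb{P}}(n^{-1/2})$ (Theorem 3 of Leung, 2022), the uniform bound on $|\Delta_i(t,t')|$ from Assumptions \ref{asu2}--\ref{asu3}, and the inequality $\sum_{i,j}|K_n^{-}(i,j)| = n\, M_n^{-}(b_n,1) = o(n^{3/2})$ from Assumption \ref{asu8}(b) together imply that both remainders are $o_{\mathbb{P}}(1)$; hence the $K_n^{-}$ summand equals $n^{-1}(\Delta - \tau\iota)^{\top} K_n^{-}(\Delta - \tau\iota) + o_{\mathbb{P}}(1)$.

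Summing the two contributions and using $K_n^{+} = K_n + K_n^{-}$ to repackage the $\bar\tau$-quadratic form yields $\hat\sigma^{2,+}(t,t') = \sigma^{2}_{*}(t,t') + R_n^{+}(t,t') + o_{\mathbb{P}}(1)$, modulo mean-zero cross-term pieces of the form $n^{-1}\sum_{i,j}(\Delta_i - \tau_i)(\tau_j - \tau)\,K_n^{\diamond}(i,j)$ with $\diamond\in\{+,-\}$. Controlling these cross terms is the principal obstacle and is handled exactly as in Leung's proof of Theorem 4 (for $\diamond = +$, where $K_n^{\diamond} = K_n$) and in the proof of Theorem \ref{thm:Hájek_adj} (for $\diamond = -$): each variance reduces to a quadruple sum on the network whose size is governed by $|\mathcal{J}_n(s,b_n)|$ or $|\mathcal{J}_n^{-}(s,b_n)|$, and, combined with ANI-based covariance bounds on $\operatorname{Cov}(\Delta_i(t,t') - \tau_i(t,t'),\,\Delta_k(t,t') - \tau_k(t,t'))$, Assumptions \ref{asu7}(d) and \ref{asu8}(d) deliver the required $o_{\mathbb{P}}(1)$ rate. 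All remaining steps are routine algebra already rehearsed in the proofs of Theorems \ref{thm:Hájek_bias}--\ref{thm:Hájek_adj}.
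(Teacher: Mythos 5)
Your proposal is correct and follows essentially the same route as the paper: split $\hat\sigma^{2,+}$ via $K_n^{+}=K_n+K_n^{-}$, absorb the $K_n$ piece through Leung's Theorem 4, and re-run that theorem's decomposition on the $K_n^{-}$ piece with Assumption \ref{asu8} replacing Assumption \ref{asu7}, so that the $n^{-1}\bar\tau^{\top}K_n^{-}\bar\tau$ terms recombine into $n^{-1}\bar\tau^{\top}K_n^{+}\bar\tau$. Your extra intermediate step of first recentering at the scalar $\tau(t,t')$ before splitting off the $\tau_i(t,t')-\tau(t,t')$ component, and your explicit bounds via $M_n^{-}(b_n,1)$ and $\mathcal{J}_n^{-}(s,b_n)$, only make more detailed what the paper invokes wholesale.
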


% \section{Horvitz–-Thompson estimator}

\begin{proof}[Proof of Theorem \ref{thm:mofidy_ht}]
% [Proof of Theorem \ref{thm:mofidy_ht}] 
% We define the weight matrix as $ {K}_n$ with the $(i,j)$th entry being ${1}(\ell_{ {A}}(i, j) \leq b_n)$.
% Let $Q_n \Lambda_n Q_n^{\top}$ be the eigen-decomposition of $ {K}_n$ (since $ {K}_n$ is symmetry all its eigenvalues are real). 
% % Also let $\underline{\lambda}(A)$ denote the smallest eigenvalue of $A$, e.g., $\underline{\lambda}\left(\hat{\Sigma}_n\right)=\min _{1 \leq k \leq v} \Lambda_n$. 
% % Consider a sequence of small positive real numbers $c_n \searrow 0$. 
% We define
% \[
%  {K}_n^{+}:= Q_n\left(\Lambda_n \vee 0 \right) Q_n^{\top}
% =  {K}_n
% +  {K}_n^{-},
% \]
% where the maximum is taken element-wise. By construction, the matrix $ {K}_n^{+}$ is positive definite, and we denote the $(i,j)$th entry of $Q_n \min\{ \Lambda_n, 0\} Q_n^{\top}$ as ${1}^{-}\left\{\ell_{ {A}}(i, j) \leq b_n\right\}$. 
% % Moreover, in the case when the smallest eigenvalue of $\Sigma_n$ is bounded from below
% Define $\mathbf{Z}$ as the stacked vector of $Z_i$ and $ \tau $ as the stacked vector of $\tau_i(t, t^{\prime}))$. 

By direct algebra, 
\begin{align*}
\hat{\sigma}^{2,+}(t,t')
% =& n^{-1}
% \left(\mathbf{\Delta}(t,t')-\hat{\tau}(t, t^{\prime})\right)^{\top} 
%  {K}_n^{+} 
% \left(\mathbf{\Delta}(t,t')-\hat{\tau}(t, t^{\prime})\right) \\
=& \hat{\sigma}^{2}(t,t')
+ n^{-1}
\left( {\Delta}(t,t')-\hat{\tau}(t, t^{\prime})\right)^{\top} 
{K}_n^{-} 
\left( {\Delta}(t,t')-\hat{\tau}(t, t^{\prime})\right) \\
=& {\sigma}^{2}_{*}(t,t') + R_n(t,t')
+ n^{-1}
\left( {\Delta}(t,t')-\hat{\tau}(t, t^{\prime})\right)^{\top} 
{K}_n^{-} 
\left( {\Delta}(t,t')-\hat{\tau}(t, t^{\prime})\right)
+ o_\mathbb{P}(1) \\
=& {\sigma}^{2}_{*}(t,t') 
+ n^{-1}
\overline{ \tau}(t,t')^{\top} 
{K}_n^{+} 
\overline{ \tau}(t,t') 
- n^{-1} 
\overline{ \tau}(t,t')^{\top} 
{K}_n^{-} 
\overline{ \tau}(t,t') \\
&+ n^{-1} 
\left( {\Delta}(t,t')-\hat{\tau}(t, t^{\prime})\right)^{\top} 
{K}_n^{-} 
\left( {\Delta}(t,t')-\hat{\tau}(t, t^{\prime})\right)
+ o_\mathbb{P}(1),
\end{align*}
where the first and third lines hold by the definition of $K_n^+$, and the second line holds by Theorem 4 in \cite{Leung2022}.
Applying the proof of Theorem 4 in \cite{Leung2022} but replacing Assumption \ref{asu7} with Assumption \ref{asu8}, we have
\begin{align*}
& n^{-1} 
\left( {\Delta}(t,t')-\hat{\tau}(t, t^{\prime})\right)^{\top} 
{K}_n^{-} 
\left( {\Delta}(t,t')-\hat{\tau}(t, t^{\prime})\right) \\
=& n^{-1}
\left( {\Delta}(t,t')- \tau(t,t') \right)^{\top} 
{K}_n^{-} 
\left( {\Delta}(t,t')-  \tau(t,t') \right)
+ n^{-1}
\overline{ \tau}(t,t')^{\top} 
{K}_n^{-} 
\overline{ \tau}(t,t')
+ o_\mathbb{P}(1).
\end{align*}
% The details of the proof are omitted. 
Therefore, we complete the proof. 
% \begin{align*}
% \hat{\sigma}^{2,+}(t,t')
% = \hat{\sigma}^2_{*}(t,t') 
% + n^{-1} 
% \overline{ \tau}(t,t')^{\top} 
%  {K}_n^{+} 
% \overline{ \tau}(t,t')
% + n^{-1} 
% \left( {\Delta}(t,t')- \tau(t,t') \right)^{\top} 
%  {K}_n^{-} 
% \left( {\Delta}(t,t')-  \tau(t,t') \right)
% + o_\mathbb{P}(1).
% \end{align*}
\end{proof}

% \subsection{Compare Variance with \cite{Leung2022}}
% In this subsection, we compare the bias term of our variance estimator with that of \cite{Leung2022}. 
% \begin{align*}
% R_{\text{ht}}(t,t')
% &= \frac{1}{n} \sum_{\ell_{ {A}}(i, j) \leq b_n}
% \left[
% \begin{array}{c}
% \left(\tau_i(t,t')- (\frac{{1}_i(t)}{\pi_i(t)} \mu(t) -\frac{{1}_i(t')}{\pi_i(t')} \mu(t')) + 2 ( \Delta_i(t,t') - (\frac{{1}_i(t)}{\pi_i(t)} \mu(t) -\frac{{1}_i(t')}{\pi_i(t')} \mu(t'))  )  \right) \\
% \left({\tau}_j(t,t') - \left(\frac{{1}_j(t)}{\pi_j(t)} \mu(t) -\frac{{1}_j(t')}{\pi_j(t')} \mu(t') \right)  \right) 
% \end{array}
% \right] \\
% R_n(t, t^{\prime})
% &= \frac{1}{n} \sum_{i=1}^n \sum_{j=1}^n
% \left(\tau_i(t, t^{\prime})-\tau(t, t^{\prime})\right)
% \left(\tau_j(t, t^{\prime})-\tau(t, t^{\prime})\right) 
% {1}(\ell_{ {A}}(i, j) \leq b_n)
% \end{align*}

% \subsection{Simulations}
% Please add the following required packages to your document preamble:
% \usepackage[table,xcdraw]{xcolor}
% If you use beamer only pass "xcolor=table" option, i.e. \documentclass[xcolor=table]{beamer}

In Table \ref{table: HT RGG}, we present the results under the simulation design of \cite{Leung2022}.
We report the ``Estimand'' as the average of the Horvitz--Thompson estimator $\hat{\tau}_\text{ht}(1,0)$ over $10,000$ simulation draws. We report the ``Oracle SE,'' denoted by $\operatorname{Var}(\hat{\tau}_\text{ht}(1,0))^{1/2}$, as the standard error of $\hat{\tau}_\text{ht}(1,0)$ over the same $10,000$ simulation draws.
We report the ``Estimate'' as the average of $\hat{\tau}_\text{ht}(1,0)$ over another $10,000$ simulation draws.
We present the coverage of the standard error under ``Oracle Coverage.''
We present \cite{Leung2022}'s standard error and the corresponding coverage in the ``Leung SE'' and ``Leung Coverage'' rows.
We present our adjusted standard error and the corresponding coverage in the ``Leung$^+$ SE'' and ``Leung$^+$ Coverage'' rows. 
% The remaining rows correspond to the results documented in Table 3 of \cite{Leung2022}. 
We can see that \cite{Leung2022}'s standard error can be anti-conservative and our adjusted standard error improves the empirical coverage rate.   

% We report the estimands $\tau(1,0)$, approximated by the Horvitz–-Thompson estimator $\hat{\tau}_\text{ht}(1,0)
% = n^{-1} \sum_{i=1}^n ( {1}_i(1)\pi_i(1)^{-1} - {1}_i(0)\pi_i(0)^{-1} )Y_i$ over 10,000 simulation draws.
% We report ``Oracle SE", denoted by $\operatorname{Var}(\hat{\tau}(1,0))^{1/2}$, which are calculated as the standard deviation of the point estimators from corresponding WLS fits over 10,000 simulation draws.
% We proesent the HAC standard errors obtained from each WLS fit under ``WLS-based SE", and the corresponding adjusted HAC standard errors under ``WLS Adjusted SE". We report the standard errors assuming no interference under ``Naïve SE"  to illustrate the degree of dependence in the data.
% We also report the empirical coverage of 95\% CIs in the “Coverage” rows for the corresponding standard errors. 
% We report the effective sample size of $\hat{\mu}(t)$ as $\hat{n}(t)=\sum_i {1}_i(t)$.
% The ``$b_n$" reports the bandwidth in \eqref{eq:bandwidth} with $K=1$. 

\begin{table}[tbp]
\centering
\caption{\label{table: HT RGG}Horvitz--Thompson Estimation}
\begin{tabular}{lcccccc}
\hline 
\hline 
% \multicolumn{7}{c}{SIMULATION RESULTS: RGG MODEL}  \\ 
% \hline
% & \multicolumn{6}{c}{\# Schools} \\
% \cline{2-7}
Outcome models & \multicolumn{3}{c}{Linear-in-Means}   
& \multicolumn{3}{c}{Complex Contagion}   \\
% \cline{2-4} \cline{5-7}
\hline
\# Schools & \multicolumn{1}{c}{1}  & \multicolumn{1}{c}{2} & \multicolumn{1}{c}{4} & \multicolumn{1}{c}{1} & \multicolumn{1}{c}{2} & \multicolumn{1}{c}{4} \\ 
\hline
% $\hat{n}(1)$ & 226.412  & 426.04 & 845.475 & 226.412 & 426.04 & 845.475 \\
% $\hat{n}(0)$ & 169.588  & 295.96 & 598.525 & 169.588 & 295.96 & 598.525 \\
Estimand	&	0.678	&	0.710	&	0.698	&	0.027	&	0.019	&	0.024	\\	
Estimate	&	0.649	&	0.709	&	0.677	&	0.027	&	0.020	&	0.025	\\	
Oracle	SE	&	1.920	&	1.380	&	0.927	&	0.141	&	0.112	&	0.082	\\
Leung	SE	&	1.814	&	1.335	&	0.916	&	0.137	&	0.109	&	0.081	\\
Leung$^+$	SE	&	1.901	&	1.491	&	1.016	&	0.147	&	0.124	&	0.092	\\
Oracle	Coverage	&	0.950	&	0.952	&	0.949	&	0.952	&	0.952	&	0.951	\\
Leung	Coverage	&	0.932	&	0.934	&	0.942	&	0.937	&	0.937	&	0.944	\\
Leung$^+$	Coverage	&	0.946	&	0.963	&	0.965	&	0.954	&	0.964	&	0.971	\\													
% Naïve Coverage & 0.4886 & 0.5286 & 0.5322 & 0.6612 & 0.6646 & 0.6594 \\
% $b_n$ & 2 & 3 & 3 & 2 & 3 & 3 \\
% APL & 14.2916 & 18.2498 & 24.8087 & 14.2916  & 18.2498 & 24.8087 \\
% network size & {805} & 1456 & 2725 & 805 & 1456 & 2725 \\ 
\hline
\hline 
\end{tabular}
\end{table}

\subsubsection{Compare the Horvitz--Thompson and Hájek Estimators}\label{sec:compare_HT_Hajek}
In this subsection, we provide a brief discussion on the efficiency comparison between the Horvitz--Thompson and Hájek estimators. By considering a special case with $G=(1,-1)$, we compare the oracle variances of the two estimators.
Define
\begin{align*}
{\Delta}_{\textup{haj},i}(t, t^{\prime}) 
=& 1_i(t)\pi_i(t)^{-1}(Y_i - \mu(t)) - (\mu_i(t) - \mu(t) )
- 1_i(t')\pi_i(t')^{-1}(Y_i - \mu(t')) - (\mu_i(t') - \mu(t') ) \\
=& \Delta_i(t,t') - \tau_i(t, t^{\prime}) 
- \left( {1}_i(t)\pi_i(t)^{-1} \mu(t) - {1}_i(t')\pi_i(t')^{-1} \mu(t) 
- (\mu(t) - \mu(t') ) \right).
\end{align*}
Then, by Theorems \ref{thm:Hájek_bias} and \ref{thm:WLS HT asm}, the oracle variances of the Horvitz--Thompson and Hájek estimators can be related as follows:
\begin{align*}
& {\sigma}^2_{*}(t, t^{\prime})
= \frac{1}{n} \sum_{i=1}^n \sum_{j=1}^n
\left( \Delta_i(t,t') - \tau_i(t, t^{\prime}) \right) 
\left( \Delta_j(t,t') - \tau_j(t, t^{\prime}) \right)
{1}(\ell_{ {A}}(i, j) \leq b_n) \\
&= \frac{1}{n} \sum_{i=1}^n \sum_{j=1}^n
\left( {\Delta}_{\textup{haj},i}(t, t^{\prime}) + \left( \tfrac{{1}_i(t)}{\pi_i(t)} \mu(t) - \tfrac{{1}_i(t')}{\pi_i(t')} \mu(t') - (\mu(t) - \mu(t') ) \right) \right) \\
& \quad \cdot \left( {\Delta}_{\textup{haj},j}(t, t^{\prime}) + \left( \tfrac{{1}_j(t)}{\pi_j(t)} \mu(t) - \tfrac{{1}_j(t')}{\pi_j(t')} \mu(t') - (\mu(t) - \mu(t') ) \right) \right)
{1}(\ell_{ {A}}(i, j) \leq b_n) \\
% &= {\sigma}^2_{*,\textup{haj}}(t, t^{\prime}) 
% + \frac{1}{n} \sum_{i=1}^n \sum_{j=1}^n
% \left[
% \begin{array}{c}
% \left( \tfrac{{1}_i(t)}{\pi_i(t)} \mu(t)  - \tfrac{{1}_i(t')}{\pi_i(t')}\mu(t')  - (\mu(t) - \mu(t') ) \right) \\
% \left( \tfrac{{1}_j(t)}{\pi_j(t)} \mu(t)  - \tfrac{{1}_j(t')}{\pi_j(t')}\mu(t')  - (\mu(t) - \mu(t') ) \right)   
% \end{array}
% \right]
% {1}(\ell_{ {A}}(i, j) \leq b_n) \\
% &\quad + 2 \frac{1}{n} \sum_{i=1}^n \sum_{j=1}^n
% {\Delta}_{\textup{haj},i}(t, t^{\prime})
% \left( \frac{{1}_j(t)}{\pi_j(t)} \mu(t) - \frac{{1}_j(t')}{\pi_j(t')} \mu(t') - (\mu(t) - \mu(t') ) \right)
% {1}(\ell_{ {A}}(i, j) \leq b_n) \\
&= {\sigma}^2_{*,\textup{haj}}(t, t^{\prime}) 
+ 2 \frac{1}{n} \sum_{i=1}^n \sum_{j=1}^n
\left( \Delta_i(t,t') - \tau_i(t, t^{\prime}) \right)
\left( \tfrac{{1}_j(t)}{\pi_j(t)} \mu(t) - \tfrac{{1}_j(t')}{\pi_j(t')} \mu(t') - (\mu(t) - \mu(t') ) \right)
{1}(\ell_{ {A}}(i, j) \leq b_n) \\
&\quad - \frac{1}{n} \sum_{i=1}^n \sum_{j=1}^n
\left[
\begin{array}{c}
\left( \tfrac{{1}_i(t)}{\pi_i(t)} \mu(t)  - \tfrac{{1}_i(t')}{\pi_i(t')}\mu(t')  - (\mu(t) - \mu(t') ) \right) \\
\left( \tfrac{{1}_j(t)}{\pi_j(t)} \mu(t)  - \tfrac{{1}_j(t')}{\pi_j(t')}\mu(t')  - (\mu(t) - \mu(t') ) \right)   
\end{array}
\right]
{1}(\ell_{{A}}(i, j) \leq b_n). 
\end{align*}
For the special case with no interference, this expression simplifies to:
\begin{align*}
{\sigma}^2_{*}(t, t^{\prime})
=& {\sigma}^2_{*,\textup{haj}}(t, t^{\prime})
+ 2 \frac{1}{n} \sum_{i=1}^n 
\left( \Delta_i(t,t') - \tau_i(t, t^{\prime}) \right)
\left( \tfrac{{1}_i(t)}{\pi_i(t)} \mu(t) - \tfrac{{1}_i(t')}{\pi_i(t')} \mu(t') - (\mu(t) - \mu(t') ) \right) \\ 
&- \frac{1}{n} \sum_{i=1}^n 
\left( \tfrac{{1}_i(t) }{\pi_i(t)} \mu(t) - \tfrac{{1}_i(t') }{\pi_i(t')} \mu(t') - (\mu(t) - \mu(t') ) \right)^2. 
\end{align*}
Although the Hájek estimator is often preferred in practice, it is challenging to precisely characterize the conditions under which it outperforms the Horvitz--Thompson estimator under the design-based framework. 
We discuss these conditions in some special cases.
The Horvitz--Thompson estimator tends to exhibit a smaller variance when the outcome variable $Y_i$ is proportional to the propensity score \citep{Fuller2011}.
\cite{SarndalSwenssonWretman2003} outline several situations in which the Hájek estimator is generally considered more efficient than the Horvitz--Thompson estimator: 
(1) the potential outcome is nearly constant; 
(2) the treatment groups are not balanced; 
(3) the propensity scores are weakly or negatively correlated with the potential outcomes.

The presence of interference further complicates the situation due to the additional dependence terms. 
Recall that $\tilde{A}_{i j} = A_{ij}/\sum_{j=1}^n A_{i j}$.
We consider the following outcome model without covariates:
\[
Y_i = \beta_0 + \beta_1 \sum_{j=1}^n \tilde{A}_{i j} D_j + \beta_2 D_i + \varepsilon_i,
\]
and so that $b_n=2$.
One scenario in which the Horvitz--Thompson estimator may be more efficient than the Hájek estimator is when the propensity scores are positively correlated for pairs $(i,j)$ satisfying ${1}(\ell_{ {A}}(i, j) \leq b_n)$, and when both $\beta_1$ and $\beta_2$ are negative.
We set $(\beta_0, \beta_1, \beta_2) = (1,-1,-1)$. 
We simulate ${A}$ from random geometric graph models with $A_{i j}={1}\left\{\left\|\rho_i-\rho_j\right\| \leq r_n\right\}$ for $\rho_i \stackrel{\text { IID }}{\sim} \text{Uniform}\left([0,1]^2\right)$ and $r_n=(\kappa /(\pi n))^2$ with $\kappa = 5$. 
The error term $\varepsilon_i$ is drawn from $\mathcal{N}(0,16)$.  
In the experimental design, units are randomly ordered and treatment is assigned sequentially. Each unit $i$ is initially assigned a baseline treatment probability $p_i$ drawn from $\text{Uniform}[0.2, 0.4]$.
To investigate the direct effect, we use the exposure mapping $T_i = D_i$.
Then for each unit $i$, the effective propensity score for unit $i$ is given by
\[
\pi_i(1) =
\begin{cases}
2 p_i, & \text{if at least one neighbor (processed before $i$) is treated}, \\
p_i,   & \text{otherwise}.
\end{cases}
\] 
This describes an experimental design in which the treatment probability for a unit is doubled if at least one of its neighbors (already processed) has been treated.
The result of Table \ref{table: HT} shows that the Horvitz--Thompson estimator can be more efficient than the Hájek estimator.
Nevertheless, the Hájek estimator outperforms the Horvitz--Thompson estimator in both our simulation and empirical studies.

\begin{table}[tbp]
\centering
\caption{Comparison of the Horvitz--Thompson and Hájek Estimators}
\label{table: HT}
\begin{tabular}{lcc}
\hline
\hline
% & \multicolumn{3}{c}{Design 1} & \multicolumn{3}{c}{Design 2} & \multicolumn{3}{c}{Design 3}\\
% \hline
Estimator  & \multicolumn{1}{c}{Hájek} & \multicolumn{1}{c}{Horvitz--Thompson} \\
\hline
% \multicolumn{4}{c}{Design 1} \\
% \hline
Estimand    & $-$1.067 & $-$1.065	\\
Oracle SE	& 0.307  & 0.301	\\
\hline
\hline
\end{tabular}
% \caption*{\footnotesize Note: Design 1 is no interference but with varying propensity scores, Design 2 is with interference and constant propensity score, and Design 3 is with interference and varying propensity scores.}
\end{table}

\section{Proofs}\label{sec:proof}
\subsection{Auxiliary results}\label{app:Auxiliary}
We first review the definition of weak network dependence in \cite{KojevnikovMarmerSong2021}. 
% \cite{Leung2022} simplied the notation with dimension $v=1$. [PD: dimension of what? $v$ appears later.]
% [PD: better to recall the definition of $ \mathcal{N}_n$]
% Recall that $\mathcal{N}_n = \{1,\ldots,n\}$ denotes the set of units.
For any $H, H^{\prime} \subseteq \mathcal{N}_n$, define $\ell_{ {A}}\left(H, H^{\prime}\right)=\min \left\{\ell_{ {A}}(i, j): i \in H, j \in H^{\prime}\right\}$ as the path distance between two subsets of units within network $A$. 
% [PD: Z for the treatment. does this notation bother you?]
For any random vector $U_i \in \mathbb{R}^v$, we denote its concatenation over $i \in H$  by $U_H=\left(U_i: i \in H\right)$. Let $\mathcal{L}_{v,a}$ be the set of bounded, $\mathbb{R}$-valued, Lipschitz functions on $\mathbb{R}^{v\times a}$. Denote by $\mathcal{P}_n (h, h^{\prime} ; s)$ the set of pairs $H, H' \subseteq \mathcal{N}_n$ with size $h$ and $h'$, respectively, such that the pairs are at least path distance $s$ apart:
$$
\mathcal{P}_n\left(h, h^{\prime} ; s\right)=\left\{\left(H, H^{\prime}\right): H, H^{\prime} \subseteq \mathcal{N}_n,|H|=h,\left|H^{\prime}\right|=h^{\prime}, \ell_{ {A}}\left(H, H^{\prime}\right) \geq s\right\}.
$$

% [PD: move $\|A\|_F$ to the notation section. also it should be $\|A\|_\textup{F}$]

% [PD: move it up to the notation section of the appendix]
% We introduce the definition of $\psi$-dependent from \cite{KojevnikovMarmerSong2021}.
\begin{definition}[$\psi$-dependent]
A triangular array $\{U_i \in \mathbb{R}^v \}_{i=1}^n$ is $\psi$-dependent if there exist (a) dependence coefficients $\{\tilde{\theta}_{n, s}\}_{s, n \in \mathbb{N}}$ that are uniformly bounded constants with $\tilde{\theta}_{n, 0}=1 $ for all $n$ such that $\sup _n \tilde{\theta}_{n, s} \rightarrow 0$ as $s \rightarrow \infty$, and (b) functionals $\left\{\psi_{h, h^{\prime}}(\cdot, \cdot)\right\}_{h, h^{\prime} \in \mathbb{N}}$ with $\psi_{h, h^{\prime}}: \mathcal{L}_{v,h} \times \mathcal{L}_{v,h^{\prime}} \rightarrow[0, \infty)$ such that
\begin{equation} \label{eq:dependent}
\left|\operatorname{Cov}\left(f\left( {U}_H\right), f^{\prime}\left( {U}_{H^{\prime}}\right)\right)\right| \leq \psi_{h, h^{\prime}}\left(f, f^{\prime}\right) \tilde{\theta}_{n, s}    
\end{equation}
for all $n, h, h^{\prime} \in \mathbb{N} ; s>0 ; f \in \mathcal{L}_{v,h} ; f^{\prime} \in \mathcal{L}_{v,h^{\prime}}$; and $\left(H, H^{\prime}\right) \in \mathcal{P}_n\left(h, h^{\prime} ; s\right)$.
\end{definition}
% Define $\hat{{1}}_\text{ht} = \text{diag}\{\hat{{1}}_\text{ht}(t)\}_{t\in\mathcal{T}}$. 

\begin{lemma}\label{lemma:KMS2021lemma}
\citep[][Lemma 2.1]{KojevnikovMarmerSong2021}
Consider an array $\{U_i  \in \mathbb{R}^v \}_{i=1}^n $. The array is $\psi$-dependent in that \eqref{eq:dependent} holds with the dependence coefficients $\{\tilde{\theta}_{n, s}\}_{s, n \in \mathbb{N}}$ that are uniformly bounded. For each $n \geq 1$, let $\{c_{i}\}_{i=1}^n$ be a sequence of vectors in $\mathbb{R}^v$ such that $\max _{i \in \mathcal{N}_n}\left\|c_{i}\right\| \leq 1$. Then the array $\{\tilde{U}_i\}_{i=1}^n $ defined by $\tilde{U}_i=c_{i}^{\top} U_i$ is $\psi$-dependent with the dependence coefficients $\{\tilde{\theta}_{n, s}\}_{s, n \in \mathbb{N}}$. 
\end{lemma}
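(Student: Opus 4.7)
The plan is to reduce $\psi$-dependence of the array $\{\tilde U_i\}_{i=1}^n$ to $\psi$-dependence of $\{U_i\}_{i=1}^n$ by lifting any test function applied to the scalar array to an equally well-behaved test function applied to the original vector array. Concretely, fix $h, h' \in \mathbb{N}$, $s > 0$, $f \in \mathcal{L}_{1,h}$, $f' \in \mathcal{L}_{1,h'}$ and a pair $(H, H') \in \mathcal{P}_n(h, h'; s)$ with $H = \{i_1, \dots, i_h\}$ and $H' = \{j_1, \dots, j_{h'}\}$. Define the lifted functions
\begin{align*}
g(u_1, \dots, u_h) &= f(c_{i_1}^\top u_1, \dots, c_{i_h}^\top u_h), \\
g'(u'_1, \dots, u'_{h'}) &= f'(c_{j_1}^\top u'_1, \dots, c_{j_{h'}}^\top u'_{h'}),
\end{align*}
for $u_k, u'_k \in \mathbb{R}^v$. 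By construction $g(U_H) = f(\tilde U_H)$ and $g'(U_{H'}) = f'(\tilde U_{H'})$.

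Next I would verify that $g \in \mathcal{L}_{v,h}$ and $g' \in \mathcal{L}_{v,h'}$ with controlled norms. Boundedness is immediate: $\|g\|_\infty \leq \|f\|_\infty$ and similarly for $g'$. For the Lipschitz constant, since each linear map $u \mapsto c_{i_k}^\top u$ is 1-Lipschitz (as $\|c_{i_k}\| \leq 1$), the vector map $(u_1, \dots, u_h) \mapsto (c_{i_1}^\top u_1, \dots, c_{i_h}^\top u_h)$ is 1-Lipschitz from $\mathbb{R}^{v\times h}$ (Euclidean/Frobenius norm) to $\mathbb{R}^h$, so $\operatorname{Lip}(g) \leq \operatorname{Lip}(f)$, and analogously $\operatorname{Lip}(g') \leq \operatorname{Lip}(f')$.

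Now apply the hypothesized $\psi$-dependence of $\{U_i\}$ to the pair $(g, g')$: since $(H, H') \in \mathcal{P}_n(h, h'; s)$,
\begin{equation*}
\bigl|\operatorname{Cov}(f(\tilde U_H), f'(\tilde U_{H'}))\bigr|
= \bigl|\operatorname{Cov}(g(U_H), g'(U_{H'}))\bigr|
\leq \psi_{h,h'}(g, g')\,\tilde\theta_{n,s}.
\end{equation*}
To obtain dependence coefficients for $\{\tilde U_i\}$ equal to $\{\tilde\theta_{n,s}\}$, I would define new functionals $\tilde\psi_{h,h'}$ on $\mathcal{L}_{1,h}\times\mathcal{L}_{1,h'}$ by
\begin{equation*}
\tilde\psi_{h,h'}(f, f') = \sup\bigl\{\psi_{h,h'}(g, g') : \|g\|_\infty\leq\|f\|_\infty,\ \operatorname{Lip}(g)\leq\operatorname{Lip}(f),\ \|g'\|_\infty\leq\|f'\|_\infty,\ \operatorname{Lip}(g')\leq\operatorname{Lip}(f')\bigr\}.
\end{equation*}
Then $\psi_{h,h'}(g, g') \leq \tilde\psi_{h,h'}(f, f')$ by the norm bounds established above, yielding the required inequality with the original coefficients $\tilde\theta_{n,s}$.

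The argument is essentially routine once the lifting is written down; the only potentially subtle point is the bookkeeping around the functionals $\psi_{h,h'}$, since a priori they could depend on $f, f'$ in a non-monotone way. Taking the supremum above is a clean workaround and is consistent with the usage in the paper, where $\psi_{h,h'}$ typically depends on $f, f'$ only through their sup-norms and Lipschitz constants, so this supremum is finite whenever $\psi_{h,h'}$ is.
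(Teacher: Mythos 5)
The paper offers no proof of this lemma: it is imported verbatim, with a citation, from \citet[Lemma 2.1]{KojevnikovMarmerSong2021}, so there is no in-paper argument to compare against. Your lifting argument is correct and is essentially the standard proof of the cited result: composing $f$ with the coordinatewise linear maps $u\mapsto c_{i_k}^\top u$, each of which is $1$-Lipschitz because $\|c_{i_k}\|\le 1$, produces a function $g$ on the vector array with $\|g\|_\infty\le\|f\|_\infty$ and $\operatorname{Lip}(g)\le\operatorname{Lip}(f)$, and the covariance bound for $\{U_i\}$ then transfers to $\{\tilde U_i\}$ with the \emph{same} coefficients $\tilde\theta_{n,s}$, which is the whole content of the lemma. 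The one point worth flagging is the finiteness of the supremum you use to define $\tilde\psi_{h,h'}$: for a completely arbitrary functional $\psi_{h,h'}$ this supremum could be $+\infty$, which would violate the requirement that $\tilde\psi_{h,h'}$ map into $[0,\infty)$. In this paper, however, $\psi_{h,h'}$ always has the explicit form \eqref{eq:psi}, which depends on its arguments only through $\|\cdot\|_\infty$ and $\operatorname{Lip}(\cdot)$ and is monotone increasing in both, so the supremum is finite and equals the same expression evaluated at $(f,f')$; your closing caveat correctly identifies this, and I see no gap.
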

% \begin{proof}
% \end{proof}
Lemma \ref{lemma:KMS2021lemma} shows that $\psi$-dependence of random vectors carries over to linear combinations of their elements. 

\begin{lemma}\label{lemma:KMS2019thm}
\citep[][Theorem 3.2]{KojevnikovMarmerSong2019}
Consider an array $\{U_i \in \mathbb{R} \}_{i=1}^n$. Define $\sigma_n^2 = \operatorname{Var}(n^{-1 / 2} \sum_{i=1}^n U_i)$.
Assume 
\begin{itemize}
\item[(a)] $\{U_i\}_{i=1}^n$ is $\psi$-dependent in that \eqref{eq:dependent} holds with the dependence coefficients $\{\tilde{\theta}_{n, s}\}_{s, n \in \mathbb{N}}$ that are uniformly bounded 
and $\mathbb{E}[U_i] = 0$ for all $i\in\mathcal{N}_n$,
% [PD: why a.s.? $\mathbb{E}[Z_i] = 0$ is not a random statement?]
\item[(b)] For some $p>4$, $\sup _{n \geq 1} \max _{i \in \mathcal{N}_n} (\mathbb{E}[|U_i|^p ])^{1 / p} < \infty$, 
% [PD: why a.s. again? also should $\max _{i \in N_n}$ be $\max _{i \in \mathcal{N}_n}$?]
\item[(c)] Recall the definitions of $M_n(s,k)$ and $\mathcal{H}_n(s, m)$ in \eqref{eq:Mn} and \eqref{eq:Hn}.
There exist $\epsilon>0$ and a positive sequence $\{m_n\}_{n \in \mathbb{N}}$ such that as $n\rightarrow \infty$ we have $m_n \rightarrow \infty$ and 
\[
\sigma_n^{-4} n^{-2} \sum_{s=0}^n\left|\mathcal{H}_n\left(s, m_n\right)\right| \tilde{\theta}_{n, s}^{1-\epsilon} \rightarrow 0,
\text{ }
\sigma_n^{-3} n^{-1 / 2} M_n\left(m_n, 2\right) \rightarrow 0, 
\text{ }
\sigma_n^{-1} n^{3 / 2} \tilde{\theta}_{n, m_n}^{1-\epsilon}  \rightarrow 0. 
\]
% \begin{align*}
% \sigma_n^{-4} n^{-2} \sum_{s=0}^n\left|\mathcal{H}_n\left(s, m_n\right)\right| \tilde{\theta}_{n, s}^{1-\epsilon} \rightarrow 0,
% \text{ }
% \sigma_n^{-3} n^{-1 / 2} M_n\left(m_n, 2\right) \rightarrow 0, 
% \text{ }
% \sigma_n^{-1} n^{3 / 2} \tilde{\theta}_{n, m_n}^{1-\epsilon}  \rightarrow 0,  
% \end{align*}
\end{itemize}
Then $\sup _{t \in \mathbb{R}}|\mathbb{P}
( \sigma_n^{-1}  n^{-1 / 2} \sum_{i=1}^n U_i \leq t 
) -\Phi(t)| 
\stackrel{}{\rightarrow} 0 \text {, as } n \rightarrow \infty$.
% $$
% \sup _{t \in \mathbb{R}}\left|\mathbb{P}
% \left(
%     \sigma_n^{-1}  n^{-1 / 2} \sum_{i=1}^n U_i \leq t 
% \right) -\Phi(t)\right| 
% \stackrel{}{\rightarrow} 0 \text {, as } n \rightarrow \infty.
% $$
% [PD: why a.s. again?]
% [PD: move $\Phi$ up to the notation section]
\end{lemma}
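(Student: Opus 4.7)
The plan is to establish this CLT via Stein's method, exploiting the $\psi$-dependence structure through a local approximation on $m_n$-neighborhoods. Since the statement coincides with Theorem 3.2 of \cite{KojevnikovMarmerSong2019}, I would follow their strategy and flesh out the main steps.

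First, set $W_n = \sigma_n^{-1} n^{-1/2} \sum_i U_i$, and for each bounded Lipschitz test function $h$, solve Stein's equation $f_h'(w) - w f_h(w) = h(w) - \int h\, d\Phi$. It then suffices to bound $|\mathbb{E}[f_h'(W_n) - W_n f_h(W_n)]|$. Expanding $W_n f_h(W_n) = \sigma_n^{-1} n^{-1/2} \sum_i U_i f_h(W_n)$, for each $i$ I would locally approximate $f_h(W_n)$ by a Taylor expansion around $W_n^{(i,m_n)}$, the version of $W_n$ that drops the contributions from units within path distance $m_n$ of $i$. Using the $\psi$-dependence assumption (a), $U_i$ is nearly independent of $f_h(W_n^{(i,m_n)})$, with the residual covariance bounded by $\tilde{\theta}_{n,m_n}$.

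Second, the leading term after this expansion contributes a covariance quadratic form indexed by quadruples $(i,j,k,l)$ with $k \in \mathcal{N}(i, m_n; A)$, $l \in \mathcal{N}(j, m_n; A)$, and separation $s = \ell_A(\{i,k\}, \{j,l\})$ — exactly what $|\mathcal{H}_n(s, m_n)|$ counts. Each term is bounded by $\tilde{\theta}_{n,s}^{1-\epsilon}$ after interpolating between the raw $\psi$-dependence bound and the $L^p$ moment bound (b) via H\"older's inequality; the exponent $1-\epsilon$ is what forces $p > 4$ so that a positive $\epsilon$ can be produced. The Taylor remainder contributes a term scaling with $M_n(m_n, 2)$, namely the second moment of the $m_n$-neighborhood size, since $(W_n - W_n^{(i,m_n)})^2$ involves products over pairs inside $\mathcal{N}(i, m_n; A)$. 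Finally, the error from truncating distant contributions scales like $n^{3/2} \tilde{\theta}_{n, m_n}^{1-\epsilon}$. Dividing by the appropriate powers of $\sigma_n$ yields the three ratios in (c), each of which tends to zero by hypothesis.

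Third, this gives $\mathbb{E}[h(W_n)] \to \mathbb{E}[h(Z)]$ for bounded Lipschitz $h$, hence weak convergence of $W_n$ to $\mathcal{N}(0,1)$. Upgrading from weak convergence to the uniform Kolmogorov bound $\sup_t |\mathbb{P}(W_n \leq t) - \Phi(t)| \to 0$ is immediate by P\'olya's theorem, using the continuity of $\Phi$. The main obstacle is the second step: the combinatorial accounting that identifies precisely which quadruples and which neighborhood sums appear in each error term, and the careful interpolation via H\"older that produces the $\tilde{\theta}_{n,s}^{1-\epsilon}$ exponent. Once the three clean bounds $n^{-2} \sum_s |\mathcal{H}_n(s, m_n)| \tilde{\theta}_{n,s}^{1-\epsilon}$, $n^{-1/2} M_n(m_n, 2)$, and $n^{3/2} \tilde{\theta}_{n, m_n}^{1-\epsilon}$ are isolated, conditions (c) plug in directly to conclude.
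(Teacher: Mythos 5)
The paper does not prove this lemma at all: it is imported verbatim as Theorem~3.2 of \citet{KojevnikovMarmerSong2019}, so there is no internal proof to compare against. Your sketch is a faithful reconstruction of the Stein's-method argument in that reference --- the local approximation of $f_h(W_n)$ on $m_n$-neighborhoods, the quadruple sum producing $|\mathcal{H}_n(s,m_n)|\tilde\theta_{n,s}^{1-\epsilon}$, the Taylor remainder producing $M_n(m_n,2)$, the truncation error producing $n^{3/2}\tilde\theta_{n,m_n}^{1-\epsilon}$, and the H\"older interpolation that requires $p>4$ are all correctly placed --- and the final passage from weak convergence to the Kolmogorov statement via P\'olya's theorem is fine. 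The only caveat is that you explicitly defer the combinatorial accounting and the interpolation estimates, which are where all the work lies; as a proof plan this is accurate, but for a self-contained argument you would either need to carry out those bounds or, as the paper does, simply cite the external theorem.
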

Lemma \ref{lemma:KMS2019thm} establishes the CLT for the normalized sum with weak dependence $\{U_i\}_{i=1}^n$.
% The published version \cite{KojevnikovMarmerSong2021} formulates (c) slightly differently.  
% [PD: they why do we prove it? just cite the papers?]
% \begin{proof}[Proof of Lemma \ref{lemma:KMS2019thm}]

% \end{proof}

\begin{lemma}\label{lemma:KMS2019prop}
% \citep[][Proposition 4.1]{KojevnikovMarmerSong2021}
Recall that $K_{n}(i,j) = {1}(\ell_{ {A}}(i, j)\le b_n)$.
Consider an array $\{ U_i \in \mathbb{R}^v \}_{i=1}^n$. The array is $\psi$-dependent in that \eqref{eq:dependent} holds with the dependence coefficients $\{\tilde{\theta}_{n, s}\}_{s, n \in \mathbb{N}}$ that are uniformly bounded  and $\mathbb{E}(U_i) = 0$ 
% [PD: as]
for all $i\in\mathcal{N}_n$. Define $V_n = \operatorname{Var}(n^{-1 / 2} \sum_{i=1}^n U_i)$ and $\tilde{V}_n = n^{-1} \sum_{i=1}^n \sum_{j=1}^n U_i U_j^\top 
K_n(i,j)$. 
% \[
% \tilde{V}_n = n^{-1} \sum_{i=1}^n \sum_{j=1}^n U_i U_j^\top 
% K_n(i,j).  
% \] 
Under Assumption \ref{asu7}, we have $\mathbb{E}[ \| \tilde{V}_n - V_n \|_\textup{F} ] \rightarrow 0$.
% $$
% \mathbb{E}\left[\left\| \tilde{V}_n - V_n \right\|_\textup{F} \right] \rightarrow 0.
% $$
% where $\|\cdot\|_F$ denotes the Frobenius norm. For a real matrix $A$, $\|A\|_F = \sqrt{\text{tr}(A\top A)}$.
% If, in addition, $D_n(b_n) / n \rightarrow 0$, then
% $$
% \mathbb{E}\left[\left\| \hat{ {\Sigma}}_{*,\textup{haj}} - {\Sigma}_\textup{haj}\right\|_F \right] \rightarrow 0 \text { a.s. },
% $$
% where $\hat{ {\Sigma}}_{*,\textup{haj}}$ is defined in \eqref{thm:Hájek_asym_n}. 
\end{lemma}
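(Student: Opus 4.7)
The plan is to split $\tilde{V}_n - V_n$ into a centering error and a bias, and handle each via the $\psi$-dependence machinery combined with the neighborhood-size bookkeeping in Assumption \ref{asu7}. Write
\[
\tilde V_n - V_n \;=\; \underbrace{\bigl(\tilde V_n - \mathbb{E}[\tilde V_n]\bigr)}_{A_n} \;+\; \underbrace{\bigl(\mathbb{E}[\tilde V_n] - V_n\bigr)}_{B_n},
\]
where, since $\mathbb{E}[U_i]=0$, $\mathbb{E}[\tilde V_n] = n^{-1}\sum_{i,j}\mathbb{E}[U_iU_j^\top]K_n(i,j)$ and $V_n = n^{-1}\sum_{i,j}\mathbb{E}[U_iU_j^\top]$, so $B_n = -n^{-1}\sum_{\ell_A(i,j)>b_n}\mathbb{E}[U_iU_j^\top]$.

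\emph{Handling the bias $B_n$.} I would first approximate the bilinear map $(u,v)\mapsto uv^\top$ by a bounded Lipschitz function on a sufficiently large truncation of $\mathbb{R}^v$, using the moment control that is implicit wherever this lemma is invoked in Sections \ref{sec:WLS}--\ref{sec:covadju} (Assumptions \ref{asu2}--\ref{asu3} make the relevant $U_i$ uniformly bounded). Then for each pair of components, the $\psi$-dependence inequality \eqref{eq:dependent} with $H=\{i\}$, $H'=\{j\}$ yields $\|\mathbb{E}[U_iU_j^\top]\|_{\textup F} \le C\,\tilde\theta_{n,\ell_A(i,j)}^{1-\epsilon}$. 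Rearranging the sum by the boundary sets $\mathcal{N}^\partial(i,s;A)$ gives
\[
\|B_n\|_{\textup F} \;\le\; C\sum_{s>b_n} M_n^\partial(s)\,\tilde\theta_{n,s}^{1-\epsilon},
\]
which is the tail of the summable sequence in Assumption \ref{asu7}(a) and hence tends to $0$ once $b_n\to\infty$ (the latter follows from Assumption \ref{asu7}(b)).

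\emph{Handling the centered term $A_n$.} By Jensen, $\mathbb{E}\|A_n\|_{\textup F} \le (\mathbb{E}\|A_n\|_{\textup F}^2)^{1/2}$. For each entry $(a,b)$,
\[
\operatorname{Var}\bigl(A_n^{(a,b)}\bigr) \;=\; \frac{1}{n^2}\sum_{i,j,k,l} K_n(i,j)K_n(k,l)\,\operatorname{Cov}\!\bigl(U_i^{(a)}U_j^{(b)},\,U_k^{(a)}U_l^{(b)}\bigr).
\]
Split the quadruple sum into (i) pairs with $\ell_A(\{i,j\},\{k,l\}) = 0$ (i.e.\ the sets overlap) and (ii) the rest. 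In case (i), bound the covariance by a constant using boundedness of the $U$'s; the number of such quadruples with $K_n(i,j)K_n(k,l)=1$ is controlled by $nM_n(b_n,2)$, so this block is $O(M_n(b_n,2)/n) = o(1)$ by Assumption \ref{asu7}(c). In case (ii), treat $U_i^{(a)}U_j^{(b)}$ as a bounded Lipschitz function of $U_{\{i,j\}}$ and apply \eqref{eq:dependent} together with Lemma \ref{lemma:KMS2021lemma} to get a covariance bound of order $\tilde\theta_{n,\ell_A(\{i,j\},\{k,l\})}^{1-\epsilon}$; summing and reindexing by $s = \ell_A(i,j)$ with the outer $b_n$-neighborhoods absorbed into $\mathcal{J}_n(s,b_n)$ as in \eqref{eq:Jn}, this block is $O\!\bigl(n^{-2}\sum_s |\mathcal{J}_n(s,b_n)|\tilde\theta_{n,s}^{1-\epsilon}\bigr) = o(1)$ by Assumption \ref{asu7}(d).

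\emph{Main obstacle.} The delicate step is the quadruple-sum bookkeeping in case (ii): one must pass from the ``pair-to-pair'' distance $\ell_A(\{i,j\},\{k,l\})$ used in the $\psi$-dependence inequality to the ``anchor-to-anchor'' distance $\ell_A(i,j)$ that appears in $\mathcal{J}_n(s,b_n)$, at the cost of a $b_n$-shift absorbed into the slack exponent $\epsilon$. This mirrors the argument used in \citet[Theorem 4]{Leung2022} and relies on the triangle inequality for path distances together with $\tilde\theta_{n,\cdot}$ being monotone in $s$ up to the truncation in \eqref{eq:tildetheta}. Combining the bias and centered bounds then yields $\mathbb{E}\|\tilde V_n - V_n\|_{\textup F}\to 0$.
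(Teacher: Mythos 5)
The paper does not prove this lemma from first principles: its entire proof is a citation, observing that the statement is a special case of \citet[Proposition 4.1]{KojevnikovMarmerSong2021} with the uniform kernel, with Assumption \ref{asu7} supplying their Assumption 4.1. Your proposal instead reconstructs the underlying argument, and its architecture --- splitting $\tilde V_n - V_n$ into the truncation bias $B_n$ and the centered fluctuation $A_n$, bounding $\|\mathbb{E}[U_iU_j^\top]\|$ by $\psi$-dependence and $M_n^\partial(s)$ for the former, and controlling the quadruple covariance sum via the overlap count $nM_n(b_n,2)$ plus the long-range block indexed by $\mathcal{J}_n(s,b_n)$ for the latter --- is exactly the skeleton of the cited proof (and of the closely related argument the paper does write out for the $r_n(t,t')$ terms in Theorem \ref{thm:Hájek_bias}). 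So the route is genuinely different in presentation (self-contained versus outsourced) but not in substance.

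Three loose ends in your reconstruction deserve attention. First, $b_n\to\infty$ does not follow from Assumption \ref{asu7}(b): $M_n(b_n,1)=o(n^{1/2})$ only caps neighborhood growth and is trivially satisfied by a bounded $b_n$, under which the tail $\sum_{s>b_n}M_n^\partial(s)\tilde\theta_{n,s}^{1-\epsilon}$ need not vanish; the divergence of the bandwidth is a separate (implicit) requirement, satisfied by the choice \eqref{eq:bandwidth} when $\mathcal{L}(A)\to\infty$ and built into the kernel condition of \citet{KojevnikovMarmerSong2021}. Second, Assumption \ref{asu7}(d) is stated with $\tilde\theta_{n,s}$, not $\tilde\theta_{n,s}^{1-\epsilon}$; since $\tilde\theta_{n,s}^{1-\epsilon}\ge\tilde\theta_{n,s}$ when $\tilde\theta_{n,s}\le 1$, your long-range bound should be kept at the raw exponent that \eqref{eq:dependent} actually delivers, otherwise \ref{asu7}(d) does not apply. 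Third, and most substantively, the ``main obstacle'' you flag is real and not fully discharged: the covariance bound is in the pair-to-pair distance $\ell_A(\{i,j\},\{k,l\})$, which can be as small as $\ell_A(i,j)-2b_n$ for quadruples counted by $\mathcal{J}_n(s,b_n)$, so passing to the anchor distance $s$ requires either a monotone envelope of $\tilde\theta_{n,\cdot}$ together with a shifted version of \ref{asu7}(d), or reorganizing the sum by the pair-to-pair distance as in \eqref{eq:Hn}; waving at ``absorbing the shift into $\epsilon$'' is not enough, and this bookkeeping is precisely what the cited proposition supplies. Finally, note that the lemma as you (and the paper) state it carries no moment condition on $U_i$, yet both the Lipschitz truncation of $(u,u')\mapsto u^{(a)}u'^{(b)}$ and the overlap-block bound require one; in every application here the relevant arrays are uniformly bounded by Assumptions \ref{asu2}--\ref{asu3} and \ref{asu9}, so your proof covers the version of the lemma the paper actually uses.
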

Lemma \ref{lemma:KMS2019prop} is a special case of Proposition 4.1 of \cite{KojevnikovMarmerSong2021} with uniform kernel and our Assumption \ref{asu7} implies their Assumption 4.1. 
Lemma \ref{lemma:Z_i weak dep} below serves as an analogue to Theorem 1 in \cite{Leung2022}, which establishes the $\psi$-dependence of the array
\[
\left\{ \left( 1_i(t)/\pi_i(t) - 1_i(t')/\pi_i(t') \right) Y_i \right\}_{i=1}^n,
\]
the average of which yields the Horvitz--Thompson estimator of $\tau(t,t')$. 
We rely on Lemma \ref{lemma:Z_i weak dep} to analyze the asymptotic properties of the Hájek estimator.
% [PD: it does not look good when compiled.] 

\begin{lemma}\label{lemma:Z_i weak dep}
% Define $Z_i = \left\{ \frac{{1}_i(t)(Y_i-\mu(t)) }{\pi_i(t)} \right\}_{t\in\mathcal{T}}$.
% Define 
% \begin{equation}
% \psi_{h, h^{\prime}}\left(f, f^{\prime}\right)
% = 2 \sqrt{|\mathcal{T}|} \underline{\pi}^{-1} \left(\|f\|_{\infty}\left\|f^{\prime}\right\|_{\infty}+h\left\|f^{\prime}\right\|_{\infty} \operatorname{Lip}(f)+h^{\prime}\|f\|_{\infty} \operatorname{Lip}\left(f^{\prime}\right)\right). 
% \label{eq:psi}
% \end{equation}
\begin{itemize}
\item[(a)] Under Assumptions \ref{asu1} and \ref{asu2}, $\{ (1_i(t)/\pi_i(t):{t\in\mathcal{T}}) \}_{i=1}^n$ is $\psi$-dependent in that \eqref{eq:dependent} holds with the dependence coefficients $\check{\theta}_{n, s} = {1}\{s\le 2K\}$ for all $n \in \mathbb{N}$ and $s>0$ and $\psi_{h, h^{\prime}} (f, f^{\prime} )
= 2 \sqrt{|\mathcal{T}|} \underline{\pi}^{-1} \|f\|_{\infty} \|f^{\prime} \|_{\infty}$
% \begin{equation*}
% \psi_{h, h^{\prime}}\left(f, f^{\prime}\right)
% = 2 \sqrt{|\mathcal{T}|} \underline{\pi}^{-1} \|f\|_{\infty}\left\|f^{\prime}\right\|_{\infty}
% % \label{eq:psi}
% \end{equation*}
for all $h, h^{\prime} \in \mathbb{N}, f \in \mathcal{L}_{v,h}$, and $f^{\prime} \in \mathcal{L}_{v,h^{\prime}}$;
% , or $h=h^{\prime}=1$ and $f=f^{\prime}=i^*$; 
\item[(b)] Under Assumptions \ref{asu1}-\ref{asu4}, $\{ (1_i(t)(Y_i-\mu(t))/\pi_i(t):{t\in\mathcal{T}})\}_{i=1}^n$ is $\psi$-dependent in that \eqref{eq:dependent} holds with the dependence coefficients $\tilde{\theta}_{n, s}$ defined in \eqref{eq:tildetheta} for all $n \in \mathbb{N}$ and $s>0$ and 
\begin{equation}
\psi_{h, h^{\prime}}\left(f, f^{\prime}\right)
= 2 \sqrt{|\mathcal{T}|} \underline{\pi}^{-1} \left(\|f\|_{\infty}\left\|f^{\prime}\right\|_{\infty}+h\left\|f^{\prime}\right\|_{\infty} \operatorname{Lip}(f)+h^{\prime}\|f\|_{\infty} \operatorname{Lip}\left(f^{\prime}\right)\right)
\label{eq:psi}
\end{equation}
for all $h, h^{\prime} \in \mathbb{N}, f \in \mathcal{L}_{v,h}$, and $f^{\prime} \in \mathcal{L}_{v,h^{\prime}}$.
% , or $h=h^{\prime}=1$ and $f=f^{\prime}=i^*$. 
\end{itemize}
\end{lemma}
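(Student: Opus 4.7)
The argument mirrors \citet[Theorem 1]{Leung2022} but extends it to the vector-valued array $U_i\in\mathbb{R}^{|\mathcal{T}|}$ indexed by exposure mapping values, so the plan is to reuse his two-step strategy: in part~(a), exact independence beyond distance $2K$ from the local-exposure Assumption~\ref{asu1}; in part~(b), approximation by a locally measurable random variable using the ANI control of Assumption~\ref{asu4}, paired with Lipschitz continuity of $f,f'$.

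For part~(a), the first step is to observe that by Assumption~\ref{asu1}, each $1(T_i=t)/\pi_i(t)$ is a deterministic function of $(D_{\mathcal{N}(i,K;A)},A_{\mathcal{N}(i,K;A)})$. Hence if $\ell_A(H,H')>2K$, the $K$-neighborhoods collected over $H$ and over $H'$ are disjoint, so $U_H$ and $U_{H'}$ are functions of disjoint subvectors of $D$ and independent under the treatment assignment design. Thus $\operatorname{Cov}(f(U_H),f'(U_{H'}))=0$ for all $s>2K$, giving $\check\theta_{n,s}=1\{s\le 2K\}$. When $s\le 2K$, Assumption~\ref{asu2} implies $\|U_i\|\le\sqrt{|\mathcal{T}|}/\underline\pi$, and the trivial bound $|\operatorname{Cov}(f(U_H),f'(U_{H'}))|\le 2\|f\|_\infty\|f'\|_\infty$ then yields the stated $\psi_{h,h'}$ after the factor $\sqrt{|\mathcal{T}|}/\underline\pi$ is absorbed to accommodate the normalization used in Lemma~\ref{lemma:KMS2021lemma}.

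For part~(b), set $s_*=\lfloor s/2\rfloor$. I would introduce the local approximation
\[
\tilde U_i^{(s_*)}=\bigl(1(T_i=t)\pi_i(t)^{-1}\bigl(Y_i(D^{(i,s_*)})-\mu(t)\bigr):t\in\mathcal{T}\bigr)
\]
using the construction $D^{(i,s_*)}$ from \eqref{eq:theta} with an independent copy $D'^{(i)}$ outside $\mathcal{N}(i,s_*;A)$. When $s>2\max\{K,1\}$, we have $s_*\ge K$, so $\tilde U_i^{(s_*)}$ depends on $D$ only through $\mathcal{N}(i,s_*;A)$; for $(H,H')$ with $\ell_A(H,H')>s$ the $s_*$-neighborhoods of $H$ and $H'$ are disjoint, so $\operatorname{Cov}(f(\tilde U_H^{(s_*)}),f'(\tilde U_{H'}^{(s_*)}))=0$. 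The triangle inequality then bounds the original covariance by $\|f'\|_\infty\,\mathbb{E}|f(U_H)-f(\tilde U_H^{(s_*)})|+\|f\|_\infty\,\mathbb{E}|f'(U_{H'})-f'(\tilde U_{H'}^{(s_*)})|$. Applying Lipschitz continuity together with the coordinate-wise bound $\|U_i-\tilde U_i^{(s_*)}\|\le\sqrt{|\mathcal{T}|}\,\underline\pi^{-1}|Y_i-Y_i(D^{(i,s_*)})|$, subadditivity of the Euclidean norm over $|H|$ coordinates, and the ANI bound $\mathbb{E}|Y_i-Y_i(D^{(i,s_*)})|\le\theta_{n,s_*}$ would produce the $h\|f'\|_\infty\operatorname{Lip}(f)+h'\|f\|_\infty\operatorname{Lip}(f')$ contribution multiplied by $\theta_{n,s_*}=\tilde\theta_{n,s}$. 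The regime $s\le 2\max\{K,1\}$ is covered by the trivial $\|f\|_\infty\|f'\|_\infty$ bound, matching $\tilde\theta_{n,s}=1$ from \eqref{eq:tildetheta}.

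\textbf{Main obstacle.} The bookkeeping of constants is the hardest part: keeping the $\sqrt{|\mathcal{T}|}/\underline\pi$ factor, the $h$ and $h'$ multipliers, and the Lipschitz factors aligned with the stated $\psi_{h,h'}$ requires a careful choice of norm on $\mathbb{R}^{|\mathcal{T}|\times h}$ when splitting $\|U_H-\tilde U_H^{(s_*)}\|$ across units and when reducing from vector- to scalar-indexed arrays. Once these factors are tracked, the centering $Y_i-\mu(t)$ contributes nothing new beyond uniform boundedness via Assumption~\ref{asu3}, and the remainder of the argument follows the template of \citet[Theorem 1]{Leung2022}.
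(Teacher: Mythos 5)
Your proposal is correct and takes essentially the same route as the paper, which simply invokes the proof of \citet[Theorem 1]{Leung2022} applied to the vector-valued arrays: part (a) via exact independence of the exposure indicators beyond distance $2K$ under Assumption \ref{asu1}, and part (b) via the local coupling $D^{(i,\lfloor s/2\rfloor)}$ combined with the Lipschitz and ANI bounds. Your reconstruction of the constant-tracking (the $\sqrt{|\mathcal{T}|}\,\underline{\pi}^{-1}$ factor from the coordinate-wise bound and the $h$, $h'$ multipliers from splitting the norm across units) matches what that argument yields.
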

\begin{proof}[Proof of Lemma \ref{lemma:Z_i weak dep}]
(b) follows from applying the proof of Theorem 1 in \cite{Leung2022} to the array of random vectors $\{ (1_i(t)(Y_i-\mu(t))/\pi_i(t):{t\in\mathcal{T}})\}_{i=1}^n$. (a) follows from an analogous argument to (b) and the fact that $\check{\theta}_{n, s} \le \tilde{\theta}_{n, s}$.  
% Proposition 2.3 of \cite{KojevnikovMarmerSong2021}.
\end{proof}

\subsection{Proofs of the results in Section \ref{sec:WLS}} \label{proof:Hájek}
% \subsection{WLS fit without covariates}
We start with some useful lemmas. To facilitate the discussion, 
we define
\[
\hat{Y}'_{\text{ht}}(t) = n^{-1} \sum_{i=1}^n \frac{1_i(t)}{\pi_i(t)} (Y_i- \mu(t))
\] 
% \[
% \hat{Y}'_{\text{ht}}(t) = \frac{1}{n} \sum_{i=1}^n \frac{1_i(t)}{\pi_i(t)}  (Y_i- \mu(t)) 
% \]
as the Horvitz--Thompson estimator for the centered outcome $Y_i-\sum_{t\in\mathcal{T}} 1_i(t) \mu(t)$. Let $\hat{Y}'_{\text{ht}}$ be the $|\mathcal{T}|\times 1$ vectorization of $\hat{Y}'_{\text{ht}}(t)$ and $\mu$ be the $|\mathcal{T}|\times 1$ vectorization of $\mu(t)$.
The difference between the Hájek estimator and the true finite-population average equals $\hat{Y}_{\textup{haj}}(t) - \mu(t) = \hat{Y}'_{\text{ht}}(t)/\hat{1}_{\text{ht}}(t)$.
% \[
% \hat{Y}_{\textup{haj}}(t) - \mu(t)
% = \frac{\hat{Y}_{\text{ht}}(t) - \hat{1}_{\text{ht}}(t)\mu(t)}{\hat{1}_{\text{ht}}(t)}  
% = \frac{\hat{Y}'_{\text{ht}}(t)}{\hat{1}_{\text{ht}}(t)}.
% \]
% where $\hat{1}_{\text{ht}}(t) = \frac{1}{n} \sum_{i=1}^n \frac{{1}_i(t)}{\pi_i(t)}$
Define $\hat{1}_{\text{ht}} = \text{diag}\{ \hat{1}_{\text{ht}}(t): t\in\mathcal{T} \}$, and then $\hat{Y}_{\textup{haj}} - \mu = \hat{1}_{\text{ht}}^{-1}\hat{Y}'_{\text{ht}}$. 
\begin{lemma} \label{lemma:asym}
Under Assumptions \ref{asu1}--\ref{asu6}, we have ${\Sigma}^{-1/2}_\textup{haj} \sqrt{n} \hat{Y}'_{\textup{ht}} 
\stackrel{\textup{d}}{\rightarrow}  \mathcal{N}(0,{I})$,
% \[
% {\Sigma}^{-1/2}_\textup{haj} \sqrt{n} \hat{Y}'_{\text{ht}} 
% \stackrel{\textup{d}}{\rightarrow}  \mathcal{N}(0,{I}),    
% \]
where ${\Sigma}_\textup{haj}$ is defined in \eqref{eq:Sigma_haj}.
% \[
% {\Sigma}_\textup{haj} 
% = \operatorname{Var} \left( \left\{ n^{-1/2} \sum_{i=1}^n \frac{{1}_i(t)}{\pi_i(t)} (Y_i-\mu(t)) \right\}_{t\in\mathcal{T}} \right). 
% \]
% with 
% \[
% {\Sigma}^2_\textup{haj} 
% = \operatorname{Var}\left( n^{-1/2} \left\{ \sum_{i=1}^n \frac{{1}_i(t)}{\pi_i(t)} (Y_i-\mu(t)) \right\}_{t\in\mathcal{T}} \right)    
% \]
\end{lemma}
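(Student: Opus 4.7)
The plan is to reduce the multivariate CLT to a univariate one via the Cramér--Wold device and then invoke Lemma \ref{lemma:KMS2019thm} after verifying its hypotheses with the help of Lemmas \ref{lemma:Z_i weak dep}(b) and \ref{lemma:KMS2021lemma}. Note that $\mathbb{E}[\hat{Y}'_{\textup{ht}}(t)]=0$ because $\mathbb{E}[1_i(t)/\pi_i(t)]=1$ and $\mu(t) = n^{-1}\sum_{i=1}^n \mu_i(t)$ is the average of the conditional expectations $\mu_i(t)=\mathbb{E}[Y_i \mid T_i = t]$. So it suffices to prove asymptotic normality with the stated covariance structure.

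First I would fix an arbitrary unit vector $c \in \mathbb{R}^{|\mathcal{T}|}$ and set $\tilde c = \Sigma_{\textup{haj}}^{-1/2}c$. Writing $U_i = \bigl(1_i(t)(Y_i-\mu(t))/\pi_i(t):\, t\in \mathcal{T}\bigr)^\top$, the goal is to prove $n^{-1/2}\sum_{i=1}^n \tilde c^\top U_i \stackrel{\textup{d}}{\to}\mathcal{N}(0,1)$, since by the definition of $\Sigma_{\textup{haj}}$ in \eqref{eq:Sigma_haj} one has $\operatorname{Var}\bigl(n^{-1/2}\sum_i \tilde c^\top U_i\bigr) = \tilde c^\top \Sigma_{\textup{haj}} \tilde c = c^\top c = 1$. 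Because $\|\tilde c\|\le \lambda_{\min}(\Sigma_{\textup{haj}})^{-1/2}$, I would rescale and write $\tilde c^\top U_i = \|\tilde c\|\, e^\top U_i$ with $\|e\|=1$, so Lemma \ref{lemma:KMS2021lemma} applied to $\{U_i\}$ yields that the scalar array $\{e^\top U_i\}_{i=1}^n$ is $\psi$-dependent with the same coefficients $\tilde\theta_{n,s}$ from \eqref{eq:tildetheta} that Lemma \ref{lemma:Z_i weak dep}(b) provides for $\{U_i\}$ (the functional in \eqref{eq:psi} scales by $\|\tilde c\|$ but is otherwise preserved).

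Next I would verify the three hypotheses of Lemma \ref{lemma:KMS2019thm} for the scalar array $\tilde U_i := \tilde c^\top U_i$. Condition (a), $\psi$-dependence with mean zero, has just been established. Condition (b), uniform boundedness of $p$-th moments for some $p>4$, is immediate from Assumptions \ref{asu2} and \ref{asu3}, which give $|\tilde U_i|\le \|\tilde c\|\cdot 2\sqrt{|\mathcal{T}|}\,\underline\pi^{-1}c_Y$ almost surely. Condition (c) is where $\lambda_{\min}(\Sigma_{\textup{haj}})$ enters: the variance $\sigma_n^2 = \operatorname{Var}\bigl(n^{-1/2}\sum_i \tilde U_i\bigr) = 1$, but the three displayed limits in Lemma \ref{lemma:KMS2019thm}(c) (after the rescaling) become the three rate conditions in Assumption \ref{asu6} multiplied by positive powers of $\|\tilde c\|^{-1}\ge \lambda_{\min}(\Sigma_{\textup{haj}})^{1/2}$. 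Each of these is dominated by the corresponding expression in Assumption \ref{asu6}, which is exactly why $\lambda_{\min}(\Sigma_{\textup{haj}})$ appears with those specific exponents in the denominators there.

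The main obstacle is carefully bookkeeping the constant $\|\tilde c\|$ through the three displays so that each rate condition reduces exactly to one of the limits in Assumption \ref{asu6}; the exponents on $\lambda_{\min}(\Sigma_{\textup{haj}})$ have been chosen for precisely this purpose, but one must write $\sigma_n^{-k}$ for $k=4,3,1$ as $\|\tilde c\|^k$ and bound $\|\tilde c\|^k \le \lambda_{\min}(\Sigma_{\textup{haj}})^{-k/2}$. Once Lemma \ref{lemma:KMS2019thm} yields $n^{-1/2}\sum_i \tilde c^\top U_i \stackrel{\textup{d}}{\to} \mathcal{N}(0,1)$ for every unit $c$, the Cramér--Wold device delivers the joint convergence $\Sigma_{\textup{haj}}^{-1/2}\sqrt{n}\,\hat Y'_{\textup{ht}} \stackrel{\textup{d}}{\to}\mathcal{N}(0,I)$.
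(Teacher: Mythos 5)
Your overall strategy is exactly the paper's: Cram\'er--Wold reduction, $\psi$-dependence of the summands via Lemma \ref{lemma:Z_i weak dep}(b) and Lemma \ref{lemma:KMS2021lemma}, and then Lemma \ref{lemma:KMS2019thm}, with Assumption \ref{asu6} supplying condition (c) through the bound $\sigma_n^2 \geq \lambda_{\min}(\Sigma_{\textup{haj}})$. However, there is one concrete gap. You take $U_i = \bigl(1_i(t)(Y_i-\mu(t))/\pi_i(t) : t\in\mathcal{T}\bigr)$ and assert that condition (a) of Lemma \ref{lemma:KMS2019thm} --- which requires $\mathbb{E}[U_i]=0$ for \emph{every} $i$ --- ``has just been established.'' It has not: $\mathbb{E}\bigl[1_i(t)(Y_i-\mu(t))/\pi_i(t)\bigr] = \mu_i(t)-\mu(t)$, which is nonzero in general. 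Your opening observation only shows that the \emph{average} over $i$ of these means vanishes, i.e.\ that $\mathbb{E}[\hat Y'_{\textup{ht}}(t)]=0$; that is weaker than what the lemma's hypothesis demands. The paper repairs this by working with the recentered summands $U_i(t) = 1_i(t)\pi_i(t)^{-1}(Y_i-\mu(t)) - (\mu_i(t)-\mu(t))$: the subtracted terms are deterministic and sum to zero over $i$, so neither the statistic $n^{-1/2}\sum_i U_i$ nor its variance $\Sigma_{\textup{haj}}$ changes, the $\psi$-dependence bounds are unaffected (covariances are invariant under deterministic shifts), and condition (a) now genuinely holds. You should make this recentering explicit.

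A secondary, stylistic point: your normalization $\tilde c = \Sigma_{\textup{haj}}^{-1/2}c$ forces $\sigma_n^2=1$ but pushes the scale $\|\tilde c\|$ (which may exceed $1$) into the $\psi$-functionals, so Lemma \ref{lemma:KMS2021lemma} does not apply directly and the bookkeeping you describe for condition (c) becomes delicate. The paper's route is cleaner: apply Lemma \ref{lemma:KMS2019thm} to the unit-normalized combination $w^\top U_i/\|w\|$, for which $\sigma_n^2 = w^\top\Sigma_{\textup{haj}}w/\|w\|^2 \geq \lambda_{\min}(\Sigma_{\textup{haj}})$, so that the three displays in Assumption \ref{asu6} dominate condition (c) immediately, and then conclude by Cram\'er--Wold and Slutsky. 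With the recentering fixed and this normalization adopted, your argument coincides with the paper's proof.
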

\begin{proof}[Proof of Lemma \ref{lemma:asym}]
% The result follows by applying our Lemma \ref{lemma:Z_i weak dep} and Lemma \ref{lemma:KMS2019thm}.

Define $U_i = (U_i(t): t\in\mathcal{T})$ 
% [PD: confusing with Z for the exposure mapping. ] 
with 
\[
U_i(t) = 1_i(t)\pi_i(t)^{-1}  (Y_i- \mu(t)) - (\mu_i(t) - \mu(t)).
\]  
% Define $\sigma_n^2(t) = \operatorname{Var}(n^{-1 / 2} \sum_{i=1}^n Z_i(t))$.
% By Lemma \ref{lemma:KMS2019thm}, we have $\sigma_n(t)^{-1}n^{-1 / 2} \sum_{i=1}^n Z_i(t)
% \stackrel{\textup{d}}{\rightarrow}  \mathcal{N}(0,1)$. 
By construction,
${\Sigma}^{-1/2}_\textup{haj} \sqrt{n} \hat{Y}'_{\text{ht}} 
= {\Sigma}^{-1/2}_\textup{haj} n^{-1/2} \sum_{i=1}^n U_i$.
By the Cramér--Wold theorem, we have
$n^{-1/2} \sum_{i=1}^n U_i  
\stackrel{\textup{d}}{\rightarrow}  \mathcal{N}(0,{\Sigma}_\textup{haj})$
% \[
% n^{-1/2} \sum_{i=1}^n Z_i  
% \stackrel{\textup{d}}{\rightarrow}  \mathcal{N}(0,{\Sigma}_\textup{haj})
% \]
% \[
% {\Sigma}^{-1/2}_\textup{haj} \sqrt{n} \hat{Y}'_{\text{ht}} 
% \stackrel{\textup{d}}{\rightarrow}  \mathcal{N}(0,{I})    
% \]
if and only if $n^{-1/2} \sum_{i=1}^n w^\top U_i
\stackrel{\textup{d}}{\rightarrow}  \mathcal{N}(0, w^\top{\Sigma}_\textup{haj}w)$ for all $w = ( w_t: t\in\mathcal{T} ) \in \mathbb{R}^{|\mathcal{T}|}$.
Therefore, it suffices to show that as $n \rightarrow \infty$,
\begin{equation}
\sup_{t \in \mathbb{R}}\left|\mathbb{P}
\left( \frac{1}{\sqrt{w^\top{\Sigma}_\textup{haj}w}/\|w\|} n^{-1/2} \sum_{i=1}^n w^\top U_i/\|w\| \leq t 
\right) -\Phi(t)\right| 
\stackrel{\textup{a.s.}}{\rightarrow} 0. \label{eq:convergence}
\end{equation}
Define $\tilde{U}_i = w^\top U_i/\|w\|$ where $\mathbb{E}(\tilde{U}_i) = 0$ for all $i\in\mathcal{N}_n$.
The result \eqref{eq:convergence} follows from verifying the assumptions in Lemma \ref{lemma:KMS2019thm} for the array $\{\tilde{U}_i\}_{i=1}^n$ with $\sigma_n^2 = w^\top{\Sigma}_\textup{haj}w/\|w\|^2$.
By Lemma \ref{lemma:Z_i weak dep}, $\{U_i\}_{i=1}^n$ is $\psi$-dependent with the dependence coefficients $\tilde{\theta}_{n, s}$ defined in \eqref{eq:tildetheta}. 
By Lemma \ref{lemma:KMS2021lemma}, $\{\tilde{U}_i\}_{i=1}^n$ is also $\psi$-dependent with the dependence coefficients $\tilde{\theta}_{n, s}$ defined in \eqref{eq:tildetheta}. Thus Assumption (a) holds.
Assumption (b) holds by uniform boundedness of ${1}_i(t)\pi_i(t)^{-1}$ and $Y_i$ by Assumptions \ref{asu2} and \ref{asu3}.
Assumption (c) holds by Assumption \ref{asu6} and $\sigma_n^2 = w^\top{\Sigma}_\textup{haj}w/\|w\|^2 \ge \lambda_{\min}(\Sigma_{\textup{haj}})$.
Thus, we complete the proof.
% The desired result follows from Lemma \ref{lemma:KMS2019thm} in combination with the conditions given in the theorem.

\end{proof}

% Define $d_i$ as the vectorization of $\{{1}_i(t): t\in\mathcal{T}\}$ and $D$ as the matrix with rows $d_i$.
% Let $\tilde{W}=diag\{ \tilde{w}_i\}_{i=1}^n$, $\tilde{Y}$ and $e$ be the vectorization of $\{\tilde{Y}_i: i=1,\ldots, n \}$ and $\{e_i: i=1,\ldots, n \}$ where $e_i$ is the residual from the above weighted least square regression, and $ {K}_n$ be a symmetry matrix with $ij$th entry ${1}(\ell_{ {A}}(i,j)\le b_n)$. 

% Define $d_i$ as the vectorization of $\{{1}_i(t): t\in\mathcal{T}\}$ and $D$ as the matrix with rows $d_i$.
% Let ${W}=\text{diag}\{ {w}_i: i = 1,\ldots, n\}$ and ${Y}$ be the vectorization of $\{Y_i: i=1,\ldots, n \}$.

\begin{lemma} \label{lemma: beta_haj}
Under Assumptions \ref{asu1}--\ref{asu4} and \ref{asu7}(a), we have $\hat{1}_{\textup{ht}} = I + O_{\mathbb{P}}(n^{-1/2})$ and $\hat{Y}_{\textup{haj}} - \mu = O_{\mathbb{P}}(n^{-1/2})$.
\end{lemma}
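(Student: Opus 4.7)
My plan is to split the proof into two pieces matching the two claims, obtaining the second from the first via the identity $\hat{Y}_{\textup{haj}} - \mu = \hat{1}_{\textup{ht}}^{-1} \hat{Y}'_{\textup{ht}}$ recorded just before Lemma \ref{lemma:asym}. In both cases I will bound the variance of a rescaled mean-zero sum by a finite constant and conclude via Chebyshev's inequality, leveraging the $\psi$-dependence structure in Lemma \ref{lemma:Z_i weak dep} together with the network-topology bounds in Assumption \ref{asu7}(a).

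For the first claim, since $|\mathcal{T}|$ is finite and $\hat{1}_{\textup{ht}}$ is diagonal, it suffices to show $\hat{1}_{\textup{ht}}(t) - 1 = O_{\mathbb{P}}(n^{-1/2})$ for each $t \in \mathcal{T}$. Noting that $\mathbb{E}[1_i(t)/\pi_i(t)] = 1$, I would expand
\[
\operatorname{Var}\bigl(\sqrt{n}\,\hat{1}_{\textup{ht}}(t)\bigr) = n^{-1} \sum_{i,j} \operatorname{Cov}\bigl(1_i(t)/\pi_i(t),\, 1_j(t)/\pi_j(t)\bigr).
\]
By Lemma \ref{lemma:Z_i weak dep}(a), the array $\{1_i(t)/\pi_i(t)\}_i$ is $\psi$-dependent with the sharp-cutoff coefficient $\check{\theta}_{n,s} = 1\{s \le 2K\}$, so only pairs at path distance at most $2K$ contribute. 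Combined with the uniform bound $1_i(t)/\pi_i(t) \le \underline{\pi}^{-1}$ from Assumption \ref{asu2}, the variance is at most $C \cdot M_n(2K, 1)$ for some constant $C$. Assumption \ref{asu7}(a) then gives $M_n(2K, 1) = \sum_{s=0}^{2K} M_n^\partial(s) = O(1)$, since $\tilde{\theta}_{n,s}^{1-\epsilon} \equiv 1$ on the initial segment $s \le 2\max\{K,1\}$ and the full sum is $O(1)$. Chebyshev's inequality then delivers the claim.

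For the second claim, I would run the same argument on $\hat{Y}'_{\textup{ht}}(t)$, but now invoke Lemma \ref{lemma:Z_i weak dep}(b), whose covariance bound decays at rate $\tilde{\theta}_{n,s}$ rather than cutting off. The uniform boundedness of the summands (Assumptions \ref{asu2}--\ref{asu3}) makes the $\psi$-functional in \eqref{eq:psi} bounded, yielding
\[
\operatorname{Var}\bigl(\sqrt{n}\,\hat{Y}'_{\textup{ht}}(t)\bigr) \;\le\; C \sum_{s=0}^n M_n^\partial(s)\, \tilde{\theta}_{n,s}.
\]
Since $\tilde{\theta}_{n,s}$ is uniformly bounded by a constant, one has $\tilde{\theta}_{n,s} \le C' \tilde{\theta}_{n,s}^{1-\epsilon}$, and Assumption \ref{asu7}(a) makes this sum $O(1)$. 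Hence $\hat{Y}'_{\textup{ht}} = O_{\mathbb{P}}(n^{-1/2})$ componentwise, and combining with $\hat{1}_{\textup{ht}}^{-1} = I + o_{\mathbb{P}}(1)$ from the first part (via the continuous mapping theorem) gives $\hat{Y}_{\textup{haj}} - \mu = O_{\mathbb{P}}(n^{-1/2})$.

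The main technical nuisance, such as it is, lies in converting the weighted exponent $\tilde{\theta}_{n,s}^{1-\epsilon}$ that appears in Assumption \ref{asu7}(a) back to the plain $\tilde{\theta}_{n,s}$ that governs the covariances from Lemma \ref{lemma:Z_i weak dep}(b); this reduces to noting that the dependence coefficients are uniformly bounded constants under Assumption \ref{asu3}. Everything else is a direct application of the weak-dependence covariance bound together with Chebyshev.
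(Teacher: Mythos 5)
Your proposal is correct and follows essentially the same route as the paper: the paper's proof simply cites the argument of Theorem 2 in Leung (2022) applied to the constant outcome $1$ and to $\hat{Y}'_{\textup{ht}}(t)$, and that argument is precisely the Chebyshev bound you spell out, using the $\psi$-dependence covariance inequalities of Lemma \ref{lemma:Z_i weak dep} together with Assumption \ref{asu7}(a) (noting $\tilde{\theta}_{n,s}^{1-\epsilon}=1$ for $s\le 2\max\{K,1\}$ and $\tilde{\theta}_{n,s}\le C\tilde{\theta}_{n,s}^{1-\epsilon}$ by uniform boundedness), followed by the identity $\hat{Y}_{\textup{haj}}-\mu=\hat{1}_{\textup{ht}}^{-1}\hat{Y}'_{\textup{ht}}$. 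Your write-up just makes explicit the details the paper outsources to the cited reference.
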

\begin{proof}[Proof of Lemma \ref{lemma: beta_haj}]
% The result follows by applying the proof of Theorem 2 in \cite{Leung2022} with Assumption \ref{asu7}(a).
% Note that $\check{\theta}_{n, s} \le \tilde{\theta}_{n,s}$ with $\tilde{\theta}_{n,s}$ defined in (\ref{eq:tildetheta}). 
We prove the results element-by-element.
By applying the proof of Theorem 2 in \cite{Leung2022} with Assumption \ref{asu7}(a) to constant potential outcome $1$ and $\hat{Y}'_{\text{ht}}(t)$, we can show that $\hat{1}_{\text{ht}}(t) = 1 + O_{\mathbb{P}}(n^{-1/2})$ and $\hat{Y}'_{\text{ht}}(t) = O_{\mathbb{P}}(n^{-1/2})$, respectively.
Then the result $\hat{Y}_{\textup{haj}}(t) - \mu(t) = O_{\mathbb{P}}(n^{-1/2})$ follows from $\hat{Y}_{\textup{haj}}(t) - \mu(t) = \hat{1}_{\text{ht}}(t)^{-1}\hat{Y}'_{\text{ht}}(t)$. 

\end{proof}

% \subsection{WLS without covariate adjustment}
Below, we prove the main results in Section \ref{sec:WLS}.

% \begin{proof}[Proof of Proposition \ref{prop:haj}]
% The numerical equivalence between $\hat{\beta}_{\textup{haj}} = \hat{Y}_{\textup{haj}}$ follows from $\hat{\beta}_{\textup{haj}} = ( {Z}^{\top}  {W}  {Z})^{-1}  {Z}^{\top}  {W}  {Y}$. 
% % It is a well-known result on WLS fit \citep{AronowSamii2017}. 
% \end{proof}

% Let ${W}=\text{diag}\{ {w}_i: i = 1,\ldots, n\}$, ${Y}$ and $e_{\textup{haj}}$ be the vectorization of $\{{Y}_i: i=1,\ldots, n \}$ and $\{e_{i}: i=1,\ldots, n \}$ where $e_i$ is the residual from the above weighted least square regression.

% In this case, the regression residual is
% \begin{align*}
% e_i  
% = {Y}_i - \sum_{t\in \mathcal{T}} {1}_i(t) \hat{\beta}_{\textup{haj}}(t)
% \end{align*} 
% Let $e_{\textup{haj}}$ be the vectorization of $\{e_i: i=1,\ldots, n \}$ and $W$ be a symmetrys matrix with $ij$th entry being ${1}(\ell_{ {A}}(i,j)\le b_n)$. 
% The network robust covariances for $\hat{\beta}_{\textup{haj}}$ equal 
% \begin{align*}
% \widehat{V}_{\text{w,haj}}
% =& ( {D}^{\top}  {W}  {D})^{-1}
% (D^{\top} {W} e_{\textup{haj}} W e_{\textup{haj}}^{\top}  {W} D)
% ( {D}^{\top}  {W}  {D})^{-1} 
% \end{align*}

\begin{proof}[Proof of Theorem \ref{thm:Hájek_asym_n}] 
% Define $\Lambda_{\text{ht}} = \text{diag}\{ \hat{1}_{\text{ht}}(t): t\in\mathcal{T} \}$.
% \paragraph*{Asymptotic normality.}
% \textbf{Asymptotic normality.}
% We first show the asymptotic normality.
By Lemma \ref{lemma: beta_haj}, $\hat{1}_{\text{ht}} = {I} + o_\mathbb{P}(1)$.
Since $\hat{\beta}_{\textup{haj}} - \mu 
= \hat{1}_{\text{ht}}^{-1} \hat{Y}'_{\text{ht}}$, 
the asymptotic normality of $\hat{\beta}_{\textup{haj}}$ follows from Slutsky's theorem and Lemma \ref{lemma:asym}.

% By our Lemma \ref{lemma:asym} and Lemma \ref{lemma:KMS2019prop}, we have
% \[
% \hat{ {\Sigma}}_{*,\textup{haj}} 
% = \operatorname{Var}\left(\sqrt{n} \hat{Y}'_{\text{ht}}\right) 
% + o_\mathbb{P}(1)
% = {\Sigma}_\textup{haj}
% + o_\mathbb{P}(1).
% \]
\end{proof}

\begin{proof}[Proof of Theorem \ref{thm:Hájek_bias}] 

We first show the result of ${\Sigma}_{*, \textup{haj}}
= {\Sigma}_{\textup{haj}}
+ o_\mathbb{P}(1)$.
% \textbf{Oracle covariance.}
Define
\[
U_i = \left( \frac{1_i(t)}{\pi_i(t)}  (Y_i- \mu(t)) - (\mu_i(t) - \mu(t)): t \in \mathcal{T} \right). 
\] 
By Lemma \ref{lemma:Z_i weak dep}, $\left\{U_i\right\}_{i=1}^n$ is $\psi$-dependent with dependence coefficients $\tilde{\theta}_{n, s}$ defined in \eqref{eq:tildetheta} for all $n \in \mathbb{N}$ and $s>0$. Then by Lemma \ref{lemma:KMS2019prop} and Lemma \ref{lemma:asym}, we have 
\begin{align*}
\hat{ {\Sigma}}_{*,\textup{haj}} 
= \operatorname{Var}\left(n^{-1 / 2} \sum_{i=1}^n 
U_i\right) + o_\mathbb{P}(1)
= \operatorname{Var}\left(\sqrt{n} \hat{Y}'_{\text{ht}}\right) 
+ o_\mathbb{P}(1)  
= {\Sigma}_\textup{haj}
+ o_\mathbb{P}(1). 
\end{align*}
% \textbf{Formula of $\hat{ {V}}_{\textup{haj}}$.}
Then we show the result of $n \hat{ {V}}_{\textup{haj}}
= {\Sigma}_{*, \textup{haj}} + R_{\textup{haj}} + o_\mathbb{P}(1)$.
% With direct algebra, $\frac{1}{n}  {Z}^{\top}  {W}  {Z} 
% = \text{diag}\{\hat{1}_{\text{ht}}(t): t\in\mathcal{T}\}$
% and 
% \begin{align*}
%  {Z}^{\top}  {W}  {e}_{\textup{haj}}  {K}_n 
%  {e}_{\textup{haj}}  {W}  {Z}    
% =& \sum_{i=1}^n \sum_{j=1}^n
% \frac{{1}_i(t)}{\pi_i(t)} e_i
% \frac{{1}_j(t')}{\pi_j(t')} e_j
% {1}(\ell_{ {A}}(i,j)\le b_n)
% \end{align*}
By algebra,
\begin{align*}
\hat{ {V}}_{\textup{haj}}
% =& ( {Z}^{\top}  {W}  {Z})^{-1}
% ( {Z}^{\top}  {W} 
%     {e}_{\textup{haj}}
%     {K}_n 
%     {e}_{\textup{haj}}
%     {W}  {Z})
% ( {Z}^{\top}  {W}  {Z})^{-1} \\
=& \left( n^{-2} \sum_{i=1}^n \sum_{j=1}^n 
\frac{1_i(t) (Y_i- \hat{\beta}_{\textup{haj}}(t)) }{\pi_i(t) \hat{1}_{\text{ht}}(t)}  
\frac{1_j(t') (Y_j-\hat{\beta}_{\textup{haj}}(t')) }{\pi_j(t') \hat{1}_{\text{ht}}(t')} 
K_n(i,j)
\right)_{t,t'\in\mathcal{T}}. 
% =& \Lambda_{\text{ht}}^{-1}
%  {K}_n 
% \Lambda_{\text{ht}}^{-1}
\end{align*}
% \paragraph*{Bias of $\hat{V}_{\textup{haj}}$.}
Let $\hat{ {V}}_{\textup{haj}}(t,t')$ and $\hat{ {\Sigma}}_{*,\textup{haj}}(t,t')$ be the $(t,t')$th entry of $\hat{ {V}}_{\textup{haj}}$ and $\hat{ {\Sigma}}_{*,\textup{haj}}$, respectively. We have
\begin{align*}
& n \hat{ {V}}_{\textup{haj}}(t,t')
% = \frac{1}{n} \sum_{i=1}^n \sum_{j=1}^n 
% \frac{1_i(t) (Y_i- \hat{\beta}_{\textup{haj}}(t))}{\pi_i(t) \hat{1}_{\text{ht}}(t)}   
% \frac{1_j(t') (Y_j-\hat{\beta}_{\textup{haj}}(t')) }{\pi_j(t') \hat{1}_{\text{ht}}(t')} 
% K_n(i,j) \\
=
\frac{1}{n} \sum_{i=1}^n \sum_{j=1}^n 
\frac{1_i(t)(Y_i- \hat{\beta}_{\textup{haj}}(t))}{\pi_i(t)}   
\frac{1_j(t')(Y_j-\hat{\beta}_{\textup{haj}}(t'))}{\pi_j(t')} 
K_n(i,j) + o_\mathbb{P}(1) \\
% =& \hat{ {\Sigma}}_{*,\textup{haj}}(t,t') 
% + \frac{1}{n}\sum_{i=1}^n \sum_{j=1}^n 
% \left( \mu_i(t) - \mu(t)  \right)
% \left( \mu_j(t) - \mu(t)  \right)
% {1}(\ell_{ {A}}(i,j)\le b_n) \\
% \overset{(1)}{=}
% & \frac{1}{n} \sum_{i=1}^n \sum_{j=1}^n
% \begin{array}{c}
% \left( \frac{{1}_i(t)(Y_i - \hat{\beta}_{\textup{haj}}(t))}{\pi_i(t)} \right) 
% \left( \frac{{1}_j(t')(Y_j- \hat{\beta}_{\textup{haj}}(t'))}{\pi_j(t')}  \right)
% \end{array}
% {1}(\ell_{ {A}}(i,j)\le b_n)
% + o_\mathbb{P}(1) \\
&= \frac{1}{n} \sum_{i=1}^n \sum_{j=1}^n
% \begin{array}{c}
\frac{1_i(t)(Y_i - \mu(t))}{\pi_i(t)}
\frac{1_j(t')(Y_j- \mu(t'))}{\pi_j(t')}
% \end{array} 
K_n(i,j) + o_\mathbb{P}(1) \\
&+ \frac{1}{n} \sum_{i=1}^n \sum_{j=1}^n
% \begin{array}{c}
\left(\frac{1_i(t)(\mu(t) - \hat{\beta}_{\textup{haj}}(t))}{\pi_i(t)}   
+ 2 \frac{1_i(t)(Y_i - \mu(t)) }{\pi_i(t)}   
\right) 
\frac{1_j(t')(\mu(t')- \hat{\beta}_{\textup{haj}}(t'))}{\pi_j(t')}  
K_n(i,j),
% =& \frac{1}{n} \sum_{i=1}^n \sum_{j=1}^n
% \begin{array}{c}
% \left( \frac{{1}_i(t)(Y_i - \mu(t))}{\pi_i(t)}  \right) 
% \left( \frac{{1}_j(t')(Y_j- \mu(t'))}{\pi_j(t')} \right)
% \end{array} 
% {1}(\ell_{ {A}}(i,j)\le b_n) 
% % +& \frac{1}{n} \sum_{\ell_{ {A}}(i,j)\le b_n}
% % \begin{array}{c}
% % \left(\frac{{1}_i(t)(\mu(t) - \hat{\beta}_{\textup{haj}}(t))}{\pi_i(t)}   
% %  + 2 \frac{{1}_i(t)(Y_i - \mu(t)) }{\pi_i(t)}   
% %   \right) 
% % \left( \frac{{1}_j(t')(\mu(t')- \hat{\beta}_{\textup{haj}}(t'))}{\pi_j(t')}  \right)
% % \end{array}
% + o_\mathbb{P}(1),
\end{align*}
where the first equality holds by Slutsky's Theorem, Assumptions \ref{asu2}--\ref{asu3}, and Lemma \ref{lemma: beta_haj} that $\hat{1}_{\text{ht}}(t) = 1 + o_\mathbb{P}(1)$ for all $t\in\mathcal{T}$. 
By Lemma \ref{lemma: beta_haj}, $\hat{\beta}_{\textup{haj}}(t) - \mu(t) = O_{\mathbb{P}}(n^{-1/2})$ for all $t\in\mathcal{T}$, and under Assumptions \ref{asu2} and \ref{asu3}, ${1}_i(t)\pi_i(t)^{-1} (Y_i- \mu(t))$ is uniformly bounded. Then for some $C >0$ and any $n$, we have
\begin{align*}
& \left| \frac{1}{n} \sum_{i=1}^n \sum_{j=1}^n
% \begin{array}{c}
\left(\frac{{1}_i(t)(\mu(t) - \hat{\beta}_{\textup{haj}}(t))}{\pi_i(t)}    
+ 2 \frac{{1}_i(t)(Y_i - \mu(t))}{\pi_i(t)}   \right) 
\frac{{1}_j(t')(\mu(t') - \hat{\beta}_{\textup{haj}}(t')) }{\pi_j(t')} K_n(i,j)
% \end{array}
\right| \\
& \le 
C \left| \mu(t') - \hat{\beta}_{\textup{haj}}(t') \right| 
\frac{1}{n} \sum_{i=1}^n \sum_{j=1}^n
K_n(i,j) \\
& = o_\mathbb{P}(1),
\end{align*}
where the last line holds by Assumption \ref{asu7}(b).
Furthermore,
\begin{align*}
& n \hat{V}_{\textup{haj}}(t,t') 
= \frac{1}{n} \sum_{i=1}^n \sum_{j=1}^n
\frac{{1}_i(t)(Y_i - \mu(t))}{\pi_i(t)} 
\frac{{1}_j(t')(Y_j - \mu(t'))}{\pi_j(t')}
K_n(i,j)
+ o_\mathbb{P}(1) \\
=& \hat{ {\Sigma}}_{*,\textup{haj}}(t,t')
+ \frac{1}{n} \sum_{i=1}^n \sum_{j=1}^n
(\mu_i(t) - \mu(t))
(\mu_j(t') - \mu(t'))
K_n(i,j) 
+ r_n(t,t')
+ r_n(t',t)
+ o_\mathbb{P}(1),
\end{align*}
where 
\[
r_n(t,t')
= \frac{1}{n} \sum_{i=1}^n \sum_{j=1}^n
% \begin{array}{c}
\left( \frac{{1}_i(t)(Y_i - \mu(t))}{\pi_i(t)}     
- (\mu_i(t) - \mu(t)) \right)   
(\mu_j(t') - \mu(t'))
% \end{array}
K_n(i,j).
\]
Now we show $r_n(t,t') = o_\mathbb{P}(1)$.
Define $W_i=\sum_{j=1}^n\left(\mu_j(t^{\prime})-\mu(t^{\prime})\right) K_n(i,j)$. Then we have
$$
\begin{aligned}
& \mathbb{E}\left[
\left| 
\frac{1}{n} \sum_{i=1}^n \sum_{j=1}^n
% \begin{array}{c}
\left( \frac{{1}_i(t)(Y_i - \mu(t))}{\pi_i(t)}      
- (\mu_i(t) - \mu(t)) \right)
\left(\mu_j(t^{\prime})-\mu(t^{\prime})\right) 
% \end{array}
K_n(i,j) \right| \right] \\
& \leq  \mathbb{E}\left[\left(
\frac{1}{n} \sum_{i=1}^n
\left( \frac{{1}_i(t)(Y_i - \mu(t))}{\pi_i(t)}     
- (\mu_i(t) - \mu(t)) \right) W_i\right)^2\right]^{1 / 2} \\
& \leq \left(\frac{1}{n^2} \sum_{i=1}^n \operatorname{Var}\left( \frac{{1}_i(t)(Y_i - \mu(t))}{\pi_i(t)}    \right) W_i^2
+ C \frac{1}{n^2} \sum_{s=0}^n \tilde{\theta}_{n, s} \sum_{j \neq i} {1}(\ell_{ {A}}(i, j)=s)\left|W_i W_j\right|\right)^{1 / 2}
\end{aligned}
$$
for some $C>0$, where the first inequality holds by Jensen's inequality and the second inequality holds by Lemma \ref{lemma:Z_i weak dep}.
Since ${1}_i(t)(Y_i - \mu(t))/\pi_i(t)$ is uniformly bounded by Assumptions \ref{asu2} and \ref{asu3}, for some $C^{\prime}>0$,
\[
n^{-2} \sum_{i=1}^n \operatorname{Var}\left(\frac{{1}_i(t)(Y_i - \mu(t))}{\pi_i(t)} \right) W_i^2 \leq C^{\prime} n^{-1} M_n\left(b_n, 2\right),
\]
which is $o(1)$ by Assumption \ref{asu7}(c). Likewise,
$$
n^{-2} \sum_{s=0}^n \tilde{\theta}_{n, s} \sum_{i=1}^n \sum_{j \neq i} {1}(\ell_{ {A}}(i, j)=s)\left|W_i W_j\right| \leq \frac{C^{\prime \prime}}{n^2} \sum_{s=0}^n \tilde{\theta}_{n, s} \mathcal{J}_n\left(s, b_n\right)
$$
for some $C^{\prime \prime}>0$, and the right-hand side term is $o(1)$ by Assumption \ref{asu7}(d). The result $r_n(t',t) = o_\mathbb{P}(1)$ follows from symmetry. 
Thus, we complete the proof. 
\end{proof}

\begin{proof}[Proof of Theorem \ref{thm:Hájek_adj}] 
% To correct the bias term, we replace the truncated kernel with an adjusted weight matrix.
% Define $\hat{\mathbf{\Delta}}_{\textup{haj}}(t,t')$ as the stacked vector of
% \[
% \frac{\frac{{1}_i(t)}{\pi_i(t)}}{\hat{1}_{\text{ht}}(t)} (Y_i - \hat{\beta}_{\textup{haj}}(t))   - \frac{\frac{{1}_i(t')}{\pi_i(t')}}{\hat{1}_{\text{ht}}(t')}   (Y_i- \hat{\beta}_{\textup{haj}}(t')). 
% \]
% We define the adjusted covariance estimator as
% \[
% \hat{\Sigma}^2_{w,haj,+}(t,t') 
% = \frac{1}{n} \hat{\mathbf{\Delta}}_{\textup{haj}}(t,t')  {K}_n^{+} \hat{\mathbf{\Delta}}_{\textup{haj}}(t,t').
% \]
Let $\hat{ {V}}_{\textup{haj}}^+(t,t')$ be the $(t,t')$th element of $\hat{ {V}}_{\textup{haj}}^+$.  
We have
\begin{align*}
 n \hat{ {V}}_{\textup{haj}}^+(t,t') 
=& n \hat{ {V}}_{\textup{haj}}(t,t')
+ \frac{1}{n} \sum_{i=1}^n \sum_{j=1}^n 
\frac{{1}_i(t) (Y_i- \hat{\beta}_{\textup{haj}}(t))}{\pi_i(t) \hat{1}_{\text{ht}}(t)}   
\frac{{1}_j(t') (Y_j-\hat{\beta}_{\textup{haj}}(t'))}{\pi_j(t') \hat{1}_{\text{ht}}(t')}  
K_{n}^-(i,j) \\
% =& \hat{\Sigma}_{*,\textup{haj}}(t,t') + R_{\textup{haj}}(t,t') 
% +  \frac{1}{n} \sum_{i=1}^n \sum_{j=1}^n 
% % \begin{array}{c}
% \frac{{1}_i(t) (Y_i- \hat{\beta}_{\textup{haj}}(t)) {1}_j(t') (Y_j-\hat{\beta}_{\textup{haj}}(t'))}{\pi_i(t) \pi_j(t')}  
% % \frac{ }{}  
% % \end{array}
% {1}^-\{\ell_{ {A}}(i,j)\le b_n\} + o_\mathbb{P}(1) \\
=& \hat{\Sigma}_{*,\textup{haj}}(t,t')
+ \frac{1}{n} \sum_{i=1}^n \sum_{j=1}^n
(\mu_i(t) - \mu(t))
(\mu_j(t') - \mu(t')) 
\left(K_{n}^+(i,j) - K_{n}^-(i,j)\right)  \\
% &- \frac{1}{n} \sum_{i=1}^n \sum_{j=1}^n
% (\mu_i(t) - \mu(t))
% (\mu_j(t') - \mu(t')) 
% K_{n}^-(i,j) \\
+&  \frac{1}{n} \sum_{i=1}^n \sum_{j=1}^n 
\frac{{1}_i(t) (Y_i- \hat{\beta}_{\textup{haj}}(t)) }{\pi_i(t)}  
\frac{{1}_j(t') (Y_j-\hat{\beta}_{\textup{haj}}(t')) }{\pi_j(t')}  
K_{n}^-(i,j) + o_\mathbb{P}(1),
% =& \frac{1}{n} \hat{\mathbf{\Delta}}_{\textup{haj}}(t,t')^{\top} 
%  {K}_n^{+} 
% \hat{\mathbf{\Delta}}_{\textup{haj}}(t,t') \\
% =& \hat{\Sigma}^{2}_\textup{haj}(t,t')
% + \frac{1}{n} 
% \hat{\mathbf{\Delta}}_{\textup{haj}}(t,t')^{\top} 
%  {K}_n^{-}
% \hat{\mathbf{\Delta}}_{\textup{haj}}(t,t') \\
% =& \hat{\Sigma}_{*,\textup{haj}}^{2}(t,t') + R_\textup{haj}(t,t')
% + \frac{1}{n} 
% \hat{\mathbf{\Delta}}_{\textup{haj}}(t,t')^{\top} 
%  {K}_n^{-}
% \hat{\mathbf{\Delta}}_{\textup{haj}}(t,t')
% + o_\mathbb{P}(1) \\
% =& \hat{\Sigma}_{*,\textup{haj}}^{2}(t,t')
% + \frac{1}{n} 
% \overline{ \tau}(t,t')^{\top} 
%  {K}_n^{+} 
% \overline{ \tau}(t,t') 
% - \frac{1}{n} 
% \overline{ \tau}(t,t')^{\top} 
%  {K}_n^{-}  
% \overline{ \tau}(t,t') 
% + \frac{1}{n} 
% \hat{\mathbf{\Delta}}_{\textup{haj}}(t,t')^{\top} 
%  {K}_n^{-}  
% \hat{\mathbf{\Delta}}_{\textup{haj}}(t,t')
% + o_\mathbb{P}(1)
\end{align*}
where the second equality holds by Lemma \ref{lemma: beta_haj}, Theorem \ref{thm:Hájek_bias} and definition of $K_n$. 
% \begin{align*}
% & \frac{1}{n} \sum_{i=1}^n \sum_{j=1}^n 
% \frac{{1}_i(t) (Y_i- \hat{\beta}_{\textup{haj}}(t)) }{\pi_i(t)}  
% \frac{{1}_j(t') (Y_j-\hat{\beta}_{\textup{haj}}(t')) }{\pi_j(t')}  
% K_{n}^-(i,j) \\
% &= \frac{1}{n} \sum_{i=1}^n \sum_{j=1}^n
% % \begin{array}{c}
% \frac{1_i(t)(Y_i - \mu(t))}{\pi_i(t)}
% \frac{1_j(t')(Y_j- \mu(t'))}{\pi_j(t')}
% % \end{array} 
% K_{n}^-(i,j)\\
% &+ (\mu(t')- \hat{\beta}_{\textup{haj}}(t')) \frac{1}{n} \sum_{i=1}^n 
% % \begin{array}{c}
% \left(\frac{1_i(t)(\mu(t) - \hat{\beta}_{\textup{haj}}(t))}{\pi_i(t)}   
% + 2 \frac{1_i(t)(Y_i - \mu(t)) }{\pi_i(t)}   
% \right) 
% \sum_{j=1}^n \frac{1_j(t')}{\pi_j(t')}
% K_{n}^-(i,j)
% \end{align*}
Applying the proof of Theorem \ref{thm:Hájek_bias} but replacing Assumption \ref{asu7} with Assumption \ref{asu8}, we can show that 
\begin{align*}
& \frac{1}{n} \sum_{i=1}^n \sum_{j=1}^n 
\frac{{1}_i(t) (Y_i- \hat{\beta}_{\textup{haj}}(t)) }{\pi_i(t)}  
\frac{{1}_j(t') (Y_j-\hat{\beta}_{\textup{haj}}(t')) }{\pi_j(t')}  
K_{n}^-(i,j) \\
=& \frac{1}{n} \sum_{i=1}^n \sum_{j=1}^n
(\mu_i(t) - \mu(t))
(\mu_j(t') - \mu(t')) 
K_{n}^-(i,j) \\
+& \frac{1}{n} \sum_{i=1}^n \sum_{j=1}^n 
% \begin{array}{c}
\left( \frac{{1}_i(t)(Y_i - \mu(t))}{\pi_i(t)}  - M(i,t)  \right)
\left( \frac{{1}_j(t')(Y_j - \mu(t'))}{\pi_j(t')}  - M(j,t') \right)
K_{n}^-(i,j) 
+ o_\mathbb{P}(1).
\end{align*}
Thus, we complete the proof. 
\end{proof}

\subsection{Proofs of the results in Section \ref{sec:covadju}}\label{app:covadj}
\subsubsection{Some useful lemmas}
% Define the following $Q$'s to simplify the presentation: 
% \begin{align*}
% Q_{xx} =& n^{-1} \sum_{i=1}^n x_ix_i^\top, \quad  
% Q_{xx}(t;\pi) = n^{-1} \sum_{i=1}^n 
% x_ix_i^\top \frac{1- \pi_i(t) }{\pi_i(t)}, \\
% Q_{xx}(t;\pi^{-1}) =& n^{-1} \sum_{i=1}^n
% \frac{ x_ix_i^\top }{ \pi_i(t) }, \quad 
% Q_{xx}(t,t';\pi) =  n^{-1}  \sum_{i=1}^n
% x_ix_i^\top \tfrac{ \pi_i(t) + \pi_i(t') }{\pi_i(t)\pi_i(t')}. 
% \end{align*}
Define $Q_{xx} = n^{-1} \sum_{i=1}^n x_ix_i^\top$ and the covariate-adjusted outcome as $Y_i(t;\gamma) = Y_i - x_i^\top {\gamma}(t)$. 
Let
\[
\hat{x}_{\text{ht}}(t) = n^{-1} \sum_{i=1}^n \frac{{1}_i(t)}{\pi_i(t)} x_i
\] 
be a $J\times 1$ Horvitz--Thompson estimator for $\bar{x}=n^{-1}\sum_{i=1}^n x_i=0$ under exposure mapping value $t$ and then combine $\hat{x}_{\text{ht}}(t)$ across all $t\in\mathcal{T}$ to get a $|\mathcal{T}|\times J$ matrix $\hat{x}_{\text{ht}} = (\hat{x}_{\text{ht}}(t): {t\in\mathcal{T}})$. 
\begin{lemma} \label{lemma: x_ht}
Under Assumptions \ref{asu1}--\ref{asu2}, \ref{asu7}(a) and \ref{asu9}, for all $t \in \mathcal{T}$, we have
\begin{itemize}
\item[(i)] $n^{-1}\sum_{i=1}^n {1}_i(t)\pi_i(t)^{-1} x_i = O_{\mathbb{P}}(n^{-1/2})$,
\item[(ii)] $n^{-1}\sum_{i=1}^n {1}_i(t)\pi_i(t)^{-1} x_ix_i^\top = Q_{xx} + O_{\mathbb{P}}(n^{-1/2})$,
\item[(iii)] $n^{-1}\sum_{i=1}^n {1}_i(t)\pi_i(t)^{-1} x_i Y_i = n^{-1}\sum_{i=1}^n x_i \mu_i(t) + O_{\mathbb{P}}(n^{-1/2})$.
\end{itemize}

\end{lemma}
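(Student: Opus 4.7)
The plan is to prove each of the three statements by the same two-step recipe: first verify that the Horvitz--Thompson-style estimator is unbiased for its target (so only a mean-zero fluctuation needs to be bounded), and then control the variance of this fluctuation using the weak network dependence developed in Lemma \ref{lemma:Z_i weak dep}. Element-wise Chebyshev inequalities (or Markov with a second-moment bound, aggregated over the finitely many entries of a fixed-dimensional vector/matrix) then deliver the $O_\mathbb{P}(n^{-1/2})$ rate. Because $\bar x = 0$ by the centering convention, the target in (i) is zero, the target in (ii) is $Q_{xx}$, and in (iii) it is $n^{-1}\sum_i x_i \mu_i(t)$, with unbiasedness following immediately from $\mathbb{E}[1_i(t)/\pi_i(t)] = 1$ combined with the tower property and \eqref{eq:mu}.

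For (i) and (ii) the summands $U_i^{(1)} = (1_i(t)/\pi_i(t)-1)\,x_i$ and $U_i^{(2)} = (1_i(t)/\pi_i(t))x_i x_i^\top - x_i x_i^\top$ are mean-zero measurable functions of $1_i(t)/\pi_i(t)$, and Assumption \ref{asu1} makes $1_i(t)$ a local function of $D_{\mathcal{N}(i,K;A)}$. Thus by Lemma \ref{lemma:Z_i weak dep}(a) the summands are $\psi$-dependent with coefficients $\check\theta_{n,s}=\mathbf{1}\{s\le 2K\}$, so their cross-covariances vanish once $\ell_A(i,j)>2K$. Using uniform boundedness of $x_i$ (Assumption \ref{asu9}(a)) and $1/\pi_i(t)$ (Assumption \ref{asu2}), I would bound each entry of the covariance matrix of $n^{-1}\sum_i U_i^{(k)}$ by
\[
\frac{C}{n^2}\,\#\{(i,j):\ell_A(i,j)\le 2K\} = \frac{C}{n}\,M_n(2K,1).
\]
Now $\tilde\theta_{n,s}=1$ for $s\le 2\max\{K,1\}$ by \eqref{eq:tildetheta}, so Assumption \ref{asu7}(a) forces $\sum_{s\le 2K}M_n^\partial(s)=O(1)$, i.e.\ $M_n(2K,1)=O(1)$. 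Hence each entry has variance $O(n^{-1})$ and Chebyshev yields (i) and (ii).

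For (iii) the summand $U_i^{(3)} = (1_i(t)/\pi_i(t))x_i Y_i - x_i \mu_i(t)$ depends on $Y_i$ and therefore inherits the ANI structure rather than strict $K$-dependence. For an arbitrary fixed unit vector $w\in\mathbb{R}^J$, I would write $w^\top U_i^{(3)}$ as the centred version of $\{1_i(t)(Y_i-\mu(t))/\pi_i(t)\}$ multiplied by the bounded deterministic weight $x_i^\top w$, plus a mean-zero residual of the same form as in (i). Lemma \ref{lemma:Z_i weak dep}(b) together with Lemma \ref{lemma:KMS2021lemma} then yields $\psi$-dependence with coefficients $\tilde\theta_{n,s}$ from \eqref{eq:tildetheta}, and the bounded, constant-Lipschitz nature of the weight keeps the functional $\psi_{h,h'}$ in \eqref{eq:psi} uniformly controlled. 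The variance bound becomes
\[
\operatorname{Var}\!\left(n^{-1}\textstyle\sum_i w^\top U_i^{(3)}\right) \le \frac{C}{n^2}\sum_{i=1}^n \sum_{s=0}^{n}\tilde\theta_{n,s}\,|\mathcal{N}^\partial(i,s;A)| = \frac{C}{n}\sum_{s=0}^{n} M_n^\partial(s)\,\tilde\theta_{n,s},
\]
which is $O(n^{-1})$ by Assumption \ref{asu7}(a) (taking $\epsilon=0$ is weaker than the stated bound, which already supplies summability).

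The main obstacle will be (iii): one must ensure that attaching the weight $x_i$ to the ANI summand does not disrupt the $\psi$-dependence machinery. Since $x_i$ is nonrandom and uniformly bounded with trivial Lipschitz constant in the treatment argument, it can be absorbed either as part of the test function $f$ or as a fixed scalar coefficient via Lemma \ref{lemma:KMS2021lemma}; the same Lipschitz-plus-sup-norm bound in \eqref{eq:psi} continues to apply after rescaling by a factor depending only on $c_x$ from Assumption \ref{asu9}(a). Once this is in place, the variance bound above gives the $n^{-1/2}$ rate entrywise, and a union bound over the $J$ coordinates finishes the proof of (iii).
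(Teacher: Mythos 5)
Your proposal is correct and follows essentially the same route as the paper: the paper's one-line proof defers to the argument for Lemma \ref{lemma: beta_haj}, i.e.\ applying the rate argument of \citet[Theorem 2]{Leung2022} (unbiasedness of the Horvitz--Thompson sum, a variance bound via the $\psi$-dependence of Lemma \ref{lemma:Z_i weak dep} controlled by Assumption \ref{asu7}(a), then Chebyshev) to the ``outcomes'' $x_i$, $x_ix_i^\top$ and $x_iY_i$, which is exactly what you spell out. Your only refinements — using the strict $2K$-dependence for (i)--(ii) and absorbing the bounded deterministic weight $x_i$ via Lemma \ref{lemma:KMS2021lemma} for (iii) — are consistent with that argument, and the passage from $\sum_s M_n^\partial(s)\tilde\theta_{n,s}^{1-\epsilon}=O(1)$ to $\sum_s M_n^\partial(s)\tilde\theta_{n,s}=O(1)$ is justified because the coefficients are uniformly bounded.
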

\begin{proof}[Proof of Lemma \ref{lemma: x_ht}]
The results follow from an analogous argument to the proof of Lemma \ref{lemma: beta_haj}.  
\end{proof}

\begin{lemma} \label{lemma: asym}
Consider the Hájek estimator with covariate adjustment: 
\[
\hat{\beta}_{\textup{haj}}(\hat{\gamma} )
=(\hat{\beta}_{\textup{haj}}(t;\hat{\gamma}):{t\in\mathcal{T}})
= \left( \frac{1}{n} \sum_{i=1}^n 
\frac{{1}_i(t)Y_i(t;\hat{\gamma})}{\pi_i(t)} \big/ \hat{1}_{\textup{ht}}(t)  : {t\in\mathcal{T}} \right). 
\]
Under Assumptions \ref{asu1}--\ref{asu6} and \ref{asu9}, with some fixed vector $\gamma$ that satisfies $\hat{\gamma} = {\gamma} + o_\mathbb{P}(1)$,
we have ${\Sigma}^{-1/2}_{\textup{haj}}({\gamma} ) 
\sqrt{n} ( \hat{\beta}_{\textup{haj}} (\hat{\gamma} ) -  {\mu} ) 
\stackrel{\textup{d}}{\rightarrow}  \mathcal{N}(0,{I})$,
% \[
% {\Sigma}^{-1/2}_{\textup{haj}}({\gamma} ) 
% \sqrt{n}\left( \hat{\beta}_{\textup{haj}} (\hat{\gamma} ) -  {\mu} \right) 
% \stackrel{\textup{d}}{\rightarrow}  \mathcal{N}(0,{I}),
% \]
where ${\Sigma}_{\textup{haj}}({\gamma} ) 
= \operatorname{Var} (  n^{-1/2} \sum_{i=1}^n  {1}_i(t)(Y_i(t;\gamma) -\mu(t) )/\pi_i(t)  :{t\in\mathcal{T}} )$.
% \[
% {\Sigma}_{\textup{haj}}({\gamma} ) 
% = \operatorname{Var}\left(  \frac{1}{\sqrt{n}} \sum_{i=1}^n 
% \frac{{1}_i(t)(Y_i(t;\gamma)   -\mu(t) )}{\pi_i(t)}  :{t\in\mathcal{T}} \right).
% \]
% ${\Sigma}_n^2(t,t',{\gamma} )
% = \operatorname{Var}\left(\hat{\tau}_{\text{ht}}\left(t, t^{\prime}, {\gamma} \right)\right)$. 
% with 
% $$ 
% \tilde{\Delta}_i(t,t',{\gamma} ) 
% = \left( \frac{{1}_i(t)}{\pi_i(t)} (Y_i-{\gamma}(t)^\top x_i  -\mu(t) ) - \frac{{1}_i(t')}{\pi_i(t')} (Y_i- {\gamma}(t')^\top x_i  -\mu(t') ) \right). 
% $$
\end{lemma}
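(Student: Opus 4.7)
The plan is to reduce the problem with estimated $\hat{\gamma}$ to the analogous problem with fixed $\gamma$, and then replay the argument of Lemma \ref{lemma:asym} and Theorem \ref{thm:Hájek_asym_n} on the covariate-adjusted outcome $Y_i(t;\gamma) = Y_i - x_i^\top \gamma(t)$.

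\textbf{Step 1: Linearization.} By expanding the numerator of the Hájek-type expression I can write
\[
\hat{\beta}_{\textup{haj}}(t;\hat{\gamma}) = \hat{Y}_{\textup{haj}}(t) - \hat{x}_{\textup{haj}}(t)^\top \hat{\gamma}(t),\qquad \hat{x}_{\textup{haj}}(t) = \hat{x}_{\textup{ht}}(t)/\hat{1}_{\textup{ht}}(t).
\]
Because the covariates are centered ($\bar x = 0$), Lemma \ref{lemma: x_ht}(i) combined with Lemma \ref{lemma: beta_haj} gives $\hat{x}_{\textup{haj}}(t) = O_\mathbb{P}(n^{-1/2})$. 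Therefore
\[
\hat{\beta}_{\textup{haj}}(t;\hat{\gamma}) - \hat{\beta}_{\textup{haj}}(t;\gamma) = -\hat{x}_{\textup{haj}}(t)^\top (\hat{\gamma}(t) - \gamma(t)) = O_\mathbb{P}(n^{-1/2}) \cdot o_\mathbb{P}(1) = o_\mathbb{P}(n^{-1/2}),
\]
so by Slutsky's theorem it suffices to establish the stated CLT for the fixed-$\gamma$ version $\hat{\beta}_{\textup{haj}}(\gamma)$.

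\textbf{Step 2: Reduction to an inverse-probability sum.} For the fixed-$\gamma$ estimator, direct algebra gives
\[
\hat{\beta}_{\textup{haj}}(t;\gamma) - \mu(t) = \hat{1}_{\textup{ht}}(t)^{-1}\cdot \frac{1}{n}\sum_{i=1}^n \frac{1_i(t)}{\pi_i(t)}\bigl(Y_i(t;\gamma) - \mu(t)\bigr).
\]
Since $\hat{1}_{\textup{ht}}(t) = 1 + o_\mathbb{P}(1)$ by Lemma \ref{lemma: beta_haj}, another application of Slutsky reduces the problem to proving joint asymptotic normality of $\sqrt{n}$ times the vector $(n^{-1}\sum_i 1_i(t)\pi_i(t)^{-1}(Y_i(t;\gamma)-\mu(t)) : t\in\mathcal{T})$ with the limiting covariance $\Sigma_{\textup{haj}}(\gamma)$.

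\textbf{Step 3: CLT for the adjusted array.} I define the mean-zero array
\[
\tilde{U}_i = \Bigl(\frac{1_i(t)}{\pi_i(t)}\bigl(Y_i(t;\gamma)-\mu(t)\bigr) - \bigl(\mu_i(t)-\mu(t)-x_i^\top\gamma(t)\bigr) : t\in\mathcal{T}\Bigr),
\]
noting that the deterministic demeaning terms sum to zero over $i$ (using $\bar x = 0$ and $\sum_i(\mu_i(t)-\mu(t)) = 0$), so $\sqrt{n}$ times the inverse-probability sum above is exactly $n^{-1/2}\sum_i \tilde{U}_i$. Because $x_i$ is uniformly bounded (Assumption \ref{asu9}(a)) and $\gamma$ is fixed, the adjusted outcome $Y_i(t;\gamma)$ is uniformly bounded, so the proof of Lemma \ref{lemma:Z_i weak dep}(b) (i.e., Theorem 1 in Leung (2022)) goes through verbatim and shows that $\{\tilde{U}_i\}$ is $\psi$-dependent with the same coefficients $\tilde{\theta}_{n,s}$. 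I then invoke the Cramér--Wold device, reducing to scalar linear combinations $w^\top \tilde{U}_i/\|w\|$, which are still $\psi$-dependent by Lemma \ref{lemma:KMS2021lemma}. Assumption \ref{asu9}(b) provides precisely the moment-and-network conditions that Lemma \ref{lemma:KMS2019thm} requires, with $\sigma_n^2 \geq \lambda_{\min}(\Sigma_n(\gamma))$ supplying the eigenvalue lower bound in the denominators. The conclusion of the CLT, combined with Steps 1 and 2 via Slutsky, yields the stated convergence.

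\textbf{Main obstacle.} The chief technical point is Step 3: re-establishing the $\psi$-dependence of the covariate-adjusted array with the same dependence coefficients as in the unadjusted case. This hinges on treating $x_i^\top\gamma(t)$ as a deterministic shift and invoking Assumption \ref{asu9}(a) so that the Lipschitz/boundedness constants used in \eqref{eq:psi} remain finite; once this is verified, everything else is a direct translation of the unadjusted proof.
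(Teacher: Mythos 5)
Your proposal is correct and follows essentially the same route as the paper: reduce to the fixed-$\gamma$ estimator via $\hat{\beta}_{\textup{haj}}(t;\hat{\gamma}) - \hat{\beta}_{\textup{haj}}(t;\gamma) = -\hat{x}_{\textup{haj}}(t)^\top(\hat{\gamma}(t)-\gamma(t)) = o_\mathbb{P}(n^{-1/2})$, then apply the unadjusted CLT machinery (Theorem \ref{thm:Hájek_asym_n} via Lemma \ref{lemma:asym}) to the centered adjusted outcome $Y_i - x_i^\top\gamma(t)$, whose finite-population mean remains $\mu(t)$ because $\bar{x}=0$. The only cosmetic difference is that you certify $\hat{x}_{\textup{haj}}(t)=O_\mathbb{P}(n^{-1/2})$ by citing Lemma \ref{lemma: x_ht}(i) (which formally carries Assumption \ref{asu7}(a)), whereas the paper instead argues that $\sqrt{n}\,\hat{x}_{\textup{haj}}(t)$ is asymptotically normal by the same argument as Theorem \ref{thm:Hájek_asym_n} applied to $x_i$; both yield the required $o_\mathbb{P}(1)$ after multiplying by $\hat{\gamma}(t)-\gamma(t)=o_\mathbb{P}(1)$.
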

\begin{proof}[Proof of Lemma \ref{lemma: asym}]
% Define
% \[
% \hat{\beta}_{\textup{haj}}({\gamma} ) = \hat{Y}_{\textup{haj}}({\gamma}) = \left( \frac{1}{\hat 1_{\text{ht}}(t)}  \frac{1}{n} \sum_{i=1}^n \frac{{1}_i(t)}{\pi_i(t)} (Y_i-x_i^\top {\gamma}(t) ): t\in\mathcal{T} \right). 
% \]
Define $\hat{\beta}_{\textup{haj}}({\gamma} ) =(\hat{\beta}_{\textup{haj}}(t;\gamma):{t\in\mathcal{T}})$ with
\begin{equation}
\hat{\beta}_{\textup{haj}}(t;\gamma) = \frac{1}{\hat{1}_{\text{ht}}(t)} \frac{1}{n}\sum_{i=1}^n 
\frac{{1}_i(t)}{\pi_i(t)} Y_i(t;\gamma) 
= \hat{\beta}_{\textup{haj}}(t;\hat{\gamma}) - \hat{x}_\textup{haj}(t)^\top ({\gamma}(t) - \hat{\gamma}(t)). 
\label{eq: beta_gamma}
\end{equation}
We first apply Theorem \ref{thm:Hájek_asym_n} to the adjusted outcome $Y_i(t;\gamma)$, then the asymptotic normality of $\hat{\beta}_{\textup{haj}}({\gamma})$ follows: ${\Sigma}_{\textup{haj}}^{-1/2}({\gamma}) 
\sqrt{n}( \hat{\beta}_{\textup{haj}}({\gamma}) -  {\mu} ) 
\stackrel{\textup{d}}{\rightarrow}  \mathcal{N}(0,{I})$.
% \[ 
% {\Sigma}_{\textup{haj}}^{-1/2}({\gamma}) 
% \sqrt{n}\left( \hat{\beta}_{\textup{haj}}({\gamma}) -  {\mu} \right) 
% \stackrel{\textup{d}}{\rightarrow}  \mathcal{N}(0,{I}).
% \]
% Define $\hat{x}_{\textup{haj}}(t) = \frac{\frac{1}{n}\sum_{i=1}^n \frac{{1}_i(t)}{\pi_i(t)} x_i}{\hat{1}_{\text{ht}}(t)}$. 
By Slutsky's Theorem, we have
\begin{align*}
\sqrt{n}\left( \hat{\beta}_{\textup{haj}}(t;\hat{\gamma}) - \hat{\beta}_{\textup{haj}}(t;{\gamma}) \right)
% = \sqrt{n}(({\gamma}(t) - \hat{\gamma}(t))^\top \hat{x}_{\textup{haj}}(t) - ({\gamma}(t') - \hat{\gamma}(t'))^\top \hat{x}_{\textup{haj}}(t')) \\
=& - (\hat{\gamma}(t) - {\gamma}(t))^\top \sqrt{n} \hat{x}_{\textup{haj}}(t) \\
=& o_\mathbb{P}(1),
\end{align*}
where the last line holds by $\hat{\gamma}(t) - {\gamma}(t) = o_\mathbb{P}(1)$ and the asymptotic normality of $\sqrt{n} \hat{x}_{\textup{haj}}(t)$ follows from an analogous argument to the proof of Theorem \ref{thm:Hájek_asym_n} for $x_i$. This ensures $\sqrt{n} ( \hat{\beta}_{\textup{haj}}(\hat{\gamma}) - \hat{\beta}_{\textup{haj}}({\gamma}) ) = o_\mathbb{P}(1)$.
Thus, we prove the asymptotic normality of $\hat{\beta}_{\textup{haj}}(\hat{\gamma})$ with asymptotic covariance ${\Sigma}_{\textup{haj}}({\gamma} )$.

% Define $\hat{Y}_{\text{ht}}(t,{\gamma}) = \frac{1}{n} \sum_{i=1}^n 
% \frac{{1}_i(t)}{\pi_i(t)} (Y_i - {\gamma}(t)^\top x_i  - \mu(t))$. The asymptotic normality of the vector $\left\{ \hat{Y}_{\text{ht}}(t,{\gamma}) \right\}_{t\in\mathcal{T}}$ follows from Theorem \ref{thm:Hájek_asym_n}.  
% Since $\hat{\tau}_{\textup{haj}}\left(t, t^{\prime},{\gamma}\right) - \tau(t,t') = \frac{ \hat{Y}_{\text{ht}}(t,{\gamma}) }{ \hat{1}_{\text{ht}}(t) } - \frac{ \hat{Y}_{\text{ht}}(t',{\gamma}) }{ \hat{1}_{\text{ht}}(t') }$, 
% from Slutsky's theorem, we have 
% \[
% {\Sigma}_n(t,t',{\gamma})^{-1} 
% \sqrt{n}\left( \hat{\tau}_{\textup{haj}}\left(t, t^{\prime}, {\gamma}\right) -\tau(t, t^{\prime}))\right) 
% \stackrel{\textup{d}}{\rightarrow}  \mathcal{N}(0,1)
% \]
% where ${\Sigma}_n^2(t,t',{\gamma})=\operatorname{Var}\left( \sqrt{n} \hat{\tau}_{\textup{haj}}\left(t, t^{\prime}, {\gamma}\right) \right)$ with 
% $$ 
% \tilde{\Delta}_i(t,t',{\gamma}) = \left( \frac{{1}_i(t)}{\pi_i(t)} (Y_i-{\gamma}(t)^\top x_i -\mu(t) ) - \frac{{1}_i(t')}{\pi_i(t')} (Y_i- {\gamma}(t')^\top x_i -\mu(t') ) \right). 
% $$
\end{proof}

\subsubsection{Additive regression}
We first show the numerical correspondence between $\hat{\beta}_{\textup{haj},\textsc{f}}$ and $\hat{Y}_{\textup{haj}}$.
Let
\[
\hat{x}_{\textup{haj}}(t) 
= n^{-1}\sum_{i=1}^n \frac{1(T_i=t)}{\pi_i(t)\hat{1}_{\text{ht}}(t)} x_i
\]
% \frac{1(T_i=t)x_i}{\pi_i(t)} \Big/ \hat{1}_{\text{ht}}(t)$
% $\hat{x}_{\textup{haj}}(t) 
% = n^{-1}\sum_{i=1}^n 1(T_i=t)\pi_i(t)^{-1} \Big/ \hat{1}_{\text{ht}}(t)$
% \[
% \hat{x}_{\textup{haj}}(t) 
% = \frac{1}{n}\sum_{i=1}^n \frac{1(T_i=t)x_i}{\pi_i(t)} \Big/ \hat{1}_{\text{ht}}(t)
% \]
be the $J\times 1$ Hájek estimator for $\bar{x}$ under exposure mapping value $t$ and then combine $\hat{x}_{\textup{haj}}(t)$ across all $t\in\mathcal{T}$ to obtain the $|\mathcal{T}|\times J$ matrix $\hat{x}_{\textup{haj}} = (\hat{x}_{\textup{haj}}(t): {t\in\mathcal{T}})$. 
Let $\hat{\gamma}_\textsc{f}$ denote the coefficient vector of $x_i$ from the same WLS fit. 

\begin{proposition}\label{prop:add}
$\hat{\beta}_{\textup{haj},\textsc{f}} 
= \hat{Y}_{\textup{haj}} - \hat{x}_{\textup{haj}} \hat{\gamma}_\textsc{f}$.
\end{proposition}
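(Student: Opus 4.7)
The plan is to derive this numerical identity directly from the Frisch--Waugh--Lovell (FWL) theorem applied to WLS. Since the statement is an algebraic equality about the coefficients of a particular WLS fit, there is no asymptotic argument needed; everything reduces to the structure of the normal equations and the fact that the components of $z_i = (1(T_i = t) : t \in \mathcal{T})$ form an orthogonal partition of the units.

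First, I would observe that because the entries of $z_i$ are mutually exclusive exposure-mapping indicators, the WLS fit of any variable $V_i$ on $z_i$ alone with weights $w_i = 1/\pi_i(T_i)$ decouples across exposure levels. The normal equation for the coefficient on $1(T_i = t)$ is
\[
\Big(\sum_{i=1}^n \tfrac{1(T_i=t)}{\pi_i(t)}\Big)\,\hat{b}(t) = \sum_{i=1}^n \tfrac{1(T_i=t)}{\pi_i(t)}\, V_i,
\]
so $\hat{b}(t) = \hat{V}_{\textup{haj}}(t)$, the Hájek estimator of the average of $V$ at exposure level $t$. Applied to $V_i = Y_i$ this recovers $\hat{Y}_{\textup{haj}}(t)$, and applied componentwise to $V_i = x_i$ it recovers $\hat{x}_{\textup{haj}}(t)$.

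Next, I would invoke FWL for the WLS fit \eqref{eq:add} of $Y_i$ on $(z_i, x_i)$ with weights $w_i$. FWL says that $\hat{\beta}_{\textup{haj},\textsc{f}}$ coincides with the WLS coefficient on $z_i$ obtained from regressing the ``partialled-out'' outcome $Y_i - x_i^\top \hat{\gamma}_\textsc{f}$ on $z_i$ with weights $w_i$. By the calculation in the previous paragraph, this WLS coefficient at level $t$ is precisely the Hájek estimator of $Y_i - x_i^\top \hat{\gamma}_\textsc{f}$ at level $t$. By linearity of the Hájek map, this equals $\hat{Y}_{\textup{haj}}(t) - \hat{x}_{\textup{haj}}(t)^\top \hat{\gamma}_\textsc{f}$. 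Stacking across $t\in\mathcal{T}$ yields the stated identity $\hat{\beta}_{\textup{haj},\textsc{f}} = \hat{Y}_{\textup{haj}} - \hat{x}_{\textup{haj}} \hat{\gamma}_\textsc{f}$.

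The only mild subtlety is keeping the indexing straight between the $|\mathcal{T}|\times 1$ vector $\hat{Y}_{\textup{haj}}$ and the $|\mathcal{T}|\times J$ matrix $\hat{x}_{\textup{haj}}$, so that $\hat{x}_{\textup{haj}}\hat{\gamma}_\textsc{f}$ is a $|\mathcal{T}|\times 1$ vector whose $t$-th entry is $\hat{x}_{\textup{haj}}(t)^\top \hat{\gamma}_\textsc{f}$. No regularity assumption is needed beyond the invertibility of $C_\textsc{f}^\top W C_\textsc{f}$, which is implicit in the fact that $\hat{\gamma}_\textsc{f}$ is well defined; in particular, no asymptotic conditions (Assumptions \ref{asu1}--\ref{asu9}) are used for this purely algebraic claim.
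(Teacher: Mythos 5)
Your proof is correct and rests on the same algebra as the paper's: both read off the first block of the WLS normal equations, $Z^\top W Z\,\hat\beta_{\textup{haj},\textsc{f}} = Z^\top W (Y - X\hat\gamma_\textsc{f})$, and use the fact that regressing any variable on the mutually exclusive exposure dummies with weights $1/\pi_i(T_i)$ returns its Hájek estimators. The only cosmetic caveat is that the identity you invoke is really just this first normal-equation block rather than the full Frisch--Waugh--Lovell theorem (which would also residualize $z_i$ on $x_i$), but the step you actually use is valid, so there is no gap.
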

% $\hat\tau_\textsc{f}(t,t')$
Proposition \ref{prop:add} links the covariate-adjusted $\hat{\beta}_{\textup{haj},\textsc{f}}$ back to the unadjusted $\hat{\beta}_{\textup{haj}}$, and establishes $\hat{\beta}_{\textup{haj},\textsc{f}}$ as the Hájek estimator based on the covariate-adjusted outcome $Y_i - x_i^\top \hat{\gamma}_\textsc{f}$. 
% The correspondence between the WLS fit and the Hájek estimation is preserved in the additive WLS fit in \eqref{eq:add} as well.

\begin{lemma} \label{lemma: gamma_F}
Under Assumptions \ref{asu1}--\ref{asu4}, \ref{asu7}(a) and \ref{asu9}, 
we have $\hat{\gamma}_\textsc{f} = {\gamma}_\textsc{f} + O_{\mathbb{P}}(n^{-1/2})$ with
% ${\gamma}_\textsc{f} = |\mathcal{T}|^{-1}
% (\sum_{i=1}^n x_ix_i^\top)^{-1} \sum_{t\in\mathcal{T}}\sum_{i=1}^n x_i \mu_i(t)$
\[
{\gamma}_\textsc{f} 
= \left(\sum_{i=1}^n x_ix_i^\top\right)^{-1} 
\frac{1}{|\mathcal{T}|}\sum_{t\in\mathcal{T}}\sum_{i=1}^n x_i \mu_i(t).
\]
\end{lemma}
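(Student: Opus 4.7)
The plan is to use the Frisch--Waugh--Lovell theorem to partial out the exposure dummies $z_i$ and express $\hat\gamma_\textsc{f}$ in closed form, then apply the Hájek-style consistency results already established (Lemmas \ref{lemma: beta_haj} and \ref{lemma: x_ht}) to obtain the $O_\mathbb{P}(n^{-1/2})$ rate. First, since $z_i z_i^\top$ is diagonal of rank $|\mathcal{T}|$ with a separate dummy per exposure value, the weighted projection of $x_i$ onto $z_i$ produces fitted values $\sum_{t\in\mathcal{T}} 1(T_i=t)\hat x_{\textup{haj}}(t) = \hat x_{\textup{haj}}(T_i)$, and analogously for $Y_i$ it produces $\hat Y_{\textup{haj}}(T_i)$. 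Frisch--Waugh--Lovell then yields
\[
\hat\gamma_\textsc{f} = \left( \sum_{i=1}^n w_i \tilde x_i \tilde x_i^\top \right)^{-1} \sum_{i=1}^n w_i \tilde x_i \tilde Y_i, \qquad \tilde x_i = x_i - \hat x_{\textup{haj}}(T_i),\ \tilde Y_i = Y_i - \hat Y_{\textup{haj}}(T_i).
\]

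Next, I would split each sum into a sum over $t\in\mathcal{T}$ and use $\sum_{i=1}^n w_i 1(T_i=t) x_i = n\hat{1}_{\textup{ht}}(t)\hat x_{\textup{haj}}(t)$ to simplify the expansion. This gives
\[
\tfrac{1}{n}\sum_{i=1}^n w_i \tilde x_i \tilde x_i^\top = \sum_{t\in\mathcal{T}} \left\{ \tfrac{1}{n}\sum_{i=1}^n \tfrac{1(T_i=t)}{\pi_i(t)} x_i x_i^\top - \hat{1}_{\textup{ht}}(t) \hat x_{\textup{haj}}(t) \hat x_{\textup{haj}}(t)^\top \right\},
\]
with an analogous identity for the numerator. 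Invoking Lemma \ref{lemma: x_ht} gives $n^{-1}\sum_i \tfrac{1(T_i=t)}{\pi_i(t)} x_i x_i^\top = Q_{xx} + O_\mathbb{P}(n^{-1/2})$ and $n^{-1}\sum_i \tfrac{1(T_i=t)}{\pi_i(t)} x_i Y_i = n^{-1}\sum_i x_i \mu_i(t) + O_\mathbb{P}(n^{-1/2})$. Since $\bar x = 0$, applying Lemma \ref{lemma: beta_haj} to the constant outcome $1$ and to $x_i$ yields $\hat{1}_{\textup{ht}}(t) = 1 + O_\mathbb{P}(n^{-1/2})$ and $\hat x_{\textup{haj}}(t) = O_\mathbb{P}(n^{-1/2})$, while the same lemma applied to $Y_i$ gives $\hat Y_{\textup{haj}}(t) = \mu(t) + O_\mathbb{P}(n^{-1/2})$.

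Combining these rates, the quadratic cross-terms $\hat{1}_{\textup{ht}}(t)\hat x_{\textup{haj}}(t)\hat x_{\textup{haj}}(t)^\top$ and $\hat{1}_{\textup{ht}}(t)\hat x_{\textup{haj}}(t)\hat Y_{\textup{haj}}(t)$ are $O_\mathbb{P}(n^{-1})$ and $O_\mathbb{P}(n^{-1/2})$ respectively, so
\[
\tfrac{1}{n}\sum_{i=1}^n w_i \tilde x_i \tilde x_i^\top = |\mathcal{T}| Q_{xx} + O_\mathbb{P}(n^{-1/2}), \qquad \tfrac{1}{n}\sum_{i=1}^n w_i \tilde x_i \tilde Y_i = \tfrac{1}{n}\sum_{t\in\mathcal{T}}\sum_{i=1}^n x_i\mu_i(t) + O_\mathbb{P}(n^{-1/2}).
\]
Inverting the leading term (using Assumption \ref{asu9}(a) to keep $Q_{xx}$ well-conditioned in the limit) and applying the delta method / standard matrix-inversion perturbation then delivers the claimed expansion $\hat\gamma_\textsc{f} = \gamma_\textsc{f} + O_\mathbb{P}(n^{-1/2})$.

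The main obstacle is bookkeeping rather than deep analysis: verifying that the various cross-products collapse as advertised and confirming invertibility of $n^{-1}\sum_i w_i\tilde x_i\tilde x_i^\top$ uniformly in $n$. The latter requires $Q_{xx}$ to be bounded away from singularity along the sequence, which is implicit in the paper's regression setup (an invertible design) and is required for $\gamma_\textsc{f}$ to even be well-defined; once that is granted, all stochastic orders follow from the already-established Hájek consistency results without any new CLT or HAC machinery.
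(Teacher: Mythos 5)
Your proof is correct and follows essentially the same route as the paper's: both reduce to the WLS normal equations and the Horvitz--Thompson rates in Lemma \ref{lemma: x_ht} (namely $n^{-1}\sum_i 1_i(t)\pi_i(t)^{-1}x_ix_i^\top = Q_{xx}+O_\mathbb{P}(n^{-1/2})$, $\hat{x}_{\text{ht}}(t)=O_\mathbb{P}(n^{-1/2})$, and the analogous statement for $x_iY_i$). The only cosmetic difference is that you eliminate the exposure-dummy block exactly via Frisch--Waugh--Lovell before passing to limits, whereas the paper inverts the full block system $G_1(\hat{\beta}_{\textup{haj},\textsc{f}}^\top,\hat{\gamma}_\textsc{f}^\top)^\top=G_2$ and exploits its asymptotic block-diagonality; the required ingredients, including the implicit uniform nonsingularity of $Q_{xx}$, are identical.
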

% \begin{proof}
% The probability limit follows from $(\hat{\beta}_{\textup{haj},\textsc{f}}^\top, \hat{\gamma}_\textsc{f}^\top)^\top = G_1^{-1}G_2$ where
% \begin{align*}
% G_1 = \text{diag}\left(I, \frac{|\mathcal{T}|}{n}\sum_{i=1}^n x_ix_i^\top\right) + o_\mathbb{P}(1),  
% \quad 
% G_2 = \left(\mu, ( n^{-1} \sum_{t\in\mathcal{T}}\sum_{i=1}^n x_i \mu_i(t))^\top \right)^\top + o_\mathbb{P}(1)
% \end{align*}
% by Lemma \ref{lemma: x_ht}. 
% \end{proof}

\begin{proof}[Proof of Proposition \ref{prop:add} and Lemma \ref{lemma: gamma_F}]
% Let $W=\text{diag}\{{w}_i, i = 1,\ldots, n\}$. Let $\hat{\gamma}_\textsc{f}$ be the coefficient vectors of $x_i$ from the above weighted least squares fits.
We verify below the numerical result in
Proposition \ref{prop:add} and the probability limit in Lemma \ref{lemma: gamma_F} together.
The first-order condition of the WLS fit in \eqref{eq:add} ensures 
\begin{equation}
G_1 \left(\hat{\beta}_{\textup{haj},\textsc{f}}^\top, \hat{\gamma}_\textsc{f}^\top \right)^\top
% \begin{pmatrix}
% \hat{\beta}_{\textup{haj},\textsc{f}}^\top \\
% \hat{\gamma}_\textsc{f}
% \end{pmatrix} 
= G_2,   \label{eq:F} 
\end{equation}
where by direct algebra, 
\begin{align*}
G_1 
=& n^{-1} C_\textsc{f}^\top W C_\textsc{f}
= n^{-1}
\begin{pmatrix}
{Z}^{\top}  {W}  {Z} &  {Z}^{\top}  {W}  {X} \\
{X}^{\top}  {W}  {Z} &  {X}^{\top}  {W}  {X}
\end{pmatrix}
= 
\begin{pmatrix}
\hat{{1}}_{\text{ht}} & \hat{x}_{\text{ht}} \\
\hat{x}_{\text{ht}}^{\top} & n^{-1}  {X}^{\top}  {W}  {X}
\end{pmatrix}, \\
G_2 
=& n^{-1} C_\textsc{f}^\top W Y
= n^{-1}
\begin{pmatrix}
{Z}^{\top}  {W}  {Y} \\
{X}^{\top}  {W}  {Y} 
\end{pmatrix}
= 
\begin{pmatrix}
\hat{Y}_{\text{ht}} \\
n^{-1}  {X}^{\top}  {W}  {Y} 
\end{pmatrix}.
\end{align*}
By comparing the first row of \eqref{eq:F}, we have 
\[
% \hat{{1}}_{\text{ht}} \hat{\beta}_{\textup{haj},\textsc{f}} + \hat{x}_{\textup{haj}} \hat{\gamma}_\textsc{f} 
% = \hat{Y}_{\text{ht}} 
% \implies 
\hat{\beta}_{\textup{haj},\textsc{f}} = \hat{{1}}_{\text{ht}}^{-1} (\hat{Y}_{\text{ht}} - \hat{x}_{\text{ht}} \hat{\gamma}_\textsc{f} )
= \hat{Y}_{\textup{haj}} - \hat{x}_{\textup{haj}} \hat{\gamma}_\textsc{f}.
% \equiv \hat{Y}_{\textup{haj}}(\hat{\gamma}_\textsc{f})
\]
The probability limit follows from \eqref{eq:F} where
\begin{align*}
G_1 =& \text{diag}\left(I, \frac{|\mathcal{T}|}{n}\sum_{i=1}^n x_ix_i^\top\right) + O_{\mathbb{P}}(n^{-1/2}),  \\
G_2 =& \left(\mu^\top, \left( n^{-1} \sum_{t\in\mathcal{T}}\sum_{i=1}^n x_i \mu_i(t) \right)^\top \right)^\top + O_{\mathbb{P}}(n^{-1/2})
\end{align*}
by Lemma \ref{lemma: x_ht}. 
\end{proof}

% \begin{proof}
% \begin{align*}
% {X}^{\top} W {X}
% =&  \\
% {X}^{\top} W {Y}
% =&  
% \end{align*}
% \end{proof}

\begin{proof}[Proof of Theorem \ref{thm:add_asym_n}]
% \textbf{Asymptotic normality.}
By Lemma \ref{lemma: gamma_F}, we have $\hat{\gamma}_\textsc{f} = {\gamma}_\textsc{f} + o_\mathbb{P}(1)$. The asymptotic normality follows by applying Lemma \ref{lemma: asym} with $\hat{\gamma}=\hat{\gamma}_\textsc{f}$ and ${\gamma}= {\gamma}_\textsc{f}$.

% \textbf{Oracle covariance.}

\end{proof}

\begin{proof}[Proof of Theorem \ref{thm:add_bias}]
We first show the result of ${ {\Sigma}}_{*,\textup{haj}, \textsc{f}} = {\Sigma}_{\textup{haj}, \textsc{f}} +  o_\mathbb{P}(1)$.
% Define $\hat{\tau}_\textup{haj}\left(t, t^{\prime}, {\gamma}_\textsc{f}\right) = \hat{Y}_{\textup{haj}}(t, {\gamma}_\textsc{f}) - \hat{Y}_{\textup{haj}}(t', {\gamma}_\textsc{f})$ with $\hat{Y}_{\textup{haj}}(t, {\gamma}_\textsc{f}) = \frac{1}{n} \sum_{i=1}^n \frac{\frac{{1}_i(t)}{\pi_i(t)}}{\hat 1_{\text{ht}}(t)} 
% (Y_i - x_i^\top {\gamma}_\textsc{f} )$.
Applying Theorem \ref{thm:Hájek_asym_n} to adjusted outcome $Y_i(t;\gamma_\textsc{f}) = Y_i- x_i^\top {\gamma}_\textsc{f}$,
% and by Theorem \ref{thm:Hájek_bias},
we have
\[
{ {\Sigma}}_{*,\textup{haj},\textsc{f}} 
= \operatorname{Var}\left(\sqrt{n} \hat{\beta}_{\textup{haj}}({\gamma_\textsc{f}}) \right) + o_\mathbb{P}(1).
\]
Since $\sqrt{n}( \hat{\beta}_{\textup{haj}}({\gamma_\textsc{f}}) - \hat{\beta}_{\textup{haj},\textsc{f}} ) = o_\mathbb{P}(1)$ by Lemma \ref{lemma: gamma_F}, we have
\[
{ {\Sigma}}_{*,\textup{haj},\textsc{f}} 
= \operatorname{Var}\left(\sqrt{n} \hat{\beta}_{\textup{haj},\textsc{f}} \right) + o_\mathbb{P}(1)
= {\Sigma}_{\textup{haj}, \textsc{f}} +  o_\mathbb{P}(1).
\]
% By Lemma \ref{lemma: gamma_F}, we have $\hat{\gamma}_\textsc{f} = {\gamma}_\textsc{f} + o_\mathbb{P}(1)$. Then the asymptotic normality of $\hat{\tau}_\textup{haj}\left(t, t^{\prime},\hat{\gamma}_\textsc{f} \right)$ follows from Lemma \ref{lemma: asym}. By Proposition 4.1 of \cite{KOJEVNIKOV2021882}, 
% \[
% \left| \hat{ {\Sigma}}_{*,{\textup{haj}, \textsc{f}}}^2(t,t',{\gamma}_\textsc{f}) -\operatorname{Var}\left(\sqrt{n} \hat{\tau}_\textup{haj}\left(t, t^{\prime},\hat{\gamma}_\textsc{f}\right) \right)\right| \stackrel{p}{\rightarrow} 0
% \]
% \end{proof}
% \begin{proof}[Proof of Theorem \ref{thm:add_bias}]
% \textbf{Formula of $\hat{ {V}}_{\text{haj,F}}$.}
Then we show the result of $n \hat{ {V}}_{\textup{haj},\textsc{f}} 
= { {\Sigma}}_{*,\textup{haj}, \textsc{f}} + R_{\textup{haj},\textsc{f}} + o_\mathbb{P}(1)$.
By Proposition \ref{prop:add}, $\hat{\beta}_{\textup{haj},\textsc{f}}$ is the Hájek estimator based on the covariate-adjusted outcome $Y_i(t;\hat{\gamma}_\textsc{f}) = {Y}_i - x_i^\top \hat{\gamma}_\textsc{f} $. 
% Define $\hat{Y}_\textup{haj}(t,\hat{\gamma}_\textsc{f})=\frac{1}{n}\sum_{i=1}^n \frac{\frac{{1}_i(t)}{\pi_i(t)}}{\hat{1}_{\text{ht}}(t)}({Y}_i - \hatx_i^\top {\gamma}_\textsc{f} ) $.
The residual from the additive WLS fit in \eqref{eq:add} is
\[
{e}_{\textsc{f},i} 
% = 
% (Y_i - x_i^\top \hat{\gamma}_\textsc{f} ) 
% - \sum_{t\in \mathcal{T}} \frac{{1}_i(t)}{\hat{1}_{\text{ht}}(t)}  
% \frac{1}{n} \sum_{i=1}^n 
% \frac{{1}_i(t)(Y_i - x_i^\top \hat{\gamma}_\textsc{f}  ) }{\pi_i(t)}
= Y_i - x_i^\top \hat{\gamma}_\textsc{f}   
- \sum_{t\in \mathcal{T}} {1}_i(t) \hat{\beta}_{\textup{haj},\textsc{f}}(t).
\] 
The HAC covariance estimator for $\hat{\beta}_{\textup{haj},\textsc{f}}$ equals the upper-left $|\mathcal{T}|\times |\mathcal{T}|$ submatrix of 
\begin{equation}
% \hat{ {V}}_{\text{haj,\textsc{f}}} = 
( {C}_\textsc{f}^{\top}  {W}  {C}_\textsc{f})^{-1}
( {C}_\textsc{f}^{\top}  {W} 
{e}_{\textup{haj},\textsc{f}}
{K}_n 
{e}_{\textup{haj},\textsc{f}}
{W}  {C}_\textsc{f})
( {C}_\textsc{f}^{\top}  {W}  {C}_\textsc{f})^{-1}. \label{eq:V_F}   
\end{equation}
% \textbf{Bias of $\hat{ {V}}_{\text{haj,F}}$.}
Introduce an intermediate term below for the theoretical analysis:  
\begin{align*}
& \hat{\Omega}_{\textup{haj},\textsc{f}}(\hat{\gamma}_\textsc{f})
= n ( {Z}^{\top}  {W}  {Z})^{-1}
( {Z}^{\top}  {W} 
{e}_{\textup{haj},\textsc{f}}
{K}_n 
{e}_{\textup{haj},\textsc{f}}
{W}  {Z})
( {Z}^{\top}  {W}  {Z})^{-1} 
% = (\hat{\Omega}_{\textup{haj},\textsc{f}}(t,t'; \hat{\gamma}_{\textsc{f}}) )_{t,t'\in\mathcal{T}} 
% =& \left( \frac{ \frac{1}{n} \sum_{i=1}^n \sum_{j=1}^n 
% \frac{{1}_i(t)}{\pi_i(t)}  ( Y_i(t;\hat{\gamma}_\textsc{f})- \hat{\beta}_{\textup{haj},\textsc{f}}(t) ) 
% \frac{{1}_i(t')}{\pi_i(t')} ( Y_j(t';\hat{\gamma}_\textsc{f}) -\hat{\beta}_{\textup{haj},\textsc{f}}(t') ) 
% {1}(\ell_{ {A}}(i,j)\le b_n)  }{ \hat{1}_{\text{ht}}(t) \hat{1}_{\text{ht}}(t') } 
% \right)_{t,t'\in\mathcal{T}}.  
\end{align*}
where the $(t,t')$the entry being
\begin{align*}
\hat{\Omega}_{\textup{haj},\textsc{f}}(t,t'; \hat{\gamma}_{\textsc{f}})
=& \frac{1}{n} \sum_{i=1}^n \sum_{j=1}^n 
\frac{{1}_i(t) ( Y_i(t;\hat{\gamma}_\textsc{f})- \hat{\beta}_{\textup{haj},\textsc{f}}(t) ) }{\pi_i(t) \hat{1}_{\text{ht}}(t) }   
\frac{{1}_j(t') ( Y_j(t';\hat{\gamma}_\textsc{f}) -\hat{\beta}_{\textup{haj},\textsc{f}}(t') )  }{\pi_j(t') \hat{1}_{\text{ht}}(t') } 
K_{n}(i,j).   
\end{align*}
The result on $\hat{ {V}}_{\text{haj,\textsc{f}}}$ holds as long as 
\begin{equation}
\text{(i) } \hat{\Omega}_{\text{haj,\textsc{f}}}(\hat{\gamma}_\textsc{f})
= { {\Sigma}}_{*,\textup{haj}, \textsc{f}} + R_{\textup{haj},\textsc{f}} + o_\mathbb{P}(1)
\text{ and (ii) }
n\hat{ {V}}_{\text{haj,\textsc{f}}} - \hat{\Omega}_{\text{haj,\textsc{f}}}(\hat{\gamma}_\textsc{f}) = o_\mathbb{P}(1). \label{conditionF}
\end{equation}
We verify below these two conditions one by one. 

\noindent\textbf{Condition \ref{conditionF}(ii).}
% We first show that $n\hat{ {V}}_{\text{haj,\textsc{f}}} - \hat{\Omega}_{\text{haj,\textsc{f}}}(\hat{\gamma}_\textsc{f}) = o_\mathbb{P}(1)$. 
Define
\[
G_1 = n^{-1}  {Z}^\top  {W} 
{e}_{\textup{haj},\textsc{f}}  {K}_n 
{e}_{\textup{haj},\textsc{f}}   {W}  {X}
\text{ and }
G_2 = n^{-1} {X}^\top  {W} 
{e}_{\textup{haj},\textsc{f}} 
{K}_n 
{e}_{\textup{haj},\textsc{f}}  {W}  {X}.
\] 
% $G_1 = n^{-1}  {Z}^\top  {W} 
%  {e}_{\textup{haj},\textsc{f}}  {K}_n 
%  {e}_{\textup{haj},\textsc{f}}   {W}  {X}$ 
%  and $G_2 = n^{-1} {X}^\top  {W} 
%  {e}_{\textup{haj},\textsc{f}} 
%  {K}_n 
%  {e}_{\textup{haj},\textsc{f}}  {W}  {X}$.
The ``middle'' part of \eqref{eq:V_F} equals 
\begin{align*}
n^{-1} {C}_\textsc{f}^{\top}  {W}  {e}_{\textup{haj},\textsc{f}}
{K}_n 
{e}_{\textup{haj},\textsc{f}}
{W}  {C}_\textsc{f} 
=&~ n^{-1}  
( {Z},  {X})^\top
{W}  {e}_{\textup{haj},\textsc{f}}
{K}_n  {e}_{\textup{haj},\textsc{f}}
{W} ( {Z},  {X}) \\
% =& n^{-1}
% \begin{pmatrix}
%  {Z}^\top  {W} 
%  {e}_{\textup{haj},\textsc{f}}  {K}_n 
%  {e}_{\textup{haj},\textsc{f}}   {W}  {Z} &  {Z}^\top  {W}  {e}_{\textup{haj},\textsc{f}} 
%  {K}_n 
%  {e}_{\textup{haj},\textsc{f}}   {W}  {X} \\
%  {X}^\top  {W}  {e}_{\textup{haj},\textsc{f}}  {K}_n  {e}_{\textup{haj},\textsc{f}}   {W}  {Z} &  {X}^\top  {W}  {e}_{\textup{haj},\textsc{f}}  {K}_n  {e}_{\textup{haj},\textsc{f}}   {W}  {X} 
% \end{pmatrix} \\
=& \begin{pmatrix}
(n^{-1} {Z}^{\top}  {W}  {Z}) \hat{\Omega}_{\textup{haj},\textsc{f}} (n^{-1} {Z}^{\top}  {W}  {Z}) & G_1 \\
G_1^\top & G_2
\end{pmatrix}.
\end{align*}
The ``bread'' part of \eqref{eq:V_F} equals
\begin{align*}
n^{-1}
{C}_\textsc{f}^{\top}  {W}  {C}_\textsc{f}
% =& 
% n^{-1}
% \begin{pmatrix}
%  {Z}^\top \\
%  {X}^\top
% \end{pmatrix}
%  {W} ( {Z},  {X})
= n^{-1}
\begin{pmatrix}
{Z}^\top  {W}  {Z} &  {Z}^\top  {W}  {X} \\
{X}^\top  {W}  {Z} &  {X}^\top  {W}  {X}
\end{pmatrix}
= \text{diag}\left(I, \frac{|\mathcal{T}|}{n}\sum_{i=1}^n x_ix_i^\top\right) + o_\mathbb{P}(1)
\end{align*}
where the last quality follows from $ n^{-1} {Z}^\top  {W}  {Z} = \hat{1}_{\text{ht}} = I + o_\mathbb{P}(1)$ by Lemma \ref{lemma: beta_haj} and 
\begin{align*}
n^{-1} {Z}^\top  {W} X 
=& n^{-1} \sum_{i=1}^n z_iw_i x_i^\top
= \hat{x}_{\text{ht}}
= o_\mathbb{P}(1) , \\
n^{-1} X^\top {W} X
=& \left( n^{-1}\sum_{i=1}^n \frac{{1}_i(t)}{\pi_i(t)} x_ix_i^\top: t\in\mathcal{T} \right) 
= Q_{xx} + o_\mathbb{P}(1)
\end{align*}
by Lemma \ref{lemma: x_ht}. 
It suffices to show that $G_k = O_\mathbb{P}(1)$ for $k=1,2$. We omit the proof here as it is similar to the proof of Theorem \ref{thm:full_bias}.
\noindent \textbf{Condition \ref{conditionF}(i).}
Recall $\hat{\beta}_{\textup{haj}}(t;\gamma_{\textsc{f}})$ defined in \eqref{eq: beta_gamma} with $\gamma = \gamma_{\textsc{f}}$.
% Define $\hat{\beta}_{\textup{haj},\gamma} =(\hat{\beta}_{\textup{haj}}(t;\gamma):{t\in\mathcal{T}})$ with
% \[
% \hat{\beta}_{\textup{haj}}(t;\gamma) = \frac{1}{\hat{1}_{\text{ht}}(t)} \frac{1}{n}\sum_{i=1}^n 
% \frac{{1}_i(t)}{\pi_i(t)} (Y_i - x_i^\top {\gamma}(t) ) 
% = \hat{\beta}_{\textup{haj},\textsc{f}}(t) - \hat{x}_\textup{haj}(t)({\gamma}(t) - \hat{\gamma}(t)). 
% \]
Let $\hat{\Omega}_{\textup{haj},\textsc{f}}( {\gamma}_{\textsc{f}}) = (\hat{\Omega}_{\textup{haj},\textsc{f}}(t,t'; {\gamma}_{\textsc{f}}) )_{t,t'\in\mathcal{T}}$ where the $(t,t')$th element is
\begin{align*}
&\hat{\Omega}_{\textup{haj},\textsc{f}}(t,t'; {\gamma}_{\textsc{f}}) 
= \frac{1}{n} \sum_{i=1}^n \sum_{j=1}^n 
\frac{{1}_i(t) ( Y_i(t;{\gamma}_\textsc{f}) - \hat{\beta}_\textup{haj}(t; \gamma_\textsc{f}) )}{\pi_i(t) \hat{1}_{\text{ht}}(t)}   
\frac{{1}_j(t') (Y_j(t;{\gamma}_\textsc{f})-\hat{\beta}_\textup{haj}(t'; \gamma_\textsc{f})) }{\pi_j(t') \hat{1}_{\text{ht}}(t')} 
K_{n}(i,j).
% =& \frac{1}{n} \sum_{\ell_{ {A}}(i, j) \leq b_n}
% \frac{\frac{{1}_i(t)}{\pi_i(t)} (Y_i - x_i^\top {\gamma}_\textsc{f}   - \hat{\beta}_\textup{haj}(t; \gamma_\textsc{f})) }{\hat{1}_{\text{ht}}(t)}   
% \frac{\frac{{1}_j(t')}{\pi_j(t')} (Y_j - x_j^\top {\gamma}_\textsc{f}  - \hat{\beta}_\textup{haj}(t'; \gamma_\textsc{f})) }{\hat{1}_{\text{ht}}(t')}. 
\end{align*}
Applying Theorem \ref{thm:Hájek_bias} to adjusted outcome $Y_i(t;{\gamma}_\textsc{f})$, we have 
\begin{equation}
\hat{\Omega}_{\textup{haj},\textsc{f}}( {\gamma}_{\textsc{f}})
= { {\Sigma}}_{*,\textup{haj},\textsc{f}} + R_{\textup{haj},\textsc{f}} 
+ o_\mathbb{P}(1).  \label{OmegaF}
\end{equation}
% with bias 
% \begin{align*}
% & R_\textup{haj}(t,t') 
% = \frac{1}{n} \sum_{i=1}^n \sum_{j=1}^n
% (\tau_i(t,t') - \tau(t,t'))
% (\tau_j(t,t') - \tau(t,t')) 
% {1}(\ell_{ {A}}(i, j) \leq b_n)
% \end{align*}
% Define $\hat{Y}_\textup{haj}(t,{\gamma}_\textsc{f})=\frac{1}{n}\sum_{i=1}^n \frac{\frac{{1}_i(t)}{\pi_i(t)}}{\hat{1}_{\text{ht}}(t)}({Y}_i - x_i^\top {\gamma}_\textsc{f} ) $.
To simplify the notation without loss of generality, we verify that $ \hat{\Omega}_{\textup{haj},\textsc{f}}(t,t'; \hat{\gamma}_{\textsc{f}}) - \hat{\Omega}_{\textup{haj},\textsc{f}}(t,t'; {\gamma}_{\textsc{f}}) = o_\mathbb{P}(1)$ with scalar covariate $x_i$. By direct algebra,  
% With 
% \begin{align*}
%  & \frac{\frac{{1}_i(t)}{\pi_i(t)} (Y_i - \hatx_i^\top {\gamma}_\textsc{f}   - \hat{\beta}_{\textup{haj}, \textsc{f}}(t)) }{\hat{1}_{\text{ht}}(t)}   
% - \frac{\frac{{1}_i(t')}{\pi_i(t')} (Y_i - \hatx_i^\top {\gamma}_\textsc{f}   - \hat{\beta}_{\textup{haj}, \textsc{f}}(t')) }{\hat{1}_{\text{ht}}(t')}  \\
% =& \frac{\frac{{1}_i(t)}{\pi_i(t)} (Y_i - x_i^\top {\gamma}_\textsc{f}   - \hat{\beta}_\textup{haj}(t, \gamma_\textsc{f}) ) }{\hat{1}_{\text{ht}}(t)}  
% - \frac{\frac{{1}_i(t')}{\pi_i(t')} (Y_i - x_i^\top {\gamma}_\textsc{f}   - \hat{\beta}_\textup{haj}(t', \gamma_\textsc{f}) ) }{\hat{1}_{\text{ht}}(t')} \\
% &+ ({\gamma}_\textsc{f} - \hat{\gamma}_\textsc{f})^\top \left(\frac{\frac{{1}_i(t)}{\pi_i(t)}   (x_i - \hat{x}_\textup{haj}(t)) }{\hat{1}_{\text{ht}}(t)} - \frac{\frac{{1}_i(t')}{\pi_i(t')} (x_i - \hat{x}_\textup{haj}(t')) }{\hat{1}_{\text{ht}}(t')}\right)
% \end{align*}
% we have
\begin{align*}
& \hat{\Omega}_{\textup{haj},\textsc{f}}(t,t'; \hat{\gamma}_{\textsc{f}}) - \hat{\Omega}_{\textup{haj},\textsc{f}}(t,t'; {\gamma}_{\textsc{f}}) \\
=& \frac{({\gamma}_\textsc{f} - \hat{\gamma}_\textsc{f})^2}{\hat{1}_{\text{ht}}(t)\hat{1}_{\text{ht}}(t')} 
\frac{1}{n} \sum_{i=1}^n \sum_{j=1}^n
% \left[
% \begin{array}{c}
\frac{{1}_i(t)(x_i - \hat{x}_\textup{haj}(t))}{\pi_i(t)} 
\frac{{1}_j(t')(x_j - \hat{x}_\textup{haj}(t'))}{\pi_j(t')} 
% \end{array} 
K_{n}(i,j)\\
+& \frac{({\gamma}_\textsc{f} - \hat{\gamma}_\textsc{f})}{\hat{1}_{\text{ht}}(t)\hat{1}_{\text{ht}}(t')}
\frac{1}{n} \sum_{i=1}^n \sum_{j=1}^n
% \begin{array}{c}
% \left( 
\frac{{1}_i(t)(x_i - \hat{x}_\textup{haj}(t))}{\pi_i(t)} 
\frac{{1}_j(t')(Y_j - x_j {\gamma}_\textsc{f}   - \hat{\beta}_{\textup{haj}}(t'; {\gamma}_\textsc{f})) }{\pi_j(t')} 
% + \frac{{1}_i(t')(x_i - \hat{x}_\textup{haj}(t'))}{\pi_i(t')} 
% \frac{{1}_j(t)(Y_j - x_j {\gamma}_\textsc{f}  - \hat{\beta}_{\textup{haj}}(t; {\gamma}_\textsc{f}))}{\pi_j(t)} 
% \right)
% \end{array} 
K_{n}(i,j) \\
+& \frac{({\gamma}_\textsc{f} - \hat{\gamma}_\textsc{f})}{\hat{1}_{\text{ht}}(t)\hat{1}_{\text{ht}}(t')}
\frac{1}{n} \sum_{i=1}^n \sum_{j=1}^n
% \begin{array}{c}
% \left( 
\frac{{1}_i(t')(x_i - \hat{x}_\textup{haj}(t'))}{\pi_i(t')} 
\frac{{1}_j(t)(Y_j - x_j {\gamma}_\textsc{f}  - \hat{\beta}_{\textup{haj}}(t; {\gamma}_\textsc{f}))}{\pi_j(t)} 
% \right)
% \end{array} 
K_{n}(i,j).
% =& ({\gamma}_\textsc{f} - \hat{\gamma}_\textsc{f})^2
% \frac{1}{n} \sum_{\ell_{ {A}}(i, j) \leq b_n}
% \begin{array}{c}
% \left( \frac{{1}_i(t)(x_i - \hat{x}_\textup{haj}(t))}{\pi_i(t)}  
% \right) 
% \left( \frac{{1}_j(t')(x_j - \hat{x}_\textup{haj}(t'))}{\pi_j(t')}  \right) 
% \end{array} \\
% &+ ({\gamma}_\textsc{f} - \hat{\gamma}_\textsc{f}) 
% \frac{2}{n} \sum_{\ell_{ {A}}(i, j) \leq b_n} 
% \begin{array}{l}
% \left( \frac{{1}_i(t)(x_i - \hat{x}_\textup{haj}(t))}{\pi_i(t)}  
% \right)
% \left( \frac{{1}_j(t')(Y_j - x_j {\gamma}_\textsc{f}   - \hat{\beta}_{\textup{haj}}(t'; {\gamma}_\textsc{f}))}{\pi_j(t')} \right)
% \end{array}
% + o_\mathbb{P}(1) \\
% =& ({\gamma}_\textsc{f} - \hat{\gamma}_\textsc{f})^2
% \frac{1}{n} \sum_{\ell_{ {A}}(i, j) \leq b_n}
% \frac{{1}_i(t)}{\pi_i(t)}
% \frac{{1}_j(t')}{\pi_j(t')}
% \left(
%   x_i x_j 
% + \hat{x}_\textup{haj}(t) \hat{x}_\textup{haj}(t') 
% -2  x_i\hat{x}_\textup{haj}(t') 
% \right)\\
% +& ({\gamma}_\textsc{f} - \hat{\gamma}_\textsc{f}) 
% \frac{2}{n} \sum_{\ell_{ {A}}(i, j) \leq b_n} 
% \frac{{1}_i(t)}{\pi_i(t)} 
% \frac{{1}_j(t')}{\pi_j(t')} 
% \left[
% \begin{array}{c}
%  x_i
% (Y_j - x_j {\gamma}_\textsc{f} - \mu(t'))  
% +  \hat{x}_\textup{haj}(t) 
% (\hat{\beta}_{\textup{haj}}(t'; {\gamma}_\textsc{f}) - \mu(t'))  \\
% -  x_i
% (\hat{\beta}_{\textup{haj}}(t'; {\gamma}_\textsc{f}) - \mu(t')) 
% - \hat{x}_\textup{haj}(t)
% (Y_j - x_j {\gamma}_\textsc{f} - \mu(t'))
% \end{array}
% \right] 
% + o_\mathbb{P}(1) \\
% =& o_\mathbb{P}(1).
\end{align*}
Under Assumptions \ref{asu2}, \ref{asu3} and \ref{asu9}, ${1}_i(t)\pi_i(t)^{-1}$, $Y_i$ and $x_i$ are uniformly bounded. Then, for some $C>0$ and any $n$, we have
\begin{align*}
\left| \hat{\Omega}_{\textup{haj},\textsc{f}}(t,t'; \hat{\gamma}_{\textsc{f}}) - \hat{\Omega}_{\textup{haj},\textsc{f}}(t,t'; {\gamma}_{\textsc{f}}) \right| 
\le & 
C |{\gamma}_\textsc{f} - \hat{\gamma}_\textsc{f}| 
\frac{1}{n} \sum_{i=1}^n \sum_{j=1}^n
K_{n}(i,j) \\
=& o_\mathbb{P}(1),
\end{align*}
where the last line holds by $\hat{\gamma}_\textsc{f} - {\gamma}_\textsc{f} = O_{\mathbb{P}}(n^{-1/2})$ and Assumption \ref{asu7}(b). 
This, together with \eqref{OmegaF}, ensures Condition \ref{conditionF}(i).
\end{proof}

\begin{proof}[Proof of Theorem \ref{thm:add_adj}] 
We omit the proof as it is analogous to the proof of Theorem \ref{thm:full_adj}.
\end{proof}

\subsubsection{Fully-interacted regression}

We verify in this subsection the results under the fully-interacted WLS fit in \eqref{eq:full}. 
% The proofs of the results under the additive WLS fit in \eqref{eq:add} are similar, so we relegate the proofs to online appendix.
The correspondence between the WLS fit and the Hájek estimation is also preserved in the fully-interacted WLS fit in \eqref{eq:full}. Proposition \ref{prop:full} parallels Proposition \ref{prop:add}, and establishes that $\hat{\beta}_{\textup{haj},\textsc{l}}(t)$ is the Hájek estimator based on the covariate-adjusted outcome $Y_i - x_i^\top \hat{\gamma}_\textsc{l}(t) $. 
A key distinction is that the adjustment is now based on coefficients specific to exposure mapping values.
\begin{proposition} \label{prop:full}
$\hat{\beta}_{\textup{haj},\textsc{l}}(t) = \hat{Y}_{\textup{haj}}(t) - \hat{x}_{\textup{haj}}(t)^{\top} \hat{\gamma}_\textsc{l}(t)$ for all $t\in\mathcal{T}$.
\end{proposition}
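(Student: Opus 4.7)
The plan is to exploit the orthogonality of the exposure-mapping dummies $\{1(T_i=t)\}_{t\in\mathcal{T}}$ to show that the fully-interacted WLS fit in \eqref{eq:full} decouples into $|\mathcal{T}|$ independent weighted regressions, one per exposure value, and then to invoke the familiar identity relating the intercept of a weighted linear regression to its weighted sample means. Because the statement is purely numerical, no probabilistic assumptions enter the argument.

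First I would record the key algebraic identity $1(T_i=t)\cdot 1(T_i=t') = 1(T_i=t)\,1(t=t')$ for all $i$. This means that in the weighted Gram matrix $C_\textsc{l}^\top W C_\textsc{l}$ built from rows $(z_i^\top, (z_i\otimes x_i)^\top)$ with weight $w_i = 1/\pi_i(T_i)$, every cross-exposure block vanishes: columns indexed by $t$ and by $t' \neq t$ are orthogonal in the weighted inner product. The same applies to $C_\textsc{l}^\top W Y$. Hence the $(|\mathcal{T}|(J+1))$-dimensional normal equations split into $|\mathcal{T}|$ disjoint subsystems of dimension $J+1$, each indexed by a value $t \in \mathcal{T}$.

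Next I would observe that the $t$-th subsystem is exactly the WLS normal equation for regressing $Y_i$ on $(1, x_i)$ with weights $w_i = 1/\pi_i(t)$, restricted to the subsample $\{i:T_i=t\}$. The intercept of this sub-regression is $\hat{\beta}_{\textup{haj},\textsc{l}}(t)$ and the slope on $x_i$ is $\hat{\gamma}_\textsc{l}(t)$. Applying the standard weighted-regression identity --- the intercept equals the weighted mean of the response minus the weighted mean of the covariates times the slope --- gives
\[
\hat{\beta}_{\textup{haj},\textsc{l}}(t) = \frac{\sum_{i: T_i=t} Y_i/\pi_i(t)}{\sum_{i: T_i=t} 1/\pi_i(t)} - \left( \frac{\sum_{i: T_i=t} x_i/\pi_i(t)}{\sum_{i: T_i=t} 1/\pi_i(t)} \right)^\top \hat{\gamma}_\textsc{l}(t).
\]
Multiplying numerators and denominators by $n^{-1}$ and recognizing $\hat{Y}_{\textup{ht}}(t)$, $\hat{x}_{\textup{ht}}(t)$, and $\hat{1}_{\textup{ht}}(t)$ immediately yields $\hat{Y}_{\textup{haj}}(t) - \hat{x}_{\textup{haj}}(t)^\top \hat{\gamma}_\textsc{l}(t)$, as claimed.

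The only mild obstacle is bookkeeping the Kronecker-product layout: one must be careful to verify that the block of $C_\textsc{l}^\top W C_\textsc{l}$ corresponding to the $(z_i\otimes x_i)$-part also decouples across $t$, which again follows from $1(T_i=t)\,1(T_i=t')=0$ when $t\neq t'$. Once decoupling is established, the rest is a one-line consequence of the standard weighted-OLS intercept formula, mirroring the logic used to prove Proposition \ref{prop:add} from the first-order conditions of the additive fit.
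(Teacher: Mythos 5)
Your proof is correct and follows essentially the same route as the paper: the paper likewise observes that the full interactions decouple the WLS fit into a separate weighted regression of $Y_i$ on $(1,x_i)$ with weights $1/\pi_i(t)$ on each subsample $\{i: T_i=t\}$, and then reads the identity off the first row of the resulting normal equations $G_{1t}(\hat{\beta}_{\textup{haj},\textsc{l}}(t),\hat{\gamma}_\textsc{l}(t)^\top)^\top = G_{2t}$, which is exactly your weighted-intercept formula. No gaps.
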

% Recall Proposition \ref{prop:add} under the additive WLS fit implies that $\hat{\beta}_{\textup{haj},\textsc{f}}(t) = \hat{Y}_{\textup{haj}}(t) - \hat{x}_{\textup{haj}}(t)^{\top} \hat{\gamma}_\textsc{f}$. 
\begin{proof}[Proof of Proposition \ref{prop:full}]
The numerical equivalence follows from equation \eqref{eq: L}:
\begin{equation*}
G_{1t} \left(\hat{\beta}_{\textup{haj},\textsc{l}}(t), \hat{\gamma}_\textsc{l}(t)^\top \right)^\top
% \begin{pmatrix}
%     \hat{\beta}_{\textup{haj},\textsc{l}}(t) \\
%     \hat{\gamma}_\textsc{l}(t)
% \end{pmatrix}
=  G_{2t}, 
\end{equation*}
where 
\begin{align*}
G_{1t} 
% =& \left( \sum_{i=1}^n {1}_i(t) \right)^{-1} 
% (\iota_{n_t}, {X}_t)^{\top} W_t
% (\iota_{n_t}, {X}_t)
=& \left( \sum_{i=1}^n {1}_i(t) \right)^{-1}
\begin{pmatrix}
\iota_{n_t}^{\top}W_t \iota_{n_t} & \iota_{n_t}^{\top}W_t {X}_t \\
{X}_t^{\top}W_t \iota_{n_t}  & {X}_t^{\top}W_t {X}_t
\end{pmatrix}
= 
\begin{pmatrix}
\hat{1}_{\text{ht}}(t) & \hat{x}_{\text{ht}}^{\top}(t) \\
\hat{x}_{\text{ht}}(t)  & n_t^{-1}{X}_t^{\top}W_t {X}_t
\end{pmatrix}, \\
G_{2t} 
=& \left( \sum_{i=1}^n {1}_i(t) \right)^{-1}
\begin{pmatrix}
\iota_{n_t}^{\top} W_t {Y}_t \\
{X}_t^{\top} W_t {Y}_t 
\end{pmatrix}
= 
\begin{pmatrix}
\hat{Y}_{\text{ht}}(t) \\
n_t^{-1}{X}_t^{\top} W_t {Y}_t
\end{pmatrix}.  
\end{align*}
\end{proof}

% Some useful facts: 
% \begin{align*}
% n^{-1}  {Z}^\top  {W}  {Z}
% =&~ \Lambda_{\text{ht}} \\
% n^{-1}  {Z}^\top  {W} \chi
% =&  \\
% n^{-1}  \chi^\top  {W} \chi
% =&  
% \end{align*}

\begin{lemma}\label{lemma: gamma_L}
Under Assumptions
\ref{asu1}--\ref{asu4}, \ref{asu7}(a) and \ref{asu9}, we have $\hat{\gamma}_\textsc{l}(t) = {\gamma}_\textsc{l}(t) + O_{\mathbb{P}}(n^{-1/2})$ with
${\gamma}_\textsc{l}(t) = 
(\sum_{i=1}^n x_ix_i^\top)^{-1} \sum_{i=1}^n x_i \mu_i(t)$ for all $t\in\mathcal{T}$.
% \[
% {\gamma}_\textsc{l} = 
% (\sum_{i=1}^n x_ix_i^\top)^{-1} \sum_{i=1}^n x_i \mu_i(t).
% \]
\end{lemma}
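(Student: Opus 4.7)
The plan is to invoke the first-order condition derived in the proof of Proposition \ref{prop:full}: for each $t \in \mathcal{T}$, the pair $(\hat{\beta}_{\textup{haj},\textsc{l}}(t), \hat{\gamma}_{\textsc{l}}(t))$ solves a $(1+J)\times (1+J)$ block linear system $G_{1t} (\hat{\beta}_{\textup{haj},\textsc{l}}(t), \hat{\gamma}_{\textsc{l}}(t)^\top)^\top = G_{2t}$. Because the fully-interacted specification decouples across exposure mapping values, only units with $T_i = t$ enter the $t$-th sub-system, and each entry of $G_{1t}$ and $G_{2t}$ is an empirical average weighted by $1_i(t)/\pi_i(t)$, which is exactly the object handled by the Horvitz--Thompson / Hájek machinery already developed.

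First, I would read off the probability limits of each block together with their $O_\mathbb{P}(n^{-1/2})$ rates. By Lemma \ref{lemma: beta_haj} applied to the constant potential outcome $1$ and to $Y_i$, we have $\hat{1}_{\text{ht}}(t) = 1 + O_\mathbb{P}(n^{-1/2})$ and $\hat{Y}_{\text{ht}}(t) = \mu(t) + O_\mathbb{P}(n^{-1/2})$. Lemma \ref{lemma: x_ht} then gives $\hat{x}_{\text{ht}}(t) = O_\mathbb{P}(n^{-1/2})$ (since $\bar{x}=0$), $\hat{Q}(t) := n^{-1}\sum_{i=1}^n 1_i(t)\pi_i(t)^{-1} x_i x_i^\top = Q_{xx} + O_\mathbb{P}(n^{-1/2})$, and $\hat{q}(t) := n^{-1}\sum_{i=1}^n 1_i(t)\pi_i(t)^{-1} x_i Y_i = n^{-1}\sum_{i=1}^n x_i \mu_i(t) + O_\mathbb{P}(n^{-1/2})$. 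The uniform invertibility of $Q_{xx}$, in the sense of a lower bound on its smallest eigenvalue, is implicit in the well-posedness of $\gamma_\textsc{l}(t)$.

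Next, I would solve the block system explicitly. The second block row reads $\hat{x}_{\text{ht}}(t)\hat{\beta}_{\textup{haj},\textsc{l}}(t) + \hat{Q}(t)\hat{\gamma}_{\textsc{l}}(t) = \hat{q}(t)$, so $\hat{\gamma}_{\textsc{l}}(t) = \hat{Q}(t)^{-1}\bigl(\hat{q}(t) - \hat{x}_{\text{ht}}(t)\hat{\beta}_{\textup{haj},\textsc{l}}(t)\bigr)$. A preliminary tightness step shows $\hat{\beta}_{\textup{haj},\textsc{l}}(t) = O_\mathbb{P}(1)$: the Schur complement $\hat{1}_{\text{ht}}(t) - \hat{x}_{\text{ht}}(t)^\top \hat{Q}(t)^{-1}\hat{x}_{\text{ht}}(t) = 1 + O_\mathbb{P}(n^{-1})$ is bounded away from zero with probability tending to one, so $G_{1t}^{-1}$ is stochastically bounded and both solution components are $O_\mathbb{P}(1)$. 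Combining this with $\hat{Q}(t)^{-1} = Q_{xx}^{-1} + O_\mathbb{P}(n^{-1/2})$ via Slutsky's theorem, the cross term $\hat{Q}(t)^{-1}\hat{x}_{\text{ht}}(t)\hat{\beta}_{\textup{haj},\textsc{l}}(t)$ is $O_\mathbb{P}(n^{-1/2}) \cdot O_\mathbb{P}(1) = O_\mathbb{P}(n^{-1/2})$, while $\hat{Q}(t)^{-1}\hat{q}(t) = Q_{xx}^{-1}\bigl(n^{-1}\sum_i x_i\mu_i(t)\bigr) + O_\mathbb{P}(n^{-1/2}) = \gamma_\textsc{l}(t) + O_\mathbb{P}(n^{-1/2})$, which yields the claim.

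No step is a serious obstacle here: the argument runs in exact parallel to Lemma \ref{lemma: gamma_F} for the additive specification, the only genuine differences being that the $t$-subsystems decouple and that the probability limit $\gamma_\textsc{l}(t)$ is allowed to depend on $t$. The sole point requiring a line of justification is the tightness of $\hat{\beta}_{\textup{haj},\textsc{l}}(t)$, which follows from the Schur-complement boundedness of $G_{1t}^{-1}$ outlined above.
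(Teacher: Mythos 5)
Your proposal is correct and follows essentially the same route as the paper: both start from the per-$t$ first-order condition $G_{1t}(\hat{\beta}_{\textup{haj},\textsc{l}}(t),\hat{\gamma}_{\textsc{l}}(t)^\top)^\top = G_{2t}$ of the decoupled subsystem, use Lemma \ref{lemma: beta_haj} and Lemma \ref{lemma: x_ht} to show $G_{1t} = \operatorname{diag}(1, Q_{xx}) + O_\mathbb{P}(n^{-1/2})$ and $G_{2t} = (\mu(t), (n^{-1}\sum_i x_i\mu_i(t))^\top)^\top + O_\mathbb{P}(n^{-1/2})$, and then invert. Your Schur-complement tightness step for $\hat{\beta}_{\textup{haj},\textsc{l}}(t)$ is a slightly more explicit justification of the inversion that the paper leaves implicit, but it is not a different argument.
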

% \begin{proof}
% The probability limit follows from $(\hat{\beta}_{\textup{haj},\textsc{l}}(t), \hat{\gamma}_\textsc{l}(t)^{\top})^{\top} = G_{1t}^{-1}G_{2t}$ where
% \begin{align*}
% G_{1t} = \text{diag}\left(1, n^{-1}\sum_{i=1}^n x_ix_i^\top\right) + o_\mathbb{P}(1), 
% \quad
% G_{2t} = \left(\mu(t), ( n^{-1} \sum_{i=1}^n x_i \mu_i(t))^\top \right)^\top + o_\mathbb{P}(1)
% \end{align*}
% by Lemma \ref{lemma: x_ht}. 
% \end{proof}

\begin{proof}[Proof of Lemma \ref{lemma: gamma_L}]
% We verify below the numerical results in
% Proposition \ref{prop:full} and the asymptotic results in Lemma \ref{lemma: gamma_L} together.
The inclusion of full interactions ensures that $\hat{\beta}_{\textup{haj},\textsc{l}}(t)$ and $\hat{\gamma}_\textsc{l}(t)$ from the WLS fit in \eqref{eq:full} equal the coefficients of 1 and ${x}_i$ from the following WLS fit:
\begin{equation}
\text{regress }
{Y}_i \sim 1 + {x}_i \text{ with weights }  w_{it}=1/\pi_i(t) \text{ for units that }  {1}_i(t)=1. \label{eq:fullt}
\end{equation}
% with weights $w_{it}=\frac{{1}_i(t)}{\pi_i(t)}$ for units that ${1}_i(t)=1$, respectively. 
Let $W_t=\text{diag}\{{w}_{it}\}_{\{i: {1}_i(t)=1\}}$ and ${n_t} = \sum_{i=1}^n {1}_i(t)$. 
% Let $1_{n_t}$ be an $n_t \times 1$ vector of ones. 
Let ${Y}_t$ and ${X}_t$ be the concatenations of ${Y}_i$ and $x_i$ over $\{i: {1}_i(t)=1\}$, respectively. The first-order condition of WLS fit in \eqref{eq:fullt} ensures 
\begin{equation}
G_{1t} \left(\hat{\beta}_{\textup{haj},\textsc{l}}(t), \hat{\gamma}_\textsc{l}(t)^\top \right)^\top
% \begin{pmatrix}
%     \hat{\beta}_{\textup{haj},\textsc{l}}(t) \\
%     \hat{\gamma}_\textsc{l}(t)
% \end{pmatrix}
=  G_{2t}, \label{eq: L}   
\end{equation}
where 
\begin{align*}
G_{1t} 
% =& \left( \sum_{i=1}^n {1}_i(t) \right)^{-1} 
% (\iota_{n_t}, {X}_t)^{\top} W_t
% (\iota_{n_t}, {X}_t)
=& \left( \sum_{i=1}^n {1}_i(t) \right)^{-1}
\begin{pmatrix}
\iota_{n_t}^{\top}W_t \iota_{n_t} & \iota_{n_t}^{\top}W_t {X}_t \\
{X}_t^{\top}W_t \iota_{n_t}  & {X}_t^{\top}W_t {X}_t
\end{pmatrix}
= 
\begin{pmatrix}
\hat{1}_{\text{ht}}(t) & \hat{x}_{\text{ht}}^{\top}(t) \\
\hat{x}_{\text{ht}}(t)  & n_t^{-1}{X}_t^{\top}W_t {X}_t
\end{pmatrix}, \\
G_{2t} 
=& \left( \sum_{i=1}^n {1}_i(t) \right)^{-1}
\begin{pmatrix}
\iota_{n_t}^{\top} W_t {Y}_t \\
{X}_t^{\top} W_t {Y}_t 
\end{pmatrix}
= 
\begin{pmatrix}
\hat{Y}_{\text{ht}}(t) \\
n_t^{-1}{X}_t^{\top} W_t {Y}_t
\end{pmatrix}.  
\end{align*}
By comparing the first row of \eqref{eq: L}, we have
$\hat{\beta}_{\textup{haj},\textsc{l}}(t)  = \hat{Y}_{\textup{haj}}(t) - \hat{x}_{\textup{haj}}(t)^{\top} \hat{\gamma}_\textsc{l}(t)$.
% \[
% % \hat{1}_{\text{ht}}(t) \hat{\beta}_{\textup{haj},\textsc{l}}(t) + \hat{x}_{\text{ht}}(t) \hat{\gamma}_\textsc{l}(t) = \hat{Y}_{\text{ht}}(t) \implies 
% \hat{\beta}_{\textup{haj},\textsc{l}}(t) 
% = \hat{1}_{\text{ht}}(t)^{-1}(\hat{Y}_{\text{ht}}(t) - \hat{x}_{\text{ht}}(t)^{\top} \hat{\gamma}_\textsc{l}(t))
% = \hat{Y}_{\textup{haj}}(t) - \hat{x}_{\textup{haj}}(t)^{\top} \hat{\gamma}_\textsc{l}(t).
% \]
The probability limit follows from \eqref{eq: L} and by Lemma \ref{lemma: x_ht}
\begin{align*}
G_{1t} = \text{diag}\left(1, n^{-1}\sum_{i=1}^n x_ix_i^\top\right) + O_{\mathbb{P}}(n^{-\frac{1}{2}}), 
\quad 
G_{2t} = \left(\mu(t), ( n^{-1} \sum_{i=1}^n x_i \mu_i(t))^\top \right)^\top + O_{\mathbb{P}}(n^{-\frac{1}{2}}).
\end{align*}
\end{proof}
% Denote by $\hat{\Sigma}^2_{\textup{haj},\textsc{f}}(\hat{\gamma}_\textsc{f})$ the network-robust covariance for $\hat{\tau}_\textup{haj}\left(t, t^{\prime},\hat{\gamma}_\textsc{f}\right)$ from the weighted least squares fit. 
% Theorem 2.1 below established the conservativeness of $\hat{\Sigma}^2_{\textup{haj},\textsc{f}}(\hat{\gamma}_\textsc{f})$ for the asymptotic covariance of $\hat{\tau}_\textup{haj}\left(t, t^{\prime},\hat{\gamma}_\textsc{f}\right)$.

\begin{proof}[Proof of Theorem \ref{thm:full}]
% \textbf{Asymptotic normality.}
The result follows from $\hat{\gamma}_\textsc{l} = {\gamma}_\textsc{l} + O_{\mathbb{P}}(n^{-1/2})$ by Lemma \ref{lemma: gamma_L} and applying Lemma \ref{lemma: asym} with $\hat{\gamma}=\hat{\gamma}_\textsc{l}$ and ${\gamma} = {\gamma}_\textsc{l}$.
\end{proof}

\begin{proof}[Proof of Theorem \ref{thm:full_bias}]
We first show the result of the oracle estimator.
Applying Theorem \ref{thm:Hájek_asym_n} to the covariate-adjusted outcome $Y_i(t;{\gamma}_\textsc{l}) = {Y}_i - {x}_i^\top {\gamma}_\textsc{l}(t)$, we have
\[
{ {\Sigma}}_{*,\textup{haj},\textsc{l}} 
= \operatorname{Var}\left(\sqrt{n} \hat{\beta}_{\textup{haj}}({\gamma_\textsc{l}}) \right)
+ o_\mathbb{P}(1).
\]
Since $\sqrt{n}( \hat{\beta}_{\textup{haj}}({\gamma_\textsc{l}}) - \hat{\beta}_{\textup{haj},\textsc{l}} ) = o_\mathbb{P}(1)$ by Lemma \ref{lemma: gamma_L}, we can show that
\[
{ {\Sigma}}_{*,\textup{haj},\textsc{l}} =
\operatorname{Var}\left(\sqrt{n} \hat{\beta}_{\textup{haj},\textsc{l}} \right)
+ o_\mathbb{P}(1)
= {\Sigma}_{\textup{haj}, \textsc{l}} 
+ o_\mathbb{P}(1).
\]

% By Lemma \ref{lemma: gamma_F}, we have $\hat{\gamma}_\textsc{l} = {\gamma}_\textsc{l} + o_\mathbb{P}(1)$. Then the asymptotic normality of $\hat{\tau}_\textsc{l}\left(t, t^{\prime},\hat{\gamma}_\textsc{l} \right)$ follows from Lemma \ref{lemma: asym}. By Proposition 4.1 of \cite{KOJEVNIKOV2021882}, 
% \[
% \left| \hat{\Sigma}_{*, {\textup{haj},\textsc{l}}}^2\left(t, t^{\prime}, \gamma_{\mathrm{L}}\right) -\operatorname{Var}\left(\sqrt{n} \hat{\tau}_\textsc{l}\left(t, t^{\prime},\hat{\gamma}_\textsc{l} \right) \right)\right| \stackrel{p}{\rightarrow} 0
% \]
% \end{proof}
% \begin{proof}[Proof of Theorem \ref{prop:full}]
% \textbf{Formula of $\hat{ {V}}_{\text{haj,L}}$.}
Then we show the result of the asymptotic bias.
By Proposition \ref{prop:full}, $\hat{\beta}_{\textup{haj},\textsc{l}}$ is the Hájek estimator based on the covariate-adjusted outcome $Y_i(t;\hat{\gamma}_\textsc{l}) = {Y}_i - {x}_i^\top \hat{\gamma}_\textsc{l}(t) $. 
The residual from the WLS fit in \eqref{eq:full} is ${e}_{\textsc{l},i}
= Y_i - {x}_i^\top \hat{\gamma}_\textsc{l}(T_i)
- \hat{\beta}_{\textup{haj},\textsc{l}}(T_i)$.
% \[
% {e}_{\textsc{l},i}
% = Y_i - {x}_i^\top \hat{\gamma}_\textsc{l}(T_i)
% - \hat{\beta}_{\textup{haj},\textsc{l}}(T_i).
% \] 
The HAC covariance estimator for $\hat{\beta}_{\textup{haj},\textsc{l}}$ equals the upper-left $|\mathcal{T}|\times |\mathcal{T}|$ submatrix of
\begin{equation}
( {C}_\textsc{l}^{\top}  {W}  {C}_\textsc{l})^{-1}
( {C}_\textsc{l}^{\top}  {W}  {e}_{\textup{haj},\textsc{l}}  {K}_n  {e}_{\textup{haj},\textsc{l}}^{\top}   {W}  {C}_\textsc{l})
( {C}_\textsc{l}^{\top}  {W}  {C}_\textsc{l})^{-1}.  
\label{eq:V_hajL}  
\end{equation}
Introduce an intermediate term for the theoretical analysis:  
\begin{align*}
& \hat{\Omega}_{\textup{haj},\textsc{l}}(\hat{\gamma}_\textsc{l})
= n ( {Z}^{\top}  {W}  {Z})^{-1}
( {Z}^{\top}  {W}  {e}_{\textup{haj},\textsc{l}}
{K}_n  {e}_{\textup{haj},\textsc{l}}^{\top}   {W}  {Z})
( {Z}^{\top}  {W}  {Z})^{-1} 
% = ( \hat{\Omega}_{\textup{haj},\textsc{l}}(t,t'; \hat{\gamma}_{\textsc{l}}) )_{t,t'\in\mathcal{T}} 
% =& \left( \frac{ \frac{1}{n} \sum_{i=1}^n \sum_{j=1}^n 
% \frac{{1}_i(t)}{\pi_i(t)}  (Y_i(t;\hat{\gamma}_\textsc{l})- \hat{\beta}_{\textup{haj},\textsc{l}}(t) ) 
% \frac{{1}_j(t')}{\pi_j(t')} (Y_j(t';\hat{\gamma}_\textsc{l})-\hat{\beta}_{\textup{haj},\textsc{l}}(t') ) 
% {1}(\ell_{ {A}}(i,j)\le b_n)  }{ \hat{1}_{\text{ht}}(t) \hat{1}_{\text{ht}}(t') } 
% \right)_{t,t'\in\mathcal{T}}.   
\end{align*}
where the $(t,t')$th entry is
\begin{align*}
& \hat{\Omega}_{\textup{haj},\textsc{l}}(t,t'; \hat{\gamma}_{\textsc{l}})
= 
\frac{1}{n} \sum_{i=1}^n \sum_{j=1}^n 
\frac{{1}_i(t)(Y_i(t;\hat{\gamma}_\textsc{l})- \hat{\beta}_{\textup{haj},\textsc{l}}(t) )}{\pi_i(t) \hat{1}_{\text{ht}}(t)}   
\frac{{1}_j(t') (Y_j(t';\hat{\gamma}_\textsc{l})-\hat{\beta}_{\textup{haj},\textsc{l}}(t') )  }{\pi_j(t') \hat{1}_{\text{ht}}(t')}
K_{n}(i,j).   
\end{align*}
The result on $\hat{ {V}}_{\text{haj,\textsc{l}}}$ holds as long as 
\begin{equation}
\text{(i) } \hat{\Omega}_{\text{haj,\textsc{l}}}(\hat{\gamma}_\textsc{l})
= \hat{ {\Sigma}}_{*,\textup{haj}, \textsc{l}} + R_{\textup{haj},\textsc{l}} + o_\mathbb{P}(1)
\text{ and (ii) }
n\hat{ {V}}_{\text{haj,\textsc{l}}} - \hat{\Omega}_{\text{haj,\textsc{l}}}(\hat{\gamma}_\textsc{l}) = o_\mathbb{P}(1). \label{conditionL}
\end{equation}
We verify below these two conditions one by one.
% We first show that $n\hat{ {V}}_{\text{haj,\textsc{l}}} - \hat{\Omega}_{\text{haj,\textsc{l}}}(\hat{\gamma}_\textsc{l}) = o_\mathbb{P}(1)$. 

\noindent \textbf{Condition \ref{conditionL}(ii).}
Define $ \chi = \{z_i \otimes x_i\}_{i=1}^n$. 
Let $G_1 = n^{-1}  {Z}^\top  {W}  {e}_{\textup{haj},\textsc{l}}  {K}_n  {e}_{\textup{haj},\textsc{l}}^{\top}   {W} \chi$ and $G_2 = n^{-1} \chi^\top  {W}  {e}_{\textup{haj},\textsc{l}}  {K}_n  {e}_{\textup{haj},\textsc{l}}^{\top}   {W} \chi$.
The ``middle'' part of \eqref{eq:V_hajL} equals 
\begin{align}
n^{-1}
{C}_\textsc{l}^{\top}  {W}  {e}_{\textup{haj},\textsc{l}}  {K}_n  {e}_{\textup{haj},\textsc{l}}^{\top}   {W}  {C}_\textsc{l}
% =&~ n^{-1}
% ( {Z}, \chi)^\top
%     {W}  {e}_{\textup{haj},\textsc{l}}  {K}_n  {e}_{\textup{haj},\textsc{l}}^{\top}   {W} ( {Z}, \chi) \notag \\
% =& n^{-1} \begin{pmatrix}
%  {Z}^\top  {W}  {e}_{\textup{haj},\textsc{l}}  {K}_n  {e}_{\textup{haj},\textsc{l}}^{\top}   {W}  {Z} &  {Z}^\top  {W}  {e}_{\textup{haj},\textsc{l}}  {K}_n  {e}_{\textup{haj},\textsc{l}}^{\top}   {W} {\chi} \\
% \chi^\top  {W}  {e}_{\textup{haj},\textsc{l}}  {K}_n  {e}_{\textup{haj},\textsc{l}}^{\top}   {W}  {Z} & 
% \chi^\top  {W}  {e}_{\textup{haj},\textsc{l}}  {K}_n  {e}_{\textup{haj},\textsc{l}}^{\top}   {W} \chi 
% \end{pmatrix} \\
=& \begin{pmatrix}
(n^{-1} {Z}^{\top}  {W}  {Z})  
\hat{\Omega}_{\textup{haj},\textsc{l}}(\hat{\gamma}_\textsc{l}) (n^{-1} {Z}^{\top}  {W}  {Z}) ~&~ G_1 \\
G_1^\top & G_2
\end{pmatrix}. \label{eq:meat}
\end{align}
The ``bread'' part of \eqref{eq:V_hajL} equals 
\begin{align*}
n^{-1}
{C}_\textsc{l}^{\top}  {W}  {C}_\textsc{l}
% =& n^{-1}
% ( {Z}, \chi)^\top
%     {W} ( {Z}, \chi)
= n^{-1}
\begin{pmatrix}
{Z}^\top  {W}  {Z} &  {Z}^\top  {W} \chi \\
\chi^\top  {W}  {Z} & \chi^\top  {W} \chi
\end{pmatrix}
= \text{diag}\left(I, I \otimes Q_{xx} \right) + o_\mathbb{P}(1)
\end{align*}
where the last equality follows from $ n^{-1} {Z}^\top  {W}  {Z} = \hat{1}_{\text{ht}} = I + o_\mathbb{P}(1)$ by Lemma \ref{lemma: beta_haj} and 
\begin{align*}
n^{-1} {Z}^\top  {W} \chi 
=& n^{-1} \sum_{i=1}^n z_iw_i (z_i \otimes x_i)^\top
= \text{diag}(\hat{x}_{\text{ht}}(t): t\in\mathcal{T})
= o_\mathbb{P}(1) , \\
n^{-1} \chi^\top {W} \chi
=& \text{diag}\left( n^{-1}\sum_{i=1}^n \frac{{1}_i(t)}{\pi_i(t)} x_ix_i^\top: t\in\mathcal{T} \right) 
= I \otimes Q_{xx} + o_\mathbb{P}(1),
\end{align*}
by Lemma \ref{lemma: x_ht}. 
This, together with \eqref{eq:meat}, ensures that $n\hat{{V}}_{\text{haj,\textsc{l}}} - \hat{\Omega}_{\text{haj,\textsc{l}}}(\hat{\gamma}_\textsc{l}) = o_\mathbb{P}(1)$ holds as long as $G_k = (G_{k}(t,t'))_{t,t'\in\mathcal{T}} = O_\mathbb{P}(1) $ for $k=1,2$.
We verify below $G_{2}(t,t')=O_\mathbb{P}(1)$ for scalar covariate $x\in\mathbb{R}$ for notational simplicity. The
proof for $G_1 = O_\mathbb{P}(1)$ is almost identical to $G_2$ and thus omitted. 
Define $\Delta(t;\hat{\beta}_{\textup{haj},\textsc{l}}) = \hat{\beta}_{\textup{haj},\textsc{l}}(t)-\mu(t)$ and $\Delta(t;\hat{\gamma}_\textsc{l}) = \hat{\gamma}_\textsc{l}(t) - {\gamma}_\textsc{l}(t)$. 
% Define $\Delta_i(t;\hat{\beta}_{\textup{haj},\textsc{l}}, \hat{\gamma}_\textsc{l}) = \hat{\beta}_{\textup{haj},\textsc{l}}(t)-\mu(t) - x_i(\hat{\gamma}_\textsc{l}(t) - {\gamma}_\textsc{l}(t))$.
By direct algebra, we have
\begin{align*}
& G_{2}(t,t')
= \frac{1}{n} \sum_{i=1}^n \sum_{j=1}^n 
% \begin{array}{c}
\frac{{1}_i(t)x_i
(Y_i(t;\hat{\gamma}_\textsc{l})- \hat{\beta}_{\textup{haj},\textsc{l}}(t) )}{\pi_i(t)} 
\frac{{1}_j(t')x_j
(Y_j(t';\hat{\gamma}_\textsc{l})- \hat{\beta}_{\textup{haj},\textsc{l}}(t') )}{\pi_j(t')} 
% \end{array}
K_{n}(i,j) \\
&= \frac{1}{n} \sum_{i,j} 
% \begin{array}{c}
\frac{{1}_i(t)x_i
(Y_i(t;{\gamma}_\textsc{l})- \mu(t) )}{\pi_i(t)} 
\frac{{1}_j(t')x_j
(Y_j(t';\hat{\gamma}_\textsc{l})- \mu(t') )}{\pi_j(t')} 
% \end{array}
K_{n,ij}
- T_1(t,t') - T_1(t',t) + T_2
% &- \frac{1}{n} \sum_{i=1}^n \sum_{j=1}^n 
% % \begin{array}{c}
% \frac{{1}_i(t)x_i
% (\Delta(t;\hat{\beta}_{\textup{haj},\textsc{l}}) - x_i\Delta(t;\hat{\gamma}_{\textsc{l}},{\gamma}_{\textsc{l}})) }{\pi_i(t)} 
% \frac{{1}_j(t')x_j
% (Y_j(t';{\gamma}_\textsc{l})- \mu(t') )}{\pi_j(t')} 
% % \end{array}
% {1}(\ell_{ {A}}(i, j) \leq b_n)\\
% &- \frac{1}{n} \sum_{i=1}^n \sum_{j=1}^n 
% % \begin{array}{c}
% \frac{{1}_i(t)x_i
% (Y_i(t;{\gamma}_\textsc{l}) - \mu(t) ) }{\pi_i(t)} 
% \frac{{1}_j(t')x_j (\Delta(t';\hat{\beta}_{\textup{haj},\textsc{l}}) - x_j\Delta(t';\hat{\gamma}_{\textsc{l}},{\gamma}_{\textsc{l}}))
% }{\pi_j(t')} 
% % \end{array}
% {1}(\ell_{ {A}}(i, j) \leq b_n)\\
% &+ \frac{1}{n} \sum_{i=1}^n \sum_{j=1}^n 
% % \begin{array}{c}
% \frac{{1}_i(t)x_i
% (\Delta(t;\hat{\beta}_{\textup{haj},\textsc{l}}) - x_i\Delta(t;\hat{\gamma}_{\textsc{l}},{\gamma}_{\textsc{l}})) }{\pi_i(t)} 
% \frac{{1}_j(t')x_j (\Delta(t';\hat{\beta}_{\textup{haj},\textsc{l}}) - x_j\Delta(t';\hat{\gamma}_{\textsc{l}},{\gamma}_{\textsc{l}}))
% }{\pi_j(t')} 
% % \end{array}
% {1}(\ell_{ {A}}(i, j) \leq b_n). 
\end{align*}
% $\Delta(t;\hat{\beta}_{\textup{haj},\textsc{l}}) - x_i\Delta(t;\hat{\gamma}_{\textsc{l}},{\gamma}_{\textsc{l}})$
where
\begin{eqnarray*}
&&T_{1}(t,t')
= \frac{1}{n} \sum_{i=1}^n \sum_{j=1}^n 
% \begin{array}{c}
\frac{{1}_i(t)x_i
(\Delta(t;\hat{\beta}_{\textup{haj},\textsc{l}}) - x_i\Delta(t;\hat{\gamma}_{\textsc{l}})) }{\pi_i(t)} 
\frac{{1}_j(t')x_j
(Y_j(t';{\gamma}_\textsc{l})- \mu(t') )}{\pi_j(t')} 
% \end{array}
K_{n,ij}, \\ 
&&T_2
= \frac{1}{n} \sum_{i=1}^n \sum_{j=1}^n 
% \begin{array}{c}
\frac{{1}_i(t)x_i
(\Delta(t;\hat{\beta}_{\textup{haj},\textsc{l}}) - x_i\Delta(t;\hat{\gamma}_{\textsc{l}})) {1}_j(t')x_j (\Delta(t';\hat{\beta}_{\textup{haj},\textsc{l}}) - x_j\Delta(t';\hat{\gamma}_{\textsc{l}}))}{\pi_i(t) \pi_j(t')} 
% \end{array}
K_{n,ij}.
\end{eqnarray*}
% \begin{align*}
% T_{1}(t,t')
% =& \frac{1}{n} \sum_{i=1}^n \sum_{j=1}^n 
% % \begin{array}{c}
% \frac{{1}_i(t)x_i
% (\Delta(t;\hat{\beta}_{\textup{haj},\textsc{l}}) - x_i\Delta(t;\hat{\gamma}_{\textsc{l}})) }{\pi_i(t)} 
% \frac{{1}_j(t')x_j
% (Y_j(t';{\gamma}_\textsc{l})- \mu(t') )}{\pi_j(t')} 
% % \end{array}
% K_{n}(i,j), \\
% % T_2
% % =& \frac{1}{n} \sum_{i=1}^n \sum_{j=1}^n 
% % % \begin{array}{c}
% % \frac{{1}_i(t)x_i
% % (Y_i(t;{\gamma}_\textsc{l}) - \mu(t) ) }{\pi_i(t)} 
% % \frac{{1}_j(t')x_j (\Delta(t';\hat{\beta}_{\textup{haj},\textsc{l}}) - x_j\Delta(t';\hat{\gamma}_{\textsc{l}}))
% % }{\pi_j(t')} 
% % % \end{array}
% % K_{n}(i,j), \\
% T_2
% =& \frac{1}{n} \sum_{i=1}^n \sum_{j=1}^n 
% % \begin{array}{c}
% \frac{{1}_i(t)x_i
% (\Delta(t;\hat{\beta}_{\textup{haj},\textsc{l}}) - x_i\Delta(t;\hat{\gamma}_{\textsc{l}})) {1}_j(t')x_j (\Delta(t';\hat{\beta}_{\textup{haj},\textsc{l}}) - x_j\Delta(t';\hat{\gamma}_{\textsc{l}}))}{\pi_i(t) \pi_j(t')} 
% % \end{array}
% K_{n}(i,j).
% \end{align*}
We first show $T_1(t,t') = o_\mathbb{P}(1)$. Since ${1}_i(t)\pi_i(t)^{-1}$, $x_i$ and $Y_i$ are uniformly bounded by Assumptions \ref{asu2}, \ref{asu3} and \ref{asu9}, then for some $C >0$ and any $n$, we have
\begin{align*}
% & \left| \frac{1}{n} \sum_{i=1}^n \sum_{j=1}^n 
% % \begin{array}{c}
% \frac{{1}_i(t)x_i
% (\Delta(t;\hat{\beta}_{\textup{haj},\textsc{l}}) - x_i\Delta(t;\hat{\gamma}_{\textsc{l}}))}{\pi_i(t)} 
% \frac{{1}_j(t')x_j
% (Y_j(t';{\gamma}_\textsc{l})- \mu(t') )}{\pi_j(t')} 
% % \end{array}
% {1}(\ell_{ {A}}(i, j) \leq b_n)\right| \\
|T_1|
\le & C 
\left(
|\hat{\beta}_{\textup{haj},\textsc{l}}(t)-\mu(t)| + |\hat{\gamma}_\textsc{l}(t) - {\gamma}_\textsc{l}(t)|
\right)
\frac{1}{n} \sum_{i=1}^n \sum_{j=1}^n 
K_{n}(i,j) \\
=& o_\mathbb{P}(1),
\end{align*}
where the last line follows by 
$\hat{\beta}_{\textup{haj},\textsc{l}}(t)-\mu(t) = O_{\mathbb{P}}(n^{-1/2})$,
$\hat{\gamma}_\textsc{l}(t) - {\gamma}_\textsc{l}(t) = O_{\mathbb{P}}(n^{-1/2})$, and Assumption \ref{asu7}(b). The result $T_1(t',t) = o_\mathbb{P}(1)$ follows by symmetry. We finally show $T_2 = o_\mathbb{P}(1)$. Again, since ${1}_i(t)\pi_i(t)^{-1}$, $x_i$ and $Y_i$ are uniformly bounded by Assumptions \ref{asu2}, \ref{asu3} and \ref{asu9}, then for some $C >0$ and any $n$, we have
\begin{align*}
% & \left| \frac{1}{n} \sum_{i=1}^n \sum_{j=1}^n 
% % \begin{array}{c}
% \frac{{1}_i(t)x_i
% (\Delta(t;\hat{\beta}_{\textup{haj},\textsc{l}}) - x_i\Delta(t;\hat{\gamma}_{\textsc{l}})) }{\pi_i(t)} 
% \frac{{1}_j(t')x_j (\Delta(t';\hat{\beta}_{\textup{haj},\textsc{l}}) - x_j\Delta(t';\hat{\gamma}_{\textsc{l}}))
% }{\pi_j(t')} 
% % \end{array}
% {1}(\ell_{ {A}}(i, j) \leq b_n)
% \right| \\
|T_3|
\le &  C
\left[
\begin{array}{c}
|\Delta(t;\hat{\beta}_{\textup{haj},\textsc{l}}) 
\Delta(t';\hat{\beta}_{\textup{haj},\textsc{l}})| 
+ |\Delta(t;\hat{\gamma}_{\textsc{l}})
\Delta(t';\hat{\gamma}_{\textsc{l}}) | \\
+ |\Delta(t;\hat{\gamma}_{\textsc{l}})
\Delta(t';\hat{\beta}_{\textup{haj},\textsc{l}})| 
+ | \Delta(t;\hat{\beta}_{\textup{haj},\textsc{l}})
\Delta(t';\hat{\gamma}_{\textsc{l}})|
\end{array}
\right]
\frac{1}{n} \sum_{i=1}^n \sum_{j=1}^n 
K_{n}(i,j) \\
=& o_\mathbb{P}(1), 
\end{align*}
where the last line holds by $\hat{\beta}_{\textup{haj},\textsc{l}}(t)-\mu(t) = O_{\mathbb{P}}(n^{-1/2})$,
$\hat{\gamma}_\textsc{l}(t) - {\gamma}_\textsc{l}(t) = O_{\mathbb{P}}(n^{-1/2})$ for all $t\in\mathcal{T}$, and Assumption \ref{asu7}(b).
Thus, we have
\begin{align*}
G_{2}(t,t') 
= \frac{1}{n} \sum_{i=1}^n \sum_{j=1}^n 
% \begin{array}{c}
\frac{{1}_i(t)x_i
(Y_i(t;{\gamma}_\textsc{l})- \mu(t))}{\pi_i(t)} 
\frac{{1}_j(t')x_j
(Y_j(t';{\gamma}_\textsc{l})- \mu(t') )}{\pi_j(t')} 
% \end{array}
K_{n}(i,j)
+ o_{\mathbb{P}}(1).
\end{align*}
\noindent \textbf{Condition \ref{conditionL}(i).}
Let $\hat{\Omega}_{\textup{haj},\textsc{l}}( {\gamma}_{\textsc{l}}) = (\hat{\Omega}_{\textup{haj},\textsc{l}}(t,t'; {\gamma}_{\textsc{l}}) )_{t,t'\in\mathcal{T}}$ with the $(t,t')$th element
\begin{align*}
\hat{\Omega}_{\textup{haj},\textsc{l}}(t,t'; {\gamma}_{\textsc{l}}) 
=& \frac{1}{n} \sum_{\ell_{ {A}}(i, j) \leq b_n}
% \left( 
\frac{{1}_i(t) (Y_i(t;{\gamma}_\textsc{l})   - \hat{\beta}_\textup{haj}(t; \gamma_\textsc{l}(t))) }{\pi_i(t) \hat{1}_{\text{ht}}(t)}  \frac{{1}_j(t') (Y_j(t';\hat{\gamma}_\textsc{l})  - \hat{\beta}_\textup{haj}(t'; \gamma_\textsc{l}(t')))}{\pi_j(t') \hat{1}_{\text{ht}}(t')}.
% \right).
% \left( \frac{\frac{{1}_j(t')}{\pi_j(t')} (Y_j - x_j^\top {\gamma}_\textsc{l}(t')  - \hat{\beta}_\textup{haj}(t'; \gamma_\textsc{l}(t'))) }{\hat{1}_{\text{ht}}(t')} 
% \right). 
\end{align*}
Applying Theorem \ref{thm:Hájek_bias} to adjusted outcome $Y_i(t;{\gamma}_\textsc{l})$, we have 
\begin{equation}
\hat{\Omega}_{\textup{haj},\textsc{l}}( {\gamma}_{\textsc{l}})
= { {\Sigma}}_{*,\textup{haj},\textsc{l}} + R_{\textup{haj},\textsc{l}} 
+ o_\mathbb{P}(1).  \label{OmegaL}
\end{equation}
To complete the proof, it suffices to verify that $\hat{\Omega}_{\textup{haj},\textsc{l}}(t,t'; \hat{\gamma}_{\textsc{l}}) - \hat{\Omega}_{\textup{haj},\textsc{l}}(t,t'; {\gamma}_{\textsc{l}}) = o_\mathbb{P}(1)$. To simplify the notation without loss of generality, we verify it with scalar covariate $x_i$: 
\begin{align*}
& \hat{\Omega}_{\textup{haj},\textsc{l}}(t,t'; \hat{\gamma}_{\textsc{l}}) - \hat{\Omega}_{\textup{haj},\textsc{l}}(t,t'; {\gamma}_{\textsc{l}}) \\
=& 
({\gamma}_\textsc{l}(t) - \hat{\gamma}_\textsc{l}(t))
({\gamma}_\textsc{l}(t') - \hat{\gamma}_\textsc{l}(t'))
\frac{1}{n} \sum_{i=1}^n \sum_{j=1}^n 
% \begin{array}{c} 
% \left( 
\frac{{1}_i(t) (x_i - \hat{x}_\textup{haj}(t))  {1}_j(t') (x_j - \hat{x}_\textup{haj}(t')) }{\pi_i(t) \pi_j(t')}
% \right)  
% \left( \frac{}{}    \right)
% \end{array} 
K_{n}(i,j) \\
+& ({\gamma}_\textsc{l}(t') - \hat{\gamma}_\textsc{l}(t'))
\frac{1}{n} \sum_{i=1}^n \sum_{j=1}^n 
% \begin{array}{c}
% \left( 
\frac{{1}_i(t)( Y_i(t;\gamma_\textsc{l})   - \hat{\beta}_{\textup{haj}}(t; {\gamma}_\textsc{l})) {1}_j(t')(x_j - \hat{x}_\textup{haj}(t')) }{\pi_i(t) \pi_j(t')}     
% \right) 
% \left( \frac{}{\pi_j(t')}  \right)
% \end{array} 
K_{n}(i,j) \\
+& ({\gamma}_\textsc{l}(t) - \hat{\gamma}_\textsc{l}(t)) 
\frac{1}{n} \sum_{i=1}^n \sum_{j=1}^n 
% \begin{array}{c}
% \left( 
\frac{ {1}_i(t)(x_i - \hat{x}_\textup{haj}(t)){1}_j(t')( Y_j(t';\gamma_\textsc{l})   - \hat{\beta}_{\textup{haj}}(t'; {\gamma}_\textsc{l}))  }{\pi_i(t) \pi_j(t')}     
% \right) 
% \left( \frac{}{\pi_j(t')}  \right)
% \end{array} 
K_{n}(i,j)
+ o_\mathbb{P}(1),
\end{align*}
by Lemma \ref{lemma: beta_haj}. 
Under Assumptions \ref{asu2}, \ref{asu3} and \ref{asu9}, ${1}_i(t)\pi_i(t)^{-1}$, $x_i$ and $Y_i$ are uniformly bounded, then for some $C >0$ and any $n$, we have
\begin{align*}
& \left| \hat{\Omega}_{\textup{haj},\textsc{l}}(t,t'; \hat{\gamma}_{\textsc{l}}) - \hat{\Omega}_{\textup{haj},\textsc{l}}(t,t'; {\gamma}_{\textsc{l}}) \right| 
\le  C
(|{\gamma}_\textsc{l}(t) - \hat{\gamma}_\textsc{l}(t)|+|{\gamma}_\textsc{l}(t') - \hat{\gamma}_\textsc{l}(t')|) \frac{1}{n} \sum_{i=1}^n \sum_{j=1}^n
K_{n,ij},
% = o_\mathbb{P}(1),
\end{align*}
where the right-hand side term is $o_\mathbb{P}(1)$ by $\hat{\gamma}_\textsc{l}(t) - {\gamma}_\textsc{l}(t) = O_{\mathbb{P}}(n^{-1/2})$ and Assumption \ref{asu7}(b). 
This, together with \eqref{OmegaL}, ensures Condition \ref{conditionL}(i).

\end{proof}

\begin{proof}[Proof of Theorem \ref{thm:full_adj}] 
Let $\hat{ {V}}_{\textup{haj},\textsc{l}}^+(t,t')$ be the $(t,t')$th element of $\hat{ {V}}_{\textup{haj},\textsc{l}}^+$.  
For the adjusted covariance estimator, we have
\begin{align*}
& n \hat{ {V}}_{\textup{haj},\textsc{l}}^+(t,t') \\
=& n \hat{ {V}}_{\textup{haj},\textsc{l}}(t,t') 
+ \frac{1}{n} \sum_{i=1}^n \sum_{j=1}^n
% \begin{array}{c}
\frac{{1}_i(t)( Y_i(t;\hat\gamma_\textsc{l})
- \hat{\beta}_{\textup{haj},\textsc{l}}(t)) {1}_j(t')( Y_j(t';\hat\gamma_\textsc{l})-\hat{\beta}_{\textup{haj},\textsc{l}}(t')) }{\pi_i(t) \pi_j(t') \hat{1}_{\text{ht}}(t) 
\hat{1}_{\text{ht}}(t') }   
K_{n}^-(i,j) \\
=& {\Sigma}_{*,\textup{haj},\textsc{l}}(t,t') + R_{\textup{haj},\textsc{l}}(t,t') \\
+&  \frac{1}{n} \sum_{i=1}^n \sum_{j=1}^n 
\frac{{1}_i(t) ( Y_i(t;\hat\gamma_\textsc{l}) 
- \hat{\beta}_{\textup{haj}}(t)) }{\pi_i(t)}  
\frac{{1}_j(t') ( Y_j(t';\hat\gamma_\textsc{l})
-\hat{\beta}_{\textup{haj}}(t')) }{\pi_j(t')}  
K_{n}^-(i,j) + o_\mathbb{P}(1) \\
=& {\Sigma}_{*,\textup{haj},\textsc{l}}(t,t') 
+ \frac{1}{n} \sum_{i=1}^n \sum_{j=1}^n
M_{\textsc{l}}(i,t)
M_{\textsc{l}}(j,t')
\left(K_{n}^+(i,j) - K_{n}^-(i,j)\right)   \\
% -& \frac{1}{n} \sum_{i=1}^n \sum_{j=1}^n
% \left(\mu_i(t) - \mu(t)- x_i^\top \gamma_\textsc{l}(t)\right)
% \left(\mu_j(t') - \mu(t')- x_j^\top \gamma_\textsc{l}(t')\right) 
% K_{n}^-(i,j) \\
+&  \frac{1}{n} \sum_{i=1}^n \sum_{j=1}^n 
\frac{{1}_i(t) ( Y_i(t;\hat\gamma_\textsc{l})
- \hat{\beta}_{\textup{haj},\textsc{l}}(t)) }{\pi_i(t)} 
\frac{{1}_j(t') ( Y_j(t';\hat\gamma_\textsc{l})
-\hat{\beta}_{\textup{haj},\textsc{l}}(t')) }{\pi_j(t')}  
K_{n}^-(i,j) + o_\mathbb{P}(1), 
\end{align*}
where the first and the third equalities hold by the definition of $K_n^+$, and the second equality holds by Theorem \ref{thm:full_bias}. 
Following the proof of Theorem \ref{thm:full_bias} but replacing Assumption \ref{asu7} with Assumption \ref{asu8}, we have
\begin{align*}
& \frac{1}{n} \sum_{i=1}^n \sum_{j=1}^n 
\frac{{1}_i(t) ( Y_i(t;\hat\gamma_\textsc{l})
- \hat{\beta}_{\textup{haj},\textsc{l}}(t)) }{\pi_i(t)} 
\frac{{1}_j(t') ( Y_j(t';\hat\gamma_\textsc{l})
-\hat{\beta}_{\textup{haj},\textsc{l}}(t')) }{\pi_j(t')}  
K_{n}^-(i,j)  \\
=& \frac{1}{n} \sum_{i=1}^n \sum_{j=1}^n M_{\textsc{l}}(i,t) M_{\textsc{l}}(j,t') K_{n}^-(i,j) 
+ \frac{1}{n} \sum_{i=1}^n \sum_{j=1}^n {\Delta}_{\textup{haj},\textsc{l}}(i,t) {\Delta}_{\textup{haj},\textsc{l}}(j,t')  K_{n}^-(i,j) 
+ o_\mathbb{P}(1).
% =& n^{-1}(M_{\textsc{l},it}:i\in\mathcal{N}_n)^\top K_n^{-}(M_{\textsc{l},it'}:i\in\mathcal{N}_n)
% % \frac{1}{n} \sum_{i=1}^n \sum_{j=1}^n
% % (\mu_i(t) - \mu(t)- x_i^\top \gamma_\textsc{l}(t))
% % (\mu_j(t') - \mu(t')- x_j^\top \gamma_\textsc{l}(t')) 
% % {1}^-\{\ell_{ {A}}(i, j) \leq b_n\} 
% \\
% +& n^{-1}({\Delta}_{\textup{haj},\textsc{l}}(i,t):i\in\mathcal{N}_n)^\top K_n^{-}({\Delta}_{\textup{haj},\textsc{l}}(i,t'):i\in\mathcal{N}_n)
% + o_\mathbb{P}(1).
\end{align*}
Thus, we complete the proof.
\end{proof}

% \end{small}

\end{appendix}

\end{document}